\numberwithin{equation}{section}
\crefname{lemma}{Lemma}{Lemmas}
\crefname{definition}{Definition}{Definitions}
\crefname{theorem}{Theorem}{Theorems}
\crefname{conjecture}{Conjecture}{Conjectures}
\crefname{section}{Section}{Sections}
\crefname{claim}{Claim}{Claims}
\crefname{appendix}{Appendix}{Appendices}
\crefname{figure}{Fig.}{Figs.}
\crefname{table}{Table}{Tables}
\crefname{prop}{Proposition}{Propositions}
\crefname{cor}{Corollary}{Corollaries}
\newtcolorbox[auto counter, crefname = {Algorithm}{Algorithms}]%
{alg}[2][]{%
  colbacktitle = gray!50,
  coltitle = black,
  title = \textbf{Algorithm~\thetcbcounter:} #2,%
  #1
}
\newcommand{\R}{\mathbb{R}}
\newcommand{\C}{\mathbb{C}}
\newcommand{\Z}{\mathbb{Z}}
\newcommand{\Sphere}{\mathbb{S}}
\DeclarePairedDelimiter{\set}{\lbrace}{\rbrace}
\DeclarePairedDelimiter{\abs}{\lvert}{\rvert}
\DeclarePairedDelimiter{\norm}{\lVert}{\rVert}
\DeclarePairedDelimiter{\floor}{\lfloor}{\rfloor}
\DeclarePairedDelimiter{\of}{\lparen}{\rparen}
\DeclarePairedDelimiter{\sof}{\lbrack}{\rbrack}
\newcommand{\ket}[1]{|{#1}\rangle}
\newcommand{\bra}[1]{\langle{#1}|}
\newcommand{\braket}[2]{\langle{#1}|{#2}\rangle}
\newcommand{\ketbra}[2]{\ket{#1}\bra{#2}}
\newcommand{\proj}[1]{\ketbra{#1}{#1}}
\newcommand{\ct}{^{\dagger}}
\newcommand{\tp}{^{\mathsf{T}}}
\newcommand{\+}{\oplus}
\newcommand{\x}{\otimes}
\newcommand{\xp}[1]{^{\otimes #1}}
\newcommand{\ctxp}[1]{^{\dagger\otimes #1}}
\newcommand{\U}[1]{\mathrm{U}(#1)}
\newcommand{\SU}[1]{\mathrm{SU}(#1)}
\newcommand{\GL}[1]{\mathrm{GL}(#1)}
\newcommand{\SL}[1]{\mathrm{SL}(#1)}
\renewcommand{\S}{\mathrm{S}} 
\newcommand{\mc}[1]{\mathcal{#1}}
\newcommand{\id}{I}
\newcommand{\pt}{\mathbin{\vdash}} 
\newcommand{\Usch}{U_{\mathrm{Sch}}} 
\newcommand{\J}[1]{J(#1)} 
\newcommand{\JJ}{J'} 
\newcommand{\Sym}[1]{\mathrm{Sym}(#1)} 
\newcommand{\tin}{\mathrm{in}}
\newcommand{\tout}{\mathrm{out}}
\newcommand{\din}{{d_\tin}}   
\newcommand{\dout}{{d_\tout}} 
\newcommand{\ie}{\textit{i.e.}}
\newcommand{\eg}{\textit{e.g.}}
\newcommand{\cf}{\textit{cf.}}
\newcommand{\mx}[1]{\begin{pmatrix}#1\end{pmatrix}}
\newcommand{\smx}[1]{\begin{psmallmatrix}#1\end{psmallmatrix}} 
\newcommand{\rmx}[1]{\begin{pmatrix*}[r]#1\end{pmatrix*}} 
\newcommand{\srmx}[1]{
  \setcounter{MaxMatrixCols}{20}
  \scalebox{0.8}{\mbox{\ensuremath{\displaystyle\rmx{#1}}}}
}
\newcommand{\ID}{\mathrm{ID}}
\newcommand{\OR}{\mathrm{OR}}
\newcommand{\AND}{\mathrm{AND}}
\newcommand{\NOT}{\mathrm{NOT}}
\newcommand{\MAJ}{\mathrm{MAJ}}
\newcommand{\PAR}{\mathrm{PAR}}
\newcommand{\NMAJ}{\mathrm{NMAJ}}
\newcommand{\NPAR}{\mathrm{NPAR}}
\newcommand{\UNOT}{\mathrm{UNOT}}
\newcommand{\CG}{C}
\newcommand{\DCG}{D}
\newcommand{\DCGct}{\DCG\ct}
\DeclareMathOperator{\SWAP}{SWAP}
\DeclareMathOperator{\Tr}{Tr}
\DeclareMathOperator{\spn}{span}
\DeclareMathOperator{\diag}{diag}
\DeclareMathOperator{\rank}{rank}
\DeclareMathOperator{\Comm}{Comm}
\newcommand{\CPTP}[2]{\mathrm{C}(#1,#2)}
\newcommand{\UCPTP}[2]{\mathrm{UC}(#1,#2)} 
\renewcommand{\L}{\mathrm{L}} 
\newcommand{\0}{0}
\newcommand{\vt}{\mathbf{t}}
\newtheorem{theorem}{Theorem}[section] 
\newtheorem{lemma}[theorem]{Lemma} 
\newtheorem{cor}[theorem]{Corollary}
\newtheorem{prop}[theorem]{Proposition}
\newtheorem{example}[theorem]{Example}
\newtheorem{definition}[theorem]{Definition}
\newcommand{\email}[1]{\href{mailto:#1}{\texttt{#1}}}
\title{Quantum majority vote}
\author{
Harry Buhrman\thanks{QuSoft, CWI, and University of Amsterdam, \email{harry.buhrman@cwi.nl}.}
\and
Noah Linden\thanks{University of Bristol, \email{n.linden@bristol.ac.uk}.}
\and
Laura Man\v{c}inska\thanks{University of Copenhagen, \email{mancinska@math.ku.dk}.}
\and
Ashley Montanaro\thanks{Phasecraft Ltd and University of Bristol, \email{ashley.montanaro@bristol.ac.uk}.}
\and
Maris Ozols\thanks{QuSoft and University of Amsterdam, \email{marozols@gmail.com}.}
}
\begin{document}
\maketitle

\begin{abstract}
Majority vote is a basic method for amplifying correct outcomes that is widely used in computer science and beyond. While it can amplify the correctness of a quantum device with classical output, the analogous procedure for quantum output is not known. We introduce \emph{quantum majority vote} as the following task: given a product state $|\psi_1\rangle \otimes \dots \otimes |\psi_n\rangle$ where each qubit is in one of two orthogonal states $|\psi\rangle$ or $|\psi^\perp\rangle$, output the majority state. We show that an optimal algorithm for this problem achieves worst-case fidelity of $1/2 + \Theta(1/\sqrt{n})$. Under the promise that at least $2/3$ of the input qubits are in the majority state, the fidelity increases to $1 - \Theta(1/n)$ and approaches $1$ as $n$ increases.

We also consider the more general problem of computing any symmetric and equivariant Boolean function $f: \{0,1\}^n \to \{0,1\}$ in an unknown quantum basis, and show that a generalization of our quantum majority vote algorithm is optimal for this task. The optimal parameters for the generalized algorithm and its worst-case fidelity can be determined by a simple linear program of size $O(n)$. The time complexity of the algorithm is $O(n^4 \log n)$ where $n$ is the number of input qubits.
\end{abstract}

\setcounter{tocdepth}{2}
\tableofcontents

\setlength{\epigraphwidth}{10cm}
\epigraph{Majority rule is a precious, sacred thing worth dying for. But—like other precious, sacred things—it's not only worth dying for; it can make you wish you were dead. Imagine if all of life were determined by majority rule. Every meal would be a pizza.
}{\textit{P.~J.~O'Rourke}}

\section{Introduction}\label{sec:Intro}

For most practical matters, it is safe to assume that the material world around us obeys the laws of classical physics.
However, explaining certain phenomena, particularly at small scales, requires a more fundamental theory -- quantum mechanics.
Similarly, our digital world at present is fully classical, \ie, it consists of zeroes and ones, while the ultimate description of information is quantum mechanical \cite{Watrous}.
Quantum information, at least from a mathematical perspective, is not only more fundamental but also more abundant -- classical states form a measure zero subset of quantum states.
From this perspective, a (reversible) classical algorithm is nothing but a quantum algorithm whose state remains in the standard basis throughout computation.
In particular, classical algorithms constitute only a measure zero subset of quantum algorithms.\footnote{If we think of an algorithm as a CPTP map (see \cref{sec:Math preliminaries}), probabilistic algorithms are described by diagonal Choi matrices that constitute a measure zero subset of ones describing general quantum algorithms.}

While typically quantum algorithms are designed for solving classical problems, \ie, their input and output consists of standard basis states, a quantum algorithm should generally be thought of as a quantum channel that processes quantum information -- it has quantum input and quantum output.
Far from being just a philosophical stance, this is also a pragmatic view motivated by quantum programming.

New quantum (and classical) algorithms are often obtained by combining existing building blocks, \ie, subroutines that solve smaller sub-problems.
Even if the algorithm itself solves a classical problem, its constituent subroutines are often fully quantum.
For example, quantum Fourier transform \cite{Coppersmith}, amplitude amplification \cite{AA1,AA2}, and SWAP test \cite{SWAPtest} are subroutines for intrinsically quantum tasks.
This motivates the development of new fully quantum subroutines that can be used as building blocks of new quantum algorithms.

\emph{Majority vote} is a simple but ubiquitous method that ensures robustness to errors in various contexts, ranging from elections to error-correction.
For example, majority vote is used in essentially all randomized (and in particular also quantum) algorithms to amplify their success probability by repeating the computation several times and selecting the most common outcome.
Since majority vote only deals with classical data, this begs the question of whether there exists a natural analogue to majority vote that is fully quantum.
In fact, one can ask the same question for any Boolean function.

\subsection{Problem}

To state the problem more formally, consider two parties: Alice and Bob.
Alice has an $n$-bit string $x \in \set{0,1}^n$ represented by the $n$-qubit standard basis state $\ket{x} \in \C^{2^n}$. She would like Bob to evaluate some particular Boolean function $f: \set{0,1}^n \to \set{0,1}$ (\eg, the majority function) on her state and return the answer as a single-qubit state $\ket{f(x)} \in \C^2$. However, assume that instead of $\ket{x}$, Bob receives $U\xp{n} \ket{x}$ for some unknown single-qubit unitary $U \in \U{2}$, and his goal is to produce $U \ket{f(x)}$ for the same unknown $U$ (see \cref{fig:Problem}).

\begin{figure}
\centering


\begin{tikzpicture}[thick,
    arr/.style = {arrows = {|[scale = 0.8]->}}
  ]
  \def\W{3.0cm}
  \def\H{1.0cm}
  \node at (-\W,1.8*\H) {Alice};
  \node at ( \W,1.8*\H) {Bob};
  \path (-\W, \H) node {$\ket{x}$};
  \path (-\W,-\H) node {$\ket{f(x)}$};
  \path ( \W, \H) node {$U\xp{n} \ket{x}$};
  \path ( \W,-\H) node {$U \ket{f(x)}$};
  \draw[arr] (-0.6*\W, \H) to (0.6*\W, \H);
  \draw[arr] (0.6*\W,-\H) to (-0.6*\W,-\H);
  \draw[arr] (\W,0.5*\H) to (\W,-0.5*\H);
  \path (0, \H)+(0,0.3) node {$U\xp{n}$};
  \path (0,-\H)+(0,0.3) node {$U\ct$};
  \path (\W,0)+(0.3,0) node {?};
\end{tikzpicture}
\caption{\label{fig:Problem}The ideal functionality of a unitary-equivariant Boolean function. For a given function $f: \set{0,1}^n \to \set{0,1}$, Bob would like to implement the ideal $n$-to-$1$ qubit map $U\xp{n} \ket{x} \mapsto U \ket{f(x)}$ for all $U \in \U{2}$ and $x \in \set{0,1}^n$. Since in general the ideal map cannot be implemented by a completely positive operation (see \cref{sec:Impossiblity}), the problem is to find the best quantum channel that approximates it.}
\end{figure}
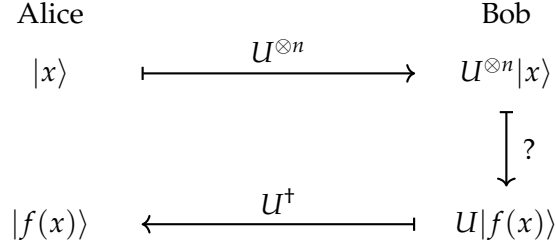

This problem appears naturally in several scenarios. For instance, if the communication channel between Alice and Bob is imperfect, it may cause an unknown unitary drift $U$ to each qubit transmitted in one direction, and $U\ct$ when transmitted in the reverse direction (this is a simple way to model an idealized lossless optical fiber). Alternatively, the laboratories of Alice and Bob might not share a common reference frame, such as in the case of satellite-based communication \cite{Satellite}. Finally, another, cryptographically-motivated setting is when Alice scrambles (and later unscrambles) the data on purpose to hide it from Bob who has been delegated with computing $f$.

In either case, Bob is faced with the task of computing a Boolean function in an unknown quantum basis determined by a unitary $U$ (both his input and output qubits are in the same unknown basis). We call such basis-independent computation \emph{unitary-equivariant} (see \cref{sec:Defs} for more details). Note that in each of the above scenarios, despite Bob being unaware of the basis, Alice can still correctly interpret his output because it either matches with her basis or she can undo the scrambling unitary $U$ if she did it herself.

This problem is particularly well-motivated when $f$ is the $n$-bit \emph{majority} function. In this case, Bob receives an $n$-qubit product state $\ket{\psi_1} \x \dotsb \x \ket{\psi_n}$ where each qubit $\ket{\psi_i} \in \C^2$ is one of two orthogonal states $\ket{\psi}$ or $\ket{\psi^\perp}$. He is asked to return either $\ket{\psi}$ or $\ket{\psi^\perp}$, depending on which state occurs more often. The difficulty lies in the fact that the basis $\set{\ket{\psi}, \ket{\psi^\perp}}$ is unknown to Bob, so the procedure has to work in any basis. We refer to this problem as \emph{quantum majority vote} since Bob's task is to compute the majority function in an unknown quantum basis.

Just like the classical majority vote, quantum majority vote can also be seen as a technique for mitigating errors. Assume that a given quantum device is supposed to produce some unknown state $\ket{\psi}$, but it occasionally fails and instead produces $\ket{\psi^\perp}$. Quantum majority vote can then be used to distill the correct state from several samples (see \cref{fig:Majority}). In particular, this method can be used to amplify the output fidelity of any quantum-output black-box subroutine within a quantum algorithm.

\begin{figure}
\centering


\begin{tikzpicture}[thick,
    box/.style = {fill = lightgray},
    btn/.style = {fill = red},
    arr/.style = {arrows = {|[scale = 0.8]->}}
  ]
  \def\H{0.6}
  \def\L{0.8}
  \def\W{0.5}
  \draw[box]
     (-\L,-\H) -- ++( 2*\L,0) -- ++( \W, \W) --
    ++(0,2*\H) -- ++(-2*\L,0) -- ++(-\W,-\W) -- cycle;
  \draw (-\L,\H) -- ++(2*\L,0) -- ++(0,-2*\H);
  \draw ( \L,\H) -- ++(\W,\W);
  \path (-0.5*\L,\H)+(\W/2,\W/2) coordinate (B);
  \def\h{0.1}
  \def\l{0.1}
  \def\w{0.05}
  \draw[btn]
    (B)++(\w,\w) -- ++(0,\h) -- ++(-2*\l,0) --
    ++(-2*\w,-2*\w) -- ++(0,-\h) -- ++(2*\l,0) -- cycle;
  \draw (B)++(0,\h)++(\w,\w) -- ++(-2*\w,-2*\w);
  \draw (B)++(-\w,-\w) -- ++(0,\h) -- ++(-2*\l,0);
  \def\l{0.17}
  \def\Rx{0.1}
  \def\Ry{0.3}
  \path (\L,0)+(\W/2,\W/2) coordinate (O);
  \draw (O)++(0,+\Ry) -- ++(\l,0);
  \draw (O)++(0,-\Ry) -- ++(\l,0);
  \draw (O)++(0, \Ry) arc [x radius = \Rx, y radius = \Ry, start angle = 90, end angle = 270];
  \draw[fill = gray] (O)++(\l,0) ellipse [x radius = \Rx, y radius = \Ry];
  \path (O)+(0.6,0) node [anchor = west] {$
    \ket{\psi}
    \ket{\psi}
    \ket{\psi^\perp}
    \ket{\psi}
    \ket{\psi}
    \ket{\psi^\perp}
    \ket{\psi}
    \ket{\psi}
    \ket{\psi}$};
  \path (O)+(10,0) node {$\ket{\psi}$};
  \draw[arr] (O)++(6.6,0) to +(2.8,0);
  \path (O)++(8,0.3) node {?};
\end{tikzpicture}
\caption{\label{fig:Majority}Quantum majority vote: given a sequence of $n$ quantum states from an unknown basis $\set{\ket{\psi}, \ket{\psi^\perp}}$, output the majority state.}
\end{figure}
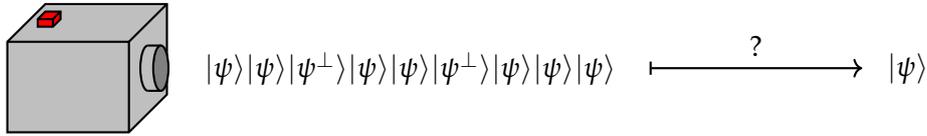

For a general function $f$ it is not possible to perfectly implement the map $U\xp{n}\ket{x} \mapsto U \ket{f(x)}$ for all input strings $x \in \set{0,1}^n$ and unitaries $U \in \U{2}$ since it is not completely positive (see \cref{sec:Impossiblity} for a simple example). Hence, we will instead look for a quantum channel that optimally approximates this map, in the sense that it maximizes the fidelity with the ideal output state for the worst-case input.

The simplest non-trivial instance of this problem is when $n = 1$ and $f = \NOT$ is the Boolean negation. The corresponding ideal map $U\ket{x} \mapsto U\ket{\NOT(x)}$ is known as the \emph{universal $\NOT$} since it is a unitary-equivariant extension of the $\NOT$ function. This anti-unitary map sends an arbitrary qubit state $\ket{\psi} = \smx{\alpha \\ \beta}$ to its unique perpendicular state $\ket{\psi^\perp} = \smx{\overline{\beta} \\ -\overline{\alpha}}$. The best quantum channel that approximates this map achieves the worst-case output fidelity of $2/3$ \cite{BHW99} and has been implemented experimentally \cite{ExperimentalUNOT1,ExperimentalUNOT2,ExperimentalUNOT3,ExperimentalUNOT4}.

\subsection{Our results}\label{sec:Our results}

Our main result is a generic algorithmic template $\mc{A}_\vt$ for computing any unitary-equivariant Boolean function. This template has a tuneable set of interpolation parameters $\vt \in [0,1]^{\floor{n/2}+1}$, for which we derive an efficient linear program that finds the optimal setting of $\vt$,
and we show that the algorithm $\mc{A}_\vt$ can be implemented efficiently.
Below is an informal statement of our main result (see \cref{thm:template,thm:main,thm:implementation} in \cref{sec:Optimal algorithm} for more details).

\begin{theorem}[informal]
Let $f: \set{0,1}^n \to \set{0,1}$ be a symmetric and equivariant Boolean function.
For appropriately chosen interpolation parameters $\vt$, the algorithm $\mc{A}_\vt$ is optimal for computing the unitary-equivariant extension of $f$.
The optimal choice of $\vt$ and the corresponding worst-case fidelity for computing $f$ can be determined by a simple linear program of size $O(n)$.
The resulting algorithm $\mc{A}_\vt$ can be implemented using $O(n^4 \log n)$ quantum gates.
\end{theorem}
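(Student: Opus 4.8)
The plan is to establish the three assertions in turn: optimality of the family $\mc A_\vt$, the linear program characterizing its optimal parameters, and the gate count.

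\emph{Step 1: symmetry reduction and a block-diagonal normal form.} I would first argue that an optimal channel may be taken to be $\U2$-covariant (conjugating the input by $U\xp n$ and the output by $U\ct$) and $S_n$-invariant. This is a twirling argument: replacing an arbitrary channel $\Phi$ by its average over Haar-random $U$ (acting as $\rho\mapsto U\ct\Phi(U\xp n\rho\,U\ctxp n)U$) and over $S_n$ produces a channel with these symmetries whose fidelity on an input $\ket x$ equals the average of $\Phi$'s fidelities over the orbit $\set{U\xp n\ket{\pi(x)}:U\in\U2,\,\pi\in S_n}$; since $f$ is symmetric and equivariant, each term of this average is a legitimate input--target fidelity of $\Phi$, so the average is at least $\Phi$'s worst-case fidelity and the worst case does not decrease. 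Next I would invoke Schur--Weyl duality, $(\C^2)\xp n\cong\bigoplus_j\mc H_j\x\mc M_j$ with $\mc H_j$ the spin-$j$ irrep and $\mc M_j$ the $S_n$-multiplicity space, where $j$ ranges over $\set{0\ \text{or}\ \tfrac12,\dots,\tfrac n2}$, precisely $\floor{n/2}+1$ values, matching the dimension of $\vt$. An $S_n$-twirl sends the Hamming-weight-$k$ input to $\bigoplus_j p_{k,j}\proj{j,m}\x\tfrac{\id}{\dim\mc M_j}$ with $m=\tfrac n2-k$ and $p_{k,j}$ an explicit ratio of dimensions, while Schur's lemma forces a $\U2$-covariant channel to (a) act block-diagonally in $j$, hence to effectively perform the non-destructive ``Schur measurement'' of the total spin, (b) ignore the multiplicity register, and (c) on each block act as a $\U2$-covariant map $\mc H_j\to\mc H_{1/2}$. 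A covariant Stinespring dilation of such a map couples $\mc H_j$ to a spin-$(j-\tfrac12)$ or spin-$(j+\tfrac12)$ environment via the corresponding Clebsch--Gordan isometry, so after discarding the environment it is governed by a single mixing weight $t_j\in[0,1]$; collecting these over $j$ gives $\vt$ and defines $\mc A_\vt$. In particular every covariant channel -- and hence, by the twirl, every channel -- has the same worst-case fidelity as some $\mc A_\vt$, which is the optimality claim.

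\emph{Step 2: the linear program.} With the normal form in place, the fidelity of $\mc A_\vt$ on a weight-$k$ input against the target $\ket{f(x)}=\ket{g(k)}$ (here, $f$ being symmetric, $g(k)$ denotes its value on any weight-$k$ string) is
\[
  F_k(\vt)=\sum_j p_{k,j}\Bigl(t_j\,\abs{c^-_{k,j}}^2+(1-t_j)\,\abs{c^+_{k,j}}^2\Bigr),
\]
with $c^\pm_{k,j}$ the Clebsch--Gordan coefficients coupling $\ket{j,m}$ to $\ket{\tfrac12,(-1)^{g(k)}\tfrac12}$ through spin $j\mp\tfrac12$; the diagonal entries of the output qubit are affine in $t_j$, and its off-diagonal entries vanish because $\ket{j,m}$ is an $S_z$-eigenstate and the channel commutes with rotations about the $z$-axis. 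Since $F_k$ is affine in $\vt$, the optimal worst-case fidelity equals $\max_{\vt\in[0,1]^{\floor{n/2}+1}}\min_k F_k(\vt)$, a linear program with $O(n)$ variables and $O(n)$ constraints (one per Hamming weight, further halved using the $k\leftrightarrow n-k$ symmetry of $f$), whose coefficients $p_{k,j}$ and $c^\pm_{k,j}$ are produced by an $O(n^2)$ classical precomputation. LP duality then yields achievability and the matching upper bound simultaneously.

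\emph{Step 3: implementation.} The algorithm $\mc A_\vt$ runs as follows: apply the Schur transform $\Usch$; measure the spin register to learn $j$; conditioned on $j$, adjoin $O(\log n)$ ancilla qubits and apply a Clebsch--Gordan-type isometry together with a classically controlled rotation whose angle is fixed by $t_j$ (computed from the LP); then discard all but the output qubit. Every step other than $\Usch$ acts on $O(\log n)$ qubits using $O(n)$-controlled rotations and costs $\mathrm{poly}(n)$ gates; the Schur transform dominates, and its known quantum-circuit implementation uses $O(n^4\log n)$ gates, giving the stated bound.

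\emph{Main obstacle.} The crux is Step~1(c): showing that a $\U2$-covariant map $\mc H_j\to\mc H_{1/2}$ has \emph{exactly one} fidelity-relevant parameter, i.e.\ that no subtler use of the environment or of coherences between spin sectors can do better. This representation-theoretic point is what lets the finite linear program capture the true optimum; the rest is bookkeeping with Clebsch--Gordan coefficients and the known cost of the Schur transform.
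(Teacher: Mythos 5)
Your proposal is correct and follows essentially the same route as the paper: twirl over $\U{2}$ and $\S_n$, reduce via Schur--Weyl duality to weak Schur sampling followed by a unitary-covariant map on each spin block, show by Schur's lemma (equivalently, the Clebsch--Gordan decomposition of $\mc{H}_{1/2}^*\x\mc{H}_j$) that each such map has a single free parameter interpolating between two extremal channels, and optimize the resulting affine fidelities by a linear program, with the Schur transform dominating the $O(n^4\log n)$ gate count. You also correctly identify the one-parameter characterization of covariant channels $\mc{H}_j\to\mc{H}_{1/2}$ as the crux, which is precisely the paper's key technical lemma.
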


We apply our linear programming characterization to determine the optimal fidelity $F_{\MAJ}(n)$ that can be achieved for the unitary-equivariant $n$-bit majority function $\MAJ$.
We distinguish two cases depending on whether the input string $x \in \set{0,1}^n$ is arbitrary or its Hamming weight $|x|$ is at least a constant fraction of $n$ away from $n/2$ (see \cref{sec:Majority} for more details).

\begin{figure}
\centering


\newcommand{\interval}[2]{
  \path (#1) coordinate (#2);
  \begin{scope}[on background layer]
    \draw [line width = 2pt, green!80!black] (#1)+(-\L,0) -- +(\L,0);
  \end{scope}
  \draw (#2)+(-\L,0) -- +(\L,0);
  \draw (#2)++(-\L,0)+(0,\t) node {$0$}+(0,\d) -- +(0,-\d);
  \draw (#2)++( \L,0)+(0,\t) node {$n$}+(0,\d) -- +(0,-\d);
  \draw (#2)         +(0,\t) node {$n\over2$};
  \draw[densely dotted] (#2)+(0,1.6*\d) -- +(0,-3.2*\d);
  \foreach \i in {1,...,14} {
    \draw (#2)++(-\L,0)++(2*\L*\i/15,\d) -- +(0,-2*\d);
  }
}

\newcommand{\curlybrace}[3]{
  \def\r{0.15}
  \def\w{\L/3}
  \path (#1)++(#2\w,-2*\r) coordinate (M)
            ++(0,-0.25) node {#3};
  \draw[radius = \r] (M)
    arc [start angle = 180, end angle = 90] -- ++(\w-2*\r,0)
    arc [start angle = -90, end angle = 0];
  \draw[radius = \r] (M)
    arc [start angle =   0, delta angle = 90] -- ++(-\w+2*\r,0)
    arc [start angle = 270, delta angle = -90];
}


\begin{tikzpicture}[semithick]
  \def\W{4}
  \def\H{1.7}
  \def\h{1.3}
  \path (0,\H+\h) node {No promise}
          +(\W,0) node {Promise};
  \draw (-\W,\h/2) -- (1.5*\W,\h/2);
  \draw (-\W/2,\H+1.4*\h) -- (-\W/2,-1.2*\H);
  \draw ( \W/2,\H+1.4*\h) -- ( \W/2,-1.2*\H);
  \path (0,0) node {$\displaystyle\frac{1}{2} + \frac{1}{2n}$}
      +(\W,0) node {$\displaystyle\frac{5}{6}$}
      +(-0.8*\W,0) node {Trivial};
  \path (0,-\h) node {$\displaystyle\frac{1}{2} + \Theta\of*{\frac{1}{\sqrt{n}}}$}
        +(\W,0) node {$\displaystyle 1 - \Theta\of*{\frac{1}{n}}$}
        +(-0.8*\W,0) node {Optimal};
  \def\d{0.08}
  \def\t{0.4}
  \def\L{0.4*\W}
  \interval{ 0,\H}{L}
  \interval{\W,\H}{R}
  \curlybrace{R}{+}{\scriptsize$n/3$}
  \curlybrace{R}{-}{\scriptsize$n/3$}
  \begin{scope}[on background layer]
    \draw[line width = 2pt, red!80!black] (R)+(-2*\L/3,0) -- +(2*\L/3,0);
  \end{scope}
\end{tikzpicture}
\caption{\label{fig:Comparison}Optimal fidelities for the trivial strategy that picks one of the $n$ input qubits at random, and the optimal strategy. The first column corresponds to the case when there is no promise on the input string $x \in \set{0,1}^n$, and the second when the Hamming weight $|x|$ is at least $n/3$ away from the central value of $n/2$.}
\end{figure}

\begin{restatable}{theorem}{majority}\label{thm:majority}
Let $n \geq 1$ be odd. If $|x|$ is arbitrary, then $F_{\MAJ}(n) = 1/2 + \Theta(1/\sqrt{n})$. If we are promised that $\abs[\big]{|x|-\frac{n}{2}} \geq \frac{n}{3}$, then $F_{\MAJ}(n) = 1 - \Theta(1/n)$.
\end{restatable}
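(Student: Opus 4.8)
The plan is to read off from the linear-programming description of \cref{thm:main} that the optimal fidelity is a minimax of explicit combinatorial quantities, and then estimate those quantities. Write $k=\abs x\in\set{0,\dots,n}$ and $m=\tfrac n2-k$; the majority state is $\ket\psi$ when $m>0$ and $\ket{\psi^\perp}$ when $m<0$, so by the qubit-flip symmetry $\ket\psi\leftrightarrow\ket{\psi^\perp}$ (which exchanges the weight-$k$ and weight-$(n-k)$ instances) it suffices to bound the fidelity on inputs with $m>0$, i.e.\ $k\le\tfrac{n-1}{2}$. Since $\MAJ$ is symmetric the algorithm may symmetrise its input, so the effective input is the standard-basis-diagonal mixture $\rho_{n,k}$ of the $\binom nk$ weight-$k$ strings written in the unknown basis $\set{\ket\psi,\ket{\psi^\perp}}$. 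By Schur--Weyl duality $\rho_{n,k}$ is supported on the total-angular-momentum sectors $j\in\set{\abs m,\abs m+1,\dots,\tfrac n2}$ with sector weights $p_j(n,k)=\binom nk^{-1}\of*{\binom{n}{n/2-j}-\binom{n}{n/2-j-1}}$, and by \cref{thm:main} the worst-case fidelity of $\mc A_{\vt}$ on weight-$k$ inputs is $F_k(\vt)=\sum_j p_j(n,k)\,\phi^{(k)}_j(t_j)$ with each $\phi^{(k)}_j$ affine in $t_j\in[0,1]$; its maximum over $t_j$ is the optimal unitary-covariant sector-$j$ fidelity $\tfrac{j+m+1}{2j+1}=\tfrac12+\tfrac{m+1/2}{2j+1}$, attained at the ``aligned'' endpoint of $t_j$, which is the same endpoint for every $k$ with $m>0$ (the optimal covariant guess always points along the input polarisation).

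Because the coordinates $t_j$ decouple, for any fixed weight $k_0$ we get $F_{\MAJ}(n)=\max_{\vt}\min_k F_k(\vt)\le\max_{\vt}F_{k_0}(\vt)=\sum_j p_j(n,k_0)\,\tfrac{j+m_0+1}{2j+1}$ with $m_0=\tfrac n2-k_0$. For the matching lower bound we fix $\vt=\vt^{\star}$ with every $t_j$ at its aligned endpoint, so that $F_k(\vt^{\star})=\sum_j p_j(n,k)\,\tfrac{j+m+1}{2j+1}=\tfrac12+(m+\tfrac12)\sum_j\tfrac{p_j(n,k)}{2j+1}$ for every admissible $k$, and it remains to bound $\min_k F_k(\vt^{\star})$ from below. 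Thus the whole theorem reduces to estimating $S(n,k):=\sum_j\tfrac{p_j(n,k)}{2j+1}$ and, in the promise regime, $T(n,k):=\sum_j p_j(n,k)\,\tfrac{j-m}{2j+1}$ as functions of $n$ and $m$.

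The single identity that does the work is that, under the distribution $j\mapsto p_j(n,k)$, the total spin $J$ satisfies $\langle J(J+1)\rangle=\langle\vec S^2\rangle=m^2+\tfrac n2$ --- immediate from $\langle S_z^2\rangle=m^2$ and $\langle S_x^2\rangle=\langle S_y^2\rangle=\tfrac n4$, using that $\rho_{n,k}$ is diagonal in the standard basis. In the balanced regime $\abs m=O(\sqrt n)$ the weights $p_j(n,k)$ are, up to lower-order corrections, the total-spin distribution of $n$ unbiased spin-$\tfrac12$ particles --- a half-Gaussian concentrated at scale $j\asymp\sqrt n$ --- so a direct binomial estimate gives $S(n,k)=\Theta(1/\sqrt n)$; taking $k_0=\tfrac{n-1}{2}$ then yields $F_{\MAJ}(n)\le\tfrac12+\Theta(1/\sqrt n)$, while $F_k(\vt^{\star})\ge\tfrac12+\Omega(1/\sqrt n)$ for all $k$ with $\abs m=O(\sqrt n)$, and for $k$ with $\abs m=\Omega(\sqrt n)$ the identity forces $j=\Theta(\abs m)$ and hence $F_k(\vt^{\star})\ge\tfrac12+\Omega(1)$; the minimum over unrestricted $k$ is therefore $\tfrac12+\Theta(1/\sqrt n)$. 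Under the promise $\abs m\ge n/3$ the smallest admissible value is $m^{\star}=\Theta(n)$, and from $(j-m)(j+m+1)=J(J+1)-m(m+1)$ with $2m+1\le j+m+1\le n+1$ the identity gives $\tfrac{n/2-m}{n+1}\le\langle J-m\rangle\le\tfrac{n/2-m}{2m+1}$, so $\langle J-m^{\star}\rangle=\Theta(1)$ and $T(n,k_0)=\Theta(1/n)$ for the worst admissible $k_0$; hence $F_{\MAJ}(n)\le 1-\Omega(1/n)$ and $F_{k_0}(\vt^{\star})=1-\Theta(1/n)$, proving $F_{\MAJ}(n)=1-\Theta(1/n)$ in that case.

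The delicate part is the third paragraph: obtaining tight $\Theta$-order control of the binomial weights $p_j(n,k)$ in both the Gaussian regime near $k=n/2$ and the near-deterministic regime $\abs m\gg\sqrt n$, and --- for the lower bounds --- checking that $F_k(\vt^{\star})\ge\tfrac12+\Omega(1/\sqrt n)$ holds \emph{uniformly} in $k$, which needs the case split $\abs m\lesssim\sqrt n$ versus $\abs m\gtrsim\sqrt n$ rather than a single monotonicity argument. Everything else is routine manipulation of binomial coefficients.
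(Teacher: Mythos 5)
Your route coincides with the paper's up to a change of variables: both invoke \cref{thm:main}, observe that for majority the $t_k$-coefficient dominates on every input with $|x|\le\frac{n-1}{2}$ so that all interpolation parameters can be fixed at the partial-trace endpoint simultaneously for all Hamming weights (whence max and min commute), and thereby reduce the theorem to estimating the explicit sum $F_n(\ham)$ of \cref{eq:Fnx}. Where you genuinely diverge is the estimation. The paper manipulates binomials directly: it sandwiches $\frac{n-2k+1}{(n-k+1)(n-2k)}$ between $1/n$ and $4/n$, uses the Gaussian approximation of $\binom{n}{\ham}$ for the unrestricted case, and a geometric-series tail bound for $\ham\le n/3$. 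Your identity $\langle J(J+1)\rangle=m^2+n/2$, combined with $(J-m)(J+m+1)=J(J+1)-m(m+1)$ and $m\le J\le n/2$, gives the promise-case $1-\Theta(1/n)$ essentially for free and is arguably cleaner; it also neatly dispatches the inputs with $|m|=\Omega(\sqrt n)$ in the unrestricted case.

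Two repairs are needed. First, the per-sector fidelity is not $\frac{j+m+1}{2j+1}$: with $\ell=2j$ and $w=\ham-k=j-m$, the partial-trace channel gives $\frac{\ell-w}{\ell}=\frac{j+m}{2j}$ and the other extremal channel gives $\frac{j-m+1}{2j+2}$; your expression is neither, nor their maximum. This is harmless for every $\Theta$ in the statement but should be corrected. Second, the estimate that carries the unrestricted case, $\sum_j p_j(n,k)/(2j)=\Theta(1/\sqrt n)$ for $|m|=O(\sqrt n)$, is asserted rather than proved; the moment identity cannot supply it, since $\langle 1/(2J+1)\rangle$ is controlled neither above nor below by $\langle J(J+1)\rangle$ alone. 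Filling that in requires exactly the binomial analysis around \cref{eq:fbound} in the paper's proof.
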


For comparison, the trivial strategy that returns one of the input qubits at random achieves worst-case fidelities $(\frac{n}{2} + \frac{1}{2})/n = \frac{1}{2} + \frac{1}{2n}$ and $\frac{5}{6}$, respectively (see \cref{fig:Comparison}).
Note that in our setting the input state is of the form $\ket{\psi} \x \ket{\psi^\perp} \x \dotsb \x \ket{\psi}$ and in particular is pure.
Hence we cannot directly compare our results to the classical majority vote on the $n$-bit probability distribution $(p,1-p)\xp{n}$ where $p \in [0,1]$, since this would require considering mixed input states of the form $\rho\xp{n}$ where $\rho$ is an arbitrary single-qubit density matrix.
This version of the problem is known as \emph{quantum purification}~\cite{CEM99,KW01}.

\subsection{Related work}

While the idea of processing information encoded by parallel and antiparallel spins has been suggested as an interesting direction already in \cite{Spins}, we are not aware of any work that explores this problem in the setting of computing Boolean functions in an unknown basis.
Thus, here we survey work on problems of similar flavor.

Montanaro has considered a problem that is closely related to ours, namely computing a symmetric Boolean function on a quantum input but producing a \emph{classical} output~\cite{Mon09}.
Since the value of a symmetric function depends only on the input's Hamming weight, this problem effectively reduces to determining the ratio of $\ket{\psi}$ versus $\ket{\psi^\perp}$ in the tensor product input state.
Similar to Montanaro's approach, our algorithm is also based on weak Schur sampling.
However, producing a quantum output is more challenging and requires further processing of the state that remains after weak Schur sampling.
This involves applying certain extremal unitary-covariant channels with single-qubit output, which we characterize by employing tools from representation theory of $\U{2}$ such as the dual Clebsch--Gordan transform.

Another closely related problem is \emph{quantum purification}: given $n$ copies of a mixed state
\begin{equation}
  \rho = (1-\epsilon) \proj{\psi} + \epsilon \proj{\psi^\perp}
\end{equation}
for some $\epsilon \in [0,1/2]$, output the state $\proj{\psi}$.
This problem is closely related to quantum majority vote since the binomial expansion of $\rho\xp{n}$ as a linear combination of rank-$1$ product states $\proj{\psi} \x \proj{\psi^\perp} \x \dotsb \x \proj{\psi}$ is concentrated on states where the fraction of $\proj{\psi}$ is roughly $1-\epsilon$.
Since on such states quantum majority vote would output a good approximation of $\proj{\psi}$, we highly suspect that our algorithm is also optimal for purification.
Note that previous work on purification has focused only on the information-theoretical aspects of the problem, namely, at what rate and with what fidelity the task can be achieved~\cite{CEM99,KW01}.
However, as far as we are aware, algorithmic implementation of purification has not been studied before.
Therefore, if our algorithm coincides with an optimal purifier, we are first to provide an efficient implementation of purification -- our algorithm uses $O(n^4 \log n)$ gates.
Other related quantum error suppression techniques have recently been studied in \cite{KoczorEigenvector,KoczorSuppression,VirtualDistillation}.

Quantum cloning \cite{Cloning,OptimalCloning} can be considered as a reverse of purification since it dilutes information by mapping a smaller number of perfect states to a larger number of imperfect ones.
Purification and cloning are examples of so-called \emph{quantum machines} \cite{Machines,Keyl}, \ie, algorithms with quantum input or quantum output.
Recent examples of these include quantum subtracting machine \cite{SubtractingMachine}, unsupervised classification of quantum data \cite{Clustering}, overlap estimation \cite{OverlapEstimation}, and others \cite{Wright}.
Quantum majority vote and its generalization to equivariant Boolean functions also fit into this framework.
In fact, our template algorithm $\mc{A}_\vt$ (\cref{alg:Template}) is an example of a \emph{programmable quantum processor} \cite{Processors,PortbasedTeleportation}.
Its interpolation parameters $\vt$ can be considered as a program that is supplied either as a probability distribution or a quantum state.
For appropriately chosen $\vt$ the algorithm can optimally evaluate any symmetric equivariant Boolean function.

In the special case when $n = 1$, there are only two relevant Boolean functions -- the identity function and NOT.
While computing the identity function in a basis-independent way is trivial, computing the NOT function corresponds to applying the universal $\NOT$ operation mentioned earlier, which can only be done with fidelity $2/3$ \cite{BHW,BHW99,Spins}.
The post-processing step after weak Schur sampling in our algorithm can be viewed as a convex combination of quantum channels for computing the identity function and the universal $\NOT$.
In particular, we provide a novel and efficient implementation of a variant of the universal $\NOT$, where the input state has been isometrically scaled down from the symmetric subspace to a single system of the corresponding dimension.

The recent optimization framework \cite{GO22} for problems with permutational and unitary symmetries is highly relevant to our problem, however it does not easily apply in our setting due to a certain technical limitation.
Our problem corresponds to the parameter choice $(p,q,d) = (n,1,2)$ in \cite{GO22}.
Unfortunately, the partial trace constraint in their semidefinite program requires that $d \geq p + q - 1 = n$ (see eq.~(117) in \cite{GO22}), which is not satisfied when $d = 2$ and $n \geq 3$.
We expect a future refinement of \cite{GO22} to overcome this technical limitation.

One of our technical results is \cref{lem:1param} which characterizes a certain set of unitary-covariant quantum channels.
Upon completing the manuscript we realized that a similar characterization was already obtained by Al Nuwairan \cite{Nuwairan}; she derives extremal points of the set of $\SU{2}$-covariant quantum channels and determines their Stinespring, Kraus, and Choi representations.
Various further properties of these channels have been investigated also in \cite{SU2CovariantLowRank}, which has some overlap\footnote{Despite this, we decided to keep our writing as it is for the sake of completeness and coherence.} with our results in \cref{apx:Kraus,apx:Action}.
Note that the same class of channels is also discussed in Section~V of \cite{NoethersPrinciple}, where they play a central role in spin amplification and inversion.
The optimal channels for these two tasks are reminiscent of our extremal channels
$\Phi^\ell_1 = \Phi^\ell_{\Tr}$ and
$\Phi^\ell_2 = \Phi^\ell_{\UNOT}$, respectively, see \cref{sec:Extremal,sec:Understanding}.
More generally, tensors and tensor networks with $\SU{2}$ symmetry have been considered in \cite{SU2InvariantTensors} and are known as \emph{spin networks} \cite{SpinNetworks} in the context of loop quantum gravity.

Our results originally appeared as a talk in QIP~2021 \cite{Majority}.

\subsection{Open problems}

While our work settles the problem of computing all symmetric equivariant Boolean functions in a basis-independent way, it also opens up many interesting new directions.
\begin{enumerate}
  \setlength\itemsep{0pt}
  \item The patterns exhibited by the exact numerical values of the optimal fidelity in \cref{sec:Optimal fidelities} resemble the behavior of quantum query complexity $\mathsf{Q}(f)$ for symmetric Boolean functions $f$ \cite{Beals}. Is there a closer relationship between the optimal fidelity $F_f$ and the quantum query complexity $\mathsf{Q}(f)$ of equivariant Boolean functions $f$?
  \item It should be possible to find a closed-form solution for the linear program \eqref{eq:LP} in \cref{thm:main} that characterizes the optimal fidelity.
  \item Although we focus on Boolean functions, the problem can be considered for any alphabet with $d \geq 2$ symbols (see \cref{sec:Defs}).
  For example, a \emph{universal inverter} that generalizes the qubit universal $\NOT$ operation can be considered in arbitrary dimension \cite{QuditInverter}.
  It is also interesting to consider generalizations to functions with several outputs
  For example, quantum cloning \cite{Distributor,Cloning} and other unitary-equivariant primitives \cite{Keyl} naturally fit into this framework.
  We expect that the recent optimization framework of \cite{GO22} could be very useful for these types of problems.
  \item Investigate in more depth the notion of unitary-equivariant Boolean functions introduced in \cref{sec:U-equivariant functions}. Does this concept also extend to non-symmetric Boolean functions? How about functions over the alphabet $[d]$, for any $d \geq 2$? Does it also extend to relations?
  \item On a high level, our algorithm applies Schur transform and measures the partition label $\lambda$ (\ie, performs \emph{weak Schur sampling}), and then post-processes the leftover state on the unitary register. However, the post-processing assumes that the unitary register is in the so-called \emph{Gelfand--Tsetlin basis} \cite{GelfandII}, see \cref{apx:Construction}. While most implementations of Schur transform (in particular, \cite{KS17}) guarantee this, some do not \cite{Krovi18}. Can our algorithm be adapted so that the type of Schur transform used does not matter?

  One possible route towards this could be to explicitly apply the inverse Schur transform at the end of the algorithm. Currently, we discard the maximally mixed state from the permutation register and obtain the output by processing the unitary register. Could one instead replace the permutation register by some pure state and apply the universal NOT directly onto the unitary register? Then one could hope to obtain the desired output by performing the inverse Schur transform and discarding all qubits but one.

  Alternatively, could Schur transform be eliminated altogether? For example, weak Schur sampling can instead be implemented via generalized phase estimation \cite[Section~8.1]{Harrow05}. Unfortunately, without going to Schur basis it is not clear how to discard or replace the permutation register. Moreover, the post-processing step also needs to be modified. While implementing the partial trace channel in the standard basis is trivial, one also needs to implement the universal $\NOT$ operation with a multi-qubit input. Can this be done without Schur transform?
  \item Our algorithm appears to extract some classical side-information about the input state while still achieving optimal output fidelity. What does this information contain and could it be used for anything? For example, the side-information produced by an optimal algorithm for the universal $\NOT$ operation contains a full classical description of the output state, implying that any number of copies of the output state can be produced with the same fidelity. Understanding, for other Boolean functions, the trade-off between the optimal fidelity and how much classical side-information or how many additional copies of the output state can be produced could have cryptographic applications. Indeed, our setup in \cref{fig:Problem} suggests a cryptographic interpretation: Alice is delegating the computation of $f$ to Bob while trying to hide her input $x$ and the output $f(x)$ from him.
  \item While our quantum majority vote problem can be interpreted as amplifying the correct output of a quantum device, this device has a somewhat contrived failure mode: whenever it fails to produce the ideal state $\ket{\psi}$, it outputs a perfectly orthogonal state $\ket{\psi^\perp}$ (see \cref{fig:Majority}). A more natural setting instead would be to have the input states randomly distributed around some unknown ideal state $\ket{\psi}$. \emph{Quantum averaging} is then the problem of producing an approximate copy of the ideal state $\ket{\psi}$. Presumably our quantum majority vote algorithm is optimal also for quantum averaging. However, we do not know how its output fidelity scales with the number of input states and their variance around the ideal state $\ket{\psi}$.
  \item Various unitary-equivariant operations, such as approximations of universal $\NOT$ and quantum cloning, have been implemented experimentally \cite{Experiments}. Quantum majority vote could be a useful primitive for reducing experimental errors and cleaning up noisy quantum data. More generally, an experimental implementation of our template algorithm $\mc{A}_\vt$ (see \cref{alg:Template}) would allow to optimally evaluate any symmetric and equivariant Boolean function on quantum data.
\end{enumerate}

\subsection{Outline}

This paper is structured as follows.
\Cref{sec:Intro} motivates the problem and summarizes our main results.
\Cref{sec:Preliminaries} formally states the problem and analyzes its symmetries.
\Cref{sec:Generic pre-processing} discusses Schur--Weyl duality and constructs a particularly nice orthonormal Schur basis for the $n$-qubit space.
These ingredients are then used to derive a generic pre-processing procedure that can be used without loss of generality as the first step for computing any symmetric and equivariant Boolean function.
\Cref{sec:Completing} describes the second step of the algorithm and derives our main results, \cref{thm:template,thm:main,thm:implementation}, which describe the optimal algorithm, its fidelity, and implementation.
\Cref{sec:Applications} considers various applications of our general framework, such as exactly computing the optimal fidelities for all Boolean functions with $n \leq 7$ arguments and analyzing the asymptotic fidelity of the quantum majority vote.

This paper also contains a number of appendices with detailed calculations and extended background material.
\Cref{apx:Graphical} introduces graphical notation and
\cref{apx:Nice} analyzes our preferred Schur basis $\ket{(\lambda,w,i)}$.
\cref{apx:2x2} is a pedagogical introduction to the representation theory of $2 \times 2$ matrices, with an emphasis on explicit constructions.
\Cref{apx:Sym} discusses the symmetric subspace and its representation-theoretic properties, and
\cref{apx:Extremal} derives quantum circuits for implementing certain extremal unitary-covariant quantum channels needed in our algorithm.
Finally,
\cref{apx:Monotonicity} proves that the optimal fidelity is monotonic in the Hamming weight, and
\cref{apx:Choi} explicitly lists optimal Choi matrices for computing all symmetric equivariant Boolean functions with $n \leq 3$ arguments.

\section{Preliminaries}\label{sec:Preliminaries}

This section covers some preliminary material and sets up the problem.
First, in \cref{sec:Math preliminaries} we introduce notation, basic concepts, and terminology.
Next, in \cref{sec:Defs}, we formally state the problem in a general setting
and then, in \cref{sec:Boolean functions}, specialize to Boolean functions.
In \cref{sec:Impossiblity}, we demonstrate the impossibility of perfectly computing quantum majority on three inputs.
Finally, we describe symmetries of optimal quantum algorithms in \cref{sec:Symmetries}.

\subsection{Mathematical preliminaries}\label{sec:Math preliminaries}

\paragraph{Linear algebra.}
Fix an integer $d \geq 2$ and let $[d] := \set{0, \dotsc, d-1}$.
We denote the \emph{standard basis} of $\C^d$ by $\ket{i}$ where $i \in [d]$.
Then $\C^d = \spn \set{\ket{i} : i \in [d]}$ where ``$\spn$'' denotes the \emph{complex linear span}.
We write $\L(\C^d)$ to denote the set of all \emph{linear operators} on $\C^d$, \ie, the set of all $d \times d$ matrices over $\C$, and we let $\id_d := \sum_{i \in [d]} \proj{i}$ denote the $d \times d$ \emph{identity matrix}.
We denote the \emph{commutator} of $A,B \in \L(\C^d)$ by $[A,B] := AB - BA$.
The \emph{partial trace} over the first register, denoted $\Tr_1: \L(\C^{d_1} \x \C^{d_2}) \to \L(\C^{d_2})$, is defined as $\Tr_1(A \x B) := \Tr(A) B$, for any $A \in \L(\C^{d_1})$ and $B \in \L(\C^{d_2})$, and extended by linearity to non-product operators.
The partial trace over the second register, $\Tr_2$, is defined similarly.
The \emph{unitary group} $\U{d} := \set{U \in \L(\C^d): UU\ct = \id_d}$ consists of all $d \times d$ \emph{unitary matrices}, where $U\ct := \overline{U}\tp$ denotes the \emph{conjugate transpose} of $U$.
The \emph{special unitary group} $\SU{d} := \set{U \in \U{d} : \det(U) = 1}$ consists of all unitary matrices with determinant one.
We write $\L(\C^\din,\C^\dout)$ to denote all \emph{linear operators} from $\C^\din$ to $\C^\dout$.
When $\din \leq \dout$, the set of all \emph{isometries} is $\U{\C^\din,\C^\dout} := \set{U \in \L(\C^\din,\C^\dout) : U\ct U = \id_\din}$.

\paragraph{Representation theory.}
A $d$-dimensional \emph{representation} of a group $G$ is a map $R: G \to \U{d}$ such that $R(g h) = R(g) R(h)$ for all $g,h \in G$ (\ie, $R$ is a \emph{homomorphism} from $G$ to $\U{d}$).
It is \emph{reducible} if $U R(g) U\ct = R_1(g) \+ R_2(g)$ for some $U \in \U{d}$ and representations $R_1$ and $R_2$, and \emph{irreducible} otherwise.
The \emph{dual representation} of $R$ is defined as $R^*(g) := R(g^{-1})\tp$, for all $g \in G$.
Since $R$ is a homomorphism, $R(g^{-1}) = R(g)^{-1}$ for any $g \in G$.
Since $R$ is unitary, \ie, $R(g) \in \U{d}$, we have
$R^*(g) = R(g^{-1})\tp = \of[\big]{R(g)^{-1}}\tp = \of[\big]{R(g)\ct}\tp = \overline{R(g)}$.\footnote{While $R^*(g) = \overline{R(g)}$ for all $g \in G$, we make a distinction between $R^*(g)$ and $\overline{R(g)}$ since they might differ when $R$ is extended beyond $G$. For example, if $G = \U{1} \subset \C$ and $R(z) := z$ for all $z \in \U{1}$, we have $R^*(z) = 1/z$ while $\overline{R(z)} = \overline{z}$ for $z \in \C$, so the two maps differ when $R$ is extended from the unit circle to the whole complex plane.}
The following lemma by Schur is an important tool in representation theory.

\begin{lemma}[Schur's Lemma]\label{lem:Schur}
Let $R_1$ and $R_2$ be irreducible representations of dimension $d_1$ and $d_2$ of some group $G$. Assume that $M$ is a $d_1 \times d_2$ complex matrix such that $R_1(g) M = M R_2(g)$, for all $g \in G$. Then
\begin{itemize}
  \item if $R_1$ and $R_2$ are isomorphic then $d_1 = d_2 =: d$ and $M = c \id_d$ for some $c \in \C$ where $\id_d$ is the $d \times d$ identity matrix;
  \item if $R_1$ and $R_2$ are not isomorphic then $M = 0$.
\end{itemize}
\end{lemma}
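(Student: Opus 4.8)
The plan is to prove this by the classical kernel-and-image argument: view $M$ as an intertwiner between the two representation spaces (so $M\colon\C^{d_2}\to\C^{d_1}$, since $R_1(g)M=MR_2(g)$), and use irreducibility to force its kernel and image to be trivial or everything. First I would treat the non-isomorphic case. Suppose $M\neq 0$. For any $v\in\ker M$ and $g\in G$ we have $MR_2(g)v=R_1(g)Mv=0$, so $R_2(g)v\in\ker M$; thus $\ker M$ is an $R_2$-invariant subspace of $\C^{d_2}$, hence either $\set{0}$ or all of $\C^{d_2}$ by irreducibility of $R_2$. The latter would force $M=0$, so $\ker M=\set{0}$ and $M$ is injective, giving $d_1\geq d_2$. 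Symmetrically, for $w=Mv$ we get $R_1(g)w=R_1(g)Mv=MR_2(g)v\in\operatorname{Im}M$, so $\operatorname{Im}M$ is $R_1$-invariant, hence $\set{0}$ or all of $\C^{d_1}$; since $M\neq 0$ it must be surjective, giving $d_1\leq d_2$. Therefore $d_1=d_2$ and $M$ is a bijection, i.e.\ invertible, and then $R_1(g)=MR_2(g)M^{-1}$ for all $g$ exhibits an isomorphism between $R_1$ and $R_2$, contradicting the hypothesis. Hence $M=0$.

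For the isomorphic case I would first record the argument above in a sharper form: \emph{any} matrix intertwining two irreducible representations is either $0$ or invertible, since the kernel/image reasoning never used $R_1\not\cong R_2$ until the very last step. Taking $R_1=R_2=:R$ of dimension $d$ — the situation in which the stated conclusion $M=c\,\id_d$ literally makes sense — $M$ is a square complex matrix commuting with every $R(g)$. Because $\C$ is algebraically closed, $M$ has an eigenvalue $c\in\C$, so $M-c\,\id_d$ is non-invertible; but $M-c\,\id_d$ still commutes with every $R(g)$ (as $\id_d$ does), so by the dichotomy just established it must be the zero matrix, i.e.\ $M=c\,\id_d$. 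For a general isomorphic pair $R_1\cong R_2$ one picks an invertible intertwiner $S$ witnessing the isomorphism, applies the square case to $S^{-1}M$ (which commutes with $R_2$), and gets $M=cS$; the normalization in the statement corresponds to taking $S=\id_d$.

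There is no genuine obstacle here — this is a classical result and the argument is elementary linear algebra once one observes the invariance of $\ker M$ and $\operatorname{Im}M$. The only points worth flagging are the use of algebraic closedness of $\C$ to produce an eigenvalue in the second part (the lemma would otherwise fail over $\R$), and the mild abuse of language by which ``isomorphic'' is tacitly read as ``equal,'' so that the identity $\id_d$ appears in the conclusion rather than a fixed intertwining isomorphism $S$.
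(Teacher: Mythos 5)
Your proof is correct; note that the paper states Schur's Lemma as a standard tool without providing any proof, so there is nothing to compare against. Your argument is the classical one: the kernel and image of an intertwiner are invariant subspaces, irreducibility forces the zero-or-invertible dichotomy, and algebraic closedness of $\C$ supplies an eigenvalue in the square case. Your closing observation is also apt: as literally stated, the conclusion $M = c\,\id_d$ only makes sense when $R_1 = R_2$ (rather than merely $R_1 \cong R_2$), and in the general isomorphic case the correct conclusion is $M = cS$ for a fixed intertwining isomorphism $S$; the paper implicitly uses the lemma only in the $R_1 = R_2$ form (e.g.\ in \cref{lem:1param,lem:Inverse}), so this does not affect anything downstream.
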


\noindent
Since the representation theory of $\SU{2}$ and $\U{2}$ is particularly important to us, we provide a self-contained introduction in \cref{apx:2x2} and a brief refresher in \cref{sec:Refresher}.

\paragraph{Action of the symmetric group.}
The \emph{symmetric group} $\S_n$ on $n \geq 1$ elements consists of all $n$-element permutations. For any permutation $\pi \in \S_n$, let $P(\pi) \in \U{d^n}$ denote the permutation matrix whose action on the standard basis of $(\C^d)\xp{n}$ is given by
\begin{equation}
  P(\pi) \; \ket{x_1} \x \dotsb \x \ket{x_n}
  := \ket{x_{\pi^{-1}(1)}} \x \dotsb \x \ket{x_{\pi^{-1}(n)}},
  \qquad
  \forall x \in [d]^n,
  \label{eq:P}
\end{equation}
meaning that $P(\pi)$ permutes the $n$ qudits (or terms in the tensor product) according to the permutation $\pi$. The map $P: \S_n \to \U{d^n}$ is a representation of the symmetric group $\S_n$.

\paragraph{Bloch sphere representation of qubit states.}
Any \emph{pure} single-qubit state $\ket{\psi} \in \C^2$ can be written (up to a global phase) as
\begin{equation}
  \ket{\psi} = \mx{\cos \frac{\theta}{2} \\ e^{i\varphi} \sin \frac{\theta}{2}},
  \label{eq:psi angles}
\end{equation}
for some $\theta \in [0,\pi]$ and $\varphi \in [0, 2\pi)$.
To also capture \emph{mixed} states, we instead use a \emph{density matrix} $\rho \in \L(\C^2)$ such that $\rho \geq 0$ (\ie, $\rho$ is positive semidefinite) and $\Tr \rho = 1$.
Any such $\rho$ is of the form
\begin{equation}
  \rho(\vec{r}) := \frac{1}{2} \mx{1+z & x-iy \\ x+iy & 1-z},
  \label{eq:rho xyz}
\end{equation}
for some \emph{Bloch vector} $\vec{r} := (x,y,z) \in \R^3$ with $x^2 + y^2 + z^2 \leq 1$.
By solving $\rho = \proj{\psi}$, we see that the angles $\theta,\varphi$ of a pure state $\ket{\psi}$ are related to the spherical coordinates
$x = \sin \theta \cos \varphi$,
$y = \sin \theta \sin \varphi$,
$z = \cos \theta$
of the point $(x,y,z)$.
Hence, pure states $\rho(\vec{r})$ correspond to unit vectors $\vec{r} \in \R^3$ and form the so-called \emph{Bloch sphere}.
The \emph{uniform measure} on the Bloch sphere is given by
\begin{equation}
  d\psi := \frac{1}{4\pi} \sin \theta \, d\theta \, d\varphi,
  \label{eq:dpsi}
\end{equation}
which induces via \cref{eq:psi angles} a uniform measure on pure single-qubit states $\ket{\psi} \in \C^2$.
Note that $d\psi$ is normalized so that $\int d\psi = 1$.
If $\rho(\vec{r})$ is pure, its unique orthogonal state is $\rho(-\vec{r})$ since $\Tr \sof{\rho(\vec{r}_1) \rho(\vec{r}_2)} = (1 + x_1 x_2 + y_1 y_2 + z_1 z_2) / 2 = 0$ precisely when $\vec{r}_1 = - \vec{r}_2$.

\paragraph{Choi matrix and CPTP maps.}
Recall the following standard terminology and basic facts from quantum information theory~\cite{Watrous}.
We refer to linear maps of the form $\Phi: \L(\C^\din) \to \L(\C^\dout)$ as \emph{superoperators}.
The \emph{Choi matrix} $\J{\Phi} \in \L(\C^\dout \x \C^\din)$ of a superoperator $\Phi$ is defined as
\begin{equation}
  \J{\Phi} := \sum_{i,j \in [\din]} \Phi\of[\big]{\ketbra{i}{j}} \x \ketbra{i}{j}.
  \label{eq:Choi}
\end{equation}
The action of $\Phi$ on any $\rho \in \L(\C^\din)$ can be recovered from its Choi matrix $\J{\Phi}$ as follows:
\begin{equation}
  \Phi(\rho) = \Tr_2 \sof[\big]{ \J{\Phi} \cdot \of{\id_\dout \x \rho\tp}},
  \label{eq:ChoiAction}
\end{equation}
where we discard the second register that corresponds to the input system.
If a Choi matrix $\J{\Phi}$ is subject to
\begin{align}
  \J{\Phi} &\geq 0, &
  \Tr_1 \J{\Phi} = \id_\din,
  \label{eq:ChoiCPTP}
\end{align}
the corresponding map $\Phi$ is called \emph{completely positive} (CP) and \emph{trace-preserving} (TP), respectively. If both properties are satisfied simultaneously, the map is called CPTP or a \emph{quantum channel}. Such maps are precisely the operations that can in principle be implemented on a quantum computer. We denote the set of all CPTP maps with given input and output dimensions by $\CPTP{\C^\din}{\C^\dout}$.
In addition to the Choi matrix, quantum channels admit other representations, such as \emph{Kraus} and \emph{Stinespring}, which are described in \cref{apx:Kraus and Stinespring}.

\paragraph{Quantum circuits and algorithms.}
A \emph{quantum circuit} is a sequence of elementary quantum operations, where each operation either applies a unitary gate on at most two quantum systems, introduces an additional ancillary system initialized in $\ket{0}$, or discards a system that is no longer needed. Any such circuit implements a CPTP map. A \emph{quantum algorithm} is a uniform sequence of quantum circuits, meaning that a single Turing machine can produce any of the circuits upon receiving the corresponding problem size $n$ as its input. A quantum algorithm is \emph{efficient} if the corresponding Turing machine runs in $\mathrm{poly}(n)$ time. While every quantum algorithm corresponds to a family of CPTP maps, not every such family can be implemented by an efficient quantum algorithm. We will use the term ``quantum algorithm'' as opposed to ``quantum channel'' whenever we want to stress that our goal is to explicitly implement the underlying family of CPTP maps as a sequence of elementary quantum operations.

\subsection{Computing functions in a basis-independent way}\label{sec:Defs}

In this section, we formalize the problem of computing a function whose input is provided in an unknown quantum basis. We also specify the figure of merit used to measure the performance of a quantum algorithm for this task.

Fix a unitary $U \in \U{d}$ of dimension $d \geq 2$ and a string $x \in [d]^n$ of length $n \geq 1$ over the alphabet $[d] = \set{0, \dotsc, d-1}$. One can think of the state $U\xp{n} \ket{x}$ or its density matrix
\begin{equation}
  \rho^U_x := U\xp{n} \proj{x} U\ctxp{n}
\end{equation}
as a quantum encoding of the string $x$ in the basis given by $U$. Given this state and a function $f: [d]^n \to [d]$, we would like to find a quantum algorithm that computes $f$ and outputs the answer $f(x)$ in the same basis as the input. In other words, we want a quantum algorithm that approximates the map
\begin{equation}
  U\xp{n} \ket{x} \mapsto U \ket{f(x)},
  \label{eq:ideal}
\end{equation}
for any unknown unitary $U \in \U{d}$ and string $x \in [d]^n$. For each $n$, such algorithm is described by a CPTP map $\Phi$ from $n$ qudits to $1$ qudit.

Let $F(\ket{\psi}, \rho) := \bra{\psi} \rho \ket{\psi}$ denote the \emph{fidelity} between a mixed state $\rho$ and a pure state $\ket{\psi}$ of the same dimension.
The following definition formalizes the worst-case fidelity for computing a function in a basis-independent way on a quantum computer.
Determining this fidelity and deriving the corresponding optimal quantum algorithm is the main goal of this paper.

\newcommand{\minxu}[1][{[d]}]{\min_{\substack{x \in #1^n \\ U \in \U{d}}}}

\begin{definition}[Fidelity for computing $f$]\label{def:fid}
Let $f: [d]^n \to [d]$ be a function and $\Phi \in \CPTP{\C^{d^n}}{\C^d}$ a CPTP map for computing $f$. For a particular input string $x \in [d]^n$ and a single-qudit unitary $U \in \U{d}$, the \emph{fidelity of $\Phi$ for computing $f$ on input $x$ in basis $U$} is
\begin{equation}
  F\of[\big]{ U \ket{f(x)}, \Phi(\rho^U_x) }
  := \bra{f(x)} U\ct \Phi(\rho^U_x) U \ket{f(x)}.
\end{equation}
The \emph{(worst-case) fidelity of $\Phi$ for computing $f$} is
\begin{equation}
  F_f(\Phi) := \minxu F\of[\big]{ U \ket{f(x)}, \Phi(\rho^U_x) },
  \label{eq:fff}
\end{equation}
and the \emph{optimal fidelity for computing $f$ in a basis-independent way} is
\begin{equation}
  F_f := \max_{\Phi \in \CPTP{\C^{d^n}}{\C^d}} F_f(\Phi).
  \label{eq:Ff}
\end{equation}
A CPTP map $\Phi$ is \emph{optimal for computing $f$} if $F_f(\Phi) = F_f$.
\end{definition}

Note that the ideal functionality in \cref{eq:ideal} is not well-defined for all functions $f: [d]^n \to [d]$. To see this, consider any non-trivial permutation matrix $M \in \U{d}$ that relabels the symbols in $[d]$. Then $M\xp{n} \ket{x} = \ket{x'}$ for some other string $x' \in [d]^n$. Since the unitary basis change $M$ is not explicitly included as part of the input, it is not possible to tell whether the actual argument of $f$ should be $x$ or $x'$, so on one hand the correct output should be $M \ket{f(x)}$ while on the other hand it should be $\ket{f(x')}$. Hence, the ideal functionality of the algorithm for a given function $f$ is well-defined only when these two states agree for any permutation $M$ of $[d]$ (this is equivalent to the function $f$ being \emph{equivariant}, see \cref{def:symcov} below).

Although formally \cref{def:fid} makes sense for any function $f: [d]^n \to [d]$, we focus specifically on equivariant functions, since for this class of functions the ideal functionality in \cref{eq:ideal} is well-defined. Moreover, we further restrict our attention to \emph{symmetric} functions (see \cref{def:symcov} below). While this restriction is not necessary and the problem can also be studied for non-symmetric functions, assuming symmetry significantly simplifies the analysis by enabling the use of the so-called \emph{Schur--Weyl duality} (see \cref{sec:SchurWeyl}).

\begin{definition}[Symmetric and equivariant functions]\label{def:symcov}
A function $f: [d]^n \to [d]$ is \emph{symmetric} if
\begin{equation}
  f\of[\big]{x_{\pi^{-1}(1)}, \dotsc, x_{\pi^{-1}(n)}} = f(x), \qquad
  \forall x \in [d]^n, \quad
  \forall \pi \in \S_{n},
  \label{eq:fsym}
\end{equation}
where $\pi$ permutes the characters in the string $x$, and \emph{equivariant} if
\begin{equation}
  f\of[\big]{\sigma(x_1), \dotsc, \sigma(x_n)} = \sigma\of[\big]{f(x)}, \qquad
  \forall x \in [d]^n, \quad
  \forall \sigma \in \S_{d},
  \label{eq:fcov}
\end{equation}
where $\sigma$ relabels the symbols in the alphabet $[d]$.
\end{definition}

\subsection{Symmetric and equivariant Boolean functions}\label{sec:Boolean functions}

The main focus of this paper is the $d = 2$ case of Boolean functions.
Since the only non-trivial permutation of $\set{0,1}$ is the negation $\NOT(a) := a \oplus 1$, where $a \in \set{0,1}$ and ``$\oplus$'' denotes the bit-wise addition modulo $2$, equivariant Boolean functions have a particularly simple characterization -- they are invariant under simultaneously negating all input and output bits.
Alluding to the duality between $\AND$ and $\OR$ established by De~Morgan's laws, such functions are also known as \emph{self-dual}.

\begin{definition}[Equivariant or self-dual Boolean functions]
A Boolean function $f: \set{0,1}^n \to \set{0,1}$ is \emph{equivariant} or \emph{self-dual} if
\begin{equation}
  f(x \oplus 1^n) = f(x) \oplus 1,
  \qquad \forall x \in \set{0,1}^n,
  \label{eq:NOTcov}
\end{equation}
where $1^n$ denotes the string of $n$ ones and ``$\oplus$'' denotes the bit-wise addition modulo $2$.
\end{definition}

\newcommand{\symf}{\bar{f}}
\newcommand{\ham}{h}

To fully determine a symmetric Boolean function, it is enough to specify its value only on one input string for each possible Hamming weight.
Thus one can replace a symmetric $n$-argument Boolean function $f: \set{0,1}^n \to \set{0,1}$ by a function $\symf: \set{0,\dotsc,n} \to \set{0,1}$ whose argument is the Hamming weight of the original input string: $\symf(|x|) := f(x)$, for all $x \in \set{0,1}$.

If a symmetric function $f$ is also equivariant, the following constraint is imposed on $\symf$:
\begin{equation}
  \symf(\ham) = \symf(n-\ham) \oplus 1,
  \qquad \forall \ham \in \set{0,\dotsc,n}.
\end{equation}
This constraint cannot be satisfied when $n$ is even since $\symf(n/2) = \symf(n/2) \oplus 1$ is not possible.
Hence, it is only meaningful to consider the case of \emph{odd} $n$.

While we will not do this here, an alternative option is to consider relations instead of functions when $n$ is even. For example, the majority relation can output either $0$ or $1$ when the Hamming weight of the input is exactly $n/2$. We expect that our results can be easily extended to this variation of the problem.

\begin{example}[Majority and parity]
We are particularly interested in the $n$-bit \emph{majority} function $\MAJ_n$ that returns the value that appears the most often in the input string.
Another interesting function is the $n$-bit \emph{parity}:
\begin{align}
  \PAR_n(x_1,\dotsc,x_n) := x_1 \oplus \dotsb \oplus x_n,
  \qquad \forall x \in \set{0,1}^n.
\end{align}
It is evident that $\MAJ_n$ and $\PAR_n$ are symmetric and (when $n$ is odd) equivariant.
All symmetric and equivariant Boolean functions on $n=1$ and $n=3$ bits are summarized in \cref{tab:1-3bit}.
\end{example}

\begin{table}
  \centering
  \begin{tabular}{c|cccc}
    $x$ & $\ID$ & $\NOT$ \\ \hline
    0 & 0 & 1 \\
  \end{tabular}
  \qquad\qquad
  \begin{tabular}{c|cccc}
    $x$ & $\MAJ_3$ & $\PAR_3$ & $\NPAR_3$ & $\NMAJ_3$ \\ \hline
    000 & 0 & 0 & 1 & 1 \\
    001 & 0 & 1 & 0 & 1
  \end{tabular}
  \caption{\label{tab:1-3bit}All $1$-bit and $3$-bit symmetric equivariant Boolean functions. For $1$-bit functions, it is sufficient to list their values only on the input $0$, while for $3$-bit functions only the strings $000$ and $001$ are sufficient (on all other strings the value can be inferred from symmetry and equivariance). Here we use the notation
  $\NMAJ_n(x) := \NOT\of{\MAJ_n(x)}$ and
  $\NPAR_n(x) := \NOT\of{\PAR_n(x)}$.}
\end{table}

\subsection{Impossibility of perfect quantum majority}\label{sec:Impossiblity}

\newcommand{\junk}{\mathrm{j}}
\newcommand{\Junk}{\mathrm{J}}

Let us consider an example of computing quantum majority on three qubits, and show that no quantum algorithm can achieve this task perfectly.

A perfect algorithm would have to work correctly on all inputs. In particular, for the standard basis states we should have
\begin{equation}
  W \ket{x} \ket{\Omega} = \ket{\MAJ_3(x)} \ket{\junk_{x}}, \qquad \forall x \in \set{0,1}^3,
\end{equation}
where $W$ is an isometry (Stinespring dilation) that fully describes the algorithm.
The algorithm may use an additional ancillary system, which is initialized in some fixed state $\ket{\Omega}$, and produce some junk states $\ket{\junk_x}$ that are discarded once the computation is finished. More explicitly:
\begin{align}
  \begin{aligned}
    W \ket{000} \ket{\Omega} &= \ket{0} \ket{\junk_{000}}, \\
    W \ket{001} \ket{\Omega} &= \ket{0} \ket{\junk_{001}}, \\
    W \ket{010} \ket{\Omega} &= \ket{0} \ket{\junk_{010}}, \\
    W \ket{100} \ket{\Omega} &= \ket{0} \ket{\junk_{100}},
  \end{aligned} &&
  \begin{aligned}
    W \ket{111} \ket{\Omega} &= \ket{1} \ket{\junk_{111}}, \\
    W \ket{110} \ket{\Omega} &= \ket{1} \ket{\junk_{110}}, \\
    W \ket{101} \ket{\Omega} &= \ket{1} \ket{\junk_{101}}, \\
    W \ket{011} \ket{\Omega} &= \ket{1} \ket{\junk_{011}}.
  \end{aligned}
  \label{eq:junk}
\end{align}
Since $W$ is an isometry, all the $\ket{\junk_x}$ are normalized. Moreover, the $\ket{\junk_x}$ with a majority of $0$'s amongst $x_1, x_2, x_3 \in \set{0,1}$ are all orthogonal to each other, as are the $\ket{\junk_x}$ with a majority of $1$'s.

Now consider the ``all qubits equal'' state in an arbitrary basis:
\begin{equation}
  \ket{\Psi}
  := \of[\big]{\alpha \ket{0} + \beta \ket{1}} \x
     \of[\big]{\alpha \ket{0} + \beta \ket{1}} \x
     \of[\big]{\alpha \ket{0} + \beta \ket{1}}
\end{equation}
where $\alpha \ket{0} + \beta \ket{1} \in \C^2$ is an arbitrary single-qubit state. Then by linearity and \cref{eq:junk},
\begin{align}
  W \ket{\Psi} \ket{\Omega}
  = \alpha \ket{0} \of[\big]{
      \alpha^2     \ket{\junk_{000}}
   &+ \alpha \beta \ket{\junk_{001}}
    + \alpha \beta \ket{\junk_{010}}
    + \alpha \beta \ket{\junk_{100}}
    } \\
{}+ \beta \ket{1} \of[\big]{
      \beta^2      \ket{\junk_{111}}
   &+ \alpha \beta \ket{\junk_{110}}
    + \alpha \beta \ket{\junk_{101}}
    + \alpha \beta \ket{\junk_{011}}
    }. \nonumber
\end{align}
For this to be the correct outcome of the majority, we need the two junk states
\begin{align}
\begin{aligned}
  \ket{\Junk_0}
   &:= \of[\big]{
          \alpha^2     \ket{\junk_{000}}
        + \alpha \beta \ket{\junk_{001}}
        + \alpha \beta \ket{\junk_{010}}
        + \alpha \beta \ket{\junk_{100}}
       }, \\
  \ket{\Junk_1}
   &:= \of[\big]{
          \beta^2      \ket{\junk_{111}}
        + \alpha \beta \ket{\junk_{110}}
        + \alpha \beta \ket{\junk_{101}}
        + \alpha \beta \ket{\junk_{011}}
       }
\end{aligned}
\end{align}
to be equal. But for general $\alpha$ and $\beta$ they are not since their norms are different. Thus there is no perfect algorithm for quantum majority of $n = 3$ states. A similar argument can show that there is also no perfect algorithm for quantum majority for any $n \geq 3$ states.

According to the above argument, a linear map that implements the ideal operation
\begin{equation}
  U\xp{n} \proj{x} U\ctxp{n} \mapsto U \proj{\MAJ_n(x)} U\ct
\end{equation}
for any $n \geq 3$ cannot be completely positive since it has no Stinespring dilation $W$.
The same conclusion can be easily reached also for the ideal universal $\NOT$ operation.
Let $\vec{r} \in \R^3$ be an arbitrary unit vector on the Bloch sphere and $\rho(\vec{r})$ the corresponding density matrix given in \cref{eq:rho xyz}.
Since the unique pure state orthogonal to $\rho(\vec{r})$ has density matrix $\rho(-\vec{r})$, the Choi matrix $\J{\UNOT} \in \L(\C^4)$ for the ideal universal $\NOT$ operation must satisfy
\begin{equation}
  \Tr_2 \sof[\big]{J \cdot (\id_2 \x \rho(\vec{r})\tp)} = \rho(-\vec{r})
\end{equation}
for all $\vec{r} = (x,y,z)$. Comparing the coefficients at $x,y,z$, we find the unique solution to be
\begin{equation}
  \J{\UNOT} = \mx{
    0 & 0 & 0 & -1 \\
    0 & 1 & 0 & 0 \\
    0 & 0 & 1 & 0 \\
   -1 & 0 & 0 & 0
  }.
  \label{eq:J UNOT}
\end{equation}
Since its eigenvalues are $(-1,1,1,1)$, the ideal universal $\NOT$ operation is not completely positive and hence cannot be implemented on a quantum computer.
Nevertheless, as an abstract superoperator it is still meaningful.

We discuss this perspective more in \cref{sec:U-equivariant functions}, where we also point out that the ideal superoperator appears to be unique for any symmetric and equivariant Boolean function.
Moreover, in \cref{apx:Ideal} we provide explicit Choi matrices for the ideal superoperators of all symmetric and equivariant Boolean functions with $n=1$ and $n=3$ arguments.

\subsection{Symmetries of optimal algorithms}\label{sec:Symmetries}

Let us show that an optimal quantum algorithm for computing any symmetric function $f:[d]^n \to [d]$ in a basis-independent way can always be chosen so that it is equivariant with respect to the unitary group $\U{d}$ and symmetric with respect to the permutation group $\S_n$.

\newcommand{\PhiQ}{\Phi_Q}
\newcommand{\PhiP}{\Phi_P}

To formalize this, let $\Phi \in \CPTP{\C^{d^n}}{\C^d}$ be a quantum channel and consider two different symmetrized versions of $\Phi$, one over the symmetric group and the other over the unitary group:
\begin{align}
  \PhiP(\rho) &:=
    \sum_{\pi \in \S_n}
    \frac{1}{n!}
    \Phi\of[\big]{ P(\pi) \rho P(\pi)\ct }, &
  \PhiQ(\rho) &:=
    \int_{V \in \U{d}}
    V\ct \Phi\of[\big]{ V\xp{n} \rho V\ctxp{n} } V dV,
  \label{eq:PhiPQ}
\end{align}
where $P(\pi)$ permutes the $n$ tensor factors in $\C^d \x \dotsb \x \C^d$ according to the permutation $\pi$, see \cref{eq:P}, and $dV$ denotes the Haar measure on $\U{d}$ normalized so that $\int_{V \in \U{d}} dV = 1$. Notice that $\PhiP$ and $\PhiQ$ are quantum channels since $\CPTP{\C^{d^n}}{\C^d}$ is a convex set.

\begin{lemma}\label{lem:Opt}
For any symmetric function $f:[d]^n \to [d]$ and any $\Phi \in \CPTP{\C^{d^n}}{\C^d}$, the symmetrized channels $\PhiP$ and $\PhiQ$ compute $f$ with fidelity no worse than the original channel $\Phi$:
\begin{align}
  F_f(\PhiP) &\geq F_f(\Phi), &
  F_f(\PhiQ) &\geq F_f(\Phi).
\end{align}
\end{lemma}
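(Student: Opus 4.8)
The plan is to verify the two inequalities separately, using the fact that $f$ is symmetric and that the fidelity figure of merit is itself invariant under the symmetrizing operations. Throughout, I will work with the worst-case fidelity $F_f(\Phi) = \minxu F(U\ket{f(x)},\Phi(\rho^U_x))$ from \cref{def:fid}, and the key idea is that averaging a channel over a group can only push its worst-case fidelity up, because the worst case of an average is at least the average of the worst cases, which in turn equals the original worst case once we exploit the symmetry.

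For the unitary-symmetrized channel $\PhiQ$, first I would compute $\PhiQ(\rho^U_x)$: inserting the definition and using $\rho^U_x = U\xp{n}\proj{x}U\ctxp n$, a change of variables $V \mapsto VU^{-1}$ in the Haar integral (which is valid since $dV$ is the normalized Haar measure and hence both left- and right-invariant) gives
\begin{equation}
  \PhiQ(\rho^U_x) = \int_{V \in \U{d}} V\ct U\ct \cdot U\Phi\of[\big]{V\xp n \proj x V\ctxp n} U\ct \cdot U V\, dV,
\end{equation}
so that $U\ct \PhiQ(\rho^U_x) U = \int_V (VU^{-1})\ct \Phi(\rho^{VU^{-1}}_x)\,(VU^{-1})\, d(VU^{-1})$, which by invariance of the measure equals $\int_V V\ct \Phi(\rho^V_x) V\, dV$ and is therefore independent of $U$. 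Consequently
\begin{equation}
  F\of[\big]{U\ket{f(x)}, \PhiQ(\rho^U_x)} = \int_{V \in \U{d}} \bra{f(x)} V\ct \Phi(\rho^V_x) V \ket{f(x)}\, dV
  = \int_{V \in \U{d}} F\of[\big]{V\ket{f(x)}, \Phi(\rho^V_x)}\, dV,
\end{equation}
an average over $V$ of quantities each of which is $\geq F_f(\Phi)$ by definition of the worst case. Hence $F(U\ket{f(x)},\PhiQ(\rho^U_x)) \geq F_f(\Phi)$ for every $x$ and every $U$, and taking the minimum over $x,U$ yields $F_f(\PhiQ) \geq F_f(\Phi)$. (Here I also use linearity of fidelity in its second slot, $F(\ket\psi,\cdot)$ being affine, to pull the integral outside.)

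For the permutation-symmetrized channel $\PhiP$, the argument is the analogous discrete average, but now it is the \emph{symmetry of $f$} that does the work rather than a change of variables. For a fixed $\pi \in \S_n$ and input $x$, one has $P(\pi)\rho^U_x P(\pi)\ct = U\xp n P(\pi)\proj x P(\pi)\ct U\ctxp n = \rho^U_{\pi\cdot x}$ where $\pi\cdot x$ denotes the permuted string $(x_{\pi^{-1}(1)},\dots,x_{\pi^{-1}(n)})$; this uses that $P(\pi)$ commutes with $U\xp n$. Therefore
\begin{equation}
  F\of[\big]{U\ket{f(x)}, \PhiP(\rho^U_x)} = \frac{1}{n!}\sum_{\pi \in \S_n} \bra{f(x)} U\ct \Phi(\rho^U_{\pi\cdot x}) U \ket{f(x)},
\end{equation}
and since $f$ is symmetric, $f(\pi\cdot x) = f(x)$, so each summand equals $F(U\ket{f(\pi\cdot x)}, \Phi(\rho^U_{\pi\cdot x})) \geq F_f(\Phi)$. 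Averaging preserves the bound, so $F(U\ket{f(x)},\PhiP(\rho^U_x)) \geq F_f(\Phi)$ for all $x,U$, and minimizing gives $F_f(\PhiP) \geq F_f(\Phi)$.

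I do not anticipate a genuine obstacle here; the only points that need care are (i) invoking the left/right invariance of the Haar measure correctly in the change of variables for $\PhiQ$, (ii) using that $P(\pi)$ and $U\xp n$ commute so that permuting and conjugating by $U$ interchange freely, and (iii) noting that fidelity against a fixed pure state is affine in the density-matrix argument, so that $F(\ket\psi, \int \rho_V dV) = \int F(\ket\psi,\rho_V)\,dV$ and likewise for finite convex combinations. The statement that $\PhiP,\PhiQ$ are themselves CPTP maps — needed so that $F_f(\PhiP)$ and $F_f(\PhiQ)$ are defined — follows from convexity of $\CPTP{\C^{d^n}}{\C^d}$, as already observed just before the lemma.
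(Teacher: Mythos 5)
Your proposal is correct and follows essentially the same route as the paper: for $\PhiQ$ each Haar integrand is recognized as $F\of[\big]{VU\ket{f(x)},\Phi(\rho^{VU}_x)} \geq F_f(\Phi)$, and for $\PhiP$ the identity $P(\pi)\rho^U_x P(\pi)\ct = \rho^U_{\pi\cdot x}$ together with $f(\pi\cdot x)=f(x)$ bounds each summand by $F_f(\Phi)$. The explicit change of variables $V\mapsto VU^{-1}$ you perform is harmless but not needed — the paper simply regroups $V\ct\cdots V$ and $U\cdots U\ct$ into $(VU)\ct\cdots(VU)$ and bounds each integrand directly.
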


\begin{proof}
Recall from \cref{def:fid} that $\PhiQ$ computes $f$ with worst-case fidelity
\begin{equation}
  F_f(\PhiQ) = \minxu F\of[\big]{ U \ket{f(x)}, \PhiQ(\rho^U_x) }.
  \label{eq:FQ}
\end{equation}
We can bound the fidelity function within the minimization as follows:
\begin{align}
  \bra{f(x)} U\ct \PhiQ(\rho^U_x) U \ket{f(x)}
  &= \int_{V \in \U{d}}
     \bra{f(x)} U\ct \cdot V\ct
     \Phi \of[\big]{ V\xp{n} \rho^U_x V\ctxp{n} }
     V \cdot U \ket{f(x)} \, dV \\
  &= \int_{V \in \U{d}}
     \bra{f(x)} (VU)\ct
     \Phi \of[\big]{ (VU)\xp{n} \proj{x} (VU)\ctxp{n} }
     (VU) \ket{f(x)} \, dV \nonumber \\
  &= \int_{V \in \U{d}}
     F\of[\big]{ VU \ket{f(x)}, \Phi(\rho^{VU}_x) } \, dV \nonumber \\
  &\geq \int_{V \in \U{d}} F_f(\Phi) \, dV \nonumber \\
  &= F_f(\Phi), \nonumber
\end{align}
where the inequality holds because of the minimization over $x \in [d]^n$ and $U \in \U{d}$ within the definition of $F_f(\Phi)$. Since the above calculation works for any $x$ and $U$, we get $F_f(\PhiQ) \geq F_f(\Phi)$. The calculation for $\PhiP$ is similar, except we symmetrize over $\S_n$:
\begin{align}
  \bra{f(x)} U\ct \PhiP(\rho^U_x) U \ket{f(x)}
  &= \sum_{\pi \in \S_n} \frac{1}{n!}
     \bra{f(x)} U\ct
     \Phi \of[\big]{ P(\pi) \rho^U_x P(\pi)\ct }
     U \ket{f(x)} \\
  &= \sum_{\pi \in \S_n} \frac{1}{n!}
     \bra{f(x_\pi)} U\ct
     \Phi \of{ \rho^U_{x_\pi} }
     U \ket{f(x_\pi)} \nonumber \\
  &\geq \sum_{\pi \in \S_n} \frac{1}{n!}
     F_f(\Phi) \nonumber \\
  &= F_f(\Phi), \nonumber
\end{align}
where $x_\pi$ is the string obtained by permuting the characters of $x$ according to $\pi$, and we used $f(x) = f(x_\pi)$ in the second equality since $f$ is symmetric.
\end{proof}

Due to \cref{lem:Opt}, we are particularly interested in maps $\Phi$ that are invariant under the symmetric group $\S_n$ and equivariant with respect to the unitary group $\U{d}$.
The following definition generalizes $\S_n$-invariance from $n$-argument functions (see \cref{def:symcov}) to superoperators with $n$ input qudits.

\begin{definition}[Permutation invariance]
Let $\mc{X} := (\C^d)\xp{n}$ be a tensor product input space and $\mc{Y}$ an arbitrary output space. A linear map $\Phi: \L(\mc{X}) \to \L(\mc{Y})$ is \emph{permutation-invariant} if
\begin{equation}
  \Phi(P(\pi) \rho P(\pi)\ct) = \Phi(\rho),
  \qquad \forall \rho \in \L(\mc{X}), \quad \forall \pi \in \S_n,
\end{equation}
where $P(\pi)$ permutes $n$ qudits according to $\pi$, see \cref{eq:P}.
\end{definition}

Similarly, we can also extend the notion of equivariance from functions (see \cref{def:symcov}) to superoperators by replacing functions $f: [d]^n \to [d]$ with superoperators from $n$ qudits to one qudit.
Moreover, we extend the discrete $\S_d$-equivariance to the continuous $\U{d}$-equivariance.

\begin{definition}[Unitary equivariance]
A linear map $\Phi: \L(\C^{d^n}) \to \L(\C^d)$ is \emph{unitary-equivariant} if
\begin{equation}
  \Phi(U\xp{n} \rho U\ctxp{n}) = U \Phi(\rho) U\ct,
  \qquad \forall \rho \in \L(\C^{d^n}), \quad \forall U \in \U{d}.
\end{equation}
\end{definition}

We will also need a slightly more general notion than the invariance under applying $U$ to all input qudits and $U\ct$ to all output qudits.
The following definition accounts for arbitrary $\U{d}$-representations on the input and output spaces.

\begin{definition}[Unitary covariance]\label{def:unitary covariance}
Let $\mc{X}$ and $\mc{Y}$ be arbitrary linear spaces admitting some $\U{d}$-representations
$R_\tin : \U{d} \to \U{\mc{X}}$ and
$R_\tout: \U{d} \to \U{\mc{Y}}$, respectively.
A linear map $\Phi: \L(\mc{X}) \to \L(\mc{Y})$ is \emph{unitary-covariant} if
\begin{equation}
  \Phi\of[\big]{ R_\tin(U) \rho R_\tin(U)\ct }
  = R_\tout(U) \Phi(\rho) R_\tout(U)\ct,
  \qquad \forall \rho \in \L(\mc{X}), \quad \forall U \in \U{d}.
  \label{eq:Covariance}
\end{equation}
We denote the set of all \emph{unitary-covariant channels} by $\UCPTP{\mc{X}}{\mc{Y}}$.
\end{definition}

Unitary equivariance corresponds to the special case when $R_\tin(U) = U\xp{n}$ and $R_\tout(U) = U$.
In all other cases the $\U{d}$-representations $R_\tin$ and $R_\tout$ will be clear from the context.

It can be easily verified that the CPTP maps $\PhiP$ and $\PhiQ$ in \cref{eq:PhiPQ} are permutation-invariant and unitary-equivariant, respectively. The following corollary is a direct consequence of \cref{lem:Opt}.

\begin{cor}\label{cor:PICU}
For any symmetric function $f:[d]^n \to [d]$, with $d \geq 2$ and $n \geq 1$, there exists an optimal quantum algorithm $\Phi \in \CPTP{\C^{d^n}}{\C^d}$ for computing $f$ in a basis-independent way, such that $\Phi$ is permutation-invariant and unitary-equivariant.
\label{cor:Opt}
\end{cor}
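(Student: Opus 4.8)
The plan is to combine the two symmetrization operations introduced in \cref{lem:Opt}: starting from an optimal channel, first average it over the permutation group $\S_n$, then average the result over the unitary group $\U{d}$, and show that the doubly-symmetrized channel is simultaneously permutation-invariant and unitary-equivariant while remaining optimal. First I would note that an optimal channel exists in the first place: the set $\CPTP{\C^{d^n}}{\C^d}$ is compact (it is cut out by the closed conditions in \cref{eq:ChoiCPTP}), the string variable $x \in [d]^n$ ranges over a finite set, and $U$ ranges over the compact group $\U{d}$, so $\Phi \mapsto F_f(\Phi)$ is continuous and the maximum in \cref{eq:Ff} is attained. Fix such an optimal $\Phi$ and set $\Phi' := (\Phi_P)_Q$, using the notation of \cref{eq:PhiPQ}. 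Applying \cref{lem:Opt} twice gives $F_f(\Phi') \geq F_f(\Phi_P) \geq F_f(\Phi) = F_f$, so $\Phi'$ is again optimal.

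Next I would verify the two symmetry properties of $\Phi'$. Unitary equivariance is essentially automatic: $\Phi'$ has the form $\Psi_Q$ for $\Psi = \Phi_P$, and for any superoperator $\Psi$ the substitution $W = VU$ together with the left/right invariance of the Haar measure on $\U{d}$ shows that $\Psi_Q(U\xp{n} \rho U\ctxp{n}) = U\, \Psi_Q(\rho)\, U\ct$. For permutation invariance I would first check that $\Phi_P$ itself is permutation-invariant by reindexing the average over $\pi \in \S_n$ via $\pi \mapsto \pi\sigma$ and using $P(\pi)P(\sigma) = P(\pi\sigma)$. Then I would use the key observation that $P(\pi)$ commutes with $V\xp{n}$ for every $V \in \U{d}$ -- applying the same single-qudit unitary to all $n$ tensor factors commutes with permuting those factors. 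This lets one push $P(\sigma)$ through $V\xp{n}$ inside the Haar integral defining $\Phi' = (\Phi_P)_Q$ and then absorb it using the permutation invariance of $\Phi_P$, yielding $\Phi'(P(\sigma) \rho P(\sigma)\ct) = \Phi'(\rho)$ for all $\sigma \in \S_n$. Since $\Phi'$ is optimal, permutation-invariant, and unitary-equivariant, this proves the corollary.

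There is no genuine obstacle in the argument; the one point that needs care is ensuring that the second (unitary) symmetrization does not undo the first (permutation) symmetrization, and this compatibility is exactly what the commutation relation $[P(\pi), V\xp{n}] = 0$ provides. Equivalently, one could perform the symmetrizations in the opposite order and form $(\Phi_Q)_P$, checking by the same commutation that permutation-averaging a unitary-equivariant channel keeps it unitary-equivariant -- either order works.
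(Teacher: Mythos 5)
Your proposal is correct and follows essentially the same route as the paper: the paper simply notes that $\PhiP$ and $\PhiQ$ are permutation-invariant and unitary-equivariant respectively and cites \cref{lem:Opt}, leaving the double symmetrization and the compatibility check implicit. Your fleshed-out version — including the compactness argument for existence of an optimum and the observation that $[P(\pi), V\xp{n}] = 0$ ensures the unitary averaging preserves permutation invariance — is exactly the intended argument.
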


Our next goal is to characterize Choi matrices of unitary-equivariant maps.
The following lemma describes how the Choi matrix transforms under an arbitrary input and output basis change of $\Phi$.

\begin{lemma}\label{lem:Commuting}
Let $\mc{X}$ and $\mc{Y}$ be linear spaces, $\Phi: \L(\mc{X}) \to \L(\mc{Y})$ a linear map, and let $V \in \L(\mc{X})$ and $W \in \L(\mc{Y})$ be arbitrary matrices. Then the Choi matrix for the map $\Phi'(\rho) := W\ct \Phi(V \rho V\ct) W$ is
\begin{equation}
  \J{\Phi'} =
  (W \x \overline{V})\ct \J{\Phi} (W \x \overline{V}).
\end{equation}
\end{lemma}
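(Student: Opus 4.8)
The plan is to compute the Choi matrix of $\Phi'$ directly from the definition in \cref{eq:Choi} and reduce it to the Choi matrix of $\Phi$ by pulling the basis-change matrices $V$ and $W$ out of the sum. First I would write
\[
  \J{\Phi'} = \sum_{i,j} \Phi'\of[\big]{\ketbra{i}{j}} \x \ketbra{i}{j}
            = \sum_{i,j} W\ct \Phi\of[\big]{V \ketbra{i}{j} V\ct} W \x \ketbra{i}{j},
\]
where the sum runs over a basis of $\mc{X}$. The only subtlety is that $V$ acts on the basis vectors $\ket{i}$, $\ket{j}$ of the input space, which also appear (untransformed) in the second tensor factor of the Choi matrix; so the job is to re-express $V \ketbra{i}{j} V\ct$ in a way that lets the sum collapse back into $\J{\Phi} = \sum_{k,l} \Phi(\ketbra{k}{l}) \x \ketbra{k}{l}$.

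The key step is the standard ``transpose trick'': for any operator $V$, one has $\sum_{i,j} (V\ketbra{i}{j}V\ct) \x \ketbra{i}{j} = (\id \x V\tp)\of[\big]{\sum_{i,j} \ketbra{i}{j}\x\ketbra{i}{j}}(\id \x \overline{V})$ when summing against a fixed input operator, but here we need the version interleaved with $\Phi$. Concretely, I expand $V\ket{i} = \sum_k V_{ki}\ket{k}$ and $\bra{j}V\ct = \sum_l \overline{V_{lj}}\bra{l}$, so that
\[
  \J{\Phi'} = \sum_{i,j,k,l} V_{ki}\,\overline{V_{lj}}\;
    \bigl(W\ct \Phi(\ketbra{k}{l}) W\bigr) \x \ketbra{i}{j}.
\]
Now I pull the scalars into the second tensor factor: $\sum_{i} V_{ki}\ket{i} = \overline{V}\ct$ applied appropriately — more precisely $\sum_i V_{ki} \ket{i} = \sum_i \overline{(\overline{V})_{ki}}\ket{i}$, which is the $k$-th column of $\overline{V}\tp = V\ct$ read as a ket, i.e. $\sum_i V_{ki}\ket{i} = (V\tp\ket{k})$... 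I would be careful with indices here and identify $\sum_{i,j} V_{ki}\overline{V_{lj}}\ketbra{i}{j} = (V\tp\ket{k})(\bra{l}\overline{V}) = \overline{V}\ct \ketbra{k}{l}\,\overline{V}$ — wait, cleaner: $\sum_i V_{ki}\ket{i}$ is the vector whose $i$-th component is $V_{ki} = (V\tp)_{ik}$, hence equals $V\tp\ket{k}=\overline{V}\ct\ket{k}$, and similarly $\sum_j \overline{V_{lj}}\bra{j} = \bra{k}\mapsto$ the bra with components $\overline{V_{lj}} = \overline{(V\tp)_{jl}} = (\overline{V}\tp)_{jl} = (V\ct)_{jl}$... so it is $(\overline{V}\ct\ket{l})\ct = \bra{l}\overline{V}$. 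Therefore the double sum over $i,j$ gives exactly $\overline{V}\ct \ketbra{k}{l}\, \overline{V}$, and
\[
  \J{\Phi'} = \sum_{k,l} \bigl(W\ct \Phi(\ketbra{k}{l}) W\bigr) \x \bigl(\overline{V}\ct \ketbra{k}{l} \overline{V}\bigr)
            = (W \x \overline{V})\ct \Bigl(\sum_{k,l} \Phi(\ketbra{k}{l}) \x \ketbra{k}{l}\Bigr) (W \x \overline{V}),
\]
which is $(W\x\overline{V})\ct \J{\Phi} (W\x\overline{V})$, as claimed.

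The only genuine obstacle is bookkeeping: keeping the conjugates and transposes straight in the identity $\sum_{i,j} V_{ki}\overline{V_{lj}}\,\ketbra{i}{j} = \overline{V}\ct \ketbra{k}{l}\,\overline{V}$, which is the place where the $\overline{V}$ (rather than $V$ or $V\tp$) enters and explains the appearance of $\overline{V}$ in the statement. An equivalent and perhaps cleaner route is to avoid components entirely: note $\sum_{i,j}\ketbra{i}{j}\x\ketbra{i}{j} = \proj{\omega}$ (unnormalized) for the maximally entangled vector $\ket{\omega}=\sum_i\ket{i}\ket{i}$, use the well-known identity $(A \x \id)\ket{\omega} = (\id \x A\tp)\ket{\omega}$ to move $V$, $V\ct$ from the first slot into the conjugate on the second slot, and then apply $\Phi\x\mathrm{id}$ followed by $W\ct(\cdot)W$; linearity of $\Phi$ does the rest. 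Either way the proof is a couple of lines once the transpose trick is invoked, and I would present the component-based version since it is self-contained given only \cref{eq:Choi}.
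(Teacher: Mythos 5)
Your proof is correct, and it takes a genuinely different route from the paper's. You compute $\J{\Phi'}$ directly from the definition $\J{\Phi'} = \sum_{i,j} \Phi'\of[\big]{\ketbra{i}{j}} \x \ketbra{i}{j}$, expand $V\ketbra{i}{j}V\ct$ in components, and resum using the identity $\sum_{i,j} V_{ki}\overline{V_{lj}}\,\ketbra{i}{j} = \overline{V}\ct \ketbra{k}{l}\,\overline{V}$ — your index bookkeeping here comes out right, since $\sum_i V_{ki}\ket{i} = V\tp\ket{k} = \overline{V}\ct\ket{k}$. The paper instead works with the \emph{action} formula $\Phi(\rho) = \Tr_2\sof[\big]{\J{\Phi}\cdot(\id \x \rho\tp)}$ from \cref{eq:ChoiAction}: it substitutes $V\rho V\ct$, uses $(V\rho V\ct)\tp = \overline{V}\rho\tp\overline{V}\ct$, moves the factors around the partial trace via cyclicity on the second register, and then reads off $\J{\Phi'}$ by uniqueness of the Choi representation. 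Your approach is the more self-contained of the two — it needs only \cref{eq:Choi} and linearity, whereas the paper's argument implicitly invokes the injectivity of the map $J \mapsto \of[\big]{\rho \mapsto \Tr_2\sof{J(\id\x\rho\tp)}}$ to conclude that the matrix it has exhibited really is the Choi matrix of $\Phi'$. The paper's version, on the other hand, makes transparent \emph{why} the second tensor factor picks up $\overline{V}$ rather than $V$ (it is the transpose in the action formula that conjugates $V$), which is pedagogically the point of the lemma given how it is used in \cref{cor:Commuting}. Either proof is complete; the only cosmetic issue with yours is the mid-derivation hedging about indices, which should be cleaned up into the single displayed identity before it is presented.
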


\begin{proof}
Recall from \cref{eq:ChoiAction} that the action of $\Phi$ on any $\rho \in \L(\mc{X})$ can be expressed via its Choi matrix:
$\Phi(\rho) = \Tr_2 \sof[\big]{ \J{\Phi} \cdot \of{\id_\dout \x \rho\tp}}$
where $\dout := \dim \mc{Y}$.
If we modify the input of $\Phi$ by conjugating it with $V$, the new output looks as follows:
\begin{align}
   \Phi\of{V \rho V\ct}
&= \Tr_2 \sof[\Big]{
     \J{\Phi} \cdot \of[\big]{
       \id_\dout \x \of{V \rho V\ct}\tp
     }
   } \\
&= \Tr_2 \sof[\Big]{
     \J{\Phi} \cdot \of[\big]{
       \id_\dout \x \of{\overline{V} \rho\tp \overline{V}\ct}
     }
   } \\
&= \Tr_2 \sof[\Big]{
     \J{\Phi}
     \cdot (\id_\dout \x \overline{V})
     \cdot (\id_\dout \x \rho\tp)
     \cdot (\id_\dout \x \overline{V}\ct)
   } \\
&= \Tr_2 \sof[\Big]{
     (\id_\dout \x \overline{V}\ct)
     \cdot \J{\Phi}
     \cdot (\id_\dout \x \overline{V})
     \cdot (\id_\dout \x \rho\tp)
   },
\end{align}
where the last equality follows from the cyclic property of the partial trace (it is crucial here that the matrix $\id_\dout \x \overline{V}\ct$ acts trivially on the first register which is not traced out).
If we also modify the output of $\Phi$ by conjugating it with $W\ct$, we get
\begin{equation}
  \Phi'(\rho)
= W\ct \Phi\of[\big]{V \rho V\ct} W
= \Tr_2 \sof[\Big]{
    (W\ct \x \overline{V}\ct)
    \cdot \J{\Phi}
    \cdot (W \x \overline{V})
    \cdot (\id_\dout \x \rho\tp)
  }
  \label{eq:UJU}
\end{equation}
after moving the basis change inside the partial trace.
Since
$\Phi'(\rho) = \Tr_2 \sof[\big]{ \J{\Phi'} \cdot \of{\id_\dout \x \rho\tp}}$
and both formulas hold for all $\rho \in \L(\mc{X})$,
we conclude that
$\J{\Phi'} = (W \x \overline{V})\ct \J{\Phi} (W \x \overline{V})$.
\end{proof}

\begin{cor}\label{cor:Commuting}
A linear map $\Phi: \L(\mc{X}) \to \L(\mc{Y})$ is unitary-covariant with respect to representations
$R_\tin : \U{d} \to \U{\mc{X}}$ and
$R_\tout: \U{d} \to \U{\mc{Y}}$
if and only if
\begin{equation}
  \sof[\big]{ \J{\Phi}, R_\tout(\overline{U}) \x R_\tin(U) } = 0,
  \qquad \forall U \in \U{d}.
\end{equation}
\end{cor}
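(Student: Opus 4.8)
The corollary follows directly from \cref{lem:Commuting}, once one recognizes unitary covariance as the statement that $\Phi$ is unchanged upon conjugating its input by $R_\tin(U)$ and its output by $R_\tout(U)$. The plan is therefore to convert this invariance of $\Phi$ into an invariance of its Choi matrix $\J{\Phi}$ under conjugation by a unitary, and then to rewrite conjugation-invariance as a commutation relation.

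First I would fix $U \in \U{d}$ and apply \cref{lem:Commuting} with $V := R_\tin(U) \in \L(\mc{X})$ and $W := R_\tout(U) \in \L(\mc{Y})$, obtaining the channel $\Phi'_U(\rho) := R_\tout(U)\ct\,\Phi\of[\big]{R_\tin(U)\,\rho\,R_\tin(U)\ct}\,R_\tout(U)$ whose Choi matrix is $\J{\Phi'_U} = \of[\big]{R_\tout(U) \x \overline{R_\tin(U)}}\ct \J{\Phi}\, \of[\big]{R_\tout(U) \x \overline{R_\tin(U)}}$. Next I would observe that the covariance identity \eqref{eq:Covariance} for this $U$ is equivalent to $\Phi'_U = \Phi$: multiplying \eqref{eq:Covariance} on the left by $R_\tout(U)\ct$ and on the right by $R_\tout(U)$ gives $\Phi'_U(\rho) = \Phi(\rho)$ for all $\rho$, and conjugating back gives the converse. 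Since the Choi correspondence $\Phi \mapsto \J{\Phi}$ of \eqref{eq:Choi}--\eqref{eq:ChoiAction} is a linear bijection from superoperators $\L(\mc{X}) \to \L(\mc{Y})$ onto $\L(\mc{Y} \x \mc{X})$, we have $\Phi'_U = \Phi$ if and only if $\J{\Phi'_U} = \J{\Phi}$.

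Putting these together, covariance for a given $U$ amounts to the fixed-point equation $M_U\ct\,\J{\Phi}\,M_U = \J{\Phi}$ with $M_U := R_\tout(U) \x \overline{R_\tin(U)}$. Since $R_\tin(U)$ and $R_\tout(U)$ are unitary, so are $\overline{R_\tin(U)}$ and hence $M_U$, and for a unitary $M_U$ this equation is equivalent to $[\J{\Phi}, M_U] = 0$. Quantifying over all $U \in \U{d}$ then yields the stated characterization, up to the cosmetic replacement of $R_\tout(U) \x \overline{R_\tin(U)}$ by $R_\tout(\overline{U}) \x R_\tin(U)$, which is obtained by re-indexing $U \leftrightarrow \overline{U}$ (a bijection of $\U{d}$) together with $\overline{R_\tin(\overline{U})} = R_\tin(U)$, valid for the tensor-power-type representations at hand because their matrix entries are real-coefficient polynomials in those of $U$ (equivalently, $\overline{R_\tin(U)} = R_\tin^*(U)$ is the dual representation). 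I do not expect a genuine obstacle: the argument is pure bookkeeping, and the only thing requiring a moment's care is tracking the bars and transposes so that this final re-indexing lands in exactly the form stated.
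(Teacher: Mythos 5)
Your proposal is correct and follows essentially the same route as the paper: apply \cref{lem:Commuting} with $V = R_\tin(U)$ and $W = R_\tout(U)$ to turn covariance into the fixed-point equation $M_U\ct \J{\Phi} M_U = \J{\Phi}$, note this is a commutation relation since $M_U$ is unitary, and re-index $U \mapsto \overline{U}$ using $\overline{R(U)} = R(\overline{U})$. You are in fact slightly more explicit than the paper about the intermediate equivalences (covariance $\Leftrightarrow$ $\Phi'_U = \Phi$ $\Leftrightarrow$ equality of Choi matrices) and about why the conjugation identity holds for the representations in question.
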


\begin{proof}
According to \cref{lem:Commuting}, the unitary covariance condition from \cref{def:unitary covariance} translates into
\begin{equation}
  \J{\Phi}
= \of[\big]{R_\tout(U) \x R_\tin(\overline{U})}\ct
  \J{\Phi}
  \of[\big]{R_\tout(U) \x R_\tin(\overline{U})},
  \qquad \forall U \in \U{d},
\end{equation}
since $\overline{R(U)} = R(\overline{U})$ for any (polynomial) representation $R$ of $\U{d}$.
The result follows by substituting $U$ with $\overline{U}$.
\end{proof}

It seems from \cref{eq:fff} that infinitely many different input states need to be considered to determine the fidelity $F_f(\Phi)$ with which a given quantum algorithm $\Phi$ computes a function $f$ in a basis-independent manner. It turns out that in the cases relevant to us it suffices to consider a finite list of inputs to determine this fidelity.
Since our primary interest is in Boolean functions, we restrict to $d = 2$.

\begin{lemma}\label{lem:Fidelity}
Let $f: \set{0,1}^n \to \set{0,1}$ be a symmetric and equivariant Boolean function.
A permutation-invariant and unitary-equivariant channel $\Phi \in \CPTP{\C^{2^n}}{\C^2}$ computes $f$ with fidelity
\begin{equation}
  F_f(\Phi)
= \min_{0 \leq \ham \leq \floor{n/2}}
  F\of[\big]{
    \ket{f(0^{n-\ham}1^{\ham})},
    \Phi(\proj{0^{n-\ham}1^{\ham}})
  },
\end{equation}
where $0^{n-\ham}1^{\ham} := \underbrace{0\dots0}_{n-\ham} \underbrace{1\dots1}_{\ham}$.
\end{lemma}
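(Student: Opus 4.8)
The plan is to reduce the worst-case fidelity, which a priori involves a minimization over all $x \in \set{0,1}^n$ and all $U \in \U{2}$, down to a finite minimization over just the $\floor{n/2}+1$ standard-basis representatives $0^{n-\ham}1^{\ham}$ with $0 \le \ham \le \floor{n/2}$. There are three distinct simplifications to carry out, and I would present them in sequence.

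First, I would use permutation-invariance to collapse the dependence on $x$ to a dependence on the Hamming weight $\ham := |x|$ alone. Since $\Phi$ is permutation-invariant, $\Phi(\rho^U_x) = \Phi(U\xp{n} P(\pi)\proj{x}P(\pi)\ct U\ctxp{n}) = \Phi(\rho^U_{x_\pi})$ for every $\pi \in \S_n$ (using that $U\xp{n}$ and $P(\pi)$ commute), and $f(x) = f(x_\pi)$ by symmetry of $f$; hence the fidelity $F(U\ket{f(x)}, \Phi(\rho^U_x))$ depends on $x$ only through $|x|$. So in \cref{eq:fff} we may replace $x$ by its canonical representative $0^{n-\ham}1^{\ham}$, and the minimization over $x$ becomes a minimization over $\ham \in \set{0,\dots,n}$.

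Second, I would use unitary-equivariance to remove the dependence on $U$. Since $\Phi$ is unitary-equivariant, $\Phi(\rho^U_x) = U^{\otimes 0}\Phi(\proj{x}) (U\ct)^{\otimes 0}$ — more precisely $\Phi(U\xp{n}\proj{x}U\ctxp{n}) = U\Phi(\proj{x})U\ct$ — so $F(U\ket{f(x)}, \Phi(\rho^U_x)) = \bra{f(x)}U\ct U \Phi(\proj{x}) U\ct U\ket{f(x)} = \bra{f(x)}\Phi(\proj{x})\ket{f(x)} = F(\ket{f(x)},\Phi(\proj{x}))$, independent of $U$. Combining the first two steps, $F_f(\Phi) = \min_{0 \le \ham \le n} F(\ket{f(0^{n-\ham}1^{\ham})}, \Phi(\proj{0^{n-\ham}1^{\ham}}))$.

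Third — and this is the step I expect to be the only real content — I would use the equivariance (self-duality) of $f$ to fold the range $\ham \in \set{0,\dots,n}$ down to $\ham \in \set{0,\dots,\floor{n/2}}$, by showing that the weight-$\ham$ term equals the weight-$(n-\ham)$ term. The idea: applying the single-qubit $\NOT = X$ to all $n$ input qubits sends $\proj{0^{n-\ham}1^{\ham}}$ to (a permutation of) $\proj{0^{\ham}1^{n-\ham}}$; since $X \in \U{2}$, unitary-equivariance gives $\Phi(\proj{0^{\ham}1^{n-\ham}}) = X\Phi(\proj{0^{n-\ham}1^{\ham}})X$ up to the permutation-invariance already established. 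Meanwhile self-duality of $f$ gives $f(0^{\ham}1^{n-\ham}) = \NOT(f(0^{n-\ham}1^{\ham}))$, i.e. $\ket{f(0^{\ham}1^{n-\ham})} = X\ket{f(0^{n-\ham}1^{\ham})}$. Hence $F(\ket{f(0^{\ham}1^{n-\ham})}, \Phi(\proj{0^{\ham}1^{n-\ham}})) = \bra{f(0^{n-\ham}1^{\ham})}X^\dagger X \Phi(\proj{0^{n-\ham}1^{\ham}}) X^\dagger X\ket{f(0^{n-\ham}1^{\ham})} = F(\ket{f(0^{n-\ham}1^{\ham})}, \Phi(\proj{0^{n-\ham}1^{\ham}}))$, so the two terms coincide and the minimum over $\set{0,\dots,n}$ equals the minimum over $\set{0,\dots,\floor{n/2}}$. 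The main obstacle, such as it is, is bookkeeping the permutation that relates $X\xp{n}\ket{0^{n-\ham}1^{\ham}}$ to $\ket{0^{\ham}1^{n-\ham}}$ and confirming it is harmlessly absorbed by permutation-invariance; everything else is a direct unwinding of the definitions.
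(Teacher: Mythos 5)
Your proposal is correct and follows essentially the same route as the paper's proof: unitary-equivariance to drop $U$, the choice $U = X$ together with self-duality of $f$ to fold the Hamming weights to $\ham \le \floor{n/2}$, and permutation-invariance plus symmetry of $f$ to pass to the canonical representative $0^{n-\ham}1^{\ham}$ (the paper merely orders these three reductions differently). The permutation bookkeeping you flag in the folding step is indeed absorbed exactly as you describe, so there is no gap.
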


\begin{proof}
Let $\rho_x^U := U\xp{n} \proj{x} U\ctxp{n}$ denote the input state corresponding to an arbitrary $x \in \set{0,1}^n$ and $U \in \U{2}$.
Recall from \cref{eq:fff} that $\Phi$ computes $f$ with fidelity
\begin{equation}
  F_f(\Phi) = \minxu[\set{0,1}] F\of[\big]{ U \ket{f(x)}, \Phi(\rho^U_x) }.
  \label{eq:fff2}
\end{equation}
Since $\Phi$ is permutation-invariant and unitary-equivariant,
\begin{align}
  F\of[\big]{U \ket{f(x)}, \Phi(\rho_x^U)}
  &= \bra{f(x)} U\ct \Phi(\rho_x^U) U \ket{f(x)} \\
  &= \bra{f(x)} U\ct U \Phi(\proj{x}) U\ct U \ket{f(x)} \\
  &= F\of[\big]{\ket{f(x)}, \Phi(\proj{x})}.
\end{align}
Therefore, when $\Phi$ is unitary-equivariant, the minimization over unitaries $U$ in \cref{eq:fff2} can be omitted.
Moreover, we can also simplify the minimization over $x$.

Let $X := \smx{0&1\\1&0}$ and note that $X\xp{n} \ket{x} = \ket{x \oplus 1^n}$ and $X \ket{f(x)} = \ket{f(x \oplus 1^n)}$ since $f$ is equivariant, see \cref{eq:NOTcov}.
Setting $U := X$ in the above calculation results in
\begin{equation}
  F\of[\big]{\ket{f(x \oplus 1^n)}, \Phi(\proj{x \oplus 1^n})}
  = F\of[\big]{\ket{f(x)}, \Phi(\proj{x})},
\end{equation}
so it suffices to minimize only over strings $x \in \set{0,1}^n$ with $\abs{x} \leq \frac{n}{2}$.
Finally, if $\abs{x} = \ham$ then $f(x) = f(0^{n-\ham}1^{\ham})$ and $\Phi(\proj{x}) = \Phi(\proj{0^{n-\ham}1^{\ham}})$ since $f$ is symmetric and $\Phi$ is permutation-invariant. Therefore
\begin{equation}
  F\of[\big]{\ket{f(x)}, \Phi(\proj{x})}
  = F\of[\big]{\ket{f(0^{n-\ham}1^{\ham})}, \Phi(\proj{0^{n-\ham}1^{\ham}})}
\end{equation}
and it suffices to optimize only over Hamming weights $\ham \in \set{0, \dotsc, \floor{n/2}}$.
\end{proof}

\section{Generic pre-processing}\label{sec:Generic pre-processing}

While the problem of computing $f: [d]^n \to [d]$ in a basis-independent way can be studied for any $d \geq 2$, for the rest of this paper we focus specifically on the $d = 2$ or Boolean case. In this case we can take advantage of the relatively simple structure of the qubit Schur basis discussed in \cref{sec:BasisQubits}.
While some of our discussion applies also to any $d \geq 2$, we expect the general case to be much more complicated and leave it for future work.

As an additional simplifying assumption, we consider only symmetric functions $f$.
This helps us to narrow down the structure of possible quantum channels for computing $f$, and reduces the problem of finding an optimal channel from a semidefinite optimization problem to a linear one (see \cite{GO22} for an in-depth study of this phenomenon).

In this section, we outline a generic procedure for pre-processing the quantum input $\rho_x^U$, which can be used to simplify the task of computing any symmetric equivariant $n$-bit Boolean function $f: \set{0,1}^n \to \set{0,1}$ in a basis-independent manner. This pre-processing does not lose any relevant information about the input, so it is always possible to afterwards complete the algorithm and still compute $f$ optimally. The pre-processing is based on Schur--Weyl duality and uses a particularly nice basis of the $n$-qubit space which we explain in the following subsections.

\subsection{Schur--Weyl duality}\label{sec:SchurWeyl}

This section provides a brief summary of Schur--Weyl duality, which is one of our main tools.
While it holds for any $d \geq 2$, here we focus specifically on $d = 2$.
We list some standard material and recommend \cite[Section~5.3]{Harrow05} and \cite[Section~9.1]{GW09} for more background.

Let $U \in \U{2}$ be a single-qubit unitary and $\pi \in \S_n$ a permutation of $n$ elements. We can extend both operations to unitaries that act on the $n$-qubit space $(\C^2)\xp{n}$ by considering $Q(U) := U\xp{n}$, which simultaneously applies $U$ to each qubit, and $P(\pi)$, see \cref{eq:P}, which permutes the qubits according to $\pi$. Note that $Q(U)$ and $P(\pi)$ are unitary representations of the groups $\U{2}$ and $\S_n$, respectively.
Moreover, their actions commute: $[Q(U), P(\pi)] = 0$.

The representations $Q$ and $P$ not only commute with each other but in fact generate algebras that are mutual commutants of one another.
Recall that an \emph{algebra} $\mathscr{A} \subseteq \L(\C^{2^n})$ is a linear subspace that is closed under matrix multiplication, and the \emph{commutant} of $\mathscr{A}$ is defined as
\begin{equation}
  \Comm(\mathscr{A}) := \set{M \in \L(\C^{2^n}) : MA = AM, \forall A \in \mathscr{A}}.
\end{equation}
If we denote the algebras generated by the representations $Q$ and $P$ as follows
\begin{align}
  \mathscr{Q} &:= \spn \set{Q(U) : U \in \U{2}}, &
  \mathscr{P} &:= \spn \set{P(\pi) : \pi \in \S_n},
\end{align}
then \emph{Schur--Weyl duality} says that
$\Comm(\mathscr{Q}) = \mathscr{P}$ and
$\Comm(\mathscr{P}) = \mathscr{Q}$.

The above statement of Schur--Weyl duality is very concise but not necessarily very insightful.
We can make it much more concrete as follows.
First, we decompose the $n$-qubit state space $\C^{2^n}$ as follows:
\begin{equation}
  \underbrace{\C^2 \x \dotsb \x \C^2}_{n}
  \cong
  \bigoplus_{\lambda \pt n} \mc{Q}_\lambda \x \mc{P}_\lambda,
  \label{eq:Schur}
\end{equation}
where $\lambda \pt n$ means that $\lambda$ runs over all \emph{partitions} of $n$ with \emph{exactly two parts}\footnote{This somewhat differs from the conventional meaning of $\lambda \pt n$, which is that $\lambda$ can have any number of integer parts and each part has strictly positive size.}:
\begin{align}
  \lambda \pt n \quad \Longleftrightarrow \quad
  \lambda \in \set{
    (\lambda_1, \lambda_2) \in \Z^2:
    \lambda_1 + \lambda_2 = n,
    \lambda_1 \geq \lambda_2 \geq 0
  }.
  \label{eq:pt}
\end{align}
Equivalently, $\lambda = (n-k,k)$ for some $k \in \set{0, \dotsc, \floor{n/2}}$.
The subspace $\mc{Q}_\lambda \x \mc{P}_\lambda$ corresponding to a particular $\lambda$ in \cref{eq:Schur} is called the \emph{$\lambda$-isotypic subspace}. The dimensions of the \emph{unitary register} $\mc{Q}_\lambda$ and the \emph{permutation register} $\mc{P}_\lambda$ within this subspace are
\begin{align}
  \dim \mc{Q}_\lambda
  &= \lambda_1 - \lambda_2 + 1
  =: m_\lambda, &
  \dim \mc{P}_\lambda
  &= \binom{\lambda_1 + \lambda_2}{\lambda_1} \frac{\lambda_1 - \lambda_2 + 1}{\lambda_1 + 1}
  =: d_\lambda,
  \label{eq:dims}
\end{align}
which obey the non-trivial combinatorial identity
$2^n = \sum_{\lambda \pt n} m_\lambda d_\lambda$.

The defining property of the decomposition \eqref{eq:Schur} is that the operators $Q(U)$ and $P(\pi)$ take a particularly simple form in the right-hand side basis. Namely, if we write $\hat{M}$ to denote a matrix $M \in \L(\C^{2^n})$ expressed in this basis then
\begin{align}
  \hat{Q}(U)
  &= \bigoplus_{\lambda \pt n} Q_\lambda(U) \x \id_{d_\lambda}, &
  \hat{P}(\pi)
  &= \bigoplus_{\lambda \pt n} \id_{m_\lambda} \x P_\lambda(\pi),
  \label{eq:QP}
\end{align}
where $Q_\lambda$ and $P_\lambda$, acting on registers $\mc{Q}_\lambda$ and $\mc{P}_\lambda$, are irreducible representations of the unitary group $\U{2}$ and the permutation group $\S_n$, respectively, and $\id_d$ denotes the identity matrix of dimension $d$.
The dual nature of $\hat{Q}(U)$ and $\hat{P}(\pi)$ in \cref{eq:QP} constitute a concrete embodiment of the \emph{Schur--Weyl duality}.
It is evident from \cref{eq:QP} that $[\hat{Q}(U),\hat{P}(\pi)] = 0$.

\subsection{Schur transform}

Any orthonormal basis of $(\C^2)\xp{n}$ that decomposes the space as in \cref{eq:Schur} and that block-diagonalizes the action of $\U{2}$ and $\S_n$ as in \cref{eq:QP} is called a \emph{Schur basis}, and the corresponding unitary basis change
\begin{equation}
  \Usch: (\C^2)\xp{n} \to \bigoplus_{\lambda \pt n} \of[\big]{\C^{m_\lambda} \x \C^{d_\lambda}}
\end{equation}
is called a \emph{Schur transform}.

If $R: G \to \U{d}$ is a representation of a group $G$ then so is $R'(g) = V R(g) V\ct$ for any $V \in \U{d}$.
Since each of the $\mc{Q}_\lambda$ and $\mc{P}_\lambda$ registers in \cref{eq:Schur} has a similar unitary degree of freedom, Schur basis is not unique and thus neither is Schur transform.
Any two Schur transforms $\Usch, \Usch' \in \U{\C^{2^n}}$ are related as
\begin{equation}
  \Usch = \of*{ \bigoplus_{\lambda\pt n} \of[\big]{V_\lambda \otimes W_\lambda} } \Usch',
\end{equation}
for some $V_\lambda \in \U{m_\lambda}$ and $W_\lambda \in \U{d_\lambda}$.

Quantum circuits for implementing Schur transform, efficient in the local dimension $d \geq 2$ and the number of systems $n \geq 1$, are known \cite{Harrow05,BCH06,BCH07,KS17,Krovi18}.
However, as far as we can tell, our algorithm requires not just any old Schur transform but one that produces a particular basis on the unitary registers $\mc{Q}_\lambda$.
Namely, the action of $U\xp{n}$ in Schur basis should be given by
\begin{equation}
  \bigoplus_{\lambda \pt n}
  Q_\lambda(U) \x \id_{d_\lambda},
  \label{eq:QI}
\end{equation}
where the irreducible $\U{2}$-representations $Q_\lambda(U)$ are in the so-called \emph{Gelfand--Tsetlin basis}.
We describe in detail the construction and properties of such representations in \cref{apx:2x2}, and in \cref{sec:BasisQubits} we provide a family of Schur bases that produce them (see \cref{lem:Unitary blocks} for proof).

Generally, a Schur basis that is suitable for us can be produced by employing the representation theory of the unitary group and the associated Clebsch--Gordan transform.\footnote{This is in contrast to the approach of \cite{Krovi18} which exploits the representation theory of the symmetric group.}
For example, the $n$-qubit Schur transform implemented by \cite{KS17} is suitable for use in our algorithm.

Analogous to the Gelfand--Tsetlin basis for the unitary registers $\mc{Q}_\lambda$, a similarly elegant choice, the so-called \emph{Young--Yamanouchi basis}, exists also for the permutation registers $\mc{P}_\lambda$, \ie, the second register in \cref{eq:QI}.
However, for us the actual basis used for the permutation registers will not matter since our algorithm will discard them.
Indeed, as we consider only symmetric functions, this register will be in the maximally mixed state that looks the same in any basis.

\subsection{Nice Schur basis of the \texorpdfstring{$n$}{n}-qubit space}\label{sec:BasisQubits}

Since Schur basis is not unique, we would like to choose one with a particularly nice structure. In this section, we describe a construction borrowed from~\cite{CEM99} (see \cref{apx:Nice} for a comprehensive treatment).
While this construction does not determine a unique basis, any member of the resulting family has enough structure for our purposes.

From now on we focus exclusively on the $d = 2$ case because of the following particularly nice Schur basis for $(\C^2)\xp{n}$. It consists of vectors
\begin{equation}
  \ket{(\lambda, w, i)},
  \qquad \text{where} \qquad
  \lambda \pt n, \;
  w \in [m_\lambda], \;
  i \in [ d_\lambda].
\end{equation}
Here $\lambda$ labels the blocks in \cref{eq:Schur} while $w$ and $i$ label bases for the unitary register $\mc{Q}_\lambda$ and the permutation register $\mc{P}_\lambda$, respectively.
Note that the range of $w$ and $i$ depends on $\lambda$.

To construct this basis, we first define the $i = 0$ vectors as
\begin{equation}
  \ket{(\lambda, w, 0)}
  := \ket{s_{\lambda_1-\lambda_2}(w)} \x
     \ket{\Psi^-}\xp{\lambda_2},
  \label{eq:lw0}
\end{equation}
where $\ket{\Psi^-} := (\ket{01} - \ket{10})/\sqrt{2}$ is the \emph{singlet state} and $\ket{s_\ell(w)} \in (\C^2)\xp{\ell}$ is the $\ell$-qubit \emph{symmetric} or \emph{Dicke state}~\cite{Dicke} of Hamming weight $w$:
\begin{equation}
  \ket{s_\ell(w)}
  := \binom{\ell}{w}^{-1/2}
     \sum_{\substack{x\in\set{0,1}^\ell\\\abs{x}=w}} \ket{x}.
  \label{eq:slw}
\end{equation}
Note that $\braket{s_\ell(w)}{s_\ell(w')} = \delta_{ww'}$, so these states form an orthonormal basis of the $\ell$-qubit symmetric subspace.

Next, we (semi-explicitly) describe the states $\ket{(\lambda,w,i)}$ with $i \in \set{1,\dotsc,d_\lambda-1}$. Let
\begin{equation}
  \ket{(\lambda,w,i)} := \sum_{\pi \in \S_n} \alpha_\pi^{\lambda,i} P(\pi) \ket{(\lambda,w,0)},
  \label{eq:lwi}
\end{equation}
where the coefficients $\alpha_\pi^{\lambda,i} \in \R$ do not depend on $w$ and are chosen so that for $w=0$ the states $\set{\ket{(\lambda,0,i)} : i \in [d_\lambda]}$ form an orthonormal basis of $\mc{P}_{\lambda,0}$ where
\begin{equation}
  \mc{P}_{\lambda,w} := \spn \set[\big]{P(\pi) \ket{(\lambda,w,0)} : \pi \in \S_n}.
  \label{eq:Plw}
\end{equation}
In other words, we fix $\lambda$ and choose $\alpha_\pi^{\lambda,i}$ so that
\begin{equation}
  \braket{(\lambda,0,i)}{(\lambda,0,j)} = \delta_{ij},
  \label{eq:loi}
\end{equation}
and then reuse them for all other values of $w$ and the same $\lambda$.
Surprisingly, this ensures that the states $\set{\ket{(\lambda,w,i)} : i \in [d_\lambda]}$ with $w \neq 0$ also form an orthonormal basis of $\mc{P}_{\lambda,w}$ (see \cref{prop:Orthonormal} in \cref{apx:Nice}).
Although the above construction does not fully determine the vectors $\ket{(\lambda,w,i)}$, this ambiguity will not make any difference in the arguments that follow.
In particular, the full set of vectors $\ket{(\lambda,w,i)}$ is always orthonormal (see \cref{lem:FullOrthonormal} in \cref{apx:Nice}) and hence forms a basis for $(\C^2)\xp{n}$.

Let us now argue that these vectors are indeed compatible with the action of $\S_n$ and $\U{d}$ in \cref{eq:QP}. First, note from \cref{eq:Plw} that each subspace $\mc{P}_{\lambda,w}$ is invariant under qubit permutations $P(\pi)$. Moreover, \cref{lem:QInvar} in \cref{apx:Nice} shows that the subspaces
\begin{equation}
  \mc{Q}_{\lambda,i} := \spn \set[\big]{\ket{(\lambda,w,i)} : w \in [m_\lambda]}
\end{equation}
are invariant under $U\xp{n}$, for any $U \in \U{2}$.
The actions of $\pi \in \S_n$ and $U \in \U{2}$ on subspaces $\mc{P}_{\lambda,w}$ and $\mc{Q}_{\lambda,i}$ are given by the corresponding $\lambda$-irreps $P_\lambda(\pi)$ and $Q_\lambda(U)$, respectively, see \cref{lem:P blocks,lem:Unitary blocks} in \cref{apx:Nice}.
This establishes the explicit Schur--Weyl duality in \cref{eq:QP} and implies that the vectors $\ket{(\lambda,w,i)}$ indeed form a Schur basis.
While the exact form of $P_\lambda(\pi)$ will not be important to us, we will need more information about the $\U{2}$-irreps $Q_\lambda(U)$, so we provide an extensive discussion on their construction and properties in \cref{apx:2x2}.

With a basis at hand we can express the projection onto the $\lambda$-isotypic subspace as
\begin{equation}
  \Pi_\lambda
 := \sum_{w \in [m_\lambda]}
    \sum_{i \in [d_\lambda]}
    \proj{(\lambda,w,i)},
\end{equation}
which takes the following very simple form in the Schur basis:
\begin{equation}
  \hat{\Pi}_\lambda
  = \bigoplus_{\kappa \pt n}
    \delta_{\lambda\kappa}
    \id_{m_\kappa d_\kappa}.
  \label{eq:ProjSch}
\end{equation}
Moreover, we can write the associated Schur transform as follows:
\begin{equation}
  \Usch
  := \bigoplus_{\lambda \pt n}
     \of*{
       \sum_{w \in [m_\lambda]}
       \sum_{i \in [d_\lambda]}
       \ket{w} \x \ket{i} \bra{(\lambda,w,i)}
     }.
  \label{eq:NiceSchur}
\end{equation}

A key property of the above construction is that $\ket{(\lambda,w,0)}$ and $\ket{(\lambda,w,i)}$ satisfy \cref{eq:lw0,eq:lwi}, respectively, and the actual values of the coefficients $\alpha_\pi^{\lambda,i}$ in \cref{eq:lwi} do not matter to us.
Any basis of this form can be used in our analysis, and the associated Schur transform~\eqref{eq:NiceSchur} can be used as the first step of our algorithm to compute any symmetric equivariant Boolean function $f: \set{0,1}^n \to \set{0,1}$ in a basis-independent manner.

From now on by \emph{Schur basis} we will mean any basis of the above form, and by \emph{Schur transform} we will mean the corresponding transform \eqref{eq:NiceSchur}.

\subsection{Generic pre-processing}\label{sec:Preproc}

We now describe a generic pre-processing procedure $\Gamma$ that can be used as the first step when computing any symmetric and equivariant Boolean function $f: \set{0,1}^n \to \set{0,1}$.
We will conclude in \cref{cor:Generic} that applying $\Gamma$ does not incur any loss in fidelity.

\newcommand{\wlx}{w_{\lambda,x}}
\newcommand{\clix}{c_{\lambda,i}(x)}

\begin{alg}[label={alg:Gamma}]{Generic pre-processing $\Gamma$}
\textbf{Input:}
Quantum state $U\xp{n} \ket{x}$ with an unknown $x \in \set{0,1}^n$ and $U \in \U{2}$.\medskip\\
\textbf{Output:}
Partition $\lambda = (\lambda_1,\lambda_2) \pt n$ and quantum state $Q_\lambda(U) \ket{\wlx}$ of dimension $m_\lambda := \lambda_1 - \lambda_2 + 1$ where $\wlx := \abs{x} - \lambda_2$ (see \cref{lem:Gamma output}).
\begin{enumerate}[\bf Step 1:]
  \item Apply Schur transform $\Usch$.
  \item Measure $\lambda \pt n$ (\emph{weak Schur sampling}).
  \item Discard the permutation register $\mc{P}_\lambda$.
\end{enumerate}
\end{alg}

Since the output dimension $m_\lambda$ of this procedure is a random variable, we can either think of $\Gamma$ as a collection of CP maps $(\Gamma_\lambda : \lambda \pt n)$ where $\Gamma_\lambda: \L(\C^{2^n}) \to \L(\C^{m_\lambda})$, or we can identify the output space of $\Gamma$ with the direct sum $\bigoplus_{\lambda \pt n} \C^{m_\lambda}$.
The following lemma shows that $\Gamma$ is permutation-invariant, \ie, invariant under applying $P(\pi)$ for any $\pi \in \S_n$ on the input.
In addition, $\Gamma$ is also unitary-covariant, meaning that each $\Gamma_\lambda$ is unitary-covariant, \ie, instead of applying $U\xp{n}$ on the input we can apply $\bigoplus_{\lambda \pt n} Q_\lambda(U)$ on the output of $\Gamma$.

\begin{lemma}\label{lem:Gamma}
The generic pre-processing $\Gamma \in \CPTP{\C^{2^n}}{\bigoplus_{\lambda \pt n} \C^{m_\lambda}}$ implemented by \cref{alg:Gamma} is permutation-invariant and unitary-covariant.
In other words, for any $\rho \in \L(\C^{2^n})$,
\begin{align}
  \Gamma \of[\big]{P(\pi) \, \rho \, P(\pi)\ct}
 &= \Gamma \of[\big]{\rho},
 &  \forall \pi &\in \S_n,
    \label{eq:GP} \\
  \Gamma \of[\big]{U\xp{n} \, \rho \, U\ctxp{n}}
 &= \bigoplus_{\lambda \pt n}
    Q_\lambda(U)
    \, \Gamma_\lambda\of{\rho} \,
    Q_\lambda(U)\ct,
 &  \forall U &\in \U{2}.
    \label{eq:GU}
\end{align}
\end{lemma}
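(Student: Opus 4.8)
The plan is to track the state of the algorithm step by step through the three operations of \cref{alg:Gamma}, using the block structure of the Schur transform from \cref{eq:QP,eq:NiceSchur}. First I would unwind what $\Gamma$ actually does as a superoperator: applying $\Usch$ conjugates the input by a fixed unitary; measuring $\lambda$ and keeping the outcome corresponds to applying the projector $\hat\Pi_\lambda$ from \cref{eq:ProjSch} (in Schur basis) and relabelling the $\lambda$-block; discarding the permutation register is the partial trace $\Tr_{\mc P_\lambda}$. So, composing these, we have
\begin{equation}
  \Gamma_\lambda(\rho)
  = \Tr_{\mc P_\lambda}\sof[\big]{
      \hat\Pi_\lambda \, \Usch \, \rho \, \Usch\ct \, \hat\Pi_\lambda
    }
  = \Tr_{\mc P_\lambda}\sof[\big]{ \Pi_\lambda \, \rho \, \Pi_\lambda },
\end{equation}
where in the last step I identify the $\lambda$-block of the Schur-basis space with $\mc Q_\lambda \x \mc P_\lambda \cong \C^{m_\lambda} \x \C^{d_\lambda}$ via \cref{eq:NiceSchur}, so that $\Pi_\lambda$ is the projector onto the $\lambda$-isotypic subspace.

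For permutation-invariance \eqref{eq:GP}: since $P(\pi)$ commutes with $\Pi_\lambda$ (the isotypic subspace is $\S_n$-invariant — this is \cref{eq:QP}, which gives $\hat P(\pi) = \bigoplus_\lambda \id_{m_\lambda}\x P_\lambda(\pi)$, manifestly commuting with $\hat\Pi_\lambda$ in \cref{eq:ProjSch}), we get $\Pi_\lambda P(\pi)\rho P(\pi)\ct \Pi_\lambda = P(\pi)\,\Pi_\lambda\rho\Pi_\lambda\,P(\pi)\ct$. In the Schur basis $\hat P(\pi)$ acts on the $\lambda$-block as $\id_{m_\lambda}\x P_\lambda(\pi)$, i.e. trivially on $\mc Q_\lambda$ and by $P_\lambda(\pi)$ on $\mc P_\lambda$; hence after conjugating by it and taking $\Tr_{\mc P_\lambda}$, the $P_\lambda(\pi)$ factors cancel by the cyclic property of the (partial) trace, leaving $\Gamma_\lambda(\rho)$ unchanged.

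For unitary-covariance \eqref{eq:GU}: similarly $U\xp n$ commutes with $\Pi_\lambda$ (the isotypic subspace is $\U 2$-invariant, again \cref{eq:QP}), so $\Pi_\lambda U\xp n \rho U\ctxp n \Pi_\lambda = \hat U\, \Pi_\lambda\rho\Pi_\lambda\, \hat U\ct$ where on the $\lambda$-block $\hat U = Q_\lambda(U)\x\id_{d_\lambda}$ by \cref{eq:QP}. Taking $\Tr_{\mc P_\lambda}$ and using that $Q_\lambda(U)\x\id_{d_\lambda}$ acts as $Q_\lambda(U)$ on the register we keep and as the identity on the one we trace out, the partial trace passes through and we obtain $\Gamma_\lambda(U\xp n\rho U\ctxp n) = Q_\lambda(U)\,\Gamma_\lambda(\rho)\,Q_\lambda(U)\ct$; assembling over all $\lambda$ gives the direct-sum formula in \eqref{eq:GU}. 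The only points needing care are (i) justifying that measuring $\lambda$ and renaming the branch is exactly $\Tr_{\mc P_\lambda}[\Pi_\lambda\,\cdot\,\Pi_\lambda]$ up to the Schur-basis identification, and (ii) checking the partial trace genuinely commutes past $Q_\lambda(U)\x\id$ and past $\id\x P_\lambda(\pi)$ — both are the standard observation that a factor acting only on the traced-out register can be cycled out, while a factor acting only on the retained register can be pulled outside $\Tr_{\mc P_\lambda}$. I expect the main (though still routine) obstacle to be bookkeeping the identification of the measured-and-relabelled output space with $\bigoplus_\lambda \C^{m_\lambda}$ cleanly enough that the commutation of $Q_\lambda(U)$ with the partial trace is transparent; the representation-theoretic content is entirely supplied by \cref{eq:QP}.
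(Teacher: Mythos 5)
Your proposal is correct and follows essentially the same route as the paper's proof: both rest on the block structure $\hat{P}(\pi) = \bigoplus_\lambda \id_{m_\lambda} \x P_\lambda(\pi)$ and $\hat{Q}(U) = \bigoplus_\lambda Q_\lambda(U) \x \id_{d_\lambda}$ from \cref{eq:QP}, the compatibility of these with the measurement projectors \eqref{eq:ProjSch}, and the observation that the $P_\lambda(\pi)$ factor is absorbed by the partial trace over $\mc{P}_\lambda$ while $Q_\lambda(U)$ pulls outside it. Your version is merely a more explicit, formula-level rendering of the paper's argument.
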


\begin{proof}
To see that $\Gamma$ is permutation-invariant, recall from \cref{eq:QP} that the $n$-qubit permutation $P(\pi)$ acts as
$\hat{P}(\pi) = \bigoplus_{\lambda \pt n} \id_{m_\lambda} \x P_\lambda(\pi)$
in Schur basis.
Hence, it does not interfere with the measurement performed in Step~2 since the measurement operators
$\hat{\Pi}_\lambda = \bigoplus_{\kappa \pt n} \delta_{\lambda\kappa} \id_{m_\kappa} \x \id_{d_\kappa}$
from \cref{eq:ProjSch} are diagonal in Schur basis.
The effect of the permutation is negated when the permutation register is discarded in Step~3, thus $\Gamma$ is permutation-invariant.

To argue unitary covariance, recall from \cref{eq:QP} that $U\xp{n}$ acts as
$\hat{Q}(U) = \bigoplus_{\lambda \pt n} Q_\lambda(U) \x \id_{d_\lambda}$
in Schur basis, which also does not interfere with the measurement performed in Step~2.
Since $\hat{Q}(U)$ acts trivially on the permutation register, we can instead apply $Q_\lambda(U)$ on the output once the permutation register has been discarded in Step~3.
This shows that $\Gamma$ is unitary-covariant.
\end{proof}

\begin{lemma}\label{lem:Gamma output}
When the generic pre-processing $\Gamma$ given by \cref{alg:Gamma} is applied to $U\xp{n} \ket{x}$ for some $U \in \U{2}$ and $x \in \set{0,1}^n$,
the probability of measurement outcome $\lambda = (\lambda_1, \lambda_2)$ in Step~2 is
\begin{equation}
	p_\lambda(x) :=
  \begin{cases}
    \frac
      {\tbinom{n}{\lambda_2} - \tbinom{n}{\lambda_2-1}}
      {\tbinom{n}{\abs{x}}}
      & \text{if $\lambda_2 \leq \abs{x}$}, \\
	  0 & \text{otherwise},
	\end{cases}
  \label{eq:plx}
\end{equation}
where $\abs{x}$ denotes the number of ones in the string $x$.
The corresponding output state after Step~3 is
\begin{equation}
  Q_\lambda(U) \ket{\wlx} \in \C^{m_\lambda},
\end{equation}
where $Q_\lambda(U)$ is a representation of $U$ of dimension
$m_\lambda := \lambda_1 - \lambda_2 + 1$,
see \cref{eq:Q-lambda} in \cref{apx:U2}, and
$\wlx := \abs{x} - \lambda_2$.
The full output of $\Gamma$ is
\begin{equation}
  \Gamma \of[\big]{U\xp{n} \proj{x} U\ctxp{n}}
= \bigoplus_{\lambda \pt n}
    p_\lambda(x) \,
    Q_\lambda(U)
    \proj{\wlx}
    Q_\lambda(U)\ct.
\end{equation}
\end{lemma}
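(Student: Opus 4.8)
The plan is to reduce the full statement to the special case $U = \id_2$ via the unitary covariance of $\Gamma$ (\cref{lem:Gamma}), and then to analyze weak Schur sampling on the standard basis vector $\ket x$ by tracking which Schur basis vectors it overlaps. For the reduction: applying \eqref{eq:GU} with $\rho = \proj x$ gives
\[
  \Gamma\of[\big]{U\xp n \proj x U\ctxp n}
  = \bigoplus_{\lambda\pt n} Q_\lambda(U)\,\Gamma_\lambda\of{\proj x}\,Q_\lambda(U)\ct,
\]
so it suffices to determine each $\Gamma_\lambda\of{\proj x}$ for $U = \id_2$: conjugating a block by the unitary $Q_\lambda(U)$ does not change its trace (hence not the outcome probability $p_\lambda(x)$) and sends $\proj{\wlx}$ to $Q_\lambda(U)\proj{\wlx}Q_\lambda(U)\ct$. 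From here on I would take $U = \id_2$.

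The observation that makes everything work is that each Schur basis vector $\ket{(\lambda,w,i)}$ lies inside a single Hamming-weight eigenspace. Indeed, by \eqref{eq:lw0} the vector $\ket{(\lambda,w,0)}$ is the tensor product of the Dicke state $\ket{s_{\lambda_1-\lambda_2}(w)}$, a superposition of standard basis vectors of Hamming weight $w$, with $\lambda_2$ copies of the singlet $\ket{\Psi^-}$, a superposition of two-bit strings of weight $1$; so every standard basis vector occurring in $\ket{(\lambda,w,0)}$ has weight $w+\lambda_2$. Since each $P(\pi)$ preserves Hamming weight, the same holds for $\ket{(\lambda,w,i)} = \sum_\pi \alpha_\pi^{\lambda,i} P(\pi)\ket{(\lambda,w,0)}$ in \eqref{eq:lwi}, for every $i\in[d_\lambda]$. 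Hence, expanding $\ket x$ in the Schur basis, only vectors with $w = \abs x - \lambda_2 = \wlx$ can occur:
\[
  \ket x = \sum_{\lambda\pt n}\;\sum_{i\in[d_\lambda]} \beta^\lambda_i\,\ket{(\lambda,\wlx,i)},
\]
where the inner sum is empty unless $\wlx\in[m_\lambda]$. Thus within the $\lambda$-block $\Pi_\lambda\ket x = \ket{\wlx}_{\mc{Q}_\lambda}\x\of[\big]{\textstyle\sum_i \beta^\lambda_i\ket i_{\mc{P}_\lambda}}$ is a product state whose unitary-register factor is the fixed pure state $\ket{\wlx}$; so after measuring $\lambda$ in Step~2 and discarding $\mc{P}_\lambda$ in Step~3 the output on $\mc{Q}_\lambda$ is exactly $\proj{\wlx}$, occurring with probability $p_\lambda(x) = \norm{\Pi_\lambda\ket x}^2 = \bra x\Pi_\lambda\ket x$.

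It then remains to evaluate $\bra x\Pi_\lambda\ket x$, which I would do by symmetrization plus a dimension count. Put $\mc{H}_k := \spn\set{\ket y : \abs y = k}$, an $\S_n$-invariant subspace of dimension $\binom nk$. Because $\Pi_\lambda$ commutes with every $P(\pi)$ (both are block-diagonal in the Schur basis, see \eqref{eq:QP} and \eqref{eq:ProjSch}) and any two weight-$\abs x$ standard basis vectors differ by a permutation, $\bra x\Pi_\lambda\ket x$ is independent of $x$ among weight-$\abs x$ strings; averaging over all of them yields $p_\lambda(x) = \binom{n}{\abs x}^{-1}\Tr\of[\big]{\Pi_\lambda\,\Pi_{\mc{H}_{\abs{x}}}}$, where $\Pi_{\mc{H}_{\abs{x}}}$ projects onto $\mc{H}_{\abs x}$. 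Evaluating this trace in the Schur basis and using the weight bookkeeping above, it counts the Schur basis vectors in the $\lambda$-block of Hamming weight $\abs x$, which is $d_\lambda$ if $\wlx\in[m_\lambda]$ and $0$ otherwise. A short rearrangement of \eqref{eq:dims} (using $\lambda_1+\lambda_2 = n$ and $\binom{n}{\lambda_2-1} = \tfrac{\lambda_2}{n-\lambda_2+1}\binom{n}{\lambda_2}$) gives $d_\lambda = \binom{n}{\lambda_2} - \binom{n}{\lambda_2-1}$, which is exactly \eqref{eq:plx}; substituting $\Gamma_\lambda\of{\proj x} = p_\lambda(x)\proj{\wlx}$ back into the covariance identity of the first paragraph gives the full output of $\Gamma$.

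I expect the only substantive point to be the claim in the second paragraph that each $\ket{(\lambda,w,i)}$ sits inside one Hamming-weight eigenspace (weight $w+\lambda_2$): this is what simultaneously forces the $\mc{Q}_\lambda$-factor of $\Pi_\lambda\ket x$ to be a single pure state $\ket{\wlx}$ and reduces $p_\lambda(x)$ to a pure dimension count; everything else is the covariance reduction, a permutation-symmetry average, and one binomial identity. One cosmetic remark: the precise support condition behind \eqref{eq:plx} is $\wlx = \abs x - \lambda_2\in[m_\lambda]$, i.e.\ $\lambda_2\le\abs x\le\lambda_1$, though the extra bound $\abs x\le\lambda_1$ is automatic in all later applications, where $\abs x\le\floor{n/2}\le\lambda_1$.
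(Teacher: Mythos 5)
Your proof is correct, and its overall structure coincides with the paper's: reduce to $U=\id_2$ via the unitary covariance established in \cref{lem:Gamma}, then observe that each Schur basis vector $\ket{(\lambda,w,i)}$ lies entirely in the Hamming-weight-$(w+\lambda_2)$ eigenspace, so that $\ket{x}$ overlaps only those with $w=\wlx=\abs{x}-\lambda_2$, which simultaneously forces the post-measurement state on the unitary register to be the pure state $\ket{\wlx}$ and identifies $p_\lambda(x)$ with $\bra{x}\Pi_\lambda\ket{x}$. The one place you genuinely diverge is the explicit value of $p_\lambda(x)$: the paper simply cites Lemma~4 of \cite{Mon09} for \cref{eq:plx}, whereas you derive it from scratch via the permutation average $p_\lambda(x)=\binom{n}{\abs{x}}^{-1}\Tr\of[\big]{\Pi_\lambda\,\Pi_{\mc{H}_{\abs{x}}}}$, a count of the Schur basis vectors of weight $\abs{x}$ in the $\lambda$-block ($d_\lambda$ of them precisely when $\wlx\in[m_\lambda]$), and the rearrangement $d_\lambda=\binom{n}{\lambda_2}-\binom{n}{\lambda_2-1}$ of \cref{eq:dims}; all three steps check out, so this is a correct, self-contained replacement for the citation. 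Your closing remark is also well taken: read literally, \cref{eq:plx} needs the support condition $\lambda_2\le\abs{x}\le\lambda_1$ rather than just $\lambda_2\le\abs{x}$ (for $x=1^n$ the stated formula would assign nonzero probability to every $\lambda\pt n$, whereas $\ket{1^n}$ lies entirely in the $\lambda=(n,0)$ block), but as you note this is harmless downstream because \cref{lem:Fidelity} restricts attention to $\abs{x}\le\floor{n/2}\le\lambda_1$.
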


\begin{proof}
Recall from \cref{eq:GU} in \cref{lem:Gamma} that $\Gamma$ is unitary-covariant:
\begin{equation}
	\Gamma \of[\big]{U\xp{n} \proj{x} U\ctxp{n}}
= \bigoplus_{\lambda \pt n}
  Q_\lambda(U)
  \Gamma_\lambda \of[\big]{\proj{x}}
  Q_\lambda(U)\ct.
\end{equation}
Hence, it suffices to argue that
$\Gamma_\lambda \of[\big]{\proj{x}} = p_\lambda(x) \, \proj{\wlx}$,
\ie, applying $\Gamma$ to $\ket{x}$ produces $\ket{\wlx}$ with probability $p_\lambda(x)$
(note that $\Gamma$ is trace-preserving while $\Gamma_\lambda$ are not).

Let us analyze the effect of Steps~1--3 on $\ket{x}$.
Since Step~1 corresponds to applying the Schur transform $\Usch$ defined in \cref{eq:NiceSchur}, let us first express $\ket{x}$ in the Schur basis $\ket{(\lambda,w,i)}$.
Note from \cref{eq:lw0,eq:lwi} that the standard basis expansion of $\ket{(\lambda,w,i)}$ involves only states of Hamming weight $w + \lambda_2$, therefore $\ket{x}$ can be expressed as a linear combination of $\ket{(\lambda,\wlx,i)}$ where $\wlx = \abs{x} - \lambda_2$:
\begin{equation}
  \ket{x}
= \sum_{\lambda \pt n}
  \sum_{i \in [d_\lambda]}
  \clix
  \ket{(\lambda, \wlx, i)},
\end{equation}
for some coefficients $\clix \in \R$.
Using \cref{eq:NiceSchur}, we find that $\ket{x}$ in the Schur basis looks as follows:
\begin{equation}
	\Usch \ket{x}
= \bigoplus_{\lambda \pt n}
  \sof*{
    \sum_{i \in [d_\lambda]}
    \clix
    \of[\big]{\ket{\wlx} \x \ket{i}}
  }
= \bigoplus_{\lambda \pt n}
  \sof[\Bigg]{
    \ket{\wlx} \x
    \underbrace{
      \sum_{i \in [d_\lambda]}
      \clix
      \ket{i}
    }_{\ket{\psi_\lambda(x)}}
  }.
  \label{eq:USchx}
\end{equation}
The probability of obtaining measurement outcome $\lambda$ in Step~2 is
\begin{equation}
  p_\lambda(x)
  = \norm{\ket{\psi_\lambda(x)}}^2
  = \sum_{i \in [d_\lambda]} \abs{\clix}^2,
  \label{eq:p and c}
\end{equation}
and the corresponding post-measurement state is
\begin{equation}
  \ket{\wlx} \x
  \frac{\ket{\psi_\lambda(x)}}{\sqrt{p_\lambda(x)}}.
\end{equation}
Discarding the second register results in the desired state $\ket{\wlx} \in \C^{m_\lambda}$.
The formula for $p_\lambda(x)$ stated in \cref{eq:plx} has been derived in \cite[Lemma~4]{Mon09}.
This is a special case of a more general formula derived in \cite{GeneralizedProbability}, \ie, $p_\lambda(x) = p(\lambda_2 | n,\ham,n,\ham)$ where $\ham = |x|$.
\end{proof}

The following lemma shows that the generic pre-processing channel $\Gamma$ implemented by \cref{alg:Gamma} is reversible.
More specifically, we can find a channel $\Gamma'$ that undoes the action of $\Gamma$ on all input states that are symmetric over $\S_n$.

\begin{lemma}\label{lem:Inverse}
There exists a unitary-covariant quantum channel $\Gamma' \in \CPTP{\bigoplus_{\lambda \pt n} \C^{m_\lambda}}{\C^{2^n}}$ such that
\begin{equation}
  (\Gamma' \circ \Gamma \circ \Theta)
  \of[\big]{U\xp{n} \proj{x} U\ctxp{n}}
= \Theta
  \of[\big]{U\xp{n} \proj{x} U\ctxp{n}},
  \label{eq:Invert}
\end{equation}
for all $x \in \set{0,1}^n$ and $U \in \U{2}$,
where $\Gamma$ is the generic pre-processing implemented by \cref{alg:Gamma}
and $\Theta: \L(\C^{2^n}) \to \L(\C^{2^n})$ denotes symmetrization over $\S_n$:
\begin{equation}
	\Theta(\rho)
 := \frac{1}{n!}
    \sum_{\pi \in \S_n}
    P(\pi) \rho P(\pi)\ct, \qquad
    \forall \rho \in \L(\C^{2^n}).
  \label{eq:Theta}
\end{equation}
\end{lemma}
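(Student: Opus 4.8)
The plan is to build $\Gamma'$ explicitly as a unitary-covariant channel that, on each block $\lambda$, takes the state $\ket{\wlx} \in \C^{m_\lambda}$ back to the symmetrized input, and then to recombine these blocks with the correct weights $p_\lambda(x)$. The key point is that the symmetrized input $\Theta(U\xp{n}\proj{x}U\ctxp{n})$ is itself block-diagonal in the Schur basis — on the permutation register $\mc{P}_\lambda$ it is the maximally mixed state $\id_{d_\lambda}/d_\lambda$ (since $\Theta$ averages over $\S_n$ and $P_\lambda$ is irreducible, Schur's Lemma forces this), while on the unitary register $\mc{Q}_\lambda$ it is $Q_\lambda(U)\proj{\wlx}Q_\lambda(U)\ct$. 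So by \cref{lem:Gamma output},
\begin{equation}
  \Theta\of[\big]{U\xp{n}\proj{x}U\ctxp{n}}
  = \bigoplus_{\lambda \pt n} p_\lambda(x)\,
    Q_\lambda(U)\proj{\wlx}Q_\lambda(U)\ct \x \frac{\id_{d_\lambda}}{d_\lambda},
\end{equation}
where I have identified the $\lambda$-isotypic subspace with $\C^{m_\lambda}\x\C^{d_\lambda}$. This is exactly the output of $\Gamma$ (which produces $\bigoplus_\lambda p_\lambda(x)\,Q_\lambda(U)\proj{\wlx}Q_\lambda(U)\ct$) tensored with a fixed maximally mixed state and then mapped into $\C^{2^n}$ via the inverse Schur transform.

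So I would define $\Gamma'$ on input space $\bigoplus_{\lambda\pt n}\C^{m_\lambda}$ by: on the $\lambda$-component, adjoin a maximally mixed ancilla $\id_{d_\lambda}/d_\lambda$ on a fresh $\C^{d_\lambda}$ register, then apply $\Usch\ct$ (the inverse Schur transform) to map $\C^{m_\lambda}\x\C^{d_\lambda}$ isometrically into the $\lambda$-isotypic subspace of $\C^{2^n}$. Formally,
\begin{equation}
  \Gamma'(\sigma) := \Usch\ct\of*{\bigoplus_{\lambda\pt n}
    \Pi_\lambda \sigma_\lambda \Pi_\lambda \x \frac{\id_{d_\lambda}}{d_\lambda}} \Usch,
\end{equation}
where $\sigma_\lambda$ is the block of $\sigma$ on $\C^{m_\lambda}$; one checks trace-preservation because each block is mapped trace-preservingly and the blocks are orthogonal, and complete positivity is clear since $\Gamma'$ is a composition of adjoining an ancilla, an isometry, and a direct-sum structure. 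Unitary covariance of $\Gamma'$ follows because $\Usch\ct$ intertwines $\bigoplus_\lambda Q_\lambda(U)\x\id_{d_\lambda}$ with $U\xp{n}$ (this is the defining property \cref{eq:QP} of the Schur basis), and the ancilla $\id_{d_\lambda}/d_\lambda$ is $\U{2}$-invariant. Composing with \cref{eq:GU} of \cref{lem:Gamma}, the right-hand $Q_\lambda(U)$'s commute through, and we land on exactly the block-diagonal expression for $\Theta(U\xp{n}\proj{x}U\ctxp{n})$ displayed above, which proves \cref{eq:Invert}.

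The main thing to get right — and the only step that requires a small argument rather than bookkeeping — is the claim that the permutation register of $\Theta(U\xp{n}\proj{x}U\ctxp{n})$ is maximally mixed: one must observe that $\Theta$ commutes with $U\xp{n}$ (so we may drop the $U$'s and analyze $\Theta(\proj{x})$), write $\Theta(\proj{x})$ in Schur basis where by \cref{eq:QP} it equals $\bigoplus_\lambda \id_{m_\lambda}\x\frac{1}{n!}\sum_\pi P_\lambda(\pi)\rho_\lambda P_\lambda(\pi)\ct$ for the appropriate reduced state $\rho_\lambda$, and then invoke Schur's Lemma (\cref{lem:Schur}) on the irreducible $\S_n$-representation $P_\lambda$ to conclude the permutation-register part is a multiple of $\id_{d_\lambda}$, hence $\id_{d_\lambda}/d_\lambda$ by normalization; the unitary-register part is then forced to match the $\Gamma$ output of \cref{lem:Gamma output} by comparing with the unsymmetrized state. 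Everything else is assembling pieces that are already in hand.
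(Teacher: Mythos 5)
Your proposal is correct and follows essentially the same route as the paper's proof: you identify $\Theta(U^{\otimes n}\proj{x}U^{\dagger\otimes n})$ in Schur basis as $\bigoplus_\lambda p_\lambda(x)\,Q_\lambda(U)\proj{\wlx}Q_\lambda(U)^\dagger \otimes \id_{d_\lambda}/d_\lambda$ via Schur's lemma on the permutation register, and define $\Gamma'$ as the same three-step map the paper uses (project onto the $\lambda$-block, adjoin the maximally mixed permutation register, apply $\Usch^\dagger$), with the same covariance argument via \cref{eq:QP}. The only differences are cosmetic (a single closed-form expression for $\Gamma'$ instead of the paper's step-by-step description, and fixing the coefficient by comparison with the unsymmetrized state rather than by an explicit trace computation).
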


\begin{proof}
Recall from \cref{lem:Gamma} that $\Gamma$ is unitary-covariant and note from \cref{eq:Theta} that so is $\Theta$. In addition, once we define the map $\Gamma'$, we will see that it is also unitary-covariant. Therefore, we can assume without loss of generality that $U = \id_2$ in \cref{eq:Invert}, so it suffices to show that, for all $x \in \set{0,1}^n$,
\begin{equation}
  (\Gamma' \circ \Gamma \circ \Theta) \of[\big]{\proj{x}}
  = \Theta \of[\big]{\proj{x}}.
  \label{eq:inverse on x}
\end{equation}
It will be easier to argue this in the Schur basis, so let us show that, for all $x \in \set{0,1}^n$,
\begin{equation}
  \Usch (\Gamma' \circ \Gamma \circ \Theta) \of[\big]{\proj{x}} \Usch\ct
  = \Usch \Theta \of[\big]{\proj{x}} \Usch\ct.
  \label{eq:inverse on x in Schur basis}
\end{equation}

Recall from \cref{eq:USchx} that
\begin{equation}
  \Usch \ket{x}
= \bigoplus_{\lambda \pt n}
  \sof[\Big]{
    \ket{\wlx} \x
    \ket{\psi_\lambda(x)}
  }.
\end{equation}
Using \cref{eq:QP} we can extend this to any input of the form $P(\pi) \ket{x}$ where $\pi \in \S_n$:
\begin{equation}
  \Usch P(\pi) \ket{x}
= \of*{
    \bigoplus_{\lambda \pt n}
    \id_{m_\lambda} \x
    P_\lambda(\pi)
  }
  \Usch \ket{x}
= \bigoplus_{\lambda \pt n}
  \sof[\Big]{
    \ket{\wlx} \x
    P_\lambda(\pi)
    \ket{\psi_\lambda(x)}
  }.
\end{equation}
Substituting $\Theta$ from \cref{eq:Theta}, the right-hand side of \cref{eq:inverse on x in Schur basis} becomes
\begin{align}
  \Usch \Theta \of[\big]{\proj{x}} \Usch\ct
&=
  \Usch
  \of*{
    \frac{1}{n!}
    \sum_{\pi \in \S_n}
    P(\pi) \proj{x} P(\pi)\ct
  }
  \Usch\ct \\
&= \bigoplus_{\lambda \pt n}
  \sof*{
    \proj{\wlx} \x
    \frac{1}{n!}
    \sum_{\pi \in \S_n}
    P_\lambda(\pi)
    \proj{\psi_\lambda(x)}
    P_\lambda(\pi)\ct
  }.
\end{align}
Since the matrix on the second register commutes with $P_\lambda(\sigma)$, for any $\sigma \in \S_n$, by Schur's lemma (\cref{lem:Schur}) we get
\begin{equation}
  \frac{1}{n!}
  \sum_{\pi \in \S_n}
  P_\lambda(\pi)
  \proj{\psi_\lambda(x)}
  P_\lambda(\pi)\ct
= c_\lambda(x)
  \frac{\id_{d_\lambda}}{d_\lambda}.
\end{equation}
By comparing the trace of both sides, we find from \cref{eq:p and c} that
$c_\lambda(x) = \norm{\ket{\psi_\lambda(x)}}^2 = p_\lambda(x)$.
In summary, the right-hand side of \cref{eq:inverse on x in Schur basis} can be written as
\begin{equation}
  \Usch \Theta \of[\big]{\proj{x}} \Usch\ct
= \bigoplus_{\lambda \pt n}
  \sof*{
    p_\lambda(x)
    \proj{\wlx} \x
    \frac{\id_{d_\lambda}}{d_\lambda}
  }.
  \label{eq:RHS}
\end{equation}

To analyze the left-hand side of \cref{eq:inverse on x in Schur basis}, recall that Step~1 of $\Gamma$ corresponds to applying the Schur transform $\Usch$.
Since $\Gamma$ is acting on $\Theta \of[\big]{\proj{x}}$, the result of Step~1 coincides with \cref{eq:RHS}.
Hence, we only need to analyze Steps~2 and~3 that correspond to measuring $\lambda$ and discarding the permutation register.
It is immediate from \cref{eq:RHS} that these steps produce
\begin{equation}
  (\Gamma \circ \Theta) \of[\big]{\proj{x}}
= \bigoplus_{\lambda \pt n}
  p_\lambda(x)
  \proj{\wlx}.
\end{equation}
It is now easy to see that the following steps implement a map $\Gamma'$ that undoes the action of $\Gamma$:
\begin{enumerate}[\bf Step 1':]
\setlength\itemindent{20pt}
\setlength\itemsep{0pt}
  \item Measure $\lambda$.
  \item Attach a new (permutation) register containing the maximally mixed state $\id_{d_\lambda} / d_\lambda$.
  \item Apply the inverse Schur transform $\Usch\ct$.
\end{enumerate}
Note that Steps~1' and~2' recover the state in \cref{eq:RHS} and Step~3' recovers the original symmetrized state $\Theta \of[\big]{\proj{x}}$.
This establishes \cref{eq:inverse on x}.

It remains to show that $\Gamma'$ is unitary-covariant.
Similar to the argument in \cref{lem:Gamma}, we recall from \cref{eq:QP} that $U\xp{n}$ acts as
$\hat{Q}(U) = \bigoplus_{\lambda \pt n} Q_\lambda(U) \x \id_{d_\lambda}$
in the Schur basis.
Since this does not interfere with the measurement in Step~1' and acts trivially on the permutation register that is attached in Step~2', we conclude that $\Gamma'$ is unitary-covariant.
\end{proof}

While the above proof is useful for understanding the evolution of any input state $\rho_x^U$ during the generic pre-processing $\Gamma$, \cref{lem:Inverse} can actually be generalized to all inputs.
The following result shows that $\Gamma'$ is the inverse of $\Gamma$ on symmetric inputs.\footnote{On general inputs $\Gamma$ is not invertible.}

\begin{lemma}\label{lem:InverseGeneral}
Let $\Gamma \in \CPTP{\C^{2^n}}{\bigoplus_{\lambda \pt n} \C^{m_\lambda}}$ be the generic pre-processing implemented by \cref{alg:Gamma} and $\Gamma' \in \CPTP{\bigoplus_{\lambda \pt n} \C^{m_\lambda}}{\C^{2^n}}$ be the unitary-covariant quantum channel introduced above. Then
\begin{equation}
  \Gamma' \circ \Gamma \circ \Theta = \Theta
\end{equation}
where $\Theta$ is the quantum channel that symmetrizes over $\S_n$, see \cref{eq:Theta}.
\end{lemma}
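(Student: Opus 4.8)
The plan is to work in the Schur basis throughout and to prove the slightly stronger identity $\Gamma' \circ \Gamma = \Theta$, from which the statement follows immediately since $\Theta$ is idempotent (re-indexing the double sum over $\S_n$ gives $\Theta \circ \Theta = \Theta$). All the ingredients are already in place from the proof of \cref{lem:Inverse}; the only new work is to redo its key computation for an arbitrary input $\rho \in \L(\C^{2^n})$ instead of just $\rho = \proj{x}$.

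First I would compute the action of $\Theta$ in the Schur basis on a general $\rho$. Writing $\hat{\rho} := \Usch \rho \Usch\ct$ and splitting it into blocks $\hat{\rho}_{\lambda\kappa} := \hat{\Pi}_\lambda \hat{\rho}\, \hat{\Pi}_\kappa$, I substitute $\hat{P}(\pi) = \bigoplus_{\lambda \pt n} \id_{m_\lambda} \x P_\lambda(\pi)$ from \cref{eq:QP} into the twirl defining $\Theta$ and apply Schur's lemma (\cref{lem:Schur}) on the permutation registers: the average $B \mapsto \tfrac{1}{n!}\sum_{\pi \in \S_n} P_\lambda(\pi) B\, P_\kappa(\pi)\ct$ is intertwiner-valued, so it kills every off-diagonal block (as $P_\lambda \not\cong P_\kappa$ for $\lambda \neq \kappa$) and on each diagonal block it replaces the $\mc{P}_\lambda$-factor by $\id_{d_\lambda}/d_\lambda$ weighted by its trace. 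This gives
\begin{equation}
  \Usch\, \Theta(\rho)\, \Usch\ct
  = \bigoplus_{\lambda \pt n} \tau_\lambda(\rho) \x \frac{\id_{d_\lambda}}{d_\lambda},
  \qquad\text{where}\qquad
  \tau_\lambda(\rho) := \Tr_{\mc{P}_\lambda}\of[\big]{\hat{\Pi}_\lambda\, \hat{\rho}\, \hat{\Pi}_\lambda},
  \label{eq:ThetaGeneral}
\end{equation}
which is exactly \cref{eq:RHS} in the case $\rho = \proj{x}$.

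It remains to push $\Theta(\rho)$ through $\Gamma$ and then $\Gamma'$. Step~1 of $\Gamma$ maps $\Theta(\rho)$ to the right-hand side of \eqref{eq:ThetaGeneral}, which is block-diagonal in $\lambda$ with each permutation register maximally mixed; hence the weak Schur sampling in Step~2 leaves it untouched, and discarding $\mc{P}_\lambda$ in Step~3 gives $(\Gamma \circ \Theta)(\rho) = \bigoplus_{\lambda \pt n} \tau_\lambda(\rho)$. Feeding this into $\Gamma'$: Step~1' acts trivially, Step~2' re-attaches the maximally mixed register $\id_{d_\lambda}/d_\lambda$, and Step~3' applies $\Usch\ct$, so $(\Gamma' \circ \Gamma \circ \Theta)(\rho) = \Usch\ct \of[\big]{\bigoplus_{\lambda \pt n} \tau_\lambda(\rho) \x \id_{d_\lambda}/d_\lambda} \Usch = \Theta(\rho)$ again by \eqref{eq:ThetaGeneral}. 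As $\rho$ is arbitrary, this proves $\Gamma' \circ \Gamma \circ \Theta = \Theta$ (indeed $\Gamma' \circ \Gamma = \Theta$); linearity of all the superoperators is what makes this legitimate even though the intermediate operators $\tau_\lambda(\rho)$ need not be positive or trace-normalized. The only delicate point is the use of Schur's lemma on the off-diagonal blocks and the $1/d_\lambda$ bookkeeping; everything else is tracking which of the registers $\lambda$, $\mc{Q}_\lambda$, $\mc{P}_\lambda$ each step of $\Gamma$ and $\Gamma'$ touches. An essentially equivalent alternative would be to observe that the image of $\Theta$ equals $\Comm(\mathscr{P}) = \mathscr{Q}$ by Schur--Weyl duality and to verify directly that $\Gamma' \circ \Gamma$ restricts to the identity on $\mathscr{Q}$, but the Schur-basis computation above is the most self-contained route and reuses \cref{eq:RHS} verbatim.
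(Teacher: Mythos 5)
Your proposal is correct and follows essentially the same route as the paper: both proofs reduce to showing that $\Theta(\rho)$ takes the form $\bigoplus_{\lambda \pt n} M_\lambda \otimes \id_{d_\lambda}/d_\lambda$ in the Schur basis, after which Steps 2--3 of $\Gamma$ act reversibly and $\Gamma'$ undoes them. The paper obtains this block form in one line from the double-commutant statement of Schur--Weyl duality ($\Theta(\rho) \in \Comm(\mathscr{P}) = \mathscr{Q}$) --- precisely the alternative you mention at the end --- whereas you compute the $\S_n$-twirl explicitly via Schur's lemma on the irreps $P_\lambda$; this costs slightly more work but gives the explicit $\tau_\lambda$ and the marginally stronger identity $\Gamma' \circ \Gamma = \Theta$.
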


\begin{proof}
Fix an arbitrary input $M \in \L(\C^{2^n})$.
Since $[\Theta(M), P(\pi)] = 0$ for all $\pi \in \S_n$, the output $\Theta(M)$ is in the commutant of the algebra generated by $\set{P(\pi) : \pi \in \S_n}$.
Hence, $\Theta(M) \in \spn \set{U\xp{n} : U \in \U{2}}$ by Schur--Weyl duality (see \cref{sec:SchurWeyl}).
Since $\Usch U\xp{n} \Usch\ct = \bigoplus_{\lambda \pt n} Q_\lambda(U) \x \id_{d_\lambda}$ by \cref{eq:QP}, expressing $\Theta(M)$ in the Schur basis gives us
\begin{equation}
  \Usch \Theta(M) \Usch\ct
  = \bigoplus_{\lambda \pt n} M_\lambda \x \frac{\id_{d_\lambda}}{d_\lambda},
\end{equation}
for some $M_\lambda \in \L(\C^{m_\lambda})$.
Note that Steps~2 and~3 of \cref{alg:Gamma} (namely, measuring $\lambda$ and then discarding the permutation register) are reversible operations when applied on this input.
Since $\Gamma'$ undoes these steps, we recover back the original input $\Theta(M)$.
\end{proof}

The following lemma shows that the implementation of any permutation-invariant quantum channel can start off by first applying $\Gamma$, which justifies the name ``generic pre-processing'' for \cref{alg:Gamma}. Moreover, this decomposition is compatible with unitary-covariance.

\begin{lemma}\label{lem:Generic}
For any permutation-invariant quantum channel
$\Phi \in \CPTP{\C^{2^n}}{\C^2}$
there exists a quantum channel
$\Phi' \in \CPTP{\bigoplus_{\lambda \pt n} \C^{m_\lambda}}{\C^2}$
such that
\begin{equation}
  \Phi = \Phi' \circ \Gamma,
\end{equation}
where $\Gamma \in \CPTP{\C^{2^n}}{\bigoplus_{\lambda \pt n} \C^{m_\lambda}}$ is the generic pre-processing implemented by \cref{alg:Gamma}.
If $\Phi$ is unitary-covariant then so is $\Phi'$.
\end{lemma}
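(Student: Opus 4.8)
The plan is to define $\Phi' := \Phi \circ \Gamma'$, where $\Gamma'$ is the unitary-covariant channel constructed in \cref{lem:Inverse} and shown in \cref{lem:InverseGeneral} to satisfy $\Gamma' \circ \Gamma \circ \Theta = \Theta$, with $\Theta$ the symmetrization over $\S_n$ from \cref{eq:Theta}. Since $\Phi \in \CPTP{\C^{2^n}}{\C^2}$ and $\Gamma' \in \CPTP{\bigoplus_{\lambda \pt n} \C^{m_\lambda}}{\C^{2^n}}$ are both CPTP, their composition $\Phi'$ is automatically a quantum channel of the required type, so the only things left to check are the identity $\Phi = \Phi' \circ \Gamma$ and, when $\Phi$ is unitary-covariant, the covariance of $\Phi'$.

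For the identity, I would first observe that permutation-invariance of $\Gamma$ (\cref{lem:Gamma}) gives $\Gamma \circ \Theta = \Gamma$: by linearity, $\Gamma\of[\big]{\Theta(\rho)} = \frac{1}{n!} \sum_{\pi \in \S_n} \Gamma\of[\big]{P(\pi) \rho P(\pi)\ct} = \Gamma(\rho)$ for all $\rho$. Feeding this into \cref{lem:InverseGeneral} yields $\Gamma' \circ \Gamma = \Gamma' \circ (\Gamma \circ \Theta) = \Theta$. Symmetrically, permutation-invariance of $\Phi$ gives $\Phi \circ \Theta = \Phi$ by the same averaging argument. Chaining these, $\Phi' \circ \Gamma = \Phi \circ \Gamma' \circ \Gamma = \Phi \circ \Theta = \Phi$, which is the desired identity.

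For the covariance part, recall from \cref{lem:Inverse} that $\Gamma'$ is unitary-covariant with input representation $R(U) := \bigoplus_{\lambda \pt n} Q_\lambda(U)$ and output representation $U \mapsto U\xp{n}$. If $\Phi$ is unitary-covariant, \ie\ unitary-equivariant with input representation $U \mapsto U\xp{n}$ and output representation $U \mapsto U$, then the output representation of $\Gamma'$ is exactly the input representation of $\Phi$, and unitary covariance composes: for every $U \in \U{2}$ and every $\rho$,
\begin{equation*}
  \Phi'\of[\big]{R(U) \rho R(U)\ct}
  = \Phi\of[\big]{U\xp{n} \Gamma'(\rho) U\ctxp{n}}
  = U \Phi\of[\big]{\Gamma'(\rho)} U\ct
  = U \Phi'(\rho) U\ct .
\end{equation*}
Hence $\Phi'$ is unitary-covariant with input representation $R$ and output representation $U \mapsto U$.

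I do not expect a genuine obstacle here. The only subtle points are noticing that $\Gamma \circ \Theta = \Gamma$ is the hook that makes \cref{lem:InverseGeneral} applicable (even though that lemma carries an explicit symmetrization on its input), and lining up the intermediate representations in the covariance step so that the output of $\Gamma'$ matches the input of $\Phi$; both are routine once spotted.
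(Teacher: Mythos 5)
Your proposal is correct and follows essentially the same route as the paper: define $\Phi' := \Phi \circ \Gamma'$, use permutation-invariance of $\Gamma$ to get $\Gamma \circ \Theta = \Gamma$ so that \cref{lem:InverseGeneral} applies, use permutation-invariance of $\Phi$ to absorb the resulting $\Theta$, and obtain covariance of $\Phi'$ as a composition of two unitary-covariant channels. The only cosmetic difference is the order in which the chain of equalities is assembled.
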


\begin{proof}
By combining permutation-invariance of $\Phi$ with \cref{lem:InverseGeneral}, we get
\begin{align}
  \Phi
  = \Phi \circ \Theta
  = \Phi \circ \Gamma' \circ \Gamma \circ \Theta
  = (\Phi \circ \Gamma') \circ (\Gamma \circ \Theta)
  = \Phi' \circ \Gamma
\end{align}
where the last equality follows by defining
$\Phi' := \Phi \circ \Gamma'$
and noting that
$\Gamma \circ \Theta = \Gamma$
thanks to the permutation-invariance of $\Gamma$ from \cref{lem:Gamma}.
If $\Phi$ is unitary-covariant then so is $\Phi'$ since it is a composition of two unitary-covariant channels ($\Gamma'$ is unitary-covariant by \cref{lem:Inverse}).
\end{proof}

We can now conclude that, without loss of generality, an optimal quantum algorithm for computing a symmetric and equivariant Boolean function in a basis-independent way starts off by applying the generic pre-processing $\Gamma$.

\begin{cor}\label{cor:Generic}
Every symmetric and equivariant Boolean function $f: \set{0,1}^n \to \set{0,1}$ has an optimal quantum algorithm $\Phi \in \UCPTP{\C^{2^n}}{\C^2}$ such that $\Phi = \Phi' \circ \Gamma$ where $\Phi'$ is unitary-covariant and $\Gamma$ is the generic pre-processing implemented by \cref{alg:Gamma}.
\end{cor}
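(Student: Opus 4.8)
The plan is to chain together the structural results already established in this section. The starting point is \cref{cor:Opt} (the combination of \cref{cor:PICU} and \cref{lem:Opt}), which guarantees that every symmetric function---and in particular every symmetric and equivariant Boolean $f$---has an optimal algorithm that is simultaneously permutation-invariant and unitary-equivariant. Call this channel $\Phi$. Since unitary equivariance is the special case of unitary covariance with $R_\tin(U) = U\xp{n}$ and $R_\tout(U) = U$, the channel $\Phi$ is in particular unitary-covariant, so $\Phi \in \UCPTP{\C^{2^n}}{\C^2}$.

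Next I would invoke \cref{lem:Generic}: because $\Phi$ is permutation-invariant, there is a channel $\Phi' \in \CPTP{\bigoplus_{\lambda\pt n}\C^{m_\lambda}}{\C^2}$ with $\Phi = \Phi' \circ \Gamma$, where $\Gamma$ is the generic pre-processing of \cref{alg:Gamma}. The same lemma's final sentence states that if $\Phi$ is unitary-covariant then so is $\Phi'$; since we have just observed that $\Phi$ is unitary-covariant, we conclude $\Phi'$ is unitary-covariant as well. (Under the hood this relies on \cref{lem:Inverse}, which provides the unitary-covariant inverse $\Gamma'$ so that $\Phi' := \Phi\circ\Gamma'$ inherits covariance as a composition of covariant maps, together with $\Gamma\circ\Theta = \Gamma$ from the permutation-invariance of $\Gamma$ in \cref{lem:Gamma}.) This yields exactly the claimed decomposition $\Phi = \Phi'\circ\Gamma$ with $\Phi'$ unitary-covariant.

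There is really no main obstacle here: the corollary is a straightforward concatenation of \cref{cor:Opt}, the observation that equivariance is a special case of covariance, and \cref{lem:Generic}. The only point requiring a moment's care is bookkeeping about which notion of symmetry is needed where---\cref{cor:Opt} supplies \emph{both} permutation-invariance (needed to apply \cref{lem:Generic} at all) and unitary-equivariance (needed to conclude that $\Phi'$, not just $\Phi$, is covariant)---and making sure the optimality of $\Phi$ is transported correctly: since $F_f(\Phi) = F_f$ and $\Phi = \Phi'\circ\Gamma$, the composed algorithm $\Phi'\circ\Gamma$ is optimal for computing $f$, which is what the statement asserts.

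\begin{proof}
By \cref{cor:Opt}, the symmetric equivariant Boolean function $f$ has an optimal quantum algorithm $\Phi \in \CPTP{\C^{2^n}}{\C^2}$ that is both permutation-invariant and unitary-equivariant. Unitary equivariance is the special case of unitary covariance (\cref{def:unitary covariance}) with $R_\tin(U) = U\xp{n}$ and $R_\tout(U) = U$, so $\Phi \in \UCPTP{\C^{2^n}}{\C^2}$. Since $\Phi$ is permutation-invariant, \cref{lem:Generic} gives a channel $\Phi' \in \CPTP{\bigoplus_{\lambda \pt n} \C^{m_\lambda}}{\C^2}$ with $\Phi = \Phi' \circ \Gamma$, where $\Gamma$ is the generic pre-processing of \cref{alg:Gamma}; moreover, as $\Phi$ is unitary-covariant, the final claim of \cref{lem:Generic} ensures that $\Phi'$ is unitary-covariant as well. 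Finally, $\Phi = \Phi' \circ \Gamma$ is optimal for computing $f$ since $F_f(\Phi) = F_f$ by the choice of $\Phi$.
\end{proof}
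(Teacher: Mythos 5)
Your proof is correct and follows exactly the paper's route: the paper likewise obtains a permutation-invariant, unitary-equivariant optimal channel from \cref{cor:PICU} and then applies \cref{lem:Generic} to get the decomposition $\Phi = \Phi' \circ \Gamma$ with $\Phi'$ unitary-covariant. Your additional bookkeeping (equivariance as a special case of covariance, transport of optimality through the factorization) is all consistent with the paper; no gaps.
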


\begin{proof}
Thanks to \cref{cor:PICU}, we can assume that $\Phi$ is permutation-invariant and unitary-equivariant.
The claim then follows from \cref{lem:Generic}.
\end{proof}

\subsection{Example: $3$-qubit quantum majority}\label{sec:Example}

Suppose we are interested in computing the 3-bit majority function in a basis-independent way. As input we receive a state $U\xp{3} \ket{x}$ for some unknown unitary $U \in \U{2}$ and a 3-bit string $x$.
Without loss of generality we first apply the generic pre-processing $\Gamma$.
Recall from \cref{lem:Gamma output} that the output of $\Gamma$ is a partition $\lambda = (\lambda_1,\lambda_2) \pt 3$, occurring with probability $p_\lambda(x)$ defined in \cref{eq:plx}, and a state $Q_\lambda(U) \ket{\wlx}$ where $\wlx = \abs{x} - \lambda_2$ and $Q_\lambda(U)$ is a representation of $U$ of dimension $m_\lambda = \lambda_1 - \lambda_2 + 1$.
For the sake of this example, we will apply an additional isometry $\ket{w} \mapsto \ket{s_\ell(w)}$ that maps the output state into the symmetric subspace of $\ell = \lambda_1 - \lambda_2$ qubits.
Then the output state is
\begin{equation}
  U\xp{\ell} \ket{s_\ell(\wlx)},
\end{equation}
where $\ket{s_\ell(\wlx)}$ is the $\ell$-qubit symmetric state of Hamming weight $\wlx$, see \cref{eq:slw}.

There are only two three-element partitions relevant to us: $\lambda=(2,1)$ and $\lambda=(3,0)$.
Moreover, since the generic pre-processing is permutation-invariant and unitary-covariant (see \cref{lem:Gamma}), \cref{lem:Fidelity} tells us that it is enough to consider only two inputs: $x=000$ and $x=001$.
For example, we can think of the input $x=101$ as $X\xp{3} P(\pi) \ket{001}$ where $\pi$ is the permutation $(132)$.
For each input $x$ and measurement outcome $\lambda$, the table below describes the corresponding Hamming weight $\wlx$, the number of output qubits $\ell$, the probability $p_\lambda(x)$, and the corresponding output state.

\begin{center}
\renewcommand{\arraystretch}{1.5}
\begin{tabular}{c|c|c}
	& $x = 000$ & $x = 001$ \\ \hline
	& $\wlx = -1, \ell = 1$
	& $\wlx =  0, \ell = 1$ \\
		$\lambda = (2,1)$
	& $p_\lambda(x) = 0$
	& $p_\lambda(x) = \frac{2}{3}$ \\
	& -----
	& $U \ket{0}$ \\[3pt] \hline
	& $\wlx = 0, \ell = 3$
	& $\wlx = 1, \ell = 3$ \\
		$\lambda = (3,0)$
	& $p_\lambda(x) = 1$
	& $p_\lambda(x) = \frac{1}{3}$ \\
	& $U\xp{3} \ket{000}$
	& $U\xp{3} \frac{\ket{001}+\ket{010}+\ket{100}}{\sqrt{3}}$
\end{tabular}
\end{center}

Let us now go a step further and think about the actions we could take after the generic pre-processing to finish computing the three-qubit majority function. Note that for both $x=000$ and $x=001$, the correct output is $U \ket{0}$.
\begin{itemize}
\item If the measurement outcome is $\lambda = (2,1)$, we can simply return the remaining 1-qubit state $U \ket{0}$ since it is the desired output.
Since the outcome $\lambda = (2,1)$ cannot occur on inputs $x \in \set{000,111}$, we also have additional classical side-information about the Hamming weight of $x$.
\item If the outcome is $\lambda = (3,0)$, we cannot eliminate any input $x$. However, we can simply return any of the three qubits. If $x = 000$ we end up returning the desired target state $U \ket{0}$ while for $x = 001$ we return a mixed state $U \of[\big]{\frac{2}{3} \proj{0} + \frac{1}{3} \proj{1}} U\ct$ that has fidelity $2/3$ with the desired target state.
\end{itemize}
Overall, this algorithm achieves worst-case fidelity
\begin{equation}
	\min \set*{0 + 1 \cdot 1, \frac{2}{3} \cdot 1 + \frac{1}{3} \cdot \frac{2}{3}} = \frac{8}{9},
\end{equation}
which we later show is optimal (see \cref{sec:Numerics}).
Note that
$8/9 \approx 0.888889$ is better than
$2/3 \approx 0.666667$ and
$5/6 \approx 0.833333$ obtained by the trivial strategy for $n = 3$ in the no-promise and promise cases, respectively (see \cref{fig:Comparison}).

\section{Completing the algorithm}\label{sec:Completing}

The goal of this section is to complete the algorithm by finding an optimal way to proceed after the generic pre-processing $\Gamma$.
We will show in \cref{lem:1param} that the second step of the algorithm can be described by a unitary-covariant channel that has only one degree of freedom.
A complete optimal algorithm can therefore be found by choosing this parameter optimally for each pre-processing measurement outcome $\lambda$.
Our main result, \cref{thm:main}, is a simple linear program that describes the optimal choice.

\subsection{Representation theory refresher}\label{sec:Refresher}

Let us briefly recall some basic facts about the representation theory of $\U{2}$ and $\SU{2}$. For more details, see \cref{apx:2x2}.

Any (polynomial) irreducible representation of $\U{2}$ is given by a map $Q_\lambda: \U{2} \to \U{m_\lambda}$, where $\lambda = (\lambda_1, \lambda_2)$ for some integers $\lambda_1 \geq \lambda_2 \geq 0$, and $m_\lambda := \lambda_1 - \lambda_2 + 1$.
Alternatively, we can use parameters $r := \lambda_2 \geq 0$ and $\ell := \lambda_1 - \lambda_2 \geq 0$, see \cref{fig:lambda}, so that
\begin{equation}
  (\lambda_1, \lambda_2) = (r+\ell,r).
  \label{eq:lambda-lr}
\end{equation}
Note that $\lambda$ is a partition of $n = \lambda_1 + \lambda_2 = 2r + \ell$.
With these conventions,
\begin{equation}
  Q_{\lambda}(M)
  := (\det M)^r T^\ell(M),
  \qquad \forall M \in \L(\C^2),
\end{equation}
where $\det M$ denotes the determinant of $M$ and the map
$T^\ell: \L(\C^2) \to \L(\C^{\ell+1})$
is explicitly described in \cref{apx:Construction}.
The matrix entries of $T^\ell(M)$ are degree-$\ell$ homogeneous polynomials in the entries of $M$ (\eg, $T^0(M) = 1$ and $T^1(M) = M$).
Since $n$ is required to be odd for an $n$-variable Boolean function to be equivariant, see \cref{sec:Boolean functions}, we cannot have $\ell = 0$.
From now on we assume that $\ell \geq 1$.

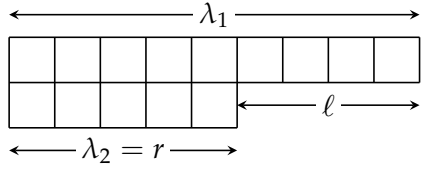
\begin{figure}
\centering


\begin{tikzpicture}[> = stealth, semithick,
    l/.style = {fill = white, inner sep = 2pt},
  ]
  \def\t{9} 
  \def\s{5} 
  \def\w{0.6cm}
  \draw (\w/2,   \w/2) -- (\t*\w+\w/2,   \w/2);
  \draw (\w/2,  -\w/2) -- (\t*\w+\w/2,  -\w/2);
  \draw (\w/2,-3*\w/2) -- (\s*\w+\w/2,-3*\w/2);
  \foreach \i in {0,...,\s}
    \draw (\w/2+\i*\w,\w/2) -- (\w/2+\i*\w,-3*\w/2);
  \pgfmathsetmacro\sp{\s+1}
  \foreach \i in {\sp,...,\t}
    \draw (\w/2+\i*\w,\w/2) -- (\w/2+\i*\w,-\w/2);
  \draw[<->] (\w/2,\w) -- (\t*\w+\w/2,\w);
  \node[l] at (\t*\w/2+\w/2,\w) {$\lambda_1$};
  \draw[<->] (\w/2,-2*\w) -- (\s*\w+\w/2,-2*\w);
  \node[l] at (\s*\w/2+\w/2,-2*\w) {$\lambda_2 = r$};
  \pgfmathsetmacro\st{\t-\s}
  \draw[<->] (\s*\w+\w/2,-\w) -- (\t*\w+\w/2,-\w);
  \node[l] at (\st*\w/2+\s*\w+\w/2,-\w) {$\ell$};
\end{tikzpicture}
\caption{\label{fig:lambda}Correspondence between the parameters $r,\ell$ and partitions $\lambda = (\lambda_1,\lambda_2)$ in \cref{eq:lambda-lr}. The total number of boxes is $n = \lambda_1 + \lambda_2 = 2r + \ell$.}
\end{figure}

If $M \in \L(\C^2)$ is invertible, we denote its \emph{dual}\footnote{Not to be confused with $\overline{M}$, the entry-wise complex conjugate of $M$. Note that $M^* = \overline{M}$ only when $M$ is unitary.} by $M^* := (M^{-1})\tp$.
Note that $M \mapsto M^*$ is a homomorphism, also known as the \emph{dual of the defining representation}.
Its tensor product $M^* \x T^\ell(M)$ with any of the representations $T^\ell$ is also a representation, albeit not irreducible.
It is decomposed into irreducibles by the
\emph{dual Clebsch--Gordan transform}
$\DCG_\ell \in \U{\C^{2(\ell+1)}}$:
\begin{equation}
  \DCG_\ell \of[\big]{M^* \x T^\ell(M)} \DCGct_\ell
  = T^{\ell-1}(M) \+ \frac{1}{\det M} T^{\ell+1}(M).
  \label{eq:Dual CG on any M}
\end{equation}
Note that both sides have the same dimension since $2 \cdot (\ell+1) = \ell + (\ell+2)$.
Moreover, when multiplied by $\det M$, the matrix entries on both sides are polynomials of degree $\ell+1$ in the matrix entries of $M$.
While \cref{eq:Dual CG on any M} can be taken as an implicit definition\footnote{Multiplying both sides of \cref{eq:Dual CG on any M} from the right by $\DCG_\ell$ and comparing the coefficients at all monomials produces a linear system of equations that determines $\DCG_\ell$ almost uniquely (up to two signs).} of $\DCG_\ell$, we also provide an explicit formula in \cref{eq:DualCl}.
For more details on $\DCG_\ell$, see \cref{apx:Dual CG}.

\subsection{Two extremal channels}\label{sec:Extremal}

\newcommand{\UCl}{\UCPTP{\C^{\ell+1}}{\C^2}}

Recall from \cref{lem:Gamma output} that the generic pre-processing $\Gamma$ given by \cref{alg:Gamma} produces a measurement outcome $\lambda = (\lambda_1, \lambda_2) \pt n$ and a corresponding post-measurement state of dimension $m_\lambda = \ell + 1$ where $\ell := \lambda_1 - \lambda_2 \geq 1$. From \cref{cor:Opt} we know that an optimal algorithm for computing a function in a basis-independent way can be assumed to be unitary-equivariant. Therefore, to complete the algorithm it suffices to consider only unitary-covariant channels from $\C^{\ell+1}$ to a single qubit. Following \cref{def:unitary covariance}, we denote the set of such channels by $\UCl$, where the $\U{2}$-representations on the input and output spaces are $T^\ell(U)$ and $T^1(U) = U$, respectively. It turns out that, irrespective of the value of $\ell \geq 1$, any such channel can be described by a single parameter. The following lemma proves this and derives the Choi matrices of the two extremal channels.

\begin{lemma}\label{lem:1param}
For any integer $\ell \geq 1$, the set $\UCl$ of unitary-covariant channels is convex with two extremal points. The Choi matrices of the two extremal channels are
\begin{align}
	\J{\Phi^\ell_1}
 &= \DCGct_\ell
    \sof*{
      \frac{\ell+1}{\ell} \id_\ell \+ 0_{\ell+2}
    }
    \DCG_\ell, &
  \J{\Phi^\ell_2}
 &= \DCGct_\ell
    \sof*{
      0_\ell \+ \frac{\ell+1}{\ell+2} \id_{\ell+2}
    }
    \DCG_\ell,
  \label{eq:Choi12}
\end{align}
where $\DCG_\ell \in \U{\C^{2(\ell+1)}}$ denotes the dual Clebsch--Gordan transform (see \cref{eq:Dual CG on any M} or \cref{apx:Dual CG}) and $0_d$ denotes the $d \times d$ all-zeroes matrix.
In particular, their Choi ranks are $\rank \J{\Phi^\ell_1} = \ell$ and $\rank \J{\Phi^\ell_2} = \ell+2$, and they satisfy
\begin{equation}
	\ell \J{\Phi^\ell_1} + (\ell+2) \J{\Phi^\ell_2}
  = (\ell+1) \id_{2(\ell+1)}.
  \label{eq:ChoiRel}
\end{equation}
\end{lemma}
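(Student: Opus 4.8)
The plan is to exploit Schur's lemma to pin down the structure of any $\Phi \in \UCl$ and then identify the boundary of the resulting parameter interval. By \cref{cor:Commuting}, a channel $\Phi$ lies in $\UCl$ if and only if its Choi matrix $\J{\Phi} \in \L(\C^2 \x \C^{\ell+1})$ commutes with $\overline{U} \x T^\ell(U) = T^1(U)^* \x T^\ell(U)$ for all $U \in \U{2}$ (using that for unitaries $\overline{U} = U^*$ and $T^1 = \mathrm{id}$). The key move is to conjugate by the dual Clebsch--Gordan transform $\DCG_\ell$: by \cref{eq:Dual CG on any M}, $\DCG_\ell \bigl(U^* \x T^\ell(U)\bigr) \DCGct_\ell = T^{\ell-1}(U) \+ T^{\ell+1}(U)$ (the determinant factors are trivial on $\U{2}$ up to the phase already absorbed, or one restricts to $\SU{2}$ by covariance and extends), which is a direct sum of two \emph{inequivalent} irreducible $\U{2}$-representations of dimensions $\ell$ and $\ell+2$. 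Hence by Schur's lemma (\cref{lem:Schur}) the conjugated Choi matrix $\DCG_\ell \J{\Phi} \DCGct_\ell$ must be block-diagonal of the form $a\, \id_\ell \+ b\, \id_{\ell+2}$ for some scalars $a, b$. This already shows $\UCl$ is at most two real parameters, and the trace-preserving constraint $\Tr_1 \J{\Phi} = \id_{\ell+1}$ will cut it down to one.

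Next I would impose the two Choi constraints from \cref{eq:ChoiCPTP}. Complete positivity gives $a, b \geq 0$. For the trace-preservation constraint $\Tr_1 \J{\Phi} = \id_{\ell+1}$, I would compute $\Tr_1$ of $\DCGct_\ell (a \id_\ell \+ b \id_{\ell+2}) \DCG_\ell$. The cleanest route is to evaluate this partial trace directly from the structure of $\DCG_\ell$: since $\DCG_\ell$ is an isometry mapping the $\ell$- and $(\ell+2)$-blocks into $\C^2 \x \C^{\ell+1}$, one has $\Tr_1\bigl[\DCGct_\ell (a\id_\ell \+ b\id_{\ell+2})\DCG_\ell\bigr] = a\, A_\ell + b\, B_\ell$ where $A_\ell, B_\ell \in \L(\C^{\ell+1})$ are the partial traces of the two projectors $\DCGct_\ell(\id_\ell \+ 0)\DCG_\ell$ and $\DCGct_\ell(0 \+ \id_{\ell+2})\DCG_\ell$. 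Each of these operators commutes with $T^\ell(U)$ (being built from a covariant transform applied to an invariant block), so by Schur's lemma again $A_\ell = \alpha \id_{\ell+1}$ and $B_\ell = \beta \id_{\ell+1}$; taking traces gives $\alpha = \ell/(\ell+1)$ and $\beta = (\ell+2)/(\ell+1)$. Thus the TP constraint reads $\tfrac{\ell}{\ell+1} a + \tfrac{\ell+2}{\ell+1} b = 1$, i.e.\ $\ell a + (\ell+2) b = \ell+1$, a single linear equation. Together with $a,b \geq 0$ this describes a line segment, hence a convex set with exactly two extremal points, obtained by setting $b=0$ (giving $a = (\ell+1)/\ell$) or $a=0$ (giving $b = (\ell+1)/(\ell+2)$). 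These are precisely the Choi matrices $\J{\Phi^\ell_1}$ and $\J{\Phi^\ell_2}$ in \cref{eq:Choi12}, with Choi ranks $\ell$ and $\ell+2$ respectively. Finally, \cref{eq:ChoiRel} follows by adding $\ell \J{\Phi^\ell_1} + (\ell+2)\J{\Phi^\ell_2} = \DCGct_\ell\bigl[(\ell+1)\id_\ell \+ (\ell+1)\id_{\ell+2}\bigr]\DCG_\ell = (\ell+1)\,\DCGct_\ell \id_{2(\ell+1)} \DCG_\ell = (\ell+1)\id_{2(\ell+1)}$ since $\DCG_\ell$ is unitary.

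The main obstacle I anticipate is the careful handling of the determinant factors when passing between $\U{2}$ and $\SU{2}$: on $\SU{2}$ the identity $\DCG_\ell(M^* \x T^\ell(M))\DCGct_\ell = T^{\ell-1}(M) \+ T^{\ell+1}(M)$ is clean, but one must check that unitary-covariance under all of $\U{2}$ (not just $\SU{2}$) is automatically implied — this works because the extra $\U{1}$ phases act by scalars $(\det U)^{\text{something}}$ that are the same on input and output up to the structure already captured, so no new constraint arises and the two irreps $T^{\ell-1}, T^{\ell+1}$ remain inequivalent as $\U{2}$-representations (their highest weights differ). The other place requiring care is justifying that the partial traces $A_\ell, B_\ell$ are indeed scalar multiples of the identity; this is where one uses that the input representation $T^\ell(U)$ is irreducible together with the fact that $\Tr_1$ intertwines the $\overline{U} \x T^\ell(U)$ action with the $T^\ell(U)$ action, so $A_\ell$ and $B_\ell$ commute with every $T^\ell(U)$ and Schur's lemma applies. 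Everything else is bookkeeping with dimensions ($2(\ell+1) = \ell + (\ell+2)$) and traces.
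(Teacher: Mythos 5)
Your proposal is correct and follows essentially the same route as the paper's proof: conjugate the Choi matrix by the dual Clebsch--Gordan transform, invoke Schur's lemma to reduce it to two scalars $a \id_\ell \+ b \id_{\ell+2}$, and impose the CPTP constraints to obtain a one-parameter segment whose endpoints are the two stated extremal channels. The only difference is that you verify the full trace-preservation condition $\Tr_1 \J{\Phi} = \id_{\ell+1}$ via a second application of Schur's lemma to the partial traces $A_\ell, B_\ell$, whereas the paper imposes only the scalar necessary condition $\Tr \J{\Phi} = \ell+1$; your extra step is a welcome bit of rigor since it confirms that both endpoints are genuinely trace-preserving rather than merely trace-normalized.
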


\begin{proof}
The set $\UCl$ is convex since it is the intersection of two convex sets -- the set of all channels and the set of all unitary-covariant maps from $\L(\C^{\ell+1})$ to $\L(\C^2)$.

To argue that there are only two extremal points, consider an arbitrary channel $\Phi \in \UCl$.
From \cref{cor:Commuting} we know that its Choi matrix $\J{\Phi}$ commutes with $\overline{U} \x T^\ell(U)$, for all unitaries $U \in \U{2}$.
Note that $\overline{U} = (U\ct)\tp = (U^{-1})\tp = U^*$, so based on \cref{eq:Dual CG on any M} we can use the dual Clebsch--Gordan transform $\DCG_\ell$ to decompose $\overline{U} \x T^\ell(U)$ as a direct sum:
\begin{equation}
  \DCG_\ell
  \of[\big]{\overline{U} \x T^\ell(U)}
  \DCGct_\ell
= T^{\ell-1}(U) \+
  \tfrac{1}{\det U} T^{\ell+1}(U),
\end{equation}
where the two blocks are of dimension $\ell$ and $\ell+2$, respectively.
Since the representations $T^{\ell \pm 1}$ are irreducible (see \cref{lem:irreducibility}), we can use Schur's lemma (\cref{lem:Schur}) to conclude that in the dual Clebsch--Gordan basis $\J{\Phi}$ takes the form
\begin{equation}
  \DCG_\ell \J{\Phi} \DCGct_\ell
  = c_1 \id_\ell \+ c_2 \id_{\ell+2},
  \label{eq:Choi'}
\end{equation}
for some $c_1, c_2 \in \C$.

Let us consider the CPTP constraints \eqref{eq:ChoiCPTP} for the quantum channel $\Phi$.
Since $\Phi$ is completely positive, $c_1, c_2 \geq 0$.
Since $\Phi$ is trace-preserving, on the one hand $\Tr \J{\Phi} = \Tr(\id_{\ell+1}) = \ell + 1$, while on the other we see from \cref{eq:Choi'} that $\Tr \J{\Phi} = \ell c_1 + (\ell+2) c_2$, so
\begin{equation}
  \ell + 1 = \ell c_1 + (\ell+2) c_2.
\end{equation}
Because of this constraint there is only one free parameter.
Furthermore, since $c_1, c_2 \geq 0$, there are two extremal points:
\begin{align}
  (c_1,c_2) &= (\tfrac{\ell+1}{\ell},0), &
  (c_1,c_2) &= (0,\tfrac{\ell+1}{\ell+2}).
\end{align}
Substituting them in \cref{eq:Choi'}, the corresponding two extremal channels $\Phi^\ell_1$ and $\Phi^\ell_2$ have Choi matrices
\begin{align}
	\DCG_\ell \J{\Phi^\ell_1} \DCGct_\ell
  &= \frac{\ell+1}{\ell} \id_\ell \+ 0_{\ell+2}, &
  \DCG_\ell \J{\Phi^\ell_2} \DCGct_\ell
  &= 0_\ell \+ \frac{\ell+1}{\ell+2} \id_{\ell+2}
  \label{eq:extremal Choi matrices}
\end{align}
in the dual Clebsch--Gordan basis.
Clearly, $\rank \J{\Phi^\ell_1} = \ell$ and $\rank \J{\Phi^\ell_2} = \ell+2$, and
\begin{equation}
  \ell \J{\Phi^\ell_1} + (\ell+2) \J{\Phi^\ell_2}
  = (\ell+1) \id_{\ell+(\ell+2)}
  \label{eq:ChoiRel again}
\end{equation}
since $\DCG_\ell$ is unitary.
\end{proof}

\begin{example}[Extremal channels for $\ell = 1$]
Choi matrices of the two $\ell = 1$ extremal channels are:
\begin{align}
  \J{\Phi^1_1} &= \mx{
    1 & 0 & 0 & 1 \\
    0 & 0 & 0 & 0 \\
    0 & 0 & 0 & 0 \\
    1 & 0 & 0 & 1
  }, &
  \J{\Phi^1_2} &= \frac{1}{3} \mx{
    1 & 0 & 0 & -1 \\
    0 & 2 & 0 & 0 \\
    0 & 0 & 2 & 0 \\
   -1 & 0 & 0 & 1
  }.
\end{align}
Note that $\J{\Phi^1_1} + 3 \J{\Phi^1_2} = 2 \id_4$, which agrees with \cref{eq:ChoiRel again} when $\ell = 1$.
The corresponding dual Clebsch--Gordan transform (see \cref{apx:Dual CG}) is given by\footnote{In the context of mixed Schur--Weyl duality, this transform also appears in Example~3.9 of \cite{GO22}. The matrix $U$ in eq.~(58) there slightly differs from our $D_1$ due to different conventions.}
\begin{equation}
  \DCG_1 = \frac{1}{\sqrt{2}}
  \mx{
    1 & 0 & 0 & 1 \\
    0 & 0 & \sqrt{2} & 0 \\
   -1 & 0 & 0 & 1 \\
    0 & -\sqrt{2} & 0 & 0
  }.
\end{equation}
In particular,
$\DCG_1 \J{\Phi^1_1} \DCGct_1 = \diag(2,0,0,0)$ and
$\DCG_1 \J{\Phi^1_2} \DCGct_1 = \diag(0,\frac{2}{3},\frac{2}{3},\frac{2}{3})$
as claimed.
Since the Choi matrix \eqref{eq:J UNOT} of the ideal universal $\NOT$ operation can be expressed as
$\J{\UNOT} = -\frac{1}{2} \J{\Phi^1_1} + \frac{3}{2} \J{\Phi^1_2}$,
the ideal operation is not completely positive.
\end{example}

\subsection{Understanding the extremal channels}\label{sec:Understanding}

This section provides an intuitive way to think about the above two extremal channels and shows that they are related to the partial trace and the universal $\NOT$.

Let us denote the \emph{symmetric subspace} of $\ell \geq 1$ qubits by
\begin{equation}
  \Sym{\ell} := \spn \set[\big]{\ket{s_\ell(w)} : w \in \set{0,\dotsc,\ell}}
  \subseteq \of{\C^2}\xp{\ell},
\end{equation}
where $\ket{s_\ell(w)}$ is the symmetric $\ell$-qubit state of Hamming weight $w$, see \cref{eq:slw}.
Since $\braket{s_\ell(w)}{s_\ell(w')} = \delta_{ww'}$, the projector onto the $\ell$-qubit symmetric subspace can be written as
\begin{align}
  \Pi_{\Sym{\ell}}
:= \sum_{w=0}^\ell \proj{s_\ell(w)}
 = \frac{1}{\ell!} \sum_{\pi \in \S_\ell} P(\pi)
 = (\ell+1) \int \of[\big]{\proj{\psi}}\xp{\ell} \, d\psi,
  \label{eq:PiSym3}
\end{align}
where $P(\pi) \in \U{2^\ell}$ are the $\ell$-qubit permutations defined in \cref{eq:P} and $d\psi$ denotes the uniform measure on the Bloch sphere, see \cref{eq:dpsi}.
The last two identities are proved in \cref{lem:PiSym3}.
For more details on the symmetric subspace, see \cref{apx:Sym}.

Let $V^\ell \in \U{\C^{\ell+1},\of{\C^2}\xp{\ell}}$ denote the isometry that maps the $k$-th standard basis state of $\C^{\ell+1}$ to the symmetric state $\ket{s_\ell(k)}$ of Hamming weight $k$ in the $\ell$-qubit symmetric subspace:
\begin{align}
  V^\ell := \sum_{k=0}^\ell \ketbra{s_\ell(k)}{k},
  \label{eq:Vl!}
\end{align}
and let $\Sigma^\ell \in \CPTP{\C^{\ell+1}}{\of{\C^2}\xp{\ell}}$ denote the quantum channel corresponding to the isometry $V^\ell$:
\begin{equation}
  \Sigma^\ell(\rho)
  := V^\ell \rho V^{\ell\dagger},
  \label{eq:Sigmal!}
\end{equation}
for all $\rho \in \L(\C^{\ell+1})$.

The following two maps from $\L(\C^{\ell+1})$ to $\L(\C^2)$ will play an important role.
They first embed the input into the $\ell$-qubit symmetric subspace $\Sym{\ell}$.
The first map then traces out all qubits but the first, while the second performs the uniform measurement on all tensor power states and outputs for each outcome the corresponding orthogonal state.
Because of this, we denote them by $\Phi^\ell_{\Tr}$ and $\Phi^\ell_{\UNOT}$, respectively.

\begin{definition}[$\Phi^\ell_{\Tr}$ and $\Phi^\ell_{\UNOT}$]\label{def:TrUNOT}
For any integer $\ell \geq 1$, let
$\Phi^\ell_{\Tr}, \Phi^\ell_{\UNOT}: \L(\C^{\ell+1}) \to \L(\C^2)$
act on $\rho \in \L(\C^{\ell+1})$ as follows:
\begin{align}
  \Phi^\ell_{\Tr}(\rho)
&:= \Tr_{2,\dotsc,\ell} \sof*{\Sigma^\ell(\rho)}, \label{eq:PhiTr} \\
  \Phi^\ell_{\UNOT}(\rho)
&:= (\ell+1) \int
    \bra{\psi}\xp{\ell}
    \sof*{\Sigma^\ell(\rho)}
    \ket{\psi}\xp{\ell}
    \of[\big]{\id_2 - \proj{\psi}} \, d\psi, \label{eq:PhiUNOT}
\end{align}
where $\ket{\psi} \in \C^2$ and $d\psi$ denotes the uniform measure on the Bloch sphere, see \cref{eq:dpsi}.
Sometimes we will use the notation
$\Phi^\lambda_{\Tr} := \Phi^\ell_{\Tr}$ and
$\Phi^\lambda_{\UNOT} := \Phi^\ell_{\UNOT}$
where $\lambda = (\lambda_1, \lambda_2)$ and $\ell := \lambda_1 - \lambda_2$.
\end{definition}

Let us verify that both maps are in fact quantum channels.

\begin{prop}
For any integer $\ell \geq 1$, the maps $\Phi^\ell_{\Tr}$ and $\Phi^\ell_{\UNOT}$ are quantum channels.
\end{prop}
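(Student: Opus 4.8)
The plan is to verify the two defining properties of a quantum channel—complete positivity and trace preservation—for each of the two maps; both are short given the structure already in place. For $\Phi^\ell_{\Tr}$ there is essentially nothing to prove beyond composition: by \cref{eq:PhiTr,eq:Sigmal!} it equals $\Tr_{2,\dots,\ell} \circ \Sigma^\ell$, where $\Sigma^\ell$ is the isometric channel $\rho \mapsto V^\ell \rho V^{\ell\dagger}$ attached to the isometry $V^\ell$ of \cref{eq:Vl!} (it is an isometry because the Dicke states $\ket{s_\ell(k)}$ are orthonormal, so $V^{\ell\dagger} V^\ell = \id_{\ell+1}$), and the partial trace $\Tr_{2,\dots,\ell}\colon \L((\C^2)\xp{\ell}) \to \L(\C^2)$ is CPTP. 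Since a composition of quantum channels is a quantum channel, $\Phi^\ell_{\Tr} \in \CPTP{\C^{\ell+1}}{\C^2}$.

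For $\Phi^\ell_{\UNOT}$ I would first rewrite the integrand of \cref{eq:PhiUNOT} transparently. Setting $\ket{v_\psi} := V^{\ell\dagger} \ket{\psi}\xp{\ell} \in \C^{\ell+1}$, we have $\bra{\psi}\xp{\ell} \Sigma^\ell(\rho) \ket{\psi}\xp{\ell} = \bra{v_\psi} \rho \ket{v_\psi} = \Tr\of[\big]{\proj{v_\psi}\rho}$, so for each fixed pure state $\ket{\psi}$ the map $\rho \mapsto \bra{\psi}\xp{\ell} \Sigma^\ell(\rho) \ket{\psi}\xp{\ell}\of[\big]{\id_2 - \proj{\psi}}$ has the form $\rho \mapsto \Tr(M_\psi \rho)\, N_\psi$ with $M_\psi = \proj{v_\psi} \geq 0$ and $N_\psi = \id_2 - \proj{\psi} = \proj{\psi^\perp} \geq 0$; any such map is completely positive, for instance because its Choi matrix is $N_\psi \x M_\psi^\tp \geq 0$. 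Since $\Phi^\ell_{\UNOT}$ is the integral of these completely positive maps against the finite positive measure $(\ell+1)\,d\psi$, and the cone of completely positive maps is closed (hence closed under such integration, for instance as a limit of nonnegative combinations of completely positive maps obtained from Riemann sums), $\Phi^\ell_{\UNOT}$ is completely positive.

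Trace preservation of $\Phi^\ell_{\UNOT}$ then follows by a direct computation: taking the trace of \cref{eq:PhiUNOT} and using $\Tr\of[\big]{\id_2 - \proj{\psi}} = 1$ gives
\[
  \Tr\sof[\big]{\Phi^\ell_{\UNOT}(\rho)}
  = (\ell+1) \int \bra{\psi}\xp{\ell} \Sigma^\ell(\rho) \ket{\psi}\xp{\ell} \, d\psi
  = \Tr\sof[\Big]{ V^\ell \rho V^{\ell\dagger} \cdot (\ell+1)\!\int \proj{\psi}\xp{\ell} \, d\psi }.
\]
By \cref{eq:PiSym3} the last factor is the projector $\Pi_{\Sym{\ell}}$, and $\Pi_{\Sym{\ell}} V^\ell = V^\ell$ because $V^\ell$ maps into the $\ell$-qubit symmetric subspace; hence the right-hand side equals $\Tr\of[\big]{V^{\ell\dagger} V^\ell \rho} = \Tr\rho$, which completes the argument. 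I do not expect a genuine obstacle here: the only points requiring a word of care are that each fixed-$\psi$ map is genuinely completely positive (handled by writing it in the $\rho \mapsto \Tr(M\rho)N$ form with $M,N\geq 0$) and that integration against a positive measure preserves complete positivity (closedness of the CP cone).
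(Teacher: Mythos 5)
Your proof is correct and follows essentially the same route as the paper: $\Phi^\ell_{\Tr}$ is recognized as a Stinespring dilation (isometry followed by partial trace), and $\Phi^\ell_{\UNOT}$ is shown to be completely positive by exhibiting each fixed-$\psi$ integrand in a manifestly CP form (the paper writes it as $K^\ell_\psi \rho K^{\ell\dagger}_\psi$ with $K^\ell_\psi = \sqrt{\ell+1}\,\ket{\psi^\perp}\bra{\psi}\xp{\ell}V^\ell$, which is exactly your $\rho \mapsto \Tr(M_\psi\rho)N_\psi$ in Kraus-operator clothing) and trace-preserving via the same identity $(\ell+1)\int\of[\big]{\proj{\psi}}\xp{\ell}d\psi = \Pi_{\Sym{\ell}}$ together with $V^{\ell\dagger}\Pi_{\Sym{\ell}}V^\ell = \id_{\ell+1}$. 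No gaps.
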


\begin{proof}
Let us verify that $\Phi^\ell_{\Tr}$ and $\Phi^\ell_{\UNOT}$ conform to Stinespring and Kraus representations, respectively (see \cref{apx:Kraus and Stinespring} for more background on quantum channel representations).

It is evident that $\Phi^\ell_{\Tr}$ is a quantum channel since \cref{eq:PhiTr} constitutes its Stinespring dilation.
Similarly, \cref{eq:PhiUNOT} constitutes a Kraus representation of $\Phi^\ell_{\UNOT}$:
\begin{equation}
  \Phi^\ell_{\UNOT}(\rho) = \int K^\ell_\psi \rho K^{\ell\dagger}_\psi \, d\psi,
\end{equation}
with Kraus operators
$K^\ell_\psi := \sqrt{\ell+1} \ket{\psi^\perp} \cdot \bra{\psi}\xp{\ell} V^\ell \in \L(\C^{\ell+1},\C^2)$
where $\ket{\psi^\perp} \in \C^2$ is such that $\proj{\psi^\perp} = \id_2 - \proj{\psi}$.
In particular, $\Phi^\ell_{\UNOT}$ is completely positive.
To verify that $\Phi^\ell_{\UNOT}$ is trace-preserving, we compute
\begin{equation}
  \int K^{\ell\dagger}_\psi K^\ell_\psi \, d\psi
  = (\ell+1) \int V^{\ell\dagger} \of[\big]{\proj{\psi}}\xp{\ell} V^\ell \, d\psi
  = V^{\ell\dagger} \Pi_{\Sym{\ell}} V^\ell
  = \id_{\ell+1},
\end{equation}
where the last two equalities follow from \cref{eq:PiSym3} and the fact that $V^\ell$ defined in \cref{eq:Vl!} is an isometry.
\end{proof}

\Cref{lem:identification} in \cref{apx:Identification} identifies the above two quantum channels with the two extremal channels from \cref{lem:1param}:
\begin{align}
  \Phi^\ell_1 &= \Phi^\ell_{\Tr}, &
  \Phi^\ell_2 &= \Phi^\ell_{\UNOT}.
\end{align}
Hence, we conclude from \cref{lem:1param} that any unitary-covariant quantum channel from $\C^{\ell+1}$ to $\C^2$ must be a convex combination of $\Phi^\ell_{\Tr}$ and $\Phi^\ell_{\UNOT}$.

\begin{cor}\label{cor:covariant channels}
Any unitary-covariant quantum channel $\Phi_\ell \in \UCl$ is of the form
\begin{equation}
  \Phi_\ell = t \Phi^\ell_{\Tr} + (1-t) \Phi^\ell_{\UNOT},
\end{equation}
for some $t \in [0,1]$,
where $\Phi^\ell_{\Tr}, \Phi^\ell_{\UNOT} \in \UCl$ are defined in \cref{def:TrUNOT}.
\end{cor}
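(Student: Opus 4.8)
The plan is to read the statement off directly from Lemma~\ref{lem:1param} together with the identification of the two extremal channels established in Lemma~\ref{lem:identification}. Since Lemma~\ref{lem:1param} already does all the representation-theoretic work, this corollary is essentially a bookkeeping step.

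First I would recall that by Lemma~\ref{lem:1param} the set $\UCl$ is convex with exactly two extreme points $\Phi^\ell_1$ and $\Phi^\ell_2$. The set of Choi matrices of maps in $\UCl$ is closed and bounded (it sits inside the compact slice $\set{J \geq 0 : \Tr_1 J = \id_{\ell+1}}$ of the positive semidefinite cone), so $\UCl$ is a compact convex set, and a compact convex set with precisely two extreme points is the line segment joining them; hence any $\Phi_\ell \in \UCl$ can be written as $\Phi_\ell = t\,\Phi^\ell_1 + (1-t)\,\Phi^\ell_2$ for some $t \in [0,1]$. Alternatively, and more concretely, one can bypass compactness altogether by invoking the normal form derived inside the proof of Lemma~\ref{lem:1param}: in the dual Clebsch--Gordan basis the Choi matrix of any such channel equals $c_1 \id_\ell \+ c_2 \id_{\ell+2}$ with $c_1, c_2 \geq 0$ and $\ell c_1 + (\ell+2) c_2 = \ell+1$, so the pair $(c_1,c_2)$ lies on the segment with endpoints $(\tfrac{\ell+1}{\ell},0)$ and $(0,\tfrac{\ell+1}{\ell+2})$, which are exactly the Choi data of $\Phi^\ell_1$ and $\Phi^\ell_2$; writing $(c_1,c_2)$ as a convex combination of these endpoints and using linearity of the Choi correspondence yields the same decomposition.

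Next I would substitute the identifications $\Phi^\ell_1 = \Phi^\ell_{\Tr}$ and $\Phi^\ell_2 = \Phi^\ell_{\UNOT}$ provided by Lemma~\ref{lem:identification} in \cref{apx:Identification}, obtaining $\Phi_\ell = t\,\Phi^\ell_{\Tr} + (1-t)\,\Phi^\ell_{\UNOT}$ as claimed. To match the phrasing of the statement I would also note $\Phi^\ell_{\Tr}, \Phi^\ell_{\UNOT} \in \UCl$: this is immediate from the same identification since the extremal channels lie in $\UCl$ by Lemma~\ref{lem:1param}, but it can also be checked directly by verifying that each of $\Phi^\ell_{\Tr}$ and $\Phi^\ell_{\UNOT}$ intertwines the input representation $T^\ell(U)$ with the output representation $U$, using that the isometric embedding $\Sigma^\ell$ into the symmetric subspace, the partial trace, and the uniform Bloch-sphere average are all unitary-covariant.

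There is essentially no genuine obstacle here; the only step that warrants a sentence of care is the inference ``exactly two extreme points $\Rightarrow$ every element is a convex combination of the two,'' which needs either the compactness remark or the explicit one-parameter normal form above. Everything else is a direct appeal to Lemma~\ref{lem:1param} and Lemma~\ref{lem:identification}.
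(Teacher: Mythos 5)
Your proposal is correct and follows the same route as the paper: the one-parameter normal form $c_1 \id_\ell \+ c_2 \id_{\ell+2}$ with $c_1,c_2 \geq 0$ and $\ell c_1 + (\ell+2)c_2 = \ell+1$ from the proof of \cref{lem:1param} gives the segment between the two extremal Choi matrices, and \cref{lem:identification} renames the endpoints as $\Phi^\ell_{\Tr}$ and $\Phi^\ell_{\UNOT}$. The paper treats this as an immediate consequence and gives no separate proof, so your explicit handling of the ``two extreme points $\Rightarrow$ segment'' step is a harmless (and slightly more careful) elaboration of the same argument.
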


\subsection{The optimal algorithm and its fidelity}\label{sec:Optimal algorithm}

In this section, we describe a parametrized \emph{template algorithm} $\mc{A}_\vt$ (see \cref{alg:Template}) that depends on a vector of \emph{interpolation parameters} $\vt = \of{t_\lambda \in [0,1]: \lambda \pt n}$.
First, it applies the generic pre-processing $\Gamma$ (see \cref{alg:Gamma}) and then, depending on the measurement outcome $\lambda$, a general unitary-covariant channel $\Phi_\lambda$.
According to \cref{cor:covariant channels}, $\Phi_\lambda$ is a convex combination $t_\lambda \Phi^\lambda_{\Tr} + (1 - t_\lambda) \Phi^\lambda_{\UNOT}$ of the two extremal maps discussed in \cref{sec:Understanding}, with some interpolation parameter $t_\lambda \in [0,1]$.
The following theorem shows that, for any symmetric and equivariant Boolean function $f$, the optimal fidelity for computing $f$ in a basis-independent way can be achieved by the algorithm $\mc{A}_\vt$ with appropriately chosen interpolation parameters $\vt$.

\begin{theorem}\label{thm:template}
Let $f: \set{0,1}^n \to \set{0,1}$ be a symmetric and equivariant Boolean function.
For some choice of interpolation parameters $\vt$, the quantum channel $\mc{A}_\vt \in \CPTP{\C^{2^n}}{\C^2}$ implemented by \cref{alg:Template} is optimal for computing $f$ in a basis-independent way.
Moreover, $\mc{A}_\vt$ is permutation-invariant and unitary-equivariant.
\end{theorem}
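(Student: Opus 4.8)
The plan is to assemble the theorem from two results that have already done the heavy lifting: \cref{cor:Generic}, which reduces an optimal algorithm to $\Gamma$ followed by a unitary-covariant post-processing, and \cref{cor:covariant channels}, which pins down the post-processing on each block as a convex combination of $\Phi^\lambda_{\Tr}$ and $\Phi^\lambda_{\UNOT}$.

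First I would invoke \cref{cor:Generic} to obtain an optimal channel $\Phi \in \UCPTP{\C^{2^n}}{\C^2}$ of the form $\Phi = \Phi' \circ \Gamma$, where $\Gamma$ is the generic pre-processing of \cref{alg:Gamma} and $\Phi' \in \CPTP{\bigoplus_{\lambda \pt n} \C^{m_\lambda}}{\C^2}$ is unitary-covariant, with respect to $\bigoplus_{\lambda} Q_\lambda$ on the input and the defining representation $T^1$ on the output. I would then decompose $\Phi'$ block by block. Since $\Gamma$ always outputs a state that is block-diagonal across the $\lambda$-registers (see \cref{lem:Gamma output}), only the action of $\Phi'$ on such states matters, and by linearity this action is $\bigoplus_{\lambda}\sigma_\lambda \mapsto \sum_\lambda \Phi'_\lambda(\sigma_\lambda)$, where $\Phi'_\lambda \in \CPTP{\C^{m_\lambda}}{\C^2}$ is the restriction of $\Phi'$ to the $\lambda$-block; it is trace-preserving because the block embedding sends density matrices to density matrices. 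Unitary-covariance of $\Phi'$ restricts to unitary-covariance of each $\Phi'_\lambda$ with respect to $Q_\lambda(U) = (\det U)^{\lambda_2} T^\ell(U)$ on the input and $U = T^1(U)$ on the output, where $\ell := \lambda_1 - \lambda_2$. The scalar $(\det U)^{\lambda_2}$ is a phase and cancels under conjugation, so $\Phi'_\lambda$ is in fact covariant for the pair $(T^\ell, T^1)$, i.e.\ $\Phi'_\lambda \in \UCl$. Here $\ell \geq 1$ always holds: $n = \lambda_1 + \lambda_2$ must be odd for $f$ to be equivariant, so $\lambda_1 > \lambda_2$. By \cref{cor:covariant channels} there is a $t_\lambda \in [0,1]$ with $\Phi'_\lambda = t_\lambda \Phi^\lambda_{\Tr} + (1-t_\lambda)\Phi^\lambda_{\UNOT}$, and collecting these, $\vt := (t_\lambda : \lambda \pt n)$ is exactly the parameter vector for which \cref{alg:Template} computes $\mc{A}_\vt = \Phi' \circ \Gamma = \Phi$; hence $\mc{A}_\vt$ is optimal.

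The symmetry claims then follow directly from how $\mc{A}_\vt$ is built, and in fact hold for \emph{every} $\vt$: $\Gamma$ is permutation-invariant by \cref{lem:Gamma}, so $\mc{A}_\vt \circ \big(P(\pi)\cdot P(\pi)\ct\big) = \mc{A}_\vt$; and $\Gamma$ is unitary-covariant by \cref{lem:Gamma} while each $\Phi_\lambda = t_\lambda \Phi^\lambda_{\Tr} + (1-t_\lambda)\Phi^\lambda_{\UNOT} \in \UCl$ is unitary-covariant, so chaining the two covariance relations — again using that the $(\det U)^{\lambda_2}$ phase drops out of the conjugation on the $\lambda$-block — yields $\mc{A}_\vt(U\xp{n}\rho U\ctxp{n}) = U\,\mc{A}_\vt(\rho)\,U\ct$ for all $U \in \U{2}$, i.e.\ unitary-equivariance.

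The only genuinely delicate point is the block decomposition of $\Phi'$ together with the phase-cancellation observation that upgrades each restricted block $\Phi'_\lambda$ to a member of $\UCl$, so that \cref{cor:covariant channels} applies verbatim; everything else is bookkeeping, and the quantitative question of which $\vt$ is optimal and what fidelity it achieves is deferred to \cref{thm:main}.
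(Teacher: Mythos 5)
Your proposal is correct and follows essentially the same route as the paper's proof: \cref{cor:Generic} to reduce an optimal channel to $\Phi' \circ \Gamma$, then \cref{cor:covariant channels} applied blockwise to write each post-processing channel as $t_\lambda \Phi^\lambda_{\Tr} + (1-t_\lambda)\Phi^\lambda_{\UNOT}$, with the symmetry claims inherited from \cref{lem:Gamma} and the covariance of the extremal channels. The only difference is that you spell out the block restriction of $\Phi'$ and the cancellation of the $(\det U)^{\lambda_2}$ phase that upgrades each block to a member of $\UCl$ — details the paper treats as immediate.
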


\begin{alg}[label={alg:Template}]{Template algorithm $\mc{A}_\vt$}
\textbf{Input:}
Quantum state $U\xp{n} \ket{x}$ with an unknown $x \in \set{0,1}^n$ and $U \in \U{2}$.\medskip\\
\textbf{Output:}
An approximation of $U \ket{f(x)}$.\medskip\\
\textbf{Parameters:}
A vector of \emph{interpolation parameters} $\vt = \of{t_\lambda \in [0,1]: \lambda \pt n}$.
\begin{enumerate}[\bf Step 1:]
  \item Perform the generic pre-processing $\Gamma$ (see \cref{alg:Gamma}).
  \item If the measurement outcome was $\lambda$, apply the channel (see \cref{def:TrUNOT})
  \begin{equation*}
    \Phi_\lambda
   := t_\lambda \Phi^\lambda_{\Tr}
    + (1 - t_\lambda) \Phi^\lambda_{\UNOT}
  \end{equation*}
\end{enumerate}
\end{alg}

\begin{proof}
Based on \cref{cor:Generic}, an optimal quantum channel for computing $f$ is of the form $\Phi' \circ \Gamma$, where $\Gamma$ is the generic pre-processing (see \cref{alg:Gamma}) and $\Phi'$ is some unitary-covariant channel.
Depending on the measurement outcome $\lambda = (\lambda_1, \lambda_2)$ obtained by $\Gamma$, the second step $\Phi'$ simply consists of applying some unitary-covariant channel $\Phi_\lambda \in \UCPTP{\C^{\ell+1}}{\C^2}$ on the remaining $(\ell+1)$-dimensional state where $\ell := \lambda_1 - \lambda_2$.

According to \cref{cor:covariant channels}, the unitary-covariant channel $\Phi_\lambda$ is a convex combination of the two extremal channels $\Phi^\lambda_{\Tr}$ and $\Phi^\lambda_{\UNOT}$ introduced in \cref{def:TrUNOT}:
\begin{equation}
  \Phi_\lambda
 := t_\lambda \Phi^\lambda_{\Tr}
  + (1 - t_\lambda) \Phi^\lambda_{\UNOT},
\end{equation}
for some parameter $t_\lambda \in [0,1]$.
Since \cref{alg:Template} describes a general quantum channel $\mc{A}_\vt$ that consists of applying $\Gamma$ followed by some unitary-covariant $\Phi_\lambda$, an optimal channel for computing $f$ can be found by optimizing the parameters $\vt$ in $\mc{A}_\vt$.
The template algorithm $\mc{A}_\vt$ is permutation-invariant since so is $\Gamma$ (see \cref{lem:Gamma}), and it is unitary-equivariant since it is a composition of two unitary-covariant channels (see \cref{lem:Gamma,cor:covariant channels}).
\end{proof}

Based on \cref{thm:template}, we can find an optimal quantum channel for any symmetric and equivariant Boolean function $f$ simply by finding for each measurement outcome $\lambda \pt n$ an optimal choice of the corresponding parameter $t_\lambda \in [0,1]$ that determines the probability of applying $\Phi^\lambda_{\Tr}$ versus $\Phi^\lambda_{\UNOT}$ in Step~2 of \cref{alg:Template}.
Our main result, \cref{thm:main} below, shows that the optimal fidelity $F_f$ and an optimal choice of the interpolation parameters $\vt$ in the template algorithm $\mc{A}_\vt$ can be found by solving a linear program of size $O(n)$.
The resulting algorithm $\mc{A}_\vt$ computes $f$ with optimal fidelity.

\newcommand{\FTr}{a_k(\ham)}
\newcommand{\FUNOT}{b_k(\ham)}

\begin{theorem}\label{thm:main}
Let $f: \set{0,1}^n \to \set{0,1}$ be a symmetric and equivariant Boolean function.
The optimal fidelity $F_f$ for computing $f$ in a basis-independent way can be found by solving the following linear program with $\floor{n/2}+2$ variables and $3\floor{n/2}+3$ constraints:
\begin{equation}
  \begin{array}{llll}
  	& F_f = \displaystyle\max_{c,\vt} c \\
  	& \text{such that}
    & \displaystyle
      \sum_{k=0}^{\ham} p_k(\ham)
      \sof[\Big]{
        t_k \FTr
      +	(1-t_k) \FUNOT
      }
      \geq c,
    & \forall \ham \in \set*{0,\dotsc,\floor*{\frac{n}{2}}}, \\
  & & 0 \leq t_k \leq 1,
    & \forall k \in \set*{0,\dotsc,\floor*{\frac{n}{2}}}.
  \end{array}
  \tag{LP}
  \label{eq:LP}
\end{equation}
Here we are using the following shorthands:
\begin{align}
	p_k(\ham)
&:= \frac{\binom{n}{k} - \binom{n}{k-1}}
         {\binom{n}{\ham}}, \label{eq:pkh} \\
  \FTr
&:= \frac{n-\ham-k}{n-2k}
  - \symf(\ham) \frac{n-2\ham}{n-2k}, \label{eq:FTr} \\
  \FUNOT
&:= \frac{\ham-k+1}{n-2k+2}
  + \symf(\ham) \frac{n-2\ham}{n-2k+2}, \label{eq:FUNOT}
\end{align}
where $\symf(\ham) := f(0^{n-\ham}1^{\ham})$ denotes the value of $f$ on strings of Hamming weight $\ham$.
Given an optimal solution $\of*{t^*_k : k = 0,\dotsc,\floor{n/2}}$ of the above linear program, the template algorithm $\mc{A}_{\vt^*}$ (see \cref{alg:Template}) achieves the optimal fidelity $F_f$ if we set $t^*_\lambda := t^*_k$ where $\lambda = (n-k,k)$ and $k \in \set{0,\dotsc,\floor{n/2}}$.
\end{theorem}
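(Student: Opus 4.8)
The plan is to combine the structural reductions already established with an explicit computation of the fidelity of the template algorithm, and then recognise the resulting max--min problem as \eqref{eq:LP}. By \cref{thm:template} there is an optimal channel of the form $\mc{A}_\vt$ with $\vt = \of{t_\lambda : \lambda \pt n} \in [0,1]^{\floor{n/2}+1}$, and since $\mc{A}_\vt$ is permutation-invariant and unitary-equivariant, \cref{lem:Fidelity} gives
\[
  F_f(\mc{A}_\vt) = \min_{0 \le \ham \le \floor{n/2}} F\of[\big]{\ket{f(0^{n-\ham}1^\ham)},\, \mc{A}_\vt(\proj{0^{n-\ham}1^\ham})}.
\]
So I would (i) express each of these $\floor{n/2}+1$ fidelities as an explicit function of $\vt$, and then (ii) observe that maximising their minimum over $\vt \in [0,1]^{\floor{n/2}+1}$ is precisely \eqref{eq:LP}; by \cref{thm:template} its optimum is $F_f$, and an optimal $\vt^*$ plugged into \cref{alg:Template} (with $t^*_\lambda := t^*_k$ for $\lambda = (n-k,k)$) attains it.

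For step (i), fix $x := 0^{n-\ham}1^\ham$, so $\abs x = \ham$, and use unitary-equivariance to set $U = \id_2$, whence $\ket{f(x)} = \ket{\symf(\ham)}$. From \cref{lem:Gamma output}, $\Gamma(\proj x) = \bigoplus_{\lambda \pt n} p_\lambda(x)\proj{\wlx}$ with $\wlx = \ham - \lambda_2$; writing $\lambda = (n-k,k)$ and $\ell := n - 2k \ge 1$ (valid since $n$ is odd and $k \le \floor{n/2}$, so \cref{def:TrUNOT,cor:covariant channels} apply) and using $p_\lambda(x) = 0$ for $k > \ham$, Step~2 of \cref{alg:Template} gives
\[
  \mc{A}_\vt(\proj x) = \sum_{k=0}^\ham p_k(\ham)\of[\big]{ t_k\, \Phi^\ell_{\Tr}(\proj{\ham-k}) + (1-t_k)\, \Phi^\ell_{\UNOT}(\proj{\ham-k}) }.
\]
It then remains to evaluate the two single-qubit fidelities $\bra{\symf(\ham)}\Phi^\ell_{\Tr}(\proj w)\ket{\symf(\ham)}$ and $\bra{\symf(\ham)}\Phi^\ell_{\UNOT}(\proj w)\ket{\symf(\ham)}$ with $w := \ham - k$. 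For the first, \cref{eq:Vl!} gives $\Sigma^\ell(\proj w) = \proj{s_\ell(w)}$, and tracing out all qubits but the first yields $\tfrac{\ell-w}{\ell}\proj0 + \tfrac{w}{\ell}\proj1$, so the fidelity is $(1-\symf(\ham))\tfrac{\ell-w}{\ell} + \symf(\ham)\tfrac{w}{\ell}$, which after substituting $\ell = n-2k$, $w = \ham-k$ simplifies to $\FTr$ in \cref{eq:FTr}.

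For the universal $\NOT$ fidelity, \cref{eq:PhiUNOT} gives $\bra{\symf(\ham)}\Phi^\ell_{\UNOT}(\proj w)\ket{\symf(\ham)} = (\ell+1)\int \abs{\braket{\psi\xp{\ell}}{s_\ell(w)}}^2 \of[\big]{1 - \abs{\braket{\psi}{\symf(\ham)}}^2}\, d\psi$. I would parametrise $\ket\psi$ by $(\theta,\varphi)$ as in \cref{eq:psi angles}; then $\abs{\braket{\psi\xp{\ell}}{s_\ell(w)}}^2 = \binom{\ell}{w}(1-u)^{\ell-w}u^w$ with $u := \sin^2(\theta/2)$, while $1 - \abs{\braket{\psi}{\symf(\ham)}}^2$ equals $u$ if $\symf(\ham)=0$ and $1-u$ if $\symf(\ham)=1$. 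The $\varphi$-integral is trivial and the substitution $\theta \mapsto u$ turns the remainder into a Beta integral $\int_0^1 (1-u)^a u^b\, du = a!\,b!/(a+b+1)!$; evaluating $(\ell+1)\binom{\ell}{w}\int_0^1 (1-u)^{\ell-w}u^{w+1}\, du$ when $\symf(\ham)=0$ and $(\ell+1)\binom{\ell}{w}\int_0^1 (1-u)^{\ell-w+1}u^{w}\, du$ when $\symf(\ham)=1$ gives $\tfrac{w+1}{\ell+2}$ and $\tfrac{\ell-w+1}{\ell+2}$ respectively. Hence the fidelity is $(1-\symf(\ham))\tfrac{w+1}{\ell+2} + \symf(\ham)\tfrac{\ell-w+1}{\ell+2}$, which after the same substitutions simplifies to $\FUNOT$ in \cref{eq:FUNOT}.

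Assembling, $F\of[\big]{\ket{f(x)}, \mc{A}_\vt(\proj x)} = \sum_{k=0}^\ham p_k(\ham)\of[\big]{t_k\FTr + (1-t_k)\FUNOT}$ for every $x$ of Hamming weight $\ham$, so $F_f(\mc{A}_\vt)$ is the minimum of these over $\ham \in \set{0,\dots,\floor{n/2}}$. Maximising over $\vt \in [0,1]^{\floor{n/2}+1}$ is a max--min problem whose epigraph form — introduce a scalar $c$, require each Hamming-weight expression to be at least $c$, and keep $0 \le t_k \le 1$ — is exactly \eqref{eq:LP}, with the $\floor{n/2}+2$ variables $c,t_0,\dots,t_{\floor{n/2}}$ and $3\floor{n/2}+3$ constraints, and with optimal value $F_f$ by \cref{thm:template}. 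I expect the main obstacle to be the two channel-fidelity evaluations, especially the Bloch-sphere integral defining $\Phi^\ell_{\UNOT}$, and checking that both collapse exactly to the rational expressions \cref{eq:FTr,eq:FUNOT}; the remaining steps are bookkeeping on top of \cref{thm:template,lem:Fidelity,lem:Gamma output,cor:covariant channels}.
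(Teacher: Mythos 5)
Your proposal is correct and follows essentially the same route as the paper's proof: reduce to the template algorithm via \cref{thm:template}, restrict to standard-basis inputs of Hamming weight at most $\floor{n/2}$ via \cref{lem:Fidelity}, expand $\mc{A}_\vt(\proj{x})$ using \cref{lem:Gamma output}, and recognise the resulting max--min as \eqref{eq:LP}. The only (harmless) deviation is that you evaluate the two channel fidelities directly from the defining formulas of $\Phi^\ell_{\Tr}$ and $\Phi^\ell_{\UNOT}$ (partial trace of $\proj{s_\ell(w)}$ and a Beta integral), whereas the paper cites \cref{lem:Phi12}; your computations match \cref{eq:FTr,eq:FUNOT} exactly and mirror what the paper does anyway in the proof of \cref{lem:identification}.
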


\begin{proof}
According to \cref{thm:template}, an optimal quantum channel for computing $f$ can be found by optimizing the interpolation parameters $\vt$ in the template algorithm $\mc{A}_\vt$ (\cref{alg:Template}).
Moreover, we also known from \cref{thm:template} that $\mc{A}_\vt$ is unitary-equivariant, so we can omit the unitaries $U$ from the input state $U\xp{n} \ket{x}$ and consider only standard basis inputs $\ket{x}$.

Let us first derive the output of $\mc{A}_\vt$ on input $x \in \set{0,1}^n$.
The first step of $\mc{A}_\vt$ is the generic pre-processing $\Gamma$.
According to \cref{lem:Gamma output},
\begin{equation}
  \Gamma \of[\big]{\proj{x}}
= \bigoplus_{\lambda \pt n}
  p_\lambda(x)
  \proj{\wlx},
  \label{eq:Gamma of x}
\end{equation}
where $\wlx := \abs{x} - \lambda_2$ and the probability $p_\lambda(x)$ of measurement outcome $\lambda$ is given in \cref{eq:plx}.
For each $\lambda$, the second step of $\mc{A}_\vt$ applies the map
$\Phi_\lambda := t_\lambda \Phi^\lambda_{\Tr} + (1 - t_\lambda) \Phi^\lambda_{\UNOT}$
for some $t_\lambda \in [0,1]$,
so the final state of the overall procedure is
\begin{equation}
  \mc{A}_\vt \of[\big]{\proj{x}}
= \sum_{\lambda \pt n}
  p_\lambda(x)
  \sof*{
    t_\lambda \Phi^\lambda_{\Tr} \of[\big]{\proj{\wlx}}
    + (1 - t_\lambda) \Phi^\lambda_{\UNOT} \of[\big]{\proj{\wlx}}
  },
  \label{eq:At of x}
\end{equation}
where the direct sum has been replaced by a regular sum since the measurement outcome $\lambda$ is effectively discarded.

If we parameterize the partition as $\lambda = (n-k,k)$, we can sum over $k \in \set{0,\dotsc,\floor{n/2}}$ instead of $\lambda \pt n$.
Since $p_\lambda(x)$ and $\wlx$ depend only on the Hamming weight $\ham := \abs{x}$, we can write them in terms of $k$ and $\ham$ as follows:
$\wlx = \ham - k$ and $p_\lambda(x) = p_k(\ham)$ where $p_k(\ham)$ is defined in \cref{eq:pkh}.
Since $\ell := \lambda_1 - \lambda_2 = n - 2 \lambda_2 = n - 2k$,
we get from \cref{lem:identification} that
$\Phi^\lambda_{\Tr} = \Phi^\ell_1 = \Phi^{n-2k}_1$ and
$\Phi^\lambda_{\UNOT} = \Phi^\ell_2 = \Phi^{n-2k}_2$.
Finally, we replace $t_\lambda$ by the corresponding $t_k$.
With this notation, \cref{eq:At of x} becomes
\begin{equation}
  \mc{A}_\vt \of[\big]{\proj{x}}
  = \sum_{k=0}^\ham p_k(\ham)
  \sof*{
    t_k \Phi^{n-2k}_1 \of[\big]{\proj{\ham-k}}
    + (1 - t_k) \Phi^{n-2k}_2 \of[\big]{\proj{\ham-k}}
  },
\end{equation}
where we have truncated the sum at $k = \ham$ since $p_\lambda(x) = 0$ when $k > \abs{x} = \ham$, see \cref{eq:plx}.

To evaluate the fidelity
$F\of[\big]{ \ket{f(x)}, \mc{A}_\vt \of[\big]{\proj{x}} }
= \bra{f(x)} \mc{A}_\vt \of[\big]{\proj{x}} \ket{f(x)}$
with the correct output state $\ket{f(x)}$, we can use linearity and deal with each term separately.
Since $f$ is symmetric, the ideal output state for an input $x \in \set{0,1}^n$ of Hamming weight $\ham := \abs{x}$ is
$\ket{f(x)} = \ket{f(0^{n-\ham}1^{\ham})} =: \ket{\symf(\ham)}$.
If we let
\begin{align}
  \FTr
&:= \bra{\symf(\ham)}
    \Phi^{n-2k}_1
    \of[\big]{\proj{\ham-k}}
    \ket{\symf(\ham)}, \label{eq:FTrdef} \\
  \FUNOT
&:= \bra{\symf(\ham)}
    \Phi^{n-2k}_2
    \of[\big]{\proj{\ham-k}}
    \ket{\symf(\ham)}, \label{eq:FUNOTdef}
\end{align}
we can write the output fidelity on input $x$ as follows:
\begin{equation}
  \bra{\symf(\ham)} \mc{A}_\vt \of[\big]{\proj{x}} \ket{\symf(\ham)}
= \sum_{k=0}^{\ham} p_k(\ham)
  \sof[\Big]{
    t_k \FTr
  +	(1-t_k) \FUNOT
  }.
  \label{eq:average fidelity}
\end{equation}

Since $\mc{A}_\vt$ is permutation-invariant and unitary-equivariant, according to \cref{lem:Fidelity} its fidelity for computing $f$ in a basis-independent way is
\begin{equation}
  F_f(\mc{A}_\vt)
= \min_{0 \leq \ham \leq \floor{n/2}}
  F\of[\big]{
    \ket{f(0^{n-\ham}1^{\ham})},
    \mc{A}_\vt(\proj{0^{n-\ham}1^{\ham}})
  }.
\end{equation}
Hence, it suffices to determine the output fidelity of $\mc{A}_\vt$ only on input strings of the form $x = 0^{n-\ham}1^{\ham}$, for all Hamming weights $\ham \in \set{0,\dotsc,\floor{n/2}}$.
From \cref{eq:average fidelity} we see that
\begin{equation}
  F_f(\mc{A}_\vt)
= \min_{0 \leq \ham \leq \floor{n/2}}
  \sum_{k=0}^{\ham} p_k(\ham)
  \sof[\Big]{
    t_k \FTr
  +	(1-t_k) \FUNOT
  }.
\end{equation}
The optimal worst-case fidelity $F_f$ for computing $f$ can be obtained by maximizing the interpolation parameters $t_k$ and a constant $c \in \R$, subject to $F_f(\mc{A}_\vt) \geq c$ and $0 \leq t_k \leq 1$ for all $k \in \set*{0,\dotsc,\floor*{\frac{n}{2}}}$.
This produces the desired linear program.

To complete the proof, it remains to show that
\cref{eq:FTrdef,eq:FUNOTdef}
defining $\FTr$ and $\FUNOT$ agree with
\cref{eq:FTr,eq:FUNOT} in the statement of the theorem.
From \cref{lem:Phi12} in \cref{apx:Action} we have that for any $\ell \geq 1$ and $w \in \set{0,\dotsc,\ell}$,
\begin{align}
  \Phi^\ell_1 \of[\big]{\proj{w}}
  &= \frac{\ell-w}{\ell} \proj{0}
   + \frac{w}{\ell} \proj{1}, \label{eq:ww1} \\
  \Phi^\ell_2 \of[\big]{\proj{w}}
  &= \frac{w+1}{\ell+2} \proj{0}
   + \frac{\ell+1-w}{\ell+2} \proj{1}. \label{eq:ww2}
\end{align}
Substituting $\ell = n - 2k$ and $w = h - k$ into these formulas, we find that
\begin{align}
  \Phi^{n-2k}_1\of[\big]{\proj{h-k}}
  &= \frac{n-\ham-k}{n-2k} \proj{0}
   + \frac{\ham-k}{n-2k} \proj{1}, \\
  \Phi^{n-2k}_2\of[\big]{\proj{h-k}}
  &= \frac{\ham-k+1}{n-2k+2} \proj{0}
   + \frac{n-\ham-k+1}{n-2k+2} \proj{1}.
\end{align}
Plugging this back into \cref{eq:FTrdef,eq:FUNOTdef} gives us the following explicit formulas:
\begin{align}
  \FTr &=
  \begin{cases}
    \frac{n-\ham-k}{n-2k} & \text{if $\symf(\ham) = 0$}, \\
    \frac{\ham-k}{n-2k}   & \text{if $\symf(\ham) = 1$},
  \end{cases} &
   \FUNOT &=
   \begin{cases}
     \frac{\ham-k+1}{n-2k+2}   & \text{if $\symf(\ham) = 0$}, \\
     \frac{n-\ham-k+1}{n-2k+2} & \text{if $\symf(\ham) = 1$},
   \end{cases}
\end{align}
which agree with \cref{eq:FTr,eq:FUNOT}.
\end{proof}

\begin{theorem}\label{thm:implementation}
For any choice of the interpolation parameters $\vt$, the $n$-qubit template algorithm $\mc{A}_\vt$ (\cref{alg:Template}) can be implemented using $O(n^4 \log n)$ elementary quantum gates.
\end{theorem}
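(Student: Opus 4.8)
The plan is to bound the gate cost of the two steps of \cref{alg:Template} separately and add them up, the point being that essentially all of the cost sits in a single Schur transform. Step~1 is the generic pre-processing $\Gamma$ of \cref{alg:Gamma}: apply a Schur transform $\Usch$, measure the partition label $\lambda$ (weak Schur sampling), and discard the permutation register $\mc{P}_\lambda$. In a concrete encoding the post-Schur state sits on a $\lambda$-register of $O(\log n)$ qubits, a $w$-register of $O(\log n)$ qubits, and an $i$-register of $O(n)$ qubits; measuring $\lambda$ is then a computational-basis measurement on $O(\log n)$ qubits and discarding $\mc{P}_\lambda$ is a partial trace over the $i$-register, both free in the circuit model. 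So the cost of Step~1 is the cost of one $n$-qubit Schur transform. We may use any efficient implementation that produces the Gelfand--Tsetlin basis on the unitary registers $\mc{Q}_\lambda$ (this is what the construction of \cref{sec:BasisQubits} requires); the algorithm of \cite{KS17} does exactly this, and I would take its gate count --- $\mathrm{poly}(n)\cdot\mathrm{polylog}(1/\eps)$ elementary gates to reach operator-norm precision $\eps$ --- as the basis of the bound (one can equally use \cite{BCH06,BCH07}). Setting $\eps = 1/\mathrm{poly}(n)$ turns the precision overhead into a $\log n$ factor, and a careful count of the number of Clebsch--Gordan couplings and the size of the controlled rotations inside each of them is what accounts for the $O(n^4 \log n)$ figure.

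For Step~2, once $\lambda = (\lambda_1,\lambda_2)$ has been measured we hold an $(\ell+1)$-dimensional state with $\ell := \lambda_1 - \lambda_2 \le n$, that is, a register of $O(\log n)$ qubits, on which we must apply $\Phi_\lambda = t_\lambda \Phi^\lambda_{\Tr} + (1 - t_\lambda)\Phi^\lambda_{\UNOT}$. The convex combination is implemented by preparing an ancilla in $\sqrt{t_\lambda}\,\ket{0} + \sqrt{1-t_\lambda}\,\ket{1}$ (one single-qubit rotation, refined to precision $\eps$), measuring it, and applying $\Phi^\lambda_{\Tr}$ or $\Phi^\lambda_{\UNOT}$ accordingly. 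Each extremal channel has an explicit circuit --- derived in \cref{apx:Extremal} --- that first applies the symmetric-subspace embedding isometry $V^\ell$ of \cref{eq:Vl!} (a Dicke-state unitary on at most $n$ qubits, realisable with $\mathrm{poly}(n)$ gates) and then either traces out all but one qubit, for $\Phi^\lambda_{\Tr}$, or realises the universal $\NOT$ on the symmetric subspace, for $\Phi^\lambda_{\UNOT}$, via a Stinespring dilation read off from $\J{\Phi^\ell_2}$ in \cref{lem:1param} together with the dual Clebsch--Gordan transform $\DCG_\ell$; since $\DCG_\ell$ acts on only $2(\ell+1) = O(n)$ dimensions it too costs $\mathrm{poly}(n)$ gates. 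Hence Step~2 costs $\mathrm{poly}(n)$ gates, of strictly lower order than Step~1. (If one prefers to keep $\lambda$ coherent rather than measured, Step~2 turns into a $\lambda$-controlled channel, multiplying this cost by the number $\lfloor n/2\rfloor + 1 = O(n)$ of values of $\lambda$; this is still comfortably within $O(n^4 \log n)$.)

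The last ingredient is error control. The only non-exact primitives are unitaries --- $\Usch$, $V^\ell$, $\DCG_\ell$, and the coin-preparation rotation --- of which the circuit uses $\mathrm{poly}(n)$; approximating each to operator-norm error $\eps$ makes the implemented channel differ from $\mc{A}_\vt$ by $O(\mathrm{poly}(n)\cdot\eps)$ in diamond norm (a union bound over the approximated gates), and the worst-case fidelity $F_f(\cdot)$ moves by at most this amount. Choosing $\eps = 1/\mathrm{poly}(n)$ with a large enough exponent makes the perturbation $o(1/n)$ --- well below the $\Theta(1/\sqrt n)$ scale at which the fidelity bounds of \cref{thm:majority} live --- while inflating each primitive's cost by only a $\log n$ factor; summing the two steps then gives $O(n^4 \log n)$. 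I expect the real work, and the main obstacle, to be exactly this bookkeeping: pinning down tightly the gate count of the $n$-qubit Schur transform and of the extremal-channel circuits $V^\ell,\DCG_\ell$ of \cref{apx:Extremal} --- tightly enough to land on the fourth power and the single logarithm --- and verifying that the $1/\mathrm{poly}(n)$-accurate implementation provably does not cost us optimality. None of this is conceptually deep, but it is where the exponents are earned.
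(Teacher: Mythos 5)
Your decomposition is the same as the paper's: the cost is dominated by a single $n$-qubit Schur transform, for which both you and the paper invoke \cite{KS17}, and the post-measurement channels are of strictly lower order. The one substantive difference is that your entire error-control layer is unnecessary --- in the paper's gate model arbitrary one- and two-qubit unitaries are elementary, so the Schur transform of \cite{KS17} is implemented \emph{exactly} in $O(n^4 \log n)$ gates and no $\eps$-approximation, diamond-norm union bound, or ``does approximation cost us optimality'' argument is needed; likewise the paper implements $\Phi^\ell_{\Tr}$ and $\Phi^\ell_{\UNOT}$ directly on the $O(\log n)$-qubit unitary register via the isometries of \cref{lem:Us} (a controlled rotation plus an incrementer/decrementer, $O(n \log n)$ gates by \cref{lem:Phi complexity}), rather than re-embedding into the $\ell$-qubit symmetric subspace as you propose, though your clunkier route still lands within the budget.
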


\begin{proof}
Step~1 of \cref{alg:Template} consists of the generic pre-processing $\Gamma$ described in \cref{alg:Gamma}.
The only non-trivial step of $\Gamma$ is the $n$-qubit Schur transform $\Usch$, which can be implemented exactly using $O(n^4 \log n)$ quantum gates if CNOT and all single-qubit gates are available~\cite{KS17}.

Step~2 of \cref{alg:Template} probabilistically applies either $\Phi^\ell_{\Tr}$ or $\Phi^\ell_{\UNOT}$.
\Cref{lem:Phi complexity} in \cref{apx:Isometries} shows that both channels can be implemented using $O(n \log n)$ elementary gates.

Hence, the overall gate complexity of $\mc{A}_\vt$ is $O(n^4 \log n)$ irrespective of the interpolation parameters $\vt$.
While computing these parameters takes additional time, the linear program \eqref{eq:LP} is of size $O(n)$ and solving such program takes significantly less than $O(n^4)$ time, hence this does not affect the asymptotic scaling.
\end{proof}

\section{Applications of the general framework}\label{sec:Applications}

This section contains several applications of the general framework for computing symmetric equivariant Boolean functions derived in \cref{sec:Optimal algorithm}.
First, in \cref{sec:Numerics}, we present exact numerical values of optimal fidelities for all functions up to $n = 7$ arguments, and some empirical observations regarding these values.
We also compute exact optimal fidelities for two infinite families of functions: majority $\MAJ_n$ and parity $\PAR_n$.
In \cref{sec:Majority}, we analyze the asymptotic behavior of the optimal fidelity for the majority function.
Finally, in \cref{sec:U-equivariant functions} we relax complete positivity and introduce the abstract notion of unitary-equivariant Boolean functions.

\subsection{Exact numerical results}\label{sec:Numerics}

For any symmetric and equivariant Boolean function $f: \set{0,1}^n \to \set{0,1}$, \cref{thm:main} shows that the optimal fidelity for computing $f$ in a basis-independent way is given by the solution to a linear program of size $O(n)$, which can easily be solved using a computer. This section lists several exact numerical results obtained in this way.

\subsubsection{Detailed solutions for $n = 1$ and $n = 3$}

For $n = 1$, the linear program \eqref{eq:LP} reads
\begin{equation}
  \max_{c,t_0} c \quad
  \of[\big]{1 - f(0)} t_0 + \frac{1 + f(0)}{3} (1 - t_0) \geq c, \quad
  0 \leq t_0 \leq 1,
\end{equation}
while for $n = 3$ it reads
\begin{align}
  \max_{c,t_0,t_1} c \quad
& (1-f(0)) t_0 + \tfrac{1+3f(0)}{5} (1 - t_0) \geq c, \nonumber \\
& \frac{1}{3} \of*{ \tfrac{2-f(1)}{3} t_0 + \tfrac{2+f(1)}{5} (1-t_0) }
+ \frac{2}{3} \of*{ (1-f(1)) t_1 + \tfrac{1+f(1)}{3} (1-t_1) } \geq c, \\
& 0 \leq t_0 \leq 1, \quad
  0 \leq t_1 \leq 1. \nonumber
\end{align}
Solutions of these linear programs for all symmetric equivariant Boolean functions $f$ with $n = 1$ and $n = 3$ arguments are summarized in \cref{tab:Optimal}, and the Choi matrices of the corresponding quantum channels are listed in \cref{apx:Physical}.

\begin{table}
  \centering
  \begin{tabular}{c|c|c|c|c|c|c}
    Function & $\symf$ & $F_f$ & $t_0$ & $t_1$ & $c_0$ & $c_1$ \\ \hline
    $\ID$  & $0$ & $1$   & $1$ & -- & $1$   & -- \\
    $\NOT$ & $1$ & $2/3$ & $0$ & -- & $2/3$ & -- \\ \hline
    $\MAJ_3$  & $00$ & $8/9$   & $1$   & $1$ & $1$   & $8/9$ \\
    $\PAR_3$  & $01$ & $3/5$   & $1/2$ & $0$ & $3/5$ & $3/5$ \\
    $\NPAR_3$ & $10$ & $4/5$   & $0$   & $1$ & $4/5$ & $4/5$ \\
    $\NMAJ_3$ & $11$ & $29/45$ & $0$   & $0$ & $4/5$ & $29/45$
  \end{tabular}
  \caption{\label{tab:Optimal}Optimal fidelities $F_f$ and the corresponding interpolation parameters $t_0,t_1$ for all symmetric equivariant Boolean functions with $n = 1$ and $n = 3$ arguments from \cref{tab:1-3bit}.
  Here $\symf$ stands for the truth table of $\symf(\abs{x}) := f(x)$
  and $c_\ham$ denotes the fidelity achieved on inputs of Hamming weight $\ham \in \set{0,\dotsc,\floor{n/2}}$.
  The Choi matrices corresponding to these channels are listed in \cref{apx:Physical}.}
\end{table}

\subsubsection{Optimal fidelities for all functions up to $n = 7$}\label{sec:Optimal fidelities}

The optimal fidelities for all symmetric equivariant Boolean functions on up to $7$ bits are as follows (for $n = 1,3,5,7$):
\begin{equation*}
\def\arraystretch{1.2}
\left\{
\begin{array}{cc}
  0 & 1 \\
  1 & \frac{2}{3}
\end{array}
\right.
\qquad
\left\{
\begin{array}{cc}
  00 & \frac{8}{9} \\
  01 & \frac{3}{5} \\
  10 & \frac{4}{5} \\
  11 & \frac{29}{45}
\end{array}
\right.
\qquad
\left\{
\begin{array}{cc}
  000 & \frac{62}{75} \\
  001 & \frac{4}{7} \\
  010 & \frac{5}{7} \\
  011 & \frac{95}{153} \\
  100 & \frac{124}{153} \\
  101 & \frac{4}{7} \\
  110 & \frac{5}{7} \\
  111 & \frac{331}{525}
\end{array}
\right.
\qquad
\left\{
\begin{array}{cc}
  0000 & \frac{2888}{3675} \\
  0001 & \frac{5}{9} \\
  0010 & \frac{2}{3} \\
  0011 & \frac{47}{78} \\
  0100 & \frac{59}{78} \\
  0101 & \frac{5}{9} \\
  0110 & \frac{2}{3} \\
  0111 & \frac{1141}{1845} \\
  1000 & \frac{1444}{1845} \\
  1001 & \frac{5}{9} \\
  1010 & \frac{2}{3} \\
  1011 & \frac{47}{78} \\
  1100 & \frac{59}{78} \\
  1101 & \frac{5}{9} \\
  1110 & \frac{2}{3} \\
  1111 & \frac{6841}{11025}
\end{array}
\right.
\end{equation*}
Each block corresponds to a choice of $n$, while each row within a block corresponds to a different function $f$. The function is represented by a bit string that indicates its values on inputs of Hamming weight $0,\dotsc,\floor{n/2}$.\footnote{This determines the remaining values of $f$, including the values on inputs of Hamming weight larger than $\floor{n/2}$, thanks to symmetry and equivariance of $f$.}
Equivalently, the bit string representing $f$ is the \emph{truth table} $\symf(0)\symf(1)\dots\symf(\floor{n/2})$ of the function $\symf(\abs{x}) := f(x)$.
For example, $\MAJ_7$ corresponds to the sequence $0000$ while $\PAR_7$ corresponds to $0101$.

Careful inspection of these values reveals a certain pattern.
For example, for $n=7$ all functions whose truth table ends with $10$ have fidelity $2/3$, while those that end with $01$ have fidelity $5/9$.
Similarly, suffixes $100$ and $011$ lead to fidelities $59/78$ and $47/78$, respectively.
More generally, the optimal fidelity $F_f$ of $f$ depends only on the final digit $\symf(\floor{n/2})$ and the size of the gap around $n/2$ in the truth table of $\symf$.
The truth table of $\symf$ ends either with a string of $k$ zeroes or $k$ ones:
\begin{align}
  *\dots*1\underbrace{0\dots0}_k &&
  *\dots*0\underbrace{1\dots1}_k.
\end{align}
This number $k$, together with the last digit $\symf(\floor{n/2})$, seems to completely determine the optimal fidelity $F_f$.\footnote{This is an empirical observation. We leave it for future work to establish this formally and to determine the exact dependence on $k$.}
This observation is useful for displaying the optimal fidelities in a more compact way (see \cref{fig:Tree}).
Moreover, it reminds of the behavior of the regular quantum query complexity $\mathsf{Q}(f)$ of symmetric Boolean functions established in \cite[Theorem~3.3]{Beals}.
It is an interesting open problem to determine whether there is a closer relationship between the optimal fidelity $F_f$ and the quantum query complexity $\mathsf{Q}(f)$ of all equivariant Boolean functions $f$.

\begin{figure}
\centering


\begin{tikzpicture}[thick,
    pt/.style = {circle, draw = black, fill = black, inner sep = 0.8pt}
  ]
  \def\H{1}
  \def\t{0.6}
  \node[pt] (O) at (0,0) {};
  \draw (O)  -- node[above]{$0$} ++(-2.0,-\H) node[pt] (L1) {};
  \draw (O)  -- node[above]{$1$} ++( 2.0,-\H) node[pt] (R1) {};
  \draw (L1) -- node[above]{$0$} ++(-1.2,-\H) node[pt] (L2) {};
  \draw (L1) -- node[above]{$1$} ++( 1.2,-\H) node[pt] {} +(0,-\t) node {$\displaystyle\frac{2}{3}$};
  \draw (R1) -- node[above]{$0$} ++(-1.2,-\H) node[pt] {} +(0,-\t) node {$\displaystyle\frac{5}{9}$};
  \draw (R1) -- node[above]{$1$} ++( 1.2,-\H) node[pt] (R2) {};
  \draw (L2) -- node[above]{$0$} ++(-0.9,-\H) node[pt] (L3) {};
  \draw (L2) -- node[above]{$1$} ++( 0.9,-\H) node[pt] {} +(0,-\t) node {$\displaystyle\frac{59}{78}$};
  \draw (R2) -- node[above]{$0$} ++(-0.9,-\H) node[pt] {} +(0,-\t) node {$\displaystyle\frac{47}{78}$};
  \draw (R2) -- node[above]{$1$} ++( 0.9,-\H) node[pt] (R3) {};
  \draw (L3) -- node[left ]{$0$} ++(-0.6,-\H) node[pt] {} +(0,-\t) node {$\displaystyle\frac{2888}{3675}$};
  \draw (L3) -- node[right]{$1$} ++( 0.6,-\H) node[pt] {} +(0,-\t) node {$\displaystyle\frac{1444}{1845}$};
  \draw (R3) -- node[left ]{$0$} ++(-0.6,-\H) node[pt] {} +(0,-\t) node {$\displaystyle\frac{1141}{1845}$};
  \draw (R3) -- node[right]{$1$} ++( 0.6,-\H) node[pt] {} +(0,-\t) node {$\displaystyle\frac{6841}{11025}$};
\end{tikzpicture}
\caption{\label{fig:Tree}Optimal fidelities of all symmetric and equivariant $7$-bit Boolean functions, arranged in a tree. To determine $F_f$ for a given function $f$, walk down the tree while reading the truth table of $\symf$ backwards. For example, if the truth table ends with $\dots10$ then the fidelity is $2/3$, while the suffix $\dots011$ leads to fidelity $47/78$.}
\end{figure}
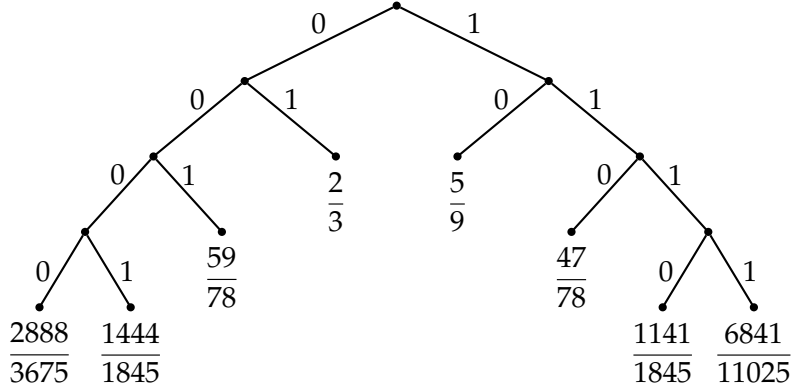

\subsubsection{Optimal fidelities for majority}\label{sec:NumericMajority}

We have also calculated fidelities for $\MAJ_n$ for larger values of $n = 1,3,5,\dotsc$:
\begin{equation*}
  1,\frac{8}{9},\frac{62}{75},\frac{2888}{3675},\frac{15014}{19845},\frac{117548}{160083},\frac{13848922}{19324305},\frac{5816048}{8281845},\frac{183562382}{265939245},\frac{30465827276}{44801898141},\frac{6378478534}{9503432939},\dotsc 
\end{equation*}
which, in decimal, are
\begin{equation*}
  1.,0.888889,0.826667,0.785850,0.756563,0.734294,0.716658,0.702265,0.690242,0.680012,\dotsc 
\end{equation*}
These and larger values can also be obtained by computing $g(\frac{n+1}{2})$ where $g$ is given by the following recursive formula (see proof in \cref{apx:Monotonicity}):

\begin{restatable}{lemma}{recursion}\label{lem:RecursionG}
Let $n \geq 1$ be odd.
The optimal fidelity for the $n$-qubit quantum majority vote is
$F_{\MAJ}(n) = g(\frac{n+1}{2})$
where
\begin{align*}
  g(1) &:= 1, \quad
  g(2)  := 8/9, \\
  g(m) &:= \frac{2m}{(2m-1)^2 (2m+1)}
  \sof[\Big]{ \of[\big]{2m(4m-7)+5} g(m-1) - 4(m-1)(m-2) g(m-2) + 1 }. \nonumber
\end{align*}
\end{restatable}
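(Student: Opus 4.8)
The plan is to start from the linear program \eqref{eq:LP} specialized to $f = \MAJ_n$, solve it explicitly, and then show the resulting closed form for $F_{\MAJ}(n)$ satisfies the stated three-term recursion. For the majority function we have $\symf(\ham) = 0$ for all $\ham \le \floor{n/2}$, so by \cref{eq:FTr,eq:FUNOT} the coefficients simplify to $\FTr = \frac{n-\ham-k}{n-2k}$ and $\FUNOT = \frac{\ham-k+1}{n-2k+2}$. The first observation to establish is that, for majority, the optimal interpolation parameters are $t_k^* = 1$ for every $k$: intuitively $\Phi_{\Tr}$ always outputs a qubit biased toward the majority value, and since $\symf \equiv 0$ one checks $\FTr \ge \FUNOT$ for all relevant $\ham,k$ (this is a short monotonicity computation comparing $\frac{n-\ham-k}{n-2k}$ with $\frac{\ham-k+1}{n-2k+2}$ using $\ham \le n/2$). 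Hence the LP collapses and $F_{\MAJ}(n) = \min_{0\le \ham\le\floor{n/2}} c_\ham$ where $c_\ham := \sum_{k=0}^{\ham} p_k(\ham)\,\frac{n-\ham-k}{n-2k}$, with $p_k(\ham)$ as in \eqref{eq:pkh}.

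Next I would show that the minimum over $\ham$ is attained at $\ham = \floor{n/2} = \frac{n-1}{2}$; this is exactly the content of the monotonicity result \cref{apx:Monotonicity} alludes to (the optimal fidelity is monotone nonincreasing in the Hamming weight toward $n/2$), so I would invoke that and write $F_{\MAJ}(n) = c_{(n-1)/2}$. Substituting $\ham = \frac{n-1}{2}$ and $m := \frac{n+1}{2}$, the quantity $g(m) := c_{(n-1)/2}$ becomes an explicit single sum over $k$ of products of binomial ratios times a rational function of $n$ and $k$. The heart of the proof is then a binomial-coefficient identity: one must show that this sum $g(m)$ obeys
\begin{equation*}
  (2m-1)^2(2m+1)\,g(m) = 2m\sof[\big]{(2m(4m-7)+5)g(m-1) - 4(m-1)(m-2)g(m-2) + 1}.
\end{equation*}
I would prove this by expressing $g(m)$, $g(m-1)$, $g(m-2)$ as sums with a common index, using Pascal-type recurrences on $\binom{n}{k}$ and $\binom{n}{k-1}$ to relate the $n$-, $(n-2)$-, and $(n-4)$-term summands, and then verifying that the coefficient of each $\binom{n}{k}\big/\binom{n}{\ham}$ (after shifting indices) cancels. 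Equivalently — and this is the route I would actually take to keep the algebra manageable — I would find a closed hypergeometric form for $g(m)$ (the decimal data suggests $g$ is a ratio of products of small factors, and the $n=1,3,5,7,9,\dots$ numerators/denominators factor nicely), conjecture it, verify the base cases $g(1)=1$, $g(2)=8/9$, and then confirm the recursion holds as a rational-function identity in $m$ — a finite check once the closed form is in hand. Creative telescoping (Zeilberger's algorithm) applied to the summand $p_k(\ham)\frac{n-\ham-k}{n-2k}$ would also produce such a recursion mechanically, which is a reasonable fallback.

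The main obstacle I expect is the binomial identity itself: the summand mixes two ranges (the two binomials $\binom{n}{k}$ and $\binom{n}{k-1}$ in $p_k(\ham)$) and a denominator $n-2k$ that does not combine cleanly with Pascal's rule, so naive term-by-term manipulation is messy. The cleanest resolution is almost certainly to guess the closed form for $g(m)$ from the numerical data and reduce everything to verifying a polynomial identity, rather than manipulating the sum directly; proving monotonicity in $\ham$ (needed to locate the min at $\ham=\floor{n/2}$) is the other nontrivial ingredient, but that is precisely what \cref{apx:Monotonicity} is set up to supply.
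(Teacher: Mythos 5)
Your proposal is correct and follows the paper's overall structure: the paper likewise fixes $t_k = 1$ for all $k$ (via the comparison $\frac{n-\ham-k}{n-2k} \geq \frac{\ham-k+1}{n-2k+2}$ for $\ham \leq \frac{n-1}{2}$, see \cref{eq:overallmax} in \cref{sec:Majority}), and likewise reduces to $F_{\MAJ}(n) = F_n(\frac{n-1}{2})$ by the monotonicity of $F_n(\ham)$ in $\ham$ proved in \cref{apx:Monotonicity}. The one genuine difference is the final step. You propose to certify the three-term recursion in $m$ directly, by binomial manipulation across the sums for $n$, $n-2$, $n-4$, by guessing a closed form, or by creative telescoping. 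The paper instead routes everything through a two-term recurrence in the \emph{Hamming weight} at fixed $n$, namely $F_n(\ham) = a(n,\ham)\,F_n(\ham-1) + b(n,\ham)$ with explicit rational coefficients (\cref{lem:Recurrence}); that single identity does double duty, since it both powers the monotonicity proof (\cref{lem:Monotonicity}) and, once unwound into an explicit product-sum formula for $F_n(\ham)$, lets the stated recursion in $m = \frac{n+1}{2}$ be checked with a computer algebra system. So the paper's verification is also ultimately computer-assisted, much like your fallback. Your Zeilberger route would in principle produce a certified recurrence in $n$ mechanically, though it needs some care because the upper summation limit $\ham = \frac{n-1}{2}$ moves with $n$ and the summand does not vanish for all $k$ beyond it; the advantage of the paper's $\ham$-recurrence is that the monotonicity statement you need to locate the minimum comes out as a byproduct rather than as a separate import.
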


\subsubsection{Optimal fidelities for parity}

Finally, we have also computed fidelities for the parity function $\PAR_n$, $n=1,3,5,\dotsc$:
\begin{equation*}
  1,\frac{3}{5},\frac{5}{7},\frac{5}{9},\frac{7}{11},\frac{7}{13},\frac{3}{5},\frac{9}{17},\frac{11}{19},\frac{11}{21},\frac{13}{23},\frac{13}{25},\frac{5}{9},\frac{15}{29},\frac{17}{31},\frac{17}{33},\frac{19}{35},\frac{19}{37},\frac{7}{13},\frac{21}{41},\dotsc
\end{equation*}
Based on these numerical results, we conjecture that
\begin{equation}
  F_{\PAR_n} = \frac{2 \lceil \frac{n+1}{4} \rceil + 1}{n+2}
  = 1/2 + O(1/n).
\end{equation}

\subsection{Asymptotic fidelity of quantum majority}\label{sec:Majority}

Armed with our general result, \cref{thm:main}, we are able to calculate the optimal fidelity that can be achieved for the majority function, hence proving \cref{thm:majority}.
Recall that for odd $n$, the majority function evaluates to 0 on all $x$ such that $|x| \leq n/2$, and to 1 on all $x$ such that $|x| > n/2$. By \cref{thm:main}, the maximal fidelity that can be achieved is
\begin{equation}
  \min_{0 \leq \ham \leq \frac{n-1}{2}}
  \sum_{k=0}^h	p_k(\ham)
  \of*{
    t_k \frac{n-\ham-k}{n-2k}
  + (1-t_k) \frac{\ham-k+1}{n-2k+2}
  }.
  \label{eq:overallmax}
\end{equation}
Observe that as $n-\ham-k \geq \ham-k+1$ for all $\ham \leq (n-1)/2$, the term weighted by $t_k$ in \cref{eq:overallmax} is larger than the term weighted by $1-t_k$, for all $k$. So we can assume that $t_k = 1$ for all $k$.
Then the expression for the optimal worst-case fidelity is
$F_{\MAJ}(n) := \min_{0 \leq \ham \leq \frac{n-1}{2}} F_n(\ham)$
where
\begin{equation}
  F_n(\ham) := \sum_{k=0}^{\ham}
  \frac{\binom{n}{k} - \binom{n}{k-1}}{\binom{n}{\ham}} \cdot
  \frac{n-\ham-k}{n-2k}.
  \label{eq:Fnx}
\end{equation}

While this will not be directly used in the proofs below, we show in \cref{lem:Monotonicity} of \cref{apx:Monotonicity} that $F_n(\ham)$ is strictly decreasing in $\ham \in \set{0,\dotsc,\frac{n-1}{2}}$.
This implies that the minimum in the definition of $F_{\MAJ}(n)$ is achieved at $\ham = \frac{n-1}{2}$ and proves the intuitive claim that, as the number of minority states increases, it becomes harder to output the correct majority state.

\majority*

\begin{proof}
For simplicity, let us write $F(\ham)$ instead of $F_n(\ham)$.
We now rewrite \cref{eq:Fnx} as
\begin{equation}
  F(\ham)
  = \sum_{k=0}^{\ham}
    \frac{\binom{n}{k} - \binom{n}{k-1}}{\binom{n}{\ham}}
    \of*{1 - \frac{\ham-k}{n-2k}}
  = 1 - \sum_{k=0}^{\ham}
    \frac{\binom{n}{k} - \binom{n}{k-1}}{\binom{n}{\ham}} \cdot
    \frac{\ham-k}{n-2k}.
  \label{eq:fexp}
\end{equation}
We will consider two regimes. In the first, $\ham$ is arbitrary and we seek a lower bound on $F(\ham)$. Here it will be helpful to parametrize $\ham = (n-1)/2 - \delta$, giving
\begin{equation}
  F(\ham)
  = 1 - \sum_{k=0}^{\ham}
    \frac{\binom{n}{k} - \binom{n}{k-1}}{\binom{n}{\ham}} \cdot
    \frac{1}{2} \of*{1 - \frac{2\delta+1}{n-2k}}
  = \frac{1}{2} + \frac{1}{2} \sum_{k=0}^{\ham}
    \frac{\binom{n}{k} - \binom{n}{k-1}}{\binom{n}{\ham}} \cdot
    \frac{2\delta+1}{n-2k}.
\end{equation}
Since $\binom{n}{k} - \binom{n}{k-1} = \binom{n}{k} \of[\big]{1 - \frac{k}{n-k+1}}$,
\begin{equation}
  F(\ham)
  = \frac{1}{2} + \frac{2\delta + 1}{2}
    \sum_{k=0}^{\ham} \frac{\binom{n}{k}}{\binom{n}{\ham}}
    \frac{n-2k+1}{(n-k+1)(n-2k)}.
\end{equation}
Since the last fraction is between $1/n$ and $4/n$ for all $k$ between 0 and $\ham \leq (n-1)/2$,
\begin{equation}
  \frac{1}{2} + \frac{2\delta + 1}{2n\binom{n}{\ham}} \sum_{k=0}^{\ham} \binom{n}{k} \leq F(\ham) \leq \frac{1}{2} + \frac{2(2\delta + 1)}{n\binom{n}{\ham}} \sum_{k=0}^{\ham} \binom{n}{k}.
  \label{eq:fbound}
\end{equation}

\textbf{Fidelity with no promise on $|x|$.} To complete the bound in this case, we consider $\delta \leq \sqrt{n}$ and $\delta > \sqrt{n}$ separately. In the former case, as $\delta = o(n^{2/3})$ we have~\cite{spencer14}
\begin{equation}
  \binom{n}{\ham} = \binom{n}{n/2} e^{-(n-2\ham)^2/n} (1+o(1)) = \binom{n}{n/2} e^{-(2\delta + 1)^2/n} (1+o(1)) \leq C \frac{2^n}{\sqrt{n}},
\end{equation}
while $\sum_{k=0}^{\ham} \binom{n}{k} \geq C' 2^n$ (for example, via the normal approximation to the binomial distribution), for some universal constants $C$, $C'$. So
\begin{equation}
  F(\ham) \geq \frac{1}{2} + \frac{\sqrt{n}}{2Cn 2^n} C' 2^n = \frac{1}{2} + \frac{C''}{\sqrt{n}}
\end{equation}
for some new universal constant $C''$. If $\delta > \sqrt{n}$, as the sum in \cref{eq:fbound} is clearly larger than $\binom{n}{\ham}$, $F(\ham) > 1/2 + (2\sqrt{n} + 1)/(2n) > 1/2 + 1/\sqrt{n}$.

To see that this is close to tight, consider the case $\ham = (n-1)/2$, for which an upper bound $F(\ham) \leq 1/2 + O(1/\sqrt{n})$ holds by \cref{eq:fbound}.

\textbf{Fidelity with a promise on $|x|$.}
If we assume that $\ham := |x| \leq n/3$, from \cref{eq:fexp} we have that in this regime
\begin{align}
  F(\ham) &
  \geq 1 - \frac{3}{2n} \sum_{k=0}^{\ham}  \frac{\binom{n}{k} - \binom{n}{k-1}}{\binom{n}{\ham}} (\ham-k) = 1 - \frac{3}{2n}\left(\ham - \frac{1}{\binom{n}{\ham}} \sum_{k=0}^{\ham}  \left(\binom{n}{k} - \binom{n}{k-1}\right)k \right) \\
  &= 1 - \frac{3}{2n}\left(\ham - \frac{1}{\binom{n}{\ham}} \sum_{k=0}^{\ham} k \binom{n}{k} + \frac{1}{\binom{n}{\ham}} \sum_{k=0}^{\ham} k \binom{n}{k-1} \right) \\
  &= 1 - \frac{3}{2n}\left(\ham - \ham - \frac{1}{\binom{n}{\ham}} \sum_{k=0}^{\ham-1} k \binom{n}{k} + \frac{1}{\binom{n}{\ham}} \sum_{k=0}^{\ham-1} (k+1) \binom{n}{k} \right) \\
  &= 1 - \frac{3}{2n \binom{n}{\ham}} \sum_{k=0}^{\ham-1} \binom{n}{k}.
\end{align}
As $\binom{n}{k} \geq 2 \binom{n}{k-1}$ for $k \leq n/3$ (via $\binom{n}{k} / \binom{n}{k-1} = (n-k+1)/k$), we have $\sum_{k=0}^{\ham-1} \binom{n}{k} \leq 2\binom{n}{\ham}$ and hence
\begin{equation}
  F(\ham) \geq 1 - \frac{3}{2n}.
\end{equation}
Running the same argument through for an upper bound, and modifying appropriately to start with
\begin{equation}
  F(\ham) \leq 1 - \frac{1}{n} \sum_{k=0}^{\ham}  \frac{\binom{n}{k} - \binom{n}{k-1}}{\binom{n}{\ham}} (\ham-k),
\end{equation}
we obtain
\begin{equation}
  F(\ham) \leq 1 - \frac{1}{n \binom{n}{\ham}} \sum_{k=0}^{\ham-1} \binom{n}{k} \leq 1 - \frac{\binom{n}{\ham-1}}{n \binom{n}{\ham}} = 1 - \frac{\ham}{n(n-\ham+1)}.
\end{equation}
Fixing $\ham = \lfloor n/3 \rfloor$, we get
\begin{equation}
  F(\ham) \leq 1 - \frac{\lfloor n/3 \rfloor}{n(n-\lfloor n/3 \rfloor+1)} \leq 1 - \frac{n/3 - 1}{n(2n/3 + 2)}  = 1 - \Theta(1/n),
\end{equation}
as desired.
\end{proof}

\subsection{Unitary-equivariant Boolean functions}\label{sec:U-equivariant functions}

Can a Boolean function have a continuous symmetry?
And not just any continuous symmetry, but a unitary one?
At first glance these questions may seem absurd since Boolean functions are inherently discrete, however they are implicit in the premise of our work.
Indeed, the main question posed and answered in this work can be rephrased as follows:
to what extent can a Boolean function be extended to a quantum channel with continuous symmetries?
In this section we briefly elaborate on this perspective, leaving a more in-depth exploration for future work.

\Cref{thm:template} shows that any symmetric and equivariant Boolean function can be computed with optimal fidelity by the template algorithm $\mc{A}_\vt$, for some choice of the interpolation parameters $\vt$ that can be found using the linear program stated in \cref{thm:main}.
While no function (except for the trivial $1$-argument identity function $\ID$) can be computed with fidelity one, we can relax the requirement that the linear map $\Phi: \L(\C^{2^n}) \to \L(\C^2)$ for computing it is a quantum channel.
In particular, we can drop the requirement that $\Phi$ is completely positive and ask whether now perfect fidelity can be achieved.
This is equivalent to relaxing the requirement that the Choi matrix $J(\Phi)$ is positive semidefinite.
In the context of \cref{thm:main}, this corresponds to dropping the constraints $0 \leq t_k \leq 1$ in the linear program \eqref{eq:LP}.
To achieve perfect fidelity, we also replace the ``$\geq c$'' constraint by ``$= 1$'', which reduces the problem to a linear system of equations.
Since this system involves a lower-triangular matrix, it has a unique solution that can be found by eliminating one variable at a time.
Based on numerical investigations, we believe this solution to be
\begin{equation}
  t_k :=
    \frac{n-2k}{2(n-2k+1)}
  - \frac{k(n-2k)}{2(n-2k+1)^2} (-1)^{\symf(k-1)}
  + \frac{(n-k+1)(n-2k+2)}{2(n-2k+1)^2} (-1)^{\symf(k)},
\end{equation}
for all $k \in \set{0,\dotsc,\floor{n/2}}$.
We leave it for future work to verify this.

This suggests that for any symmetric and equivariant Boolean function $f: \set{0,1}^n \to \set{0,1}$ there is a unique linear map $\Phi_f: \L(\C^{2^n}) \to \L(\C^2)$ that is permutation-invariant and unitary-equivariant, and agrees with $f$ on all inputs.
This map $\Phi_f$ can be thought of as a \emph{unitary-equivariant extension} of $f$ or, alternatively, as a ``\emph{unitary-equivariant Boolean function}'' in its own right, \ie, a Boolean function with continuous unitary symmetries.

\begin{definition}\label{def:Phif}
Let $f: \set{0,1}^n \to \set{0,1}$ be a symmetric and equivariant Boolean function.
Its \emph{unitary-equivariant extension} is the linear map $\Phi_f: \L(\C^{2^n}) \to \L(\C^2)$ such that
\begin{align}
  \Phi_f(U\xp{n} \proj{x} U\ctxp{n}) &= U \proj{f(x)} U\ct, &
  \forall x &\in \set{0,1}^n, \quad
  \forall U \in \U{2}, \label{eq:ideal map} \\
  \Phi_f(P(\pi) \rho P(\pi)\ct) &= \Phi(\rho), &
  \forall \rho &\in \L(\C^{2^n}),
  \quad \forall \pi \in \S_n.
\end{align}
\end{definition}

It is not obvious at all that such map $\Phi_f$ exists and is unique.
In fact, it is not even obvious that a map satisfying \cref{eq:ideal map} is linear!
To illustrate that this definition is not vacuous, we provide a simple procedure for computing the Choi matrix of the map $\Phi_f$.

First, let $\rho(\vec{r})$ denote the state in \cref{eq:rho xyz}.
For any $s \in \set{0,1}^n$, let
\begin{equation}
  \rho(s,\vec{r}) := \bigotimes_{i=1}^n \rho \of[\big]{(-1)^{s_i} \vec{r}}
\end{equation}
denote the $n$-qubit state corresponding to the input string $s$ in an arbitrary basis.
Then the unitary-equivariant extension of $f$ has Choi matrix $J \in \L(\C^{2^{n+1}})$ such that
\begin{align}
  \Tr_2 \sof[\big]{J \cdot (\id_2 \x \rho(s,\vec{r})\tp)}
  &= \rho\of[\big]{(-1)^{f(s)}\vec{r}}, &
  \forall s &\in \set{0,1}^n, \quad
  \forall \vec{r} \in \Sphere^2, \label{eq:ideal map on all r} \\
  \of[\big]{\id_2 \x P(\pi)} J
  \of[\big]{\id_2 \x P(\pi)}\ct &= J, &
  \forall \pi &\in \S_n, \label{eq:perm-invariant J}
\end{align}
where $\Sphere^2 := \set{(x,y,z) \in \R^3 : x^2 + y^2 + z^2 = 1}$ denotes the Bloch sphere.
Reducing \cref{eq:ideal map on all r} modulo the polynomial $x^2 + y^2 + z^2 - 1$
and then comparing the coefficients at all monomials $x^i y^j z^k$ produces a system of linear equations in the matrix entries of $J$.
Combining this with the linear system \eqref{eq:perm-invariant J} produces a system that uniquely determines the Choi matrix $J$.\footnote{This is an empirical observation.}

Note that for the sake of efficiency in \cref{eq:ideal map on all r} it is enough to consider only one $s \in \set{0,1}^n$ for each Hamming weight $|s| \in \set{0,\dotsc,\floor{n/2}}$.
Similarly, it is enough to impose \cref{eq:perm-invariant J} only for transpositions $\pi = (i,i+1)$ with $i \in \set{1,\dotsc,n-1}$ since they generate $\S_n$.

We provide explicit Choi matrices for all symmetric and unitary-equivariant Boolean functions with $n = 1$ and $n = 3$ arguments in \cref{apx:Ideal}.

\section*{Acknowledgements}

We acknowledge support from the QuantERA ERA-NET Cofund in Quantum Technologies implemented within the European Union's Horizon 2020 Programme (QuantAlgo project), EPSRC grants EP/R043957/1 and EP/T001062/1, and EPSRC Early Career Fellowship EP/L021005/1. This project has received funding from the European Research Council (ERC) under the European Union's Horizon 2020 research and innovation programme (grant agreement No.\ 817581). No new data were created during this study.
MO was supported by an NWO Vidi grant (Project No. VI.Vidi.192.109).


\bibliographystyle{alphaurl}
\bibliography{References}

\newcommand{\etalchar}[1]{$^{#1}$}
\begin{thebibliography}{SMMT{\etalchar{+}}19}

\bibitem[ACGT72]{Arecchi}
Fortunato~Tito Arecchi, Eric Courtens, Robert Gilmore, and Harry Thomas.
\newblock Atomic coherent states in quantum optics.
\newblock {\em Phys. Rev. A}, 6(6):2211--2237, Dec 1972.
\newblock \href {https://doi.org/10.1103/PhysRevA.6.2211}
  {\path{doi:10.1103/PhysRevA.6.2211}}.

\bibitem[Bae96]{SpinNetworks}
John~C. Baez.
\newblock Spin networks in gauge theory.
\newblock {\em Advances in Mathematics}, 117(2):253--272, 1996.
\newblock \href {http://arxiv.org/abs/gr-qc/9411007}
  {\path{arXiv:gr-qc/9411007}}, \href {https://doi.org/10.1006/aima.1996.0012}
  {\path{doi:10.1006/aima.1996.0012}}.

\bibitem[Bae22]{YoungDiagrams}
John~C. Baez.
\newblock Young diagrams and classical groups.
\newblock 2022.
\newblock URL: \url{https://math.ucr.edu/home/baez/twf/}.

\bibitem[BBC{\etalchar{+}}01]{Beals}
Robert Beals, Harry Buhrman, Richard Cleve, Michele Mosca, and Ronald de~Wolf.
\newblock Quantum lower bounds by polynomials.
\newblock {\em Journal of the ACM}, 48(4):778--797, Jul 2001.
\newblock \href {http://arxiv.org/abs/quant-ph/9802049}
  {\path{arXiv:quant-ph/9802049}}, \href
  {https://doi.org/10.1145/502090.502097} {\path{doi:10.1145/502090.502097}}.

\bibitem[BBD{\etalchar{+}}97]{SWAPtest}
Adriano Barenco, Andr\'{e} Berthiaume, David Deutsch, Artur Ekert, Richard
  Jozsa, and Chiara Macchiavello.
\newblock Stabilization of quantum computations by symmetrization.
\newblock {\em SIAM Journal on Computing}, 26(5):1541--1557, 1997.
\newblock \href {http://arxiv.org/abs/quant-ph/9604028}
  {\path{arXiv:quant-ph/9604028}}, \href
  {https://doi.org/10.1137/S0097539796302452}
  {\path{doi:10.1137/S0097539796302452}}.

\bibitem[BBH01]{Distributor}
Samuel~L. Braunstein, Vladim\'{\i}r Bu\v{z}ek, and Mark Hillery.
\newblock Quantum-information distributors: {Q}uantum network for symmetric and
  asymmetric cloning in arbitrary dimension and continuous limit.
\newblock {\em Phys. Rev. A}, 63(5):052313, Apr 2001.
\newblock \href {http://arxiv.org/abs/quant-ph/0009076}
  {\path{arXiv:quant-ph/0009076}}, \href
  {https://doi.org/10.1103/PhysRevA.63.052313}
  {\path{doi:10.1103/PhysRevA.63.052313}}.

\bibitem[BCH06]{BCH06}
Dave Bacon, Isaac~L. Chuang, and Aram~W. Harrow.
\newblock Efficient quantum circuits for {S}chur and {C}lebsch-{G}ordan
  transforms.
\newblock {\em Phys. Rev. Lett.}, 97(17):170502, Oct 2006.
\newblock \href {http://arxiv.org/abs/quant-ph/0407082}
  {\path{arXiv:quant-ph/0407082}}, \href
  {https://doi.org/10.1103/PhysRevLett.97.170502}
  {\path{doi:10.1103/PhysRevLett.97.170502}}.

\bibitem[BCH07]{BCH07}
Dave Bacon, Isaac~L. Chuang, and Aram~W. Harrow.
\newblock The quantum {S}chur and {C}lebsch-{G}ordan transforms: {I}.
  {E}fficient qudit circuits.
\newblock In {\em Proceedings of the Eighteenth Annual ACM-SIAM Symposium on
  Discrete Algorithms}, SODA'07, pages 1235--1244, Philadelphia, PA, USA, 2007.
  Society for Industrial and Applied Mathematics.
\newblock URL: \url{https://dl.acm.org/doi/10.5555/1283383.1283516}, \href
  {http://arxiv.org/abs/quant-ph/0601001} {\path{arXiv:quant-ph/0601001}}.

\bibitem[BH97]{AA1}
Gilles Brassard and Peter Høyer.
\newblock An exact quantum polynomial-time algorithm for {S}imon's problem.
\newblock In {\em Proceedings of the Fifth Israeli Symposium on Theory of
  Computing and Systems}, pages 12--23, 1997.
\newblock \href {http://arxiv.org/abs/quant-ph/9704027}
  {\path{arXiv:quant-ph/9704027}}, \href
  {https://doi.org/10.1109/ISTCS.1997.595153}
  {\path{doi:10.1109/ISTCS.1997.595153}}.

\bibitem[BHW99]{BHW99}
Vladim\'{\i}r Bu\v{z}ek, Mark Hillery, and Reinhard~F. Werner.
\newblock Optimal manipulations with qubits: {U}niversal-{NOT} gate.
\newblock {\em Phys. Rev. A}, 60(4):R2626--R2629, Oct 1999.
\newblock \href {http://arxiv.org/abs/quant-ph/9901053}
  {\path{arXiv:quant-ph/9901053}}, \href
  {https://doi.org/10.1103/PhysRevA.60.R2626}
  {\path{doi:10.1103/PhysRevA.60.R2626}}.

\bibitem[BHW00]{BHW}
Vladim\'{\i}r Bu\v{z}ek, Mark Hillery, and Reinhard~F. Werner.
\newblock Universal-{NOT} gate.
\newblock {\em Journal of Modern Optics}, 47(2-3):211--232, 2000.
\newblock \href {https://doi.org/10.1080/09500340008244037}
  {\path{doi:10.1080/09500340008244037}}.

\bibitem[BHZR06]{Processors}
Vladim\'{\i}r Bu\v{z}ek, Mark Hillery, Mário Ziman, and Marián Roško.
\newblock Programmable quantum processors.
\newblock {\em Quantum Information Processing}, 5(5):313--420, 2006.
\newblock \href {https://doi.org/10.1007/s11128-006-0028-z}
  {\path{doi:10.1007/s11128-006-0028-z}}.

\bibitem[Bia19]{Biamonte}
Jacob Biamonte.
\newblock Lectures on quantum tensor networks.
\newblock 2019.
\newblock \href {http://arxiv.org/abs/1912.10049} {\path{arXiv:1912.10049}}.

\bibitem[BLM{\etalchar{+}}21]{Majority}
Harry Buhrman, Noah Linden, Laura Mančinska, Ashley Montanaro, and Maris
  Ozols.
\newblock Quantum majority and other {B}oolean functions with quantum inputs.
\newblock Talk at QIP'21, 2021.
\newblock URL: \url{https://www.youtube.com/watch?v=0l49tmUimhk}.

\bibitem[Boh93]{Bohm}
Arno Bohm.
\newblock {\em Quantum Mechanics: Foundations and Applications}.
\newblock Springer, 3rd edition, 1993.
\newblock URL: \url{https://books.google.com/books?id=njT6CAAAQBAJ}.

\bibitem[BtD85]{BrockerDieck}
Theodor Br\"{o}cker and Tammo tom Dieck.
\newblock {\em Representations of Compact Lie Groups}.
\newblock Graduate Texts in Mathematics. Springer, 1985.
\newblock URL: \url{https://books.google.com/books?id=Kin7CAAAQBAJ}.

\bibitem[CEM99]{CEM99}
J.~Ignacio Cirac, Artur Ekert, and Chiara Macchiavello.
\newblock Optimal purification of single qubits.
\newblock {\em Phys. Rev. Lett.}, 82:4344--4347, 1999.
\newblock \href {http://arxiv.org/abs/quant-ph/9812075}
  {\path{arXiv:quant-ph/9812075}}, \href
  {https://doi.org/10.1103/PhysRevLett.82.4344}
  {\path{doi:10.1103/PhysRevLett.82.4344}}.

\bibitem[CKJ20]{NoethersPrinciple}
Cristina C\^{\i}rstoiu, Kamil Korzekwa, and David Jennings.
\newblock Robustness of {N}oether's principle: Maximal disconnects between
  conservation laws and symmetries in quantum theory.
\newblock {\em Phys. Rev. X}, 10(4):041035, Nov 2020.
\newblock \href {http://arxiv.org/abs/1908.04254} {\path{arXiv:1908.04254}},
  \href {https://doi.org/10.1103/PhysRevX.10.041035}
  {\path{doi:10.1103/PhysRevX.10.041035}}.

\bibitem[CKK{\etalchar{+}}22]{SU2CovariantLowRank}
Euijung Chang, Jaeyoung Kim, Hyesun Kwak, Hun~Hee Lee, and Sang-Gyun Youn.
\newblock Irreducibly {SU}(2)-covariant quantum channels of low rank.
\newblock {\em Reviews in Mathematical Physics}, 34(07):2250021, 2022.
\newblock \href {http://arxiv.org/abs/2105.00709} {\path{arXiv:2105.00709}},
  \href {https://doi.org/10.1142/S0129055X22500210}
  {\path{doi:10.1142/S0129055X22500210}}.

\bibitem[Cop94]{Coppersmith}
Don Coppersmith.
\newblock An approximate {F}ourier transform useful in quantum factoring.
\newblock Technical Report RC 19642, IBM, 1994.
\newblock \href {http://arxiv.org/abs/quant-ph/0201067}
  {\path{arXiv:quant-ph/0201067}}.

\bibitem[Dic54]{Dicke}
Robert~H. Dicke.
\newblock Coherence in spontaneous radiation processes.
\newblock {\em Phys. Rev.}, 93(1):99--110, Jan 1954.
\newblock \href {https://doi.org/10.1103/PhysRev.93.99}
  {\path{doi:10.1103/PhysRev.93.99}}.

\bibitem[DKRS06]{Adder}
Thomas~G. Draper, Samuel~A. Kutin, Eric~M. Rains, and Krysta~M. Svore.
\newblock A logarithmic-depth quantum carry-lookahead adder.
\newblock {\em Quantum Information and Computation}, 6(4\&5), Jul 2006.
\newblock \href {http://arxiv.org/abs/quant-ph/0406142}
  {\path{arXiv:quant-ph/0406142}}, \href {https://doi.org/10.26421/QIC6.4-5-4}
  {\path{doi:10.26421/QIC6.4-5-4}}.

\bibitem[DMBSS02]{ExperimentalUNOT1}
Francesco De~Martini, Vladim\'{\i}r Bu\v{z}ek, Fabio Sciarrino, and Carlo Sias.
\newblock Experimental realization of the quantum universal {NOT} gate.
\newblock {\em Nature}, 419(6909):815--818, 2002.
\newblock \href {https://doi.org/10.1038/nature01093}
  {\path{doi:10.1038/nature01093}}.

\bibitem[DMPS04]{ExperimentalUNOT3}
Francesco De~Martini, Daniele Pelliccia, and Fabio Sciarrino.
\newblock Contextual, optimal, and universal realization of the quantum cloning
  machine and of the {NOT} gate.
\newblock {\em Phys. Rev. Lett.}, 92(6):067901, Feb 2004.
\newblock \href {http://arxiv.org/abs/quant-ph/0312124}
  {\path{arXiv:quant-ph/0312124}}, \href
  {https://doi.org/10.1103/PhysRevLett.92.067901}
  {\path{doi:10.1103/PhysRevLett.92.067901}}.

\bibitem[DMS12]{Experiments}
Francesco De~Martini and Fabio Sciarrino.
\newblock \textit{Colloquium}: {M}ultiparticle quantum superpositions and the
  quantum-to-classical transition.
\newblock {\em Rev. Mod. Phys.}, 84(4):1765--1789, Dec 2012.
\newblock \href {https://doi.org/10.1103/RevModPhys.84.1765}
  {\path{doi:10.1103/RevModPhys.84.1765}}.

\bibitem[DNW{\etalchar{+}}12]{Satellite}
Vincenzo D'Ambrosio, Eleonora Nagali, Stephen~P. Walborn, Leandro Aolita,
  Sergei Slussarenko, Lorenzo Marrucci, and Fabio Sciarrino.
\newblock Complete experimental toolbox for alignment-free quantum
  communication.
\newblock {\em Nature Communications}, 3:961, Jul 2012.
\newblock \href {http://arxiv.org/abs/1203.6417} {\path{arXiv:1203.6417}},
  \href {https://doi.org/10.1038/ncomms1951} {\path{doi:10.1038/ncomms1951}}.

\bibitem[FRS{\etalchar{+}}20]{OverlapEstimation}
Marco Fanizza, Matteo Rosati, Michalis Skotiniotis, John Calsamiglia, and
  Vittorio Giovannetti.
\newblock Beyond the swap test: Optimal estimation of quantum state overlap.
\newblock {\em Phys. Rev. Lett.}, 124(6):060503, Feb 2020.
\newblock \href {http://arxiv.org/abs/1906.10639} {\path{arXiv:1906.10639}},
  \href {https://doi.org/10.1103/PhysRevLett.124.060503}
  {\path{doi:10.1103/PhysRevLett.124.060503}}.

\bibitem[GM97]{OptimalCloning}
Nicolas Gisin and Serge Massar.
\newblock Optimal quantum cloning machines.
\newblock {\em Phys. Rev. Lett.}, 79(11):2153--2156, Sep 1997.
\newblock \href {http://arxiv.org/abs/quant-ph/9705046}
  {\path{arXiv:quant-ph/9705046}}, \href
  {https://doi.org/10.1103/PhysRevLett.79.2153}
  {\path{doi:10.1103/PhysRevLett.79.2153}}.

\bibitem[GO22]{GO22}
Dmitry Grinko and Maris Ozols.
\newblock Linear programming with unitary-equivariant constraints.
\newblock 2022.
\newblock \href {http://arxiv.org/abs/2207.05713} {\path{arXiv:2207.05713}}.

\bibitem[GP99]{Spins}
Nicolas Gisin and Sandu Popescu.
\newblock Spin flips and quantum information for antiparallel spins.
\newblock {\em Phys. Rev. Lett.}, 83(2):432--435, Jul 1999.
\newblock \href {http://arxiv.org/abs/quant-ph/9901072}
  {\path{arXiv:quant-ph/9901072}}, \href
  {https://doi.org/10.1103/PhysRevLett.83.432}
  {\path{doi:10.1103/PhysRevLett.83.432}}.

\bibitem[Gro98]{AA2}
Lov~K. Grover.
\newblock Quantum computers can search rapidly by using almost any
  transformation.
\newblock {\em Phys. Rev. Lett.}, 80(19):4329--4332, May 1998.
\newblock \href {http://arxiv.org/abs/quant-ph/9712011}
  {\path{arXiv:quant-ph/9712011}}, \href
  {https://doi.org/10.1103/PhysRevLett.80.4329}
  {\path{doi:10.1103/PhysRevLett.80.4329}}.

\bibitem[GT88]{GelfandII}
Izrail~M. Gelfand and Michael~L. Tsetlin.
\newblock {\em Finite-dimensional representations of the group of unimodular
  matrices}, pages 653--656.
\newblock Springer Collected Works in Mathematics. Springer, 1988.
\newblock Originally in \textit{Dokl.~Akad.~Nauk SSSR}, 71:825--828, 1950.
\newblock URL: \url{https://books.google.com/books?id=YdvrGF_S4G0C&pg=PA653}.

\bibitem[GW09]{GW09}
Roe Goodman and Nolan~R Wallach.
\newblock {\em Symmetry, Representations, and Invariants}.
\newblock Springer, 2009.
\newblock \href {https://doi.org/10.1007/978-0-387-79852-3}
  {\path{doi:10.1007/978-0-387-79852-3}}.

\bibitem[Har05]{Harrow05}
Aram~W. Harrow.
\newblock {\em Applications of coherent classical communication and the Schur
  transform to quantum information theory}.
\newblock PhD thesis, MIT, 2005.
\newblock URL: \url{http://hdl.handle.net/1721.1/34973}, \href
  {http://arxiv.org/abs/quant-ph/0512255} {\path{arXiv:quant-ph/0512255}}.

\bibitem[HB09]{Machines}
Mark Hillery and Vladim\'{\i}r Bu\v{z}ek.
\newblock Quantum machines.
\newblock {\em Contemporary Physics}, 50(5):575--586, 2009.
\newblock \href {http://arxiv.org/abs/0903.4208} {\path{arXiv:0903.4208}},
  \href {https://doi.org/10.1080/00107510902924786}
  {\path{doi:10.1080/00107510902924786}}.

\bibitem[HHY21]{GeneralizedProbability}
Masahito Hayashi, Akihito Hora, and Shintarou Yanagida.
\newblock Asymmetry of tensor product of asymmetric and invariant vectors
  arising from {S}chur-{W}eyl duality based on hypergeometric orthogonal
  polynomial.
\newblock 2021.
\newblock \href {http://arxiv.org/abs/2104.12635} {\path{arXiv:2104.12635}}.

\bibitem[HMO{\etalchar{+}}21]{VirtualDistillation}
William~J. Huggins, Sam McArdle, Thomas~E. O'Brien, Joonho Lee, Nicholas~C.
  Rubin, Sergio Boixo, K.~Birgitta Whaley, Ryan Babbush, and Jarrod~R. McClean.
\newblock Virtual distillation for quantum error mitigation.
\newblock {\em Phys. Rev. X}, 11(4):041036, Nov 2021.
\newblock \href {http://arxiv.org/abs/2011.07064} {\path{arXiv:2011.07064}},
  \href {https://doi.org/10.1103/PhysRevX.11.041036}
  {\path{doi:10.1103/PhysRevX.11.041036}}.

\bibitem[IH08]{PortbasedTeleportation}
Satoshi Ishizaka and Tohya Hiroshima.
\newblock Asymptotic teleportation scheme as a universal programmable quantum
  processor.
\newblock {\em Phys. Rev. Lett.}, 101(24):240501, Dec 2008.
\newblock \href {http://arxiv.org/abs/0807.4568} {\path{arXiv:0807.4568}},
  \href {https://doi.org/10.1103/PhysRevLett.101.240501}
  {\path{doi:10.1103/PhysRevLett.101.240501}}.

\bibitem[Key02]{Keyl}
Michael Keyl.
\newblock Fundamentals of quantum information theory.
\newblock {\em Physics Reports}, 369(5):431--548, 2002.
\newblock \href {http://arxiv.org/abs/quant-ph/0202122}
  {\path{arXiv:quant-ph/0202122}}, \href
  {https://doi.org/https://doi.org/10.1016/S0370-1573(02)00266-1}
  {\path{doi:https://doi.org/10.1016/S0370-1573(02)00266-1}}.

\bibitem[KFG19]{SubtractingMachine}
Farzad Kianvash, Marco Fanizza, and Vittorio Giovannetti.
\newblock Optimal quantum subtracting machine.
\newblock {\em Phys. Rev. A}, 99(5):052319, May 2019.
\newblock \href {http://arxiv.org/abs/1811.07187} {\path{arXiv:1811.07187}},
  \href {https://doi.org/10.1103/PhysRevA.99.052319}
  {\path{doi:10.1103/PhysRevA.99.052319}}.

\bibitem[Koc21a]{KoczorEigenvector}
B\'alint Koczor.
\newblock The dominant eigenvector of a noisy quantum state.
\newblock {\em New Journal of Physics}, 23(12):123047, Dec 2021.
\newblock \href {http://arxiv.org/abs/2104.00608} {\path{arXiv:2104.00608}},
  \href {https://doi.org/10.1088/1367-2630/ac37ae}
  {\path{doi:10.1088/1367-2630/ac37ae}}.

\bibitem[Koc21b]{KoczorSuppression}
B\'alint Koczor.
\newblock Exponential error suppression for near-term quantum devices.
\newblock {\em Phys. Rev. X}, 11(3):031057, Sep 2021.
\newblock \href {http://arxiv.org/abs/2011.05942} {\path{arXiv:2011.05942}},
  \href {https://doi.org/10.1103/PhysRevX.11.031057}
  {\path{doi:10.1103/PhysRevX.11.031057}}.

\bibitem[Koo85]{Koornwinder}
Tom~H. Koornwinder.
\newblock Matrix elements of irreducible representations of ${\SU{2} \times
  \SU{2}}$ and vector-valued orthogonal polynomials.
\newblock {\em SIAM Journal on Mathematical Analysis}, 16(3):602--613, 1985.
\newblock URL: \url{https://ir.cwi.nl/pub/2535}, \href
  {https://doi.org/10.1137/0516044} {\path{doi:10.1137/0516044}}.

\bibitem[Kro19]{Krovi18}
Hari Krovi.
\newblock An efficient high dimensional quantum {S}chur transform.
\newblock {\em Quantum}, 3:122, Feb 2019.
\newblock \href {http://arxiv.org/abs/1804.00055} {\path{arXiv:1804.00055}},
  \href {https://doi.org/10.22331/q-2019-02-14-122}
  {\path{doi:10.22331/q-2019-02-14-122}}.

\bibitem[KS18]{KS17}
William~M. Kirby and Frederick~W. Strauch.
\newblock A practical quantum algorithm for the {S}chur transform.
\newblock {\em Quantum Information and Computation}, 18(9\&10):0721--0742, Aug
  2018.
\newblock \href {http://arxiv.org/abs/1709.07119} {\path{arXiv:1709.07119}},
  \href {https://doi.org/10.26421/QIC18.9-10-1}
  {\path{doi:10.26421/QIC18.9-10-1}}.

\bibitem[KW01]{KW01}
Michae Keyl and Reinhard~F. Werner.
\newblock The rate of optimal purification procedures.
\newblock {\em Annales Henri Poincar{\'e}}, 2:1--26, 2001.
\newblock \href {http://arxiv.org/abs/quant-ph/9910124}
  {\path{arXiv:quant-ph/9910124}}, \href {https://doi.org/10.1007/PL00001027}
  {\path{doi:10.1007/PL00001027}}.

\bibitem[Lou08]{Louck}
James~D. Louck.
\newblock {\em Unitary Symmetry and Combinatorics}.
\newblock World Scientific, 2008.
\newblock URL: \url{https://books.google.com/books?id=4n1pDQAAQBAJ}.

\bibitem[LR04]{LR04}
Victor Lomonosov and Peter Rosenthal.
\newblock The simplest proof of {B}urnside's theorem on matrix algebras.
\newblock {\em Linear Algebra and its Applications}, 383:45--47, 2004.
\newblock \href {https://doi.org/10.1016/j.laa.2003.08.012}
  {\path{doi:10.1016/j.laa.2003.08.012}}.

\bibitem[Mon09]{Mon09}
Ashley Montanaro.
\newblock Symmetric functions of qubits in an unknown basis.
\newblock {\em Phys. Rev. A}, 79(6):062316, Jun 2009.
\newblock \href {http://arxiv.org/abs/0903.5466} {\path{arXiv:0903.5466}},
  \href {https://doi.org/10.1103/PhysRevA.79.062316}
  {\path{doi:10.1103/PhysRevA.79.062316}}.

\bibitem[Nuw14]{Nuwairan}
Muneerah~Al Nuwairan.
\newblock The extreme points of {SU}(2)-irreducibly covariant channels.
\newblock {\em International Journal of Mathematics}, 25(06):1450048, 2014.
\newblock \href {http://arxiv.org/abs/1306.5321} {\path{arXiv:1306.5321}},
  \href {https://doi.org/10.1142/S0129167X14500487}
  {\path{doi:10.1142/S0129167X14500487}}.

\bibitem[PSS{\etalchar{+}}03]{ExperimentalUNOT2}
Daniele Pelliccia, Valentina Schettini, Fabio Sciarrino, Carlo Sias, and
  Francesco De~Martini.
\newblock Contextual realization of the universal quantum cloning machine and
  of the universal-{NOT} gate by quantum-injected optical parametric
  amplification.
\newblock {\em Phys. Rev. A}, 68(4):042306, Oct 2003.
\newblock \href {http://arxiv.org/abs/quant-ph/0302087}
  {\path{arXiv:quant-ph/0302087}}, \href
  {https://doi.org/10.1103/PhysRevA.68.042306}
  {\path{doi:10.1103/PhysRevA.68.042306}}.

\bibitem[Rad71]{Radcliffe}
J.~Michael Radcliffe.
\newblock Some properties of coherent spin states.
\newblock {\em Journal of Physics A: General Physics}, 4(3):313--323, May 1971.
\newblock \href {https://doi.org/10.1088/0305-4470/4/3/009}
  {\path{doi:10.1088/0305-4470/4/3/009}}.

\bibitem[RBC{\etalchar{+}}01]{QuditInverter}
Pranaw Rungta, Vladim\'{\i}r Bu\v{z}ek, Carlton~M. Caves, Mark Hillery, and
  Gerard~J. Milburn.
\newblock Universal state inversion and concurrence in arbitrary dimensions.
\newblock {\em Phys. Rev. A}, 64(4):042315, Sep 2001.
\newblock \href {http://arxiv.org/abs/quant-ph/0102040}
  {\path{arXiv:quant-ph/0102040}}, \href
  {https://doi.org/10.1103/PhysRevA.64.042315}
  {\path{doi:10.1103/PhysRevA.64.042315}}.

\bibitem[RSSDM04]{ExperimentalUNOT4}
Marco Ricci, Fabio Sciarrino, Carlo Sias, and Francesco De~Martini.
\newblock Teleportation scheme implementing the universal optimal quantum
  cloning machine and the universal {NOT} gate.
\newblock {\em Phys. Rev. Lett.}, 92(4):047901, Jan 2004.
\newblock \href {http://arxiv.org/abs/quant-ph/0310021}
  {\path{arXiv:quant-ph/0310021}}, \href
  {https://doi.org/10.1103/PhysRevLett.92.047901}
  {\path{doi:10.1103/PhysRevLett.92.047901}}.

\bibitem[SF14]{spencer14}
Joel Spencer and Laura Florescu.
\newblock {\em Asymptopia}.
\newblock Student Mathematical Library. American Mathematical Society, 2014.
\newblock URL: \url{https://books.google.com/books?id=niD9AwAAQBAJ}.

\bibitem[SIGA05]{Cloning}
Valerio Scarani, Sofyan Iblisdir, Nicolas Gisin, and Antonio Ac\'{\i}n.
\newblock Quantum cloning.
\newblock {\em Rev. Mod. Phys.}, 77(4):1225--1256, Nov 2005.
\newblock \href {http://arxiv.org/abs/quant-ph/0511088}
  {\path{arXiv:quant-ph/0511088}}, \href
  {https://doi.org/10.1103/RevModPhys.77.1225}
  {\path{doi:10.1103/RevModPhys.77.1225}}.

\bibitem[SMMT{\etalchar{+}}19]{Clustering}
Gael Sent\'{\i}s, Alex Monr\`as, Ramon Mu{\~n}oz-Tapia, John Calsamiglia, and
  Emilio Bagan.
\newblock Unsupervised classification of quantum data.
\newblock {\em Phys. Rev. X}, 9(4):041029, Nov 2019.
\newblock \href {http://arxiv.org/abs/1903.01391} {\path{arXiv:1903.01391}},
  \href {https://doi.org/10.1103/PhysRevX.9.041029}
  {\path{doi:10.1103/PhysRevX.9.041029}}.

\bibitem[Sta83]{Stanley}
Richard~P. Stanley.
\newblock $\mathrm{GL}(n,\mathbb{C})$ for combinatorialists.
\newblock In E.~Keith Lloyd, editor, {\em Surveys in Combinatorics: Invited
  Papers for the Ninth British Combinatorial Conference 1983}, pages 187--200.
  Cambridge University Press, Cambridge, 1983.
\newblock \href {https://doi.org/10.1017/CBO9781107325548.009}
  {\path{doi:10.1017/CBO9781107325548.009}}.

\bibitem[Ste87]{Stembridge}
John~R. Stembridge.
\newblock Rational tableaux and the tensor algebra of $gl_n$.
\newblock {\em Journal of Combinatorial Theory, Series A}, 46(1):79--120, 1987.
\newblock \href {https://doi.org/10.1016/0097-3165(87)90077-X}
  {\path{doi:10.1016/0097-3165(87)90077-X}}.

\bibitem[SV12]{SU2InvariantTensors}
Sukhwinder Singh and Guifre Vidal.
\newblock Tensor network states and algorithms in the presence of a global
  {SU}(2) symmetry.
\newblock {\em Phys. Rev. B}, 86(19):195114, Nov 2012.
\newblock \href {http://arxiv.org/abs/1208.3919} {\path{arXiv:1208.3919}},
  \href {https://doi.org/10.1103/PhysRevB.86.195114}
  {\path{doi:10.1103/PhysRevB.86.195114}}.

\bibitem[Vil78]{Vilenkin}
Naum~Ja. Vilenkin.
\newblock {\em Special Functions and the Theory of Group Representations},
  volume~22 of {\em Translations of Mathematical Monographs}.
\newblock American Mathematical Society, 1978.
\newblock URL: \url{https://books.google.com/books?id=08hPoGgSQFIC}.

\bibitem[VK91]{VK1}
Naum~Ja. Vilenkin and Anatoli\u{\i}~U. Klimyk.
\newblock {\em Representation of Lie Groups and Special Functions. Volume 1:
  Simplest Lie Groups, Special Functions and Integral Transforms}.
\newblock Kluwer Academic Publishers, 1991.
\newblock URL: \url{https://books.google.com/books?id=5NpzvgEACAAJ}.

\bibitem[VK13]{VK3}
Naum~Ja. Vilenkin and Anatoli\u{\i}~U. Klimyk.
\newblock {\em Representation of Lie Groups and Special Functions. Volume 3:
  Classical and Quantum Groups and Special Functions}.
\newblock Mathematics and its Applications. Springer, 2013.
\newblock URL: \url{https://books.google.com/books?id=biLtCAAAQBAJ}.

\bibitem[Wat18]{Watrous}
John Watrous.
\newblock {\em The Theory of Quantum Information}.
\newblock Cambridge University Press, 2018.
\newblock URL: \url{https://cs.uwaterloo.ca/~watrous/TQI/}.

\bibitem[Waw84]{Wawrzynczyk}
Antoni Wawrzy\'{n}czyk.
\newblock {\em Group Representations and Special Functions}.
\newblock Mathematics and its Applications. D.~Reidel Publishing Company, 1984.
\newblock URL: \url{https://books.google.com/books?id=0bUfO1N8Ll0C}.

\bibitem[Wig59]{Wigner}
Eugene~P. Wigner.
\newblock {\em Group Theory and its Application to the Quantum Mechanics of
  Atomic Spectra}.
\newblock Academic Press, 1959.
\newblock URL: \url{https://books.google.com/books?id=ENZzI49uZMcC}.

\bibitem[Wri16]{Wright}
John Wright.
\newblock {\em How to learn a quantum state}.
\newblock PhD thesis, Carnegie Mellon University, 2016.
\newblock URL:
  \url{https://people.eecs.berkeley.edu/~jswright/papers/thesis.pdf}.

\bibitem[{\v{Z}}el73]{Zelobenko}
Dmitri\u{\i}~P. {\v{Z}}elobenko.
\newblock {\em Compact Lie Groups and Their Representations}.
\newblock American Mathematical Society, 1973.
\newblock URL: \url{https://books.google.com/books?id=JqG-oAEACAAJ}.

\end{thebibliography}

\appendix

\section{Graphical notation}\label{apx:Graphical}

In this appendix, we introduce graphical notation and prove some basic identities that will be useful in \cref{apx:Nice}. For a much more thorough introduction to tensor networks and graphical notation, see \cite{Biamonte}.

\subsection{Vectorization}

Throughout this section, let us fix some dimension $d \geq 1$.

\begin{definition}[Vectorization]\label{def:vec}
The \emph{vectorization} of a matrix
\begin{equation}
  A = \sum_{\substack{i\in[\dout]\\j\in[\din]}} A_{ij} \ketbra{i}{j}
  \in \L(\C^\din,\C^\dout)
\end{equation}
is defined\footnote{We define $\bra{A}$ as $\ket{A}\tp$ instead of $\ket{A}\ct$ since in graphical notation transposition is more natural than conjugate transposition. We use this convention only in the context of matrix vectorization, not for general ``ket'' vectors.} as follows:
\begin{align}
  \ket{A}
   &:= \sum_{\substack{i\in[\dout]\\j\in[\din]}} A_{ij} \ket{i} \ket{j}, &
  \bra{A}
    := \ket{A}\tp
    &= \sum_{\substack{i\in[\dout]\\j\in[\din]}} A_{ij} \bra{i} \bra{j}.
  \label{eq:vectorization}
\end{align}
\end{definition}

\begin{prop}\label{prop:mini snake}
For any $A,B \in \L(\C^d)$,
\begin{enumerate}
  \item $\braket{B}{A} = \Tr \of[\big]{A B\tp}$, \label{i:Tr}
  \item $\SWAP \ket{A} = \ket{A\tp}$, \label{i:SWAP}
  \item $(B \x \id_d) \ket{A} = \ket{BA}$, \label{i:B}
  \item $\of[\big]{\id_d \x \bra{B}} \of[\big]{\ket{A} \x \id_d} = AB$, \label{i:snake}
\end{enumerate}
where $\SWAP$ denotes the two-qudit swap operation:
$\SWAP \ket{i} \ket{j} = \ket{j} \ket{i}$,
for all $i,j \in [d]$.
\end{prop}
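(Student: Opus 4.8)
The plan is to verify each of the four identities by a direct computation from \cref{def:vec}, expanding both sides in the standard basis and matching coefficients. Each claim is bilinear in the entries of $A$ and $B$, so the whole proof reduces to careful index bookkeeping; no representation theory or analysis is involved.

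First I would do the trace identity, since it is the basic ``contraction'' that the remaining parts reuse. Writing $\bra{B} = \sum_{i,j} B_{ij} \bra{i}\bra{j}$ and $\ket{A} = \sum_{k,l} A_{kl}\ket{k}\ket{l}$, orthonormality of the standard basis collapses the four sums to $\braket{B}{A} = \sum_{i,j} B_{ij} A_{ij}$, and recognizing $B_{ij} = (B\tp)_{ji}$ turns this into $\sum_i (AB\tp)_{ii} = \Tr(AB\tp)$. For the swap identity, applying $\SWAP$ to $\ket{A} = \sum_{i,j} A_{ij}\ket{i}\ket{j}$ gives $\sum_{i,j} A_{ij}\ket{j}\ket{i}$; relabelling $i\leftrightarrow j$ and using $A_{ji} = (A\tp)_{ij}$ yields $\ket{A\tp}$. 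For the identity $(B\x\id_d)\ket{A} = \ket{BA}$, linearity gives $\sum_{i,j} A_{ij}(B\ket{i})\ket{j} = \sum_{j,k}\bigl(\sum_i B_{ki}A_{ij}\bigr)\ket{k}\ket{j} = \ket{BA}$.

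The only part where some care is required is the last ``snake'' identity, because one must keep straight which tensor slots each operator acts on: $\ket{A}\x\id_d$ is an operator $\C^d \to \C^d\x\C^d\x\C^d$ that places $\ket{A}$ on the first two factors and leaves the third alone, while $\id_d\x\bra{B}$ then contracts the second and third factors against $\bra{B}$. Evaluating on an arbitrary basis vector $\ket{m}$ keeps this transparent: $(\ket{A}\x\id_d)\ket{m} = \sum_{i,j} A_{ij}\ket{i}\ket{j}\ket{m}$, and then $\id_d\x\bra{B}$ sends $\ket{j}\ket{m}\mapsto \bra{B}\bigl(\ket{j}\ket{m}\bigr) = B_{jm}$, leaving $\sum_{i,j} A_{ij}B_{jm}\ket{i} = (AB)\ket{m}$; since $m$ was arbitrary, the operator identity follows. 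I expect this tensor-slot bookkeeping to be the only mild obstacle, and even it is routine once the domains and codomains are written out explicitly; the rest is mechanical coefficient comparison.
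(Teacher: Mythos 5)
Your proposal is correct and follows essentially the same route as the paper: direct expansion in the standard basis via \cref{def:vec} and index bookkeeping for all four identities, with the only cosmetic difference being that for the snake identity you evaluate the composite operator on a basis vector $\ket{m}$ whereas the paper manipulates the operators directly. No gaps.
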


\begin{proof}
The first three identities are quite straightforward:
\begin{align}
  \braket{B}{A}
 &= \sum_{i,j \in [d]} B_{ij} A_{ij}
  = \sum_{i,j \in [d]} A_{ij} (B\tp)_{ji}
  = \Tr \of[\big]{A B\tp}, \\
  \SWAP \ket{A}
 &= \sum_{i,j \in [d]} A_{ij} \SWAP \of[\big]{\ket{i} \ket{j}}
  = \sum_{i,j \in [d]} A_{ij} \ket{j} \ket{i}
  = \ket{A\tp}, \\
  \of[\big]{B \x \id_d} \ket{A}
 &= \sum_{i,j \in [d]} A_{ij}
    \of*{\sum_{k,l \in [d]} B_{kl} \ket{k} \braket{l}{i}}
    \ket{j}
  = \sum_{k,j \in [d]}
    \of*{\sum_{i \in [d]} B_{ki} A_{ij}}
    \ket{k} \ket{j}
  = \ket{BA}.
\end{align}
The third identity can be obtained as follows:
\begin{align}
  \of[\big]{\id_d \x \bra{B}}
  \of[\big]{\ket{A} \x \id_d}
 &= \sum_{i,j \in [d]} B_{ij}
    \of[\big]{\id_d \x \bra{i} \x \bra{j}}
    \sum_{k,l \in [d]} A_{kl}
    \of[\big]{\ket{k} \x \ket{l} \x \id_d} \\
 &= \sum_{i,j,k,l \in [d]}
    A_{kl} B_{ij}
    \ket{k} \braket{i}{l} \bra{j} \\
 &= \of[\Bigg]{
      \sum_{k,l \in [d]}
      A_{kl} \ketbra{k}{l}
    }
    \of[\Bigg]{
      \sum_{i,j \in [d]}
      B_{ij} \ketbra{i}{j}
    } \\
 &= AB,
\end{align}
which completes the proof.
\end{proof}



\newcommand{\snakefig}[1]{%
\begin{tikzpicture}[thick, 
    l/.style = {anchor = east},
    b/.style = {fill = white},
    round/.style = {rounded corners = 9pt}
  ]
  \def\H{0.8cm}
  \def\W{2.4cm}
  \def\h{6pt}
  \def\w{10pt}

#1 

\end{tikzpicture}%
}



\newcommand{\gate}[2]{
  \draw[b] (#1)+(-\w,-\w) rectangle +(\w,\w);
  \node[text height = 1.5ex] at (#1) {$#2$};
}


\newcommand{\SWAPgate}[1]{
  \draw (#1)++(-2.5*\w, 1.5*\h) -- ++(\w,0) to [out = 0, in = 180]
            ++( 3.0*\w,-3.0*\h) -- ++(\w,0);
  \draw (#1)++(-2.5*\w,-1.5*\h) -- ++(\w,0) to [out = 0, in = 180]
            ++( 3.0*\w, 3.0*\h) -- ++(\w,0);
}


\newcommand{\EPR}[2]{
  \draw[round, xscale = #2]
   (#1)++(-2.5*\w,-1.5*\h)
    -- ++( 5.0*\w,0)
    -- ++( 0,3*\h)
    -- ++(-5.0*\w,0);
}


\newcommand{\Notation}{\snakefig{

  \foreach \s/\r/\a in
     {1/5/A1,1/3/B1,1/1/C1,
      3/5/A2,3/3/B2,3/1/C2,
      5/5/A3,5/3/B3,5/1/C3} {
    \path (\s*\W,\r*\H) coordinate (L\a);
    \path (L\a)+(2.8*\w,0) coordinate (\a);
  }

  \node[l] at (LA1) {$A \;= $};
  \draw (A1)+(-2.5*\w,0) -- +(2.5*\w,0);
  \gate{A1}{A}

  \node[l] at (LB1) {$A\tp \;= $};
  \draw[round]
  (B1) ++( 2.5*\w,-3*\h)
    -- ++(-5.0*\w,0)
    -- ++( 0,3*\h)
    -- ++( 5.0*\w,0)
    -- ++( 0,3*\h)
    -- ++(-5*\w,0);
  \gate{$(B1)+(0,0)$}{A}

  \node[l] at (LA2) {$\ket{A} \;= $};
  \node[l] at (LB2) {$\bra{B} \;= $};
  \EPR{A2}{ 1}
  \EPR{B2}{-1}
  \gate{$(A2)+(0, 1.5*\h)$}{A}
  \gate{$(B2)+(0,-1.5*\h)$}{B}

  \node[l] at (LA3) {$\Tr(A) \;= $};
  \draw[round]
   (A3)++(-2.5*\w,-1.5*\h)
    -- ++( 5.0*\w,0)
    -- ++( 0,3*\h)
    -- ++(-5.0*\w,0) -- cycle;
  \gate{$(A3)+(0,1.5*\h)$}{A}

  \node[l] at (LB3) {$A \x B \;= $};
  \draw (B3)++(-2.5*\w, 2*\h) -- +(5.0*\w,0);
  \draw (B3)++(-2.5*\w,-2*\h) -- +(5.0*\w,0);
  \gate{$(B3)+(0, 2*\h)$}{A}
  \gate{$(B3)+(0,-2*\h)$}{B}

  \node[l] at (LC1) {$A \cdot B \;= $};
  \path (C1)+(\w,0) coordinate (C1);
  \draw (C1)+(-3.5*\w,0) -- +(3.5*\w,0);
  \gate{$(C1)+(-1.5*\w,0)$}{A}
  \gate{$(C1)+( 1.5*\w,0)$}{B}

  \node[l] at (LC2) {$I \;= $};
  \draw (C2)+(-2.5*\w,0) -- +(2.5*\w,0);

  \node[l] at (LC3) {$\SWAP \;= $};
  \SWAPgate{C3}

}}


\newcommand{\Snake}{\snakefig{

  \path (-1.65*\W, 1.3*\H) coordinate (I1);
  \path ( 1.65*\W, 1.3*\H) coordinate (I2);
  \path (-1.65*\W,-1.3*\H) coordinate (I3);
  \path ( 1.65*\W,-1.3*\H) coordinate (I4);

  \foreach \i in {1,...,4} {
    \path (I\i)+(-11*\w,0) node {\textit{\i.}};
  }


  \node at (I1) {$=$};
  \path (I1)+(-5.5*\w,0) coordinate (L1);
  \path (I1)+( 5.5*\w,0) coordinate (R1);

  \draw[round]
   (L1)++(-4.0*\w,-1.5*\h)
    -- ++( 8.0*\w,0)
    -- ++( 0,3*\h)
    -- ++(-8.0*\w,0) -- cycle;
  \gate{$(L1)+( 1.5*\w, 1.5*\h)$}{A}
  \gate{$(L1)+(-1.5*\w,-1.5*\h)$}{B}

  \draw[round]
   (R1)++(-4.0*\w,-1.5*\h)
    -- ++( 8.0*\w,0)
    -- ++( 0,3*\h)
    -- ++(-8.0*\w,0) -- cycle;
  \gate{$(R1)+(-1.5*\w,1.5*\h)$}{A}
  \gate{$(R1)+( 1.5*\w,1.5*\h)$}{B\tp}


  \node at (I2) {$=$};
  \path (I2)+(-6.0*\w,0) coordinate (L2);
  \path (I2)+( 4.0*\w,0) coordinate (R2);

  \SWAPgate{$(L2)+(-\w,0)$}
  \draw[round] (L2)++(1.5*\w,1.5*\h) -- ++(3*\w,0) -- ++(0,-3*\h) -- ++(-3*\w,0);
  \gate{$(L2)+(2*\w,1.5*\h)$}{A}

  \EPR{R2}{1}
  \gate{$(R2)+(0,1.5*\h)$}{A\tp}


  \node at (I3) {$=$};
  \path (I3)+(-5.5*\w,0) coordinate (L3);
  \path (I3)+( 4.0*\w,0) coordinate (R3);

  \EPR{L3}{1.6}
  \gate{$(L3)+(-1.5*\w,1.5*\h)$}{B}
  \gate{$(L3)+( 1.5*\w,1.5*\h)$}{A}

  \EPR{R3}{1}
  \gate{$(R3)+(0,1.5*\h)$}{BA}


  \node at (I4) {$=$};
  \path (I4)+(-5.5*\w,0) coordinate (L4);
  \path (I4)+( 5.5*\w,0) coordinate (R4);

  \draw[round]
  (L4) ++( 4.0*\w,-3*\h)
    -- ++(-8.0*\w,0)
    -- ++( 0,3*\h)
    -- ++( 8.0*\w,0)
    -- ++( 0,3*\h)
    -- ++(-8*\w,0);
  \gate{$(L4)+( 1.5*\w, 3*\h)$}{A}
  \gate{$(L4)+(-1.5*\w,-3*\h)$}{B}

  \draw (R4)+(-4.0*\w,0) -- +(4.0*\w,0);
  \gate{$(R4)+(-1.5*\w,0)$,0}{A}
  \gate{$(R4)+( 1.5*\w,0)$,0}{B}


}}


\newcommand{\BigSnake}{\snakefig{

  \path (-6.0*\w,0) coordinate (L);
  \path (11.0*\w,0) coordinate (R);
  \path (L)+(0, 7.0*\h) coordinate (L1);
  \path (L)+(0,-7.0*\h) coordinate (L2);

  \draw[round]
  (L1) ++(-4*\w,3*\h)
    -- ++( 8*\w,0)
    -- ++( 0,-3*\h)
    -- ++(-8*\w,0)
    -- ++( 0,-3*\h)
    -- ++( 4.5*\w,0);
  \gate{$(L1)+( 1.5*\w, 3*\h)$}{A_1}
  \gate{$(L1)+(-1.5*\w,-3*\h)$}{B_1}

  \path (L1)++( 1.5*\w,-3*\h) node {$\ldots$};
  \path (L2)++(-1.5*\w, 3*\h) node {$\ldots$};
  \path (L)+(0,0.5ex) node {$\vdots$};

  \draw[round]
  (L2) ++(-0.5*\w,3*\h)
    -- ++( 4.5*\w,0)
    -- ++( 0,-3*\h)
    -- ++(-8*\w,0)
    -- ++( 0,-3*\h)
    -- ++( 8*\w,0);
  \gate{$(L2)+( 1.5*\w, 3*\h)$}{A_k}
  \gate{$(L2)+(-1.5*\w,-3*\h)$}{B_k}

  \node at (0,0) {$=$};
  \node at (R) {$\ldots$};

  \draw (R)+(-9*\w,0) -- +(-\w,0);
  \draw (R)+( 9*\w,0) -- +( \w,0);
  \gate{$(R)+(-6.5*\w,0)$,0}{A_1}
  \gate{$(R)+(-3.5*\w,0)$,0}{B_1}
  \gate{$(R)+( 3.5*\w,0)$,0}{A_k}
  \gate{$(R)+( 6.5*\w,0)$,0}{B_k}

}}


The intuition behind these identities is best captured by \emph{graphical notation}.
Let us convert basic linear algebra operations involving matrices $A,B \in \L(\C^d)$ into the following diagrams:
\begin{center}
  \Notation
\end{center}
Then the identities in \cref{prop:mini snake} have the following straightforward proofs:
\begin{center}
  \Snake
\end{center}

\Cref{i:snake} of \cref{prop:mini snake} can be generalized by chaining together any number of vectorized operators.
The resulting chain has one of three possible types, depending on where it starts and ends:
it may start on one side but end on the other (see the diagram for \cref{i:snake}),
it may start and end on the same side (see the diagram for $\ket{A}$),
or it may be a closed loop (see the diagram for $\Tr(A)$).
These chains describe different types of objects:
a matrix, a vector, and a scalar, respectively.
The following proposition expresses this object as the matrix product of the operators involved.

\begin{prop}\label{prop:snakes}
For any $k \geq 0$ and $A_1,\dotsc,A_{k+1},B_1,\dotsc,B_{k+1} \in \L(\C^d)$,
\begin{align}
  \of[\big]{\id_d \x \bra{B_1} \x \dotsb \x \bra{B_k}}
  \of[\big]{\ket{A_1} \x \dotsb \x \ket{A_k} \x \id_d}
  &= A_1 B_1 \dotsb A_k B_k, \\
  \of[\big]{\id_d \x \bra{B_1} \x \dotsb \x \bra{B_k} \x \id_d}
  \of[\big]{\ket{A_1} \x \dotsb \x \ket{A_{k+1}}}
  &= \ket{A_1 B_1 \dotsb A_k B_k A_{k+1}}, \\
  \bra{B_{k+1}} \SWAP
  \of[\big]{\id_d \x \bra{B_1} \x \dotsb \x \bra{B_k} \x \id_d}
  \of[\big]{\ket{A_1} \x \dotsb \x \ket{A_{k+1}}}
  &= \Tr \of[\big]{A_1 B_1 \dotsb A_{k+1} B_{k+1}}.
\end{align}
\end{prop}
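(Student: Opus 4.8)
The plan is to bootstrap all three identities off \cref{prop:mini snake}: prove the first one by induction on $k$, deduce the second from it by expanding a single vectorized factor in the standard basis, and then read off the third using the $\SWAP$ and trace identities already established. None of the steps is deep; the whole point is careful contraction bookkeeping.

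For the first identity I would induct on $k$, the base case $k = 1$ being exactly \cref{i:snake} of \cref{prop:mini snake}. For the inductive step, peel off the trailing block: in the composite
\[
  \of[\big]{\id_d \x \bra{B_1} \x \dotsb \x \bra{B_k}}
  \of[\big]{\ket{A_1} \x \dotsb \x \ket{A_k} \x \id_d},
\]
the intermediate space is $(\C^2)\xp{2k+1}$ with the factor labels arranged so that $\bra{B_i}$ pairs the second leg of $\ket{A_i}$ with the first leg of $\ket{A_{i+1}}$, while the final $\bra{B_k}$ pairs the second leg of $\ket{A_k}$ with the output leg of the trailing $\id_d$. Hence by \cref{i:snake} of \cref{prop:mini snake} (read as a statement about an operator acting on its free leg) this last block collapses to the operator $A_k B_k$. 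Substituting it back and using that $(\id_d \x \bra{B_1} \x \dotsb \x \bra{B_{k-1}})(\ket{A_1} \x \dotsb \x \ket{A_{k-1}} \x M) = \bigl[(\id_d \x \bra{B_1} \x \dotsb \x \bra{B_{k-1}})(\ket{A_1} \x \dotsb \x \ket{A_{k-1}} \x \id_d)\bigr]\, M$ for any operator $M$, the inductive hypothesis yields $A_1 B_1 \dotsb A_{k-1} B_{k-1} (A_k B_k) = A_1 B_1 \dotsb A_k B_k$, as wanted.

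The second identity follows from the first: writing $\ket{A_{k+1}} = \sum_{p,q} (A_{k+1})_{pq}\, \ket{p} \x \ket{q}$, the trailing $\id_d$ on the left passes $\ket{q}$ through unchanged, while the first identity (with $\id_d\ket{p} = \ket{p}$ playing the role of the trailing $\id_d$) turns the remaining contraction into $(A_1 B_1 \dotsb A_k B_k)\ket{p}$; summing gives $\sum_{p,q}(A_{k+1})_{pq}\,(A_1 B_1 \dotsb A_k B_k \ket{p}) \x \ket{q}$, which is $\ket{A_1 B_1 \dotsb A_k B_k A_{k+1}}$ by the definition of vectorization. The third identity then comes from applying $\bra{B_{k+1}}\SWAP$ to the second: by \cref{i:SWAP} of \cref{prop:mini snake} this gives $\braket{B_{k+1}}{M\tp}$ with $M := A_1 B_1 \dotsb A_k B_k A_{k+1}$, and by \cref{i:Tr} this equals $\Tr\of[\big]{M\tp B_{k+1}\tp} = \Tr\of[\big]{(B_{k+1}M)\tp} = \Tr(B_{k+1} M)$, which is $\Tr(A_1 B_1 \dotsb A_{k+1} B_{k+1})$ by cyclicity of the trace.

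The one genuinely nontrivial point is the index bookkeeping inside the inductive step: one must check that the chain of contractions really links up the way the graphical picture dictates, so that the trailing block can be isolated and \cref{i:snake} of \cref{prop:mini snake} applied. I expect this to be the main obstacle to a fully rigorous write-up; everything else is purely formal. A routine alternative that sidesteps the induction altogether is to expand all of $A_1, \dotsc, A_{k+1}, B_1, \dotsc, B_{k+1}$ in the standard basis simultaneously, whereupon the $k$ (resp.\ $k+1$) contractions produced by the $\bra{B_i}$'s are manifestly the summed indices of the matrix product $A_1 B_1 \dotsb A_k B_k$ (resp.\ $A_1 B_1 \dotsb A_{k+1} B_{k+1}$), with the remaining free indices sitting on the $\id_d$ legs, and all three identities drop out.
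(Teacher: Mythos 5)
Your proposal is correct and follows essentially the same route as the paper: induction on $k$ for the first identity (peeling off the trailing block and applying \cref{i:snake} of \cref{prop:mini snake}), then deriving the second and third identities from it via the basic vectorization facts. The only cosmetic differences are that the paper invokes \cref{i:B} directly instead of expanding $\ket{A_{k+1}}$ in the standard basis, and in the third identity it moves the transpose onto $\bra{B_{k+1}}$ rather than onto the chained operator; both variants are equivalent.
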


\begin{proof}
The $k = 0$ case of the first identity is trivial since $\id_d \id_d = \id_d$. Assuming the identity holds for $k-1$, we get
\begin{align}
  & \of[\big]{\id_d \x \bra{B_1} \x \dotsb \x \bra{B_{k-1}} \x \bra{B_k}}
    \of[\big]{\ket{A_1} \x \dotsb \x \ket{A_{k-1}} \x \ket{A_k} \x \id_d} \\
 &= \of[\big]{\id_d \x \bra{B_1} \x \dotsb \x \bra{B_{k-1}}}
    \of[\big]{\ket{A_1} \x \dotsb \x \ket{A_{k-1}} \x \id_d}
    \of[\big]{\id_d \x \bra{B_k}}
    \of[\big]{\ket{A_k} \x \id_d} \\
 &= A_1 B_1 \dotsb A_{k-1} B_{k-1} \cdot A_k B_k,
\end{align}
where we used the inductive assumption and \cref{i:snake} from \cref{prop:mini snake}.
To prove the second identity, we combine the previous identity with \cref{i:B} from \cref{prop:mini snake}:
\begin{align}
  & \of[\big]{\id_d \x \bra{B_1} \x \dotsb \x \bra{B_k} \x \id_d}
    \of[\big]{\ket{A_1} \x \dotsb \x \ket{A_{k+1}}} \\
 &= \of[\big]{\id_d \x \bra{B_1} \x \dotsb \x \bra{B_k} \x \id_d}
    \of[\big]{\ket{A_1} \x \dotsb \x \ket{A_k} \x \id_d \x \id_d}
    \ket{A_{k+1}} \\
 &= \of[\big]{A_1 B_1 \dotsb A_k B_k \x \id_d} \ket{A_{k+1}} \\
 &= \ket{A_1 B_1 \dotsb A_k B_k A_{k+1}}.
\end{align}
For the third identity, note from \cref{eq:vectorization} and \cref{i:SWAP} of \cref{prop:mini snake} that
\begin{equation}
  \bra{B_{k+1}} \SWAP
  = \of[\big]{\SWAP\tp \bra{B_{k+1}}\tp}\tp
  = \of[\big]{\SWAP \ket{B_{k+1}}}\tp
  = \of[\big]{\ket{B_{k+1}\tp}}\tp
  = \bra{B_{k+1}\tp}.
\end{equation}
Combining this with the previous identity,
\begin{align}
  & \bra{B_{k+1}} \SWAP
    \of[\big]{\id_d \x \bra{B_1} \x \dotsb \x \bra{B_k} \x \id_d}
    \of[\big]{\ket{A_1} \x \dotsb \x \ket{A_{k+1}}} \\
 &= \braket{B_{k+1}\tp}{A_1 B_1 \dotsb A_k B_k A_{k+1}} \\
 &= \Tr \of[\big]{A_1 B_1 \dotsb A_{k+1} B_{k+1}},
\end{align}
where the last equality follows from \cref{i:Tr} of \cref{prop:mini snake}.
\end{proof}

The identities in \cref{prop:snakes} also have straightforward graphical proofs.
For example, the first identity amounts to
\begin{center}
  \BigSnake
\end{center}

\subsection{Singlet chains}

In \cref{apx:Nice}, we will be particularly interested in the special case when all operators involved describe singlets. For this it is useful to introduce the matrix
\begin{equation}
  S := \ketbra{0}{1}
     - \ketbra{1}{0}
     = \mx{0&1\\-1&0}.
  \label{eq:S}
\end{equation}
The \emph{singlet state} $\ket{\Psi^-}$ is proportional to the vectorization of $S$:
\begin{equation}
  \ket{\Psi^-}
  = \frac{1}{\sqrt{2}} \of[\big]{\ket{0}\ket{1} - \ket{1}\ket{0}}
  = \frac{1}{\sqrt{2}} \ket{S}
  = \frac{1}{\sqrt{2}} \;
    \snakefig{
      \tikzset{baseline = -2pt}
      \EPR{0,0}{1}
      \gate{0,1.5*\h}{S}
    }
\end{equation}
Note that $S$ and $\ket{\Psi^-}$ are anti-symmetric: $S\tp = -S$ and $\SWAP \ket{\Psi^-} = - \ket{\Psi^-}$ (the two identities are related via \cref{i:SWAP} of \cref{prop:mini snake}). Since $\ket{cA} = c \ket{A}$ for any $c \in \C$ and $A \in \L(\C^d)$,
\begin{equation}
  \ket{S\tp} = - \ket{S}.
\end{equation}
The matrix $S$ behaves like the complex number $i$ since
\begin{align}
  S^2  &= -\id, &
  S^3  &= -S, &
  S^4  &=  \id, &
  S\tp &= -S,
  \label{eq:small S powers}
\end{align}
where the transposition is analogous to complex conjugation. Consequently, for any integer $k$,
\begin{align}
  S^{2k  } &= (-1)^k \id, &
  S^{2k+1} &= (-1)^k S.
  \label{eq:S powers}
\end{align}

The following corollary is a direct consequence of \cref{prop:snakes,eq:S powers}.

\begin{cor}\label{cor:S chains}
For any $k \geq 0$,
\begin{align}
  \of[\big]{\id \x \bra{S}\xp{k}}
  \of[\big]{\ket{S}\xp{k} \x \id}
  &= S^{2k}
   = (-1)^k \id, \\
  \of[\big]{\id \x \bra{S}\xp{k} \x \id}
  \of[\big]{\ket{S}\xp{k+1}}
  &= \ket{S^{2k+1}}
   = (-1)^k \ket{S}, \\
  \bra{S} \SWAP
  \of[\big]{\id \x \bra{S}\xp{k} \x \id}
  \of[\big]{\ket{S}\xp{k+1}}
  &= \Tr \of[\big]{S^{2k+2}}
   = 2 (-1)^{k+1}.
\end{align}
Moreover, if $j$ instances of $\ket{S}$ are replaced by $\ket{S\tp} = - \ket{S}$, the final value of each expression picks up an extra factor of $(-1)^j$.
\end{cor}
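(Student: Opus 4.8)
The plan is to obtain all three identities by specializing \cref{prop:snakes} to the case where every operator in the chain is the matrix $S$ from \cref{eq:S}, working in dimension $d = 2$. Concretely, I would set $A_1 = \dots = A_{k+1} = S$ and $B_1 = \dots = B_{k+1} = S$ in each of the three formulas of \cref{prop:snakes}. The first formula then reads $\of[\big]{\id \x \bra{S}\xp{k}}\of[\big]{\ket{S}\xp{k} \x \id} = S^{2k}$, the second reads $\of[\big]{\id \x \bra{S}\xp{k} \x \id}\of[\big]{\ket{S}\xp{k+1}} = \ket{S^{2k+1}}$, and the third reads $\bra{S} \SWAP \of[\big]{\id \x \bra{S}\xp{k} \x \id}\of[\big]{\ket{S}\xp{k+1}} = \Tr \of[\big]{S^{2k+2}}$. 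This gives the left equality in each line of the corollary.

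Next I would invoke the power identities \eqref{eq:S powers}: since $S^{2m} = (-1)^m \id$ and $S^{2m+1} = (-1)^m S$, we immediately get $S^{2k} = (-1)^k \id$ and $S^{2k+1} = (-1)^k S$, hence $\ket{S^{2k+1}} = (-1)^k \ket{S}$ by $\C$-linearity of vectorization. For the third identity, $S^{2k+2} = (-1)^{k+1}\id$, so $\Tr\of[\big]{S^{2k+2}} = (-1)^{k+1}\Tr(\id_2) = 2(-1)^{k+1}$, using that $S$ acts on $\C^2$. This establishes the right equality in each line.

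For the ``moreover'' clause, I would use that $\ket{S\tp} = -\ket{S}$ (already recorded above, coming from $S\tp = -S$ together with $\ket{cA} = c\ket{A}$), and the fact that the expressions on the left-hand sides are multilinear in the vectorized operators $\ket{A_1}, \dots, \ket{A_{k+1}}$. Each time one replaces an instance of $\ket{S}$ by $\ket{S\tp}$, multilinearity pulls out a factor of $-1$; doing this $j$ times yields the overall factor $(-1)^j$, as claimed.

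I do not expect any genuine obstacle here: the corollary is a pure specialization and the only points requiring a moment of care are (i) remembering that the relevant dimension is $d = 2$, so $\Tr(\id) = 2$ in the third identity, and (ii) that $\ket{S^{2k+1}} = (-1)^k\ket{S}$ uses linearity of vectorization, not a new computation. Everything else is bookkeeping with \eqref{eq:S powers}.
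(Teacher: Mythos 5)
Your proposal is correct and follows exactly the route the paper intends: the paper states the corollary as "a direct consequence of \cref{prop:snakes,eq:S powers}," which is precisely your specialization $A_i = B_i = S$ followed by the power identities, with the sign clause handled by $\ket{S\tp} = -\ket{S}$ and multilinearity. Nothing is missing.
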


\section{Properties of the vectors \texorpdfstring{$\ket{(\lambda,w,i)}$}{|(lambda,w,i)>}}\label{apx:Nice}

In this appendix, we prove several results concerning the vectors $\ket{(\lambda,w,i)}$ defined in \cref{sec:BasisQubits}. These vectors constitute a particularly nice choice of \emph{Schur basis} for qubits, and define the Schur transform $\Usch$ via \cref{eq:NiceSchur}, which is used in the generic pre-processing procedure described in \cref{sec:Preproc}.
This appendix is concerned with verifying that $\ket{(\lambda,w,i)}$ is indeed a Schur basis.
The main results proved here are as follows:
\begin{itemize}
  \item \cref{lem:P blocks}:
    the permutation operators $P(\pi)$ are block-diagonal in the Schur basis,
  \item \cref{lem:FullOrthonormal}:
    the vectors $\ket{(\lambda,w,i)}$ form an orthonormal basis,
  \item \cref{lem:QInvar}:
    the subspaces $\mc{Q}_{\lambda,i} := \spn \set[\big]{\ket{(\lambda,w,i)} : w \in [m_\lambda]}$ are invariant under matrices of the form $M\xp{n}$, for any $M \in \L(\C^2)$,
  \item \cref{lem:Unitary blocks}:
    the action of $M\xp{n}$ on $\mc{Q}_{\lambda,i}$ is given by the corresponding $\lambda$-irrep $Q_\lambda(M)$ defined in \cref{eq:Q-lambda};
    this is the same as saying that we have the so-called Gelfand--Tsetlin basis on the unitary register in \cref{eq:QP} (see \cref{apx:Construction} for more details).
\end{itemize}
Along the way we also establish several technical results that we state as propositions.

Recall the following notation from \cref{sec:SchurWeyl,sec:BasisQubits}.
For any partition $\lambda = (\lambda_1,\lambda_2) \pt n$ and Hamming weight $w \in [m_\lambda] := \set{0,\dotsc,\lambda_1-\lambda_2}$, we define an $n$-qubit state
\begin{equation}
  \ket{(\lambda,w,0)}
  := \ket{s_{\lambda_1-\lambda_2}(w)} \x \ket{\Psi^-}\xp{\lambda_2},
  \label{eq:0}
\end{equation}
where $\ket{s_\ell(w)}$ is the symmetric state on $\ell$ qubits with Hamming weight $w$,
\begin{equation}
  \ket{s_\ell(w)}
 := \binom{\ell}{w}^{-1/2}
    \sum_{\substack{x\in\set{0,1}^\ell\\\abs{x}=w}} \ket{x},
  \label{eq:s}
\end{equation}
and $\ket{\Psi^-} := (\ket{01} - \ket{10}) / \sqrt{2}$ is the singlet state.
More generally, recall that the dimension $d_\lambda$ of the permutation register is given in \cref{eq:dims}.
Then, for any $i \in [d_\lambda]$, we define
\begin{equation}
  \ket{(\lambda,w,i)} := \sum_{\pi \in \S_n} \alpha_\pi^{\lambda,i} P(\pi) \ket{(\lambda,w,0)},
  \label{eq:i}
\end{equation}
for some coefficients $\alpha_\pi^{\lambda,i} \in \R$, where $P(\pi) \in \U{2^n}$ permutes $n$ qubits according to the permutation $\pi \in \S_n$, see \cref{eq:P}.

\subsection{Block-diagonalization of permutations}

The goal of this section is to establish that the permutation operators $P(\pi)$ are block-diagonal in the Schur basis $\ket{(\lambda,w,i)}$, with blocks labeled by partitions $\lambda$ and Hamming weights $w$.
More precisely, we will establish \cref{eq:QP} which says that
\begin{equation}
  \hat{P}(\pi)
 := \Usch P(\pi) \Usch\ct
  = \bigoplus_{\lambda \pt n} \id_{m_\lambda} \x P_\lambda(\pi),
  \label{eq:P in Schur basis}
\end{equation}
where $\Usch$ denotes the Schur transform and $P_\lambda(\pi) \in \L(\C^{d_\lambda})$ are some matrices.
Note that we will only determine the block structure of $\hat{P}(\pi)$ but not the actual matrix entries of $P_\lambda(\pi)$.\footnote{We cannot determine the matrix entries unambiguously since \cref{eq:i} does not fully specify the states $\ket{(\lambda,w,i)}$ with $i \neq 0$.}
Luckily, for the purpose of this paper, the matrix entries of $P_\lambda(\pi)$ will not be necessary.

The following lemma establishes the block structure of $P(\pi)$ in the Schur basis
(it relies on \cref{prop:lambda mu,prop:overlap} that are proved subsequently).

\begin{lemma}\label{lem:P blocks}
Let $n \geq 1$.
For any permutation $\pi \in \S_n$,
partitions $\lambda,\mu \pt n$,
Hamming weights $w \in [m_\lambda]$ and $v \in [m_\mu]$,
and indices $i,j \in [d_\lambda]$,
\begin{equation}
  \bra{(\lambda,w,i)} P(\pi) \ket{(\mu,v,j)}
  = \delta_{\lambda\mu} \delta_{wv} \bra{i} P_\lambda(\pi) \ket{j},
  \label{eq:Pij}
\end{equation}
for some operator $P_\lambda(\pi) \in \L(\C^{d_\lambda})$ that does not depend on the Hamming weights $w$ and $v$.
\end{lemma}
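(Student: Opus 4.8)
The plan is to verify \cref{eq:Pij} by directly expanding both $\ket{(\lambda,w,i)}$ and $\ket{(\mu,v,j)}$ using their definitions \eqref{eq:0} and \eqref{eq:i}, and then reduce the problem to an inner product of the ``seed'' vectors $\ket{(\lambda,w,0)}$, which have a clean tensor-product structure. Concretely, since $\ket{(\lambda,w,i)} = \sum_{\sigma \in \S_n} \alpha_\sigma^{\lambda,i} P(\sigma) \ket{(\lambda,w,0)}$ and similarly for $\ket{(\mu,v,j)}$, and since $P$ is a representation, I would write
\begin{equation*}
  \bra{(\lambda,w,i)} P(\pi) \ket{(\mu,v,j)}
  = \sum_{\sigma,\tau \in \S_n}
    \overline{\alpha_\sigma^{\lambda,i}} \, \alpha_\tau^{\mu,j} \,
    \bra{(\lambda,w,0)} P(\sigma^{-1} \pi \tau) \ket{(\mu,v,0)}.
\end{equation*}
So everything reduces to understanding $\bra{(\lambda,w,0)} P(\rho) \ket{(\mu,v,0)}$ for an arbitrary permutation $\rho \in \S_n$. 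The key structural claim I would isolate as a proposition (this is presumably \cref{prop:lambda mu} referenced in the statement) is that this quantity vanishes unless $\lambda = \mu$; and a second proposition (\cref{prop:overlap}) that pins down its dependence on $w$ and $v$, showing it forces $w = v$ and is otherwise independent of the common Hamming weight.

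For the first claim, the natural tool is the tensor-product form $\ket{(\lambda,w,0)} = \ket{s_{\ell}(w)} \x \ket{\Psi^-}^{\otimes \lambda_2}$ with $\ell = \lambda_1 - \lambda_2$. A permutation $P(\rho)$ applied to $\ket{(\mu,v,0)}$ produces a linear combination of standard basis states of Hamming weight $v + \mu_2$, and $\ket{(\lambda,w,0)}$ lives entirely in Hamming weight $w + \lambda_2$; so a nonzero overlap already forces $w + \lambda_2 = v + \mu_2$. To get the stronger conclusion $\lambda = \mu$ (hence $\lambda_2 = \mu_2$ and $w = v$), I would exploit the antisymmetry of the singlets: $\ket{\Psi^-}$ is annihilated by symmetrization over its two qubits, so the $\lambda_2$ singlet pairs in $\ket{(\lambda,w,0)}$ force the state to transform in a way that distinguishes it from states built with a different number of singlets. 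The cleanest way to make this precise is representation-theoretic: $\ket{(\lambda,w,0)}$ generates, under $\S_n$, a copy of the irrep $\mc{P}_\lambda$ (of dimension $d_\lambda$), and $\ket{(\mu,v,0)}$ generates a copy of $\mc{P}_\mu$; since $P(\rho)$ intertwines these and the irreps are inequivalent for $\lambda \neq \mu$, Schur's Lemma (\cref{lem:Schur}) gives vanishing. One subtlety is that we haven't yet \emph{established} that these spans are irreducible — that's part of what the Schur-basis machinery is proving — so I'd instead argue more directly using the singlet structure, computing $\bra{(\lambda,w,0)} P(\rho) \ket{(\mu,v,0)}$ via graphical/contraction arguments (\cref{cor:S chains}) that track how the $S$-matrices contract, and show the count of singlet contractions can only close up consistently when $\lambda_2 = \mu_2$.

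Granting those two propositions, the rest is bookkeeping: the double sum becomes $\delta_{\lambda\mu}\delta_{wv}$ times $\sum_{\sigma,\tau} \overline{\alpha_\sigma^{\lambda,i}} \alpha_\tau^{\lambda,j} \, c(\sigma^{-1}\pi\tau)$ where $c(\rho) := \bra{(\lambda,0,0)} P(\rho) \ket{(\lambda,0,0)}$ is the $w=v=0$ overlap (independent of $w$ by \cref{prop:overlap}), and I would simply \emph{define} $P_\lambda(\pi)$ by $\bra{i} P_\lambda(\pi) \ket{j} := \sum_{\sigma,\tau \in \S_n} \overline{\alpha_\sigma^{\lambda,i}} \alpha_\tau^{\lambda,j} \, c(\sigma^{-1} \pi \tau)$. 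This manifestly does not depend on $w$ or $v$, which is exactly the assertion; that $P_\lambda$ is actually a representation is not needed for this lemma (it follows later, or from $P$ being a homomorphism together with orthonormality of the $i=0$ states). I expect the main obstacle to be the first proposition — proving the overlap vanishes when $\lambda \neq \mu$ without circularly invoking the irreducibility we're trying to set up — which likely requires a careful hands-on argument about how permutations act on a product of Dicke states and singlets, tracking Hamming weights and the antisymmetry constraints the singlets impose.
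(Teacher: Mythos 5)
Your proposal follows the paper's proof essentially verbatim: expand via the $\alpha$-coefficients, reduce to the seed overlaps $\bra{(\lambda,w,0)}P(\rho)\ket{(\mu,v,0)}$, kill the $\lambda\neq\mu$ case by the singlet-contraction/improper-matching argument (\cref{prop:lambda mu}), get $\delta_{wv}$ from Hamming-weight conservation, and define $\bra{i}P_\lambda(\pi)\ket{j}$ as the resulting $w$-independent sum using \cref{prop:overlap}. The only piece you leave unsketched is how \cref{prop:overlap} itself is proved (the paper uses a Burnside-type argument, building a cyclic shift on Dicke states as a linear combination of $U^{\otimes n}$'s that commutes with every $P(\pi)$), but your identification of what that proposition must say and where it enters is exactly right.
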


\begin{proof}
Substituting $\bra{(\lambda,w,i)}$ and $\ket{(\mu,v,j)}$ from \cref{eq:i} and expanding both sides,
\begin{align}
  \bra{(\lambda,w,i)} P(\pi) \ket{(\mu,v,j)}
 &= \bra{(\lambda,w,0)}
    \of*{
      \sum_{\sigma \in \S_n}
      \alpha_\sigma^{\lambda,i}
      P(\sigma)\ct
    }
    P(\pi)
    \of*{
      \sum_{\tau \in \S_n}
      \alpha_\tau^{\mu,j}
      P(\tau)
    }
    \ket{(\mu,v,0)} \\
 &= \sum_{\sigma,\tau \in \S_n}
    \alpha_\sigma^{\lambda,i}
    \alpha_\tau^{\mu,j}
    \bra{(\lambda,w,0)}
    P(\sigma^{-1}\pi\tau)
    \ket{(\mu,v,0)}. \label{eq:just zero}
\end{align}
\begin{itemize}
  \item If $\lambda \neq \mu$, all terms vanish thanks to \cref{prop:lambda mu}, giving us the first delta function $\delta_{\lambda\mu}$.
  \item If $\lambda = \mu$ but $w \neq v$, all terms vanish since $\bra{(\lambda,w,0)}$ and $P(\sigma^{-1}\pi\tau) \ket{(\lambda,v,0)}$ expanded in the standard basis consist of terms with different Hamming weights, namely $w+\lambda_2$ and $v+\lambda_2$.
  This gives us the second delta function $\delta_{wv}$.
  \item Finally, if $\lambda = \mu$ and $w = v$, each term in \cref{eq:just zero} is of the form
  \begin{equation}
    \alpha_\sigma^{\lambda,i}
    \alpha_\tau^{\lambda,j}
    \bra{(\lambda,w,0)} P(\sigma^{-1}\pi\tau) \ket{(\lambda,w,0)},
  \end{equation}
  which does not depend on $w$ according to \cref{prop:overlap}. Hence the sum in \cref{eq:just zero} depends only on $\lambda,\pi,i,j$, so we can write it as $\bra{i} P_\lambda(\pi) \ket{j}$ for some $P_\lambda(\pi) \in \L(\C^{d_\lambda})$.
\end{itemize}
Since this covers all cases, the proof is complete.
\end{proof}

The following proposition establishes that the value of
$\bra{(\lambda, w, 0)} P(\pi) \ket{(\lambda, w, 0)}$
does not depend on $w$ and hence can be written as $\bra{0} P_\lambda(\pi) \ket{0}$ for some matrix $P_\lambda(\pi)$.

\begin{prop}\label{prop:overlap}
Let $n \geq 1$ and fix any partition $\lambda = (\lambda_1, \lambda_2) \pt n$ and a permutation $\pi \in \S_n$. Then the value of
\begin{equation}
  \bra{(\lambda, w, 0)} P(\pi) \ket{(\lambda, w, 0)}
  \label{eq:overlap}
\end{equation}
does not depend on $w \in [m_\lambda]$.
\end{prop}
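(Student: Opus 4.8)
The plan is to exploit the explicit product form of $\ket{(\lambda,w,0)}$ in \cref{eq:0}, namely the tensor product of a Dicke state $\ket{s_\ell(w)}$ on the first $\ell := \lambda_1 - \lambda_2$ qubits with $\lambda_2$ copies of the singlet $\ket{\Psi^-}$ on the remaining $2\lambda_2$ qubits. First I would write the overlap \eqref{eq:overlap} in graphical notation, using the identification $\ket{\Psi^-} = \tfrac{1}{\sqrt 2}\ket{S}$ from \cref{apx:Graphical} and the vectorization tools of \cref{prop:snakes,cor:S chains}. The permutation $P(\pi)$ reconnects the legs of the $2\lambda_2$ singlet qubits and the $\ell$ symmetric qubits; tracing through the wiring, the diagram decomposes into (i) a collection of closed singlet loops, (ii) some open singlet chains whose two loose ends land on symmetric-qubit legs, and (iii) the Dicke states $\bra{s_\ell(w)}$ and $\ket{s_\ell(w)}$ capping those open ends. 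By \cref{cor:S chains}, each closed loop contributes a scalar $\pm 2$ and each open chain collapses to a single matrix $\pm S$ acting between two symmetric legs, with signs independent of $w$. Hence the whole quantity reduces to a product of $w$-independent scalars times a matrix element of the form $\bra{s_\ell(w)}^{\otimes}\bigl(\text{tensor product of }S\text{'s and }\id\text{'s, with some legs contracted among themselves}\bigr)\ket{s_\ell(w)}^{\otimes}$, where the operator on the $\ell$ symmetric qubits is built from single-qubit $S$'s and identities together with a permutation of those $\ell$ legs.

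The remaining task is therefore to show that, for any fixed pattern $T$ of $S$'s, identities, and internal contractions acting on $\ell$ qubits, the quantity $\bra{s_\ell(w)} T \ket{s_\ell(w)}$ is independent of $w$ — but this cannot be literally true for an arbitrary such $T$, so the key point is that the operators $T$ that actually arise here are permutation-invariant in a strong sense: the subgroup of $\S_\ell$ permuting the symmetric qubits acts trivially on $\ket{s_\ell(w)}$, so $\bra{s_\ell(w)} T \ket{s_\ell(w)}$ equals the average of $\bra{s_\ell(w)} P(\sigma) T P(\sigma)^\dagger \ket{s_\ell(w)}$ over $\sigma \in \S_\ell$, which is the expectation of a symmetric function of the Bloch vector in the state $\ket{s_\ell(w)}$. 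The cleanest route is to invoke \cref{eq:PiSym3}, writing $\proj{s_\ell(w)}$ as a component of $(\ell+1)\int (\proj{\psi})^{\otimes \ell}\,d\psi$ after projecting onto the correct Hamming-weight sector; more elementarily, I would argue directly that the $S$-and-$\id$ operator appearing, once symmetrized over $\S_\ell$, becomes a polynomial in the single fully-symmetric two-qubit operator $\tfrac1{\ell(\ell-1)}\sum_{a\neq b}$ (something built from $S_a \otimes S_b$), and that $S \otimes S$ restricted to the two-qubit symmetric subspace is a scalar multiple of the identity (indeed $S\otimes S = \SWAP - \id$ on $\C^2\otimes\C^2$, hence acts as $0$ on the symmetric subspace and $-2$ on the singlet). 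Since $\ket{s_\ell(w)}$ lies in the $\ell$-qubit symmetric subspace, any such symmetrized operator acts on it as a $w$-independent scalar.

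The main obstacle I anticipate is bookkeeping: carefully verifying that the wiring produced by $P(\pi)$ never leaves an open singlet chain with both ends on the \emph{same} symmetric leg in a way that would produce a $w$-dependent single-qubit insertion, and confirming that all $S$-type operators landing on the symmetric qubits can indeed be absorbed into symmetric (hence scalar) operators. I would handle this by observing that $\ket{(\lambda,w,0)}$ is symmetric under $\S_\ell$ on its first $\ell$ qubits, so without loss of generality I may first symmetrize both $\bra{(\lambda,w,0)}$ and $\ket{(\lambda,w,0)}$ over $\S_\ell$ before applying $P(\pi)$; after this symmetrization every operator induced on the symmetric qubits is automatically $\S_\ell$-invariant, and the argument of the previous paragraph applies uniformly. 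The signs and the powers of $2$ coming from closed loops depend only on the cycle structure of $\pi$ relative to the fixed tensor-product pattern, not on $w$, so collecting everything yields the claimed $w$-independence.
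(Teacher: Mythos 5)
Your diagrammatic strategy is legitimate and is genuinely different from the paper's proof of this proposition: the paper argues abstractly, using Burnside's theorem to write the cyclic shift $\ket{s_\ell(w)} \mapsto \ket{s_\ell(w\+1)}$ on the symmetric subspace as a linear combination $\sum_i \alpha_i U_i\xp{\ell}$ with $U_i \in \SU{2}$, extending it to $\tilde{O} := \sum_i \alpha_i U_i\xp{n}$ (which fixes the singlets and commutes with every $P(\pi)$), and concluding that the matrix element is invariant under $w \mapsto w\+1$. Your explicit tensor-network computation is instead the route the paper takes for the stronger \cref{lem:overlap value}, which it notes subsumes both this proposition and \cref{prop:lambda mu}; that approach buys an explicit formula for $\bra{0}P_\lambda(\pi)\ket{0}$ at the cost of more bookkeeping.

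However, your final step contains concrete errors. Chasing parities: a chain with an \emph{even} number of singlets joins a bra-side Dicke leg to a ket-side Dicke leg and contracts via \cref{eq:different sides} to a scalar times an \emph{identity} wire, while a chain with an \emph{odd} number of singlets joins two legs on the \emph{same} side and contracts via \cref{eq:same side} to a scalar times a singlet \emph{vector} capping two legs of one Dicke state. No operator $S$ ever acts between a bra leg and a ket leg, so the Dicke part of the diagram is either $\bra{s_\ell(w)}P(\sigma)\ket{s_\ell(w)} = 1$ for some $\sigma \in \S_\ell$, or contains a factor $\bra{s_\ell(w)}\of[\big]{\ket{\Psi^-}_{ab} \x \cdots}$, which vanishes for every $w$ because the Dicke state is symmetric in legs $a,b$ while the singlet is antisymmetric; both branches are $w$-independent, and that is the whole argument. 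Your proposed machinery for handling genuine $S$ insertions does not work: the identity $S \x S = \SWAP - \id$ is false ($S \x S$ maps $\ket{00}$ to $\ket{11}$; the correct statement is $\ketbra{S}{S} = \id - \SWAP$, i.e.\ it is the singlet projector, not the tensor square of the operator $S$, that annihilates the symmetric subspace), and the claim that an $\S_\ell$-symmetrized word in $S$'s and identities has $w$-independent diagonal Dicke matrix elements is also false --- already for $\ell = 2$ the operator $S \x S$ is $\S_2$-invariant yet $\bra{s_2(w)}\of{S \x S}\ket{s_2(w)}$ equals $0, -1, 0$ for $w = 0,1,2$. (Indeed, by the very Burnside argument the paper uses, $\S_\ell$-invariant operators restricted to $\Sym{\ell}$ form the full matrix algebra, so symmetrization alone can never force a scalar.) The proof goes through only because such insertions provably never occur, not because they could be averaged away.
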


\begin{proof}
Since $(U \x U) \ket{\Psi^-} = \ket{\Psi^-}$ for any $U \in \SU{2}$,
we see from \cref{eq:0} that
\begin{equation}
  U\xp{n} \ket{(\lambda, w, 0)}
  = U\xp{\ell} \ket{s_\ell(w)} \x \ket{\Psi^-}\xp{r}
\end{equation}
where $\ell := \lambda_1 - \lambda_2$ and $r := \lambda_2$.
By combining \cref{lem:irreducibility,lem:tjk from symmetric subspace}, we know that $U\xp{\ell}$ acts irreducibly in the $(\ell+1)$-dimensional symmetric subspace $\Sym{\ell} := \spn \set{\ket{s_\ell(w)} : w \in \set{0,\dotsc,\ell}}$. According to Burnside's theorem (see \cite{LR04} for a simple proof), the matrix algebra corresponding to this action coincides with the set of all $(\ell+1) \times (\ell+1)$ complex matrices. Hence any linear transformation on $\Sym{\ell}$ can be expressed as a linear combination of $U\xp{\ell}$, for some choice of $U \in \SU{2}$.

\newcommand{\Shift}{O}

Consider a unitary $\Shift$ that cyclically permutes the values $w \in \set{0, \dotsc, \ell}$:
\begin{equation}
  \Shift \ket{s_\ell(w)} := \Shift \ket{s_\ell(w\+1)},
  \label{eq:C}
\end{equation}
where addition is modulo $\ell+1$.
Note that this specifies $\Shift$ only on the $(\ell+1)$-dimensional subspace $\Sym{\ell}$ of the ambient space $(\C^2)\xp{\ell}$.
Thanks to Burnside's theorem, we can write
\begin{equation}
  \Shift = \sum_i \alpha_i U_i\xp{\ell},
\end{equation}
for some $\alpha_i \in \C$ and $U_i \in \SU{2}$,
which extends $\Shift$ to the rest of $(\C^2)\xp{\ell}$ while still agreeing with \cref{eq:C}.
We can extend this further to the whole of $(\C^2)\xp{n}$ by letting
\begin{equation}
  \tilde{\Shift} := \sum_i \alpha_i U_i\xp{n}.
\end{equation}

Note that $\tilde{\Shift}$ acts on the states $\ket{(\lambda,w,0)}$ as follows:
\begin{align}
  \tilde{\Shift} \ket{(\lambda,w,0)}
  &= \sum_i (\alpha_i U_i\xp{\ell} \x U_i\xp{2r})
     \of*{\ket{s_\ell(w)} \x \ket{\Psi^-}\xp{r}} \\
  &= \sum_i \alpha_i U_i\xp{\ell} \ket{s_\ell(w)} \x
     U_i\xp{2r} \ket{\Psi^-}\xp{r} \\
  &= \Shift \ket{s_\ell(w)} \x \ket{\Psi^-}\xp{r} \\
  &= \ket{s_\ell(w\+1)} \x \ket{\Psi^-}\xp{r} \\
  &= \ket{(\lambda,w\+1,0)},
\end{align}
for any $w \in \set{0,\dotsc,\ell}$.
Furthermore, since $\Shift\ct \ket{s_\ell(w\+1)} = \ket{s_\ell(w)}$, we can run the same calculation backwards and conclude that $\tilde{\Shift}\ct \ket{(\lambda,w\+1,0)} = \ket{(\lambda,w,0)}$ for all $w \in \set{0,\dotsc,\ell}$. Since $\tilde{\Shift}\ct$ is a linear combination of $(U_i\ct)\xp{n}$, we see that $\tilde{\Shift}\ct P(\pi) = P(\pi) \tilde{\Shift}\ct$ for all $\pi \in \S_n$. Hence,
\begin{align}
  \bra{(\lambda,w\+1,0)} P(\pi)
  \ket{(\lambda,w\+1,0)}
&= \bra{(\lambda,w,0)} \tilde{\Shift}\ct P(\pi) \tilde{\Shift}
   \ket{(\lambda,w,0)} \\
&= \bra{(\lambda,w,0)} P(\pi) \tilde{\Shift}\ct \tilde{\Shift}
   \ket{(\lambda,w,0)} \\
&= \bra{(\lambda,w,0)} P(\pi)
   \ket{(\lambda,w,0)}.
\end{align}
Since the above holds for any $w$, the value of $\bra{(\lambda,w,0)} P(\pi) \ket{(\lambda,w,0)}$ is independent of $w$.
\end{proof}

The second ingredient that was used in \cref{lem:P blocks} to reveal the block structure of $P(\pi)$ in the Schur basis was \cref{prop:lambda mu}, which shows that the quantity $\bra{(\lambda, w, 0)} P(\pi) \ket{(\mu, v, 0)}$ vanishes when $\lambda \neq \mu$.
To establish this, we first need to introduce some additional terminology that will allow us to make use of the graphical notation introduced in \cref{apx:Graphical}.



\newcommand{\networkfig}[1]{%
\begin{tikzpicture}[> = stealth, thick, baseline,
  dot/.style = {radius = 1.3pt, fill = black},
  round/.style = {rounded corners = 4pt},
  qubit/.style = {radius = 0.48*\W, thick},
  qubit1/.style = {qubit, fill = lightgray, draw = gray},
  qubit2/.style = {qubit, draw = lightgray, densely dotted},
  singlet/.style = {thick, round, postaction = {decorate}, decoration = {
    markings, mark = at position 0.5 with {
      \node [draw, sharp corners, inner sep = 4.5pt, fill = lightgray] {};
      \draw [->] +(-0.1,0) -- +(0.1,0);
    }}},
  p/.style = {thick, out = 0, in = -180},
  fat/.style = {p, pink, line width = 4pt, line cap = round}
]

\pgfdeclarelayer{base}
\pgfdeclarelayer{background}
\pgfsetlayers{base,background,main}

\def\H{11}   
\def\w{0.30} 
\def\W{2*\w} 
\def\v{0.7}  
\def\d{0.8}  
\def\D{4cm}  

#1 

\end{tikzpicture}%
}



\newcommand{\qubits}[2]{
  \def\name{\if#1-x\else y\fi}
  \pgfmathsetmacro{\singlets}{div(\H-#2,2)}
  \pgfmathsetmacro{\start}{#2+1}
  \path (#1\d,0) coordinate (\name);
  \foreach \i in {1,...,\H} {
    \path (\name)+(0,\W-\i*\W) coordinate (\name\i);
  }
  \ifx0#2\else
    \foreach \i in {1,...,#2} {
      \path[qubit1] (\name\i) circle;
      \node at (\name\i) {$\name_\i$};
    }
  \fi
  \if0\singlets\else
    \begin{pgfonlayer}{base}
      \foreach \i in {\start,...,\H} {
        \path[qubit2] (\name\i) circle;
      }
    \end{pgfonlayer}
    \foreach \s in {1,...,\singlets} {
      \pgfmathsetmacro{\orig}{\H-2*\s+1}
      \pgfmathsetmacro{\targ}{\H-2*\s+2}
      \fill (\name\orig) circle [dot];
      \fill (\name\targ) circle [dot];
      \draw[singlet] (\name\orig) -- ++(#1\v,0) -- ++(0,-\W) -- (\name\targ);
    }
  \fi
}



\newcommand{\diagram}[5]{
  \begin{scope}[xshift = #1*\D]
    \ifx&#2&\else
      \path (0,-\W*\H-\W) node {(#2)};
    \fi
    \qubits{-}{#3}
    \qubits{+}{#4}
    \begin{scope}[on background layer]
      \foreach \i/\j in {#5} {
        \draw[p] (x\i) to (y\j);
      }
    \end{scope}
  \end{scope}
}



\newcommand{\FourDiagrams}{\networkfig{
  \diagram{0}{a}{3}{3}{1/4,2/1,3/2,4/5,5/3,6/7,7/6,8/11,9/8,10/9,11/10}
  \diagram{1}{b}{3}{3}{1/4,2/1,3/2,4/3,5/5,6/6,7/7,8/8,9/11,10/9,11/10}
  \diagram{2}{c}{3}{3}{1/2,2/4,3/5,4/1,5/7,6/8,7/11,8/3,9/6,10/9,11/10}
  \diagram{3}{d}{5}{3}{1/3,2/1,3/2,4/6,5/4,6/8,7/10,8/5,9/7,10/11,11/9}
}}


\newcommand{\PairedBits}{\networkfig{
  \def\H{2}
  \tikzset{baseline = -11pt}
  \diagram{0}{}{2}{0}{1/1,2/2}
}}


\newcommand{\SinglePath}{\networkfig{
  \def\H{4}
  \tikzset{baseline = -28pt}
  \diagram{0}{}{2}{0}{1/1,2/3,3/2,4/4}
}}



\newcommand{\minisinglet}[3]{
  \draw[singlet] (#1) -- ++(#3 0.4,0) -- ++(0,-\W) -- (#2);
}


\newcommand{\singletchain}[4]{
\begin{scope}[yshift = -#1*1.8cm]
  \foreach \i in {1,2,3} {
    \path (-0.5,-\i*\W) coordinate (x\i);
    \path ( 0.5,-\i*\W) coordinate (y\i);
    \fill (x\i) circle [dot];
    \fill (y\i) circle [dot];
  }
  \minisinglet{y2}{y3}{+}
  \minisinglet{x2}{x3}{-}
  \foreach \i/\j in {#2} {
    \draw[p] (x\i) to (y\j);
  }
  \path (y2)+(3.0,0) coordinate (c);
  \def\s{0.6}
  \path (c)+(-\s,0) coordinate (c1);
  \path (c)+(+\s,0) coordinate (c2);
  \fill (c) circle [dot];
  \fill (c)+( 2*\s,0) circle [dot];
  \fill (c)+(-2*\s,0) circle [dot];
  \draw[singlet] (c1)+(-#3\s,0) -- +(+#3\s,0);
  \draw[singlet] (c2)+(-#4\s,0) -- +(+#4\s,0);
  \path (y2)+(1.1,-0.05) node {$=$};
  \path (c) +(1.8,-0.05) node {$=$};
  \path (c) +(3.9,-0.05) node {$=$};
  \path (c)+(2.1,0) node [anchor = west] {$\displaystyle \frac{1}{2} S\tps{#3} S\tps{#4}$};
  \path (c)+(4.7,0) node {$\displaystyle \same{#3}{#4} \frac{1}{2} I$};
\end{scope}
}

\newcommand{\tps}[1]{\if#1+\else\tp\fi}

\newcommand{\same}[2]{\ifx#1#2-\else\phantom{+}\fi}

\newcommand{\SingletChains}{\networkfig{
  \singletchain{0}{1/2,2/3,3/1}{+}{+}
  \singletchain{1}{1/2,2/1,3/3}{+}{-}
  \singletchain{2}{1/3,2/2,3/1}{-}{+}
  \singletchain{3}{1/3,2/1,3/2}{-}{-}
}}


Fix some $n \geq 1$ and consider integers $\ell_1, \ell_2, r_1, r_2 \geq 0$ such that $\ell_1 + 2 r_1 = \ell_2 + 2 r_2 = n$, and let $x \in \set{0,1}^{\ell_1}$ and $y \in \set{0,1}^{\ell_2}$ be arbitrary bit strings.
Fix any permutation $\pi \in \S_n$ and consider the expression
\begin{equation}
  \bra{x} \x \bra{\Psi^-}\xp{r_1}
  \cdot P(\pi) \cdot
  \ket{y} \x \ket{\Psi^-}\xp{r_2}.
  \label{eq:xPy}
\end{equation}
where $\ket{\Psi^-}$ denotes the singlet state.
We associate to such expression a \emph{diagram} (see examples in \cref{fig:Diagrams}) that depicts the qubits of both states by two columns of circles and the permutation $\pi$ by lines connecting these circles. The bits of strings $x$ and $y$ are depicted by gray circles labeled by $x_i$ and $y_j$, respectively, while the singlets are depicted by arcs that connect the corresponding two qubits. Since the singlet state $\ket{\Psi^-} = (\ket{01} - \ket{10}) / \sqrt{2}$ is not symmetric under exchanging the two qubits, we include an arrow pointing from its first qubit to its second qubit:
\begin{equation}
  \ket{\Psi^-}
  \; = \;
  \networkfig{
    \tikzset{baseline = -11pt}
    \def\H{2}
    \qubits{+}{0}
  }
  \; = \; - \;
  \networkfig{
    \tikzset{baseline = 6pt}
    \tikzset{yscale = -1}
    \def\H{2}
    \qubits{+}{0}
  } \;.
  \label{eq:singlet}
\end{equation}
Flipping the arrow introduces an overall minus sign since $\ket{\Psi^-}$ is anti-symmetric.

Singlets in \cref{fig:Diagrams} are chained together by permutation $\pi$, forming paths between the gray circles (ignore the arrow directions for now). These paths effectively pair up the elements of the set $\set{x_1, \dotsc, x_{\ell_1}, y_1, \dotsc, y_{\ell_2}}$ representing the bits of $x$ and $y$. Since each bit is paired up with exactly one other bit, we call this a \emph{matching}. For example, the four diagrams in \cref{fig:Diagrams} produce the following matchings:
\begin{align*}
  \text{(a), (b)\quad} & \set{(x_1,y_3), (x_2,y_1), (x_3,y_2)}, \\
  \text{(c)\quad} & \set{(x_1,y_2), (x_2,x_3), (y_1,y_3)}, \\
  \text{(d)\quad} & \set{(x_1,y_3), (x_2,y_1), (x_3,y_2), (x_4,x_5)}.
\end{align*}

\begin{figure}
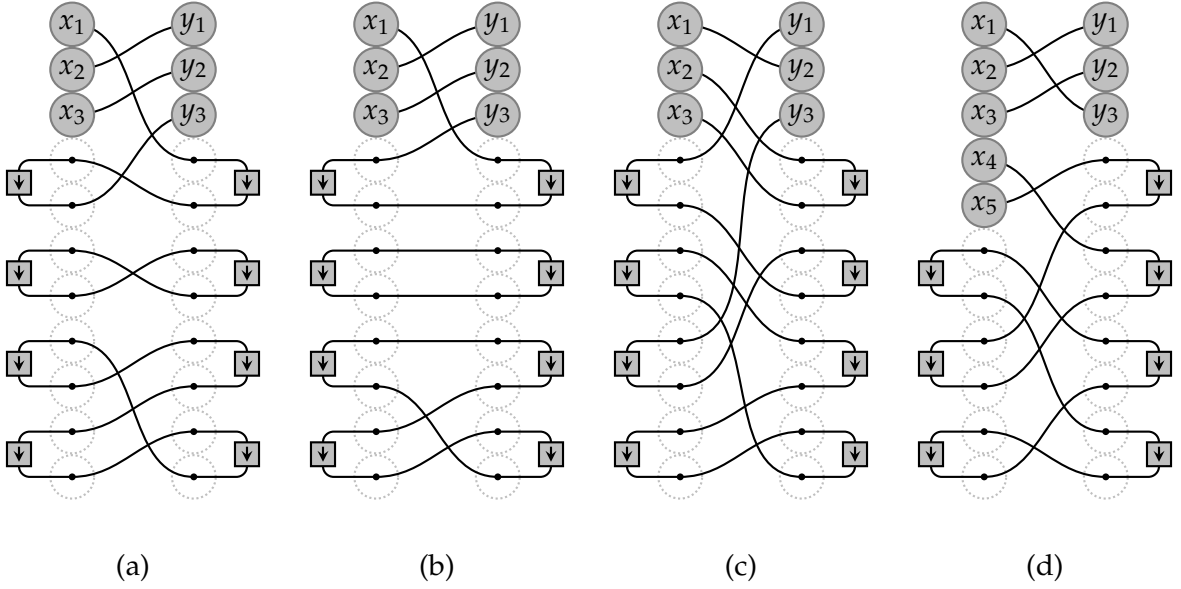

\centering\FourDiagrams
\caption{\label{fig:Diagrams}Four different instances of \cref{eq:xPy} represented in graphical notation. Different permutations $\pi$ result in different matchings among the bits of $x$ and $y$ induced by singlet chains. Each singlet $\ket{\Psi^-} = (\ket{01} - \ket{10}) / \sqrt{2}$ is denoted by an arrow pointing from its first qubit to the second, see \cref{eq:singlet}. The left two matchings (a) and (b) are proper since the bits of $x$ are bijectively matched with the bits of $y$, while the right two matchings (c) and (d) are improper, since two bits of $x$ or $y$ are matched.}
\end{figure}

We call a matching \emph{proper} if each $x_i$ is matched to some $y_j$ (\eg, matchings~(a) and~(b)); otherwise we call it \emph{improper} (\eg, matchings~(c) and~(d)). Note that an improper matching always pairs up two bits of $x$ or two bits of $y$. In particular, if the strings $x$ and $y$ have different lengths, such as in case~(d), the matching will always be improper.

To evaluate \cref{eq:xPy}, we will use the following restatement of \cref{cor:S chains} in terms of the normalized singlet state $\ket{\Psi^-} = \ket{S} / \sqrt{2}$.

\begin{cor}\label{cor:singlet chains}
For any $k \geq 0$,
\begin{align}
  \of[\big]{\id \x \bra{\Psi^-}\xp{k}}
  \of[\big]{\ket{\Psi^-}\xp{k} \x \id}
  &= \frac{1}{2^k} (-1)^k \id, \label{eq:different sides} \\
  \of[\big]{\id \x \bra{\Psi^-}\xp{k} \x \id}
  \of[\big]{\ket{\Psi^-}\xp{k+1}}
  &= \frac{1}{2^k} (-1)^k \ket{\Psi^-}, \label{eq:same side} \\
  \bra{\Psi^-} \SWAP
  \of[\big]{\id \x \bra{\Psi^-}\xp{k} \x \id}
  \of[\big]{\ket{\Psi^-}\xp{k+1}}
  &= \frac{1}{2^k} (-1)^{k+1}. \label{eq:loop}
\end{align}
Moreover, if $j$ singlets are in the opposite direction, \ie, $\ket{\Psi^-}$ is replaced by $\SWAP \ket{\Psi^-} = - \ket{\Psi^-}$, the final value of each expression picks up an extra factor of $(-1)^j$.
\end{cor}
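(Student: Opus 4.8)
The plan is to obtain \cref{cor:singlet chains} as an immediate consequence of \cref{cor:S chains}, using only the relation $\ket{\Psi^-} = \ket{S}/\sqrt{2}$ together with the fact that $S$ is real, so that $\bra{\Psi^-} = \ket{\Psi^-}\tp = \ket{S}\tp/\sqrt{2} = \bra{S}/\sqrt{2}$ as well. Thus in each of the three expressions every ``bra'' and every ``ket'' singlet contributes exactly one factor of $1/\sqrt{2}$, and pulling all of these scalars out front reduces the left-hand side to the corresponding expression of \cref{cor:S chains} multiplied by an explicit power of $2$.

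Concretely, I would simply count singlet factors line by line. For \eqref{eq:different sides} there are $k$ copies of $\bra{\Psi^-}$ and $k$ of $\ket{\Psi^-}$, giving a prefactor $2^{-k}$, and the remaining operator is the first identity of \cref{cor:S chains}, equal to $(-1)^k\id$; multiplying yields $\frac{1}{2^k}(-1)^k\id$. For \eqref{eq:same side} there are $k$ bras and $k+1$ kets, giving $2^{-(2k+1)/2}$, while the $S$-version equals $(-1)^k\ket{S} = (-1)^k\sqrt{2}\,\ket{\Psi^-}$; the $\sqrt{2}$ absorbs the half-power and leaves $\frac{1}{2^k}(-1)^k\ket{\Psi^-}$. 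For \eqref{eq:loop} there are $1+k+(k+1)=2k+2$ singlet factors, giving $2^{-(k+1)}$, while the $S$-version equals $\Tr\of[\big]{S^{2k+2}} = 2(-1)^{k+1}$; the factor $2$ cancels one power, leaving $\frac{1}{2^k}(-1)^{k+1}$. The ``moreover'' clause is inherited directly: reversing a singlet means replacing $\ket{\Psi^-}$ by $\SWAP\ket{\Psi^-}$, which under $\ket{\Psi^-}=\ket{S}/\sqrt{2}$ is the same as replacing $\ket{S}$ by $\SWAP\ket{S} = \ket{S\tp} = -\ket{S}$ (using \cref{i:SWAP} of \cref{prop:mini snake} and $S\tp=-S$), so each reversed singlet contributes the same extra sign $(-1)$ as in \cref{cor:S chains}.

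Since this is pure bookkeeping, I do not expect a genuine obstacle; the only point needing a moment's care is the consistency of the vectorization convention $\bra{\Psi^-} = \ket{\Psi^-}\tp$, which is harmless here precisely because $S$ is real, so the physical $\bra{\Psi^-}$ and the diagrammatic $\bra{S}$ of \cref{apx:Graphical} differ by the single scalar $1/\sqrt{2}$ and the powers of $2$ come out identically however one groups the factors.
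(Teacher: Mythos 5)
Your proposal is correct and matches the paper's approach exactly: the paper introduces \cref{cor:singlet chains} precisely as a restatement of \cref{cor:S chains} under the normalization $\ket{\Psi^-} = \ket{S}/\sqrt{2}$, and your line-by-line count of the powers of $\sqrt{2}$ (including absorbing one $\sqrt{2}$ into $\ket{S} = \sqrt{2}\ket{\Psi^-}$ in \eqref{eq:same side} and cancelling the factor $2$ from $\Tr(S^{2k+2})$ in \eqref{eq:loop}) is the intended bookkeeping. The handling of the ``moreover'' clause via $\SWAP\ket{S} = \ket{S\tp} = -\ket{S}$ is likewise the same.
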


The following result shows that certain matrix entries of $P(\pi)$ in the Schur basis vanish whenever the matching resulting from the corresponding two basis vectors is improper.

\begin{prop}\label{prop:lambda mu}
Let $n \geq 1$ and fix any partitions $\lambda = (\lambda_1, \lambda_2) \pt n$ and $\mu = (\mu_1, \mu_2) \pt n$, and a permutation $\pi \in \S_n$. If the diagram corresponding to
\begin{equation}
  \bra{(\lambda, w, 0)} P(\pi) \ket{(\mu, v, 0)}
  \label{eq:vanishing}
\end{equation}
yields an improper matching, the expression vanishes irrespective of the Hamming weights $w \in [m_\lambda]$ and $v \in [m_\mu]$. In particular, it vanishes whenever $\lambda \neq \mu$.
\end{prop}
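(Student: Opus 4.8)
The plan is to expand both Schur-basis vectors using their defining formula \eqref{eq:0}, namely $\ket{(\lambda,w,0)} = \ket{s_{\lambda_1-\lambda_2}(w)} \x \ket{\Psi^-}\xp{\lambda_2}$ and similarly for $\mu$, so that \cref{eq:vanishing} becomes a sum of terms of the form \eqref{eq:xPy}, one for each standard-basis string $x$ appearing in the Dicke state $\ket{s_{\lambda_1-\lambda_2}(w)}$ and each string $y$ in $\ket{s_{\mu_1-\mu_2}(v)}$. Each such term is exactly an instance of \cref{eq:xPy}, so it has an associated diagram and matching. The key observation is that the matching structure (proper vs.\ improper, and which singlet chains connect which gray circles) depends only on $\pi$, $\lambda_2$, $\mu_2$ and the lengths $\ell_1 = \lambda_1-\lambda_2$, $\ell_2 = \mu_1-\mu_2$ — it does \emph{not} depend on the actual bit values carried by the gray circles nor on $w$ and $v$. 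Hence the hypothesis ``the diagram yields an improper matching'' applies uniformly to every term in the expansion.

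First I would fix one improper matching and show each corresponding term in \eqref{eq:xPy} vanishes. Using graphical notation, the permutation $\pi$ chains the $r_1+r_2$ singlets into paths; contracting each internal singlet-singlet junction via \cref{cor:singlet chains} (specifically \cref{eq:same side} for extending a chain and \cref{eq:different sides,eq:loop} for closing one) collapses every path to either a scalar multiple of identity on a single qubit line, a scalar multiple of $\ket{\Psi^-}$ (or its flip), or a closed loop contributing a scalar. Because the matching is improper, at least one resulting path connects two gray circles on the \emph{same} side — say $x_i$ to $x_j$ — which after contraction leaves a factor proportional to $\bra{x_i}\bra{x_j}\ket{\Psi^-}$ (up to a sign and power of $2$ and possibly a $\SWAP$). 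But $\braket{ab}{\Psi^-} = (\delta_{a0}\delta_{b1} - \delta_{a1}\delta_{b0})/\sqrt2$, and when we sum over all $x$ in the symmetric Dicke state $\ket{s_{\ell_1}(w)}$ with fixed Hamming weight, the contributions from swapping the values of positions $i$ and $j$ cancel in pairs (the Dicke state is symmetric under exchanging qubits $i$ and $j$, while $\ket{\Psi^-}$ is antisymmetric). Thus the whole expression \eqref{eq:vanishing} vanishes. I should be slightly careful that an improper matching might pair two $y$'s instead of two $x$'s, or pair $x_i$ with $x_j$ \emph{through} some chain passing also through singlet circles on the other side — but the same antisymmetry-vs-symmetry cancellation argument applies after contraction, since whatever scalar the chain contributes multiplies an inner product of $\ket{\Psi^-}$ with two standard-basis kets on equal footing.

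For the final sentence ``in particular it vanishes whenever $\lambda\neq\mu$'': if $\lambda \neq \mu$ then either $\lambda_2 \neq \mu_2$, in which case the two strings have different lengths $\ell_1 \neq \ell_2$ and by a counting/parity argument (as noted in the discussion preceding case (d) of \cref{fig:Diagrams}) \emph{no} matching between the gray circles can be a bijection between the $x$'s and the $y$'s, so every matching is improper; or $\lambda_2 = \mu_2$ but $\lambda_1 \neq \mu_1$, which is impossible since $\lambda_1 + \lambda_2 = \mu_1 + \mu_2 = n$ forces $\lambda_1 = \mu_1$ once $\lambda_2=\mu_2$. So $\lambda\neq\mu$ implies $\ell_1 \neq \ell_2$, hence the matching is always improper, hence the expression vanishes by the main part of the proposition.

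The main obstacle I anticipate is making the contraction argument fully rigorous in the case where a same-side path is long and winds through several singlet circles before returning to a second gray circle on the same side: one must track the accumulated sign (number of flipped arrows, parity of the path length) and the power of $\tfrac12$ carefully, and then argue that whatever nonzero scalar prefactor emerges, it is still multiplied by $\bra{\text{two standard-basis kets}}\ket{\Psi^-}$ in a way that cancels upon symmetrizing the Dicke state. This bookkeeping is exactly what \cref{cor:singlet chains} and \cref{cor:S chains} are designed to handle, so the argument should go through cleanly, but the sign/normalization tracking is where care is needed. A cleaner alternative that avoids explicit path contraction is to note that $\ket{(\lambda,w,0)}$ lies in the $\lambda$-isotypic subspace and $\ket{(\mu,v,0)}$ in the $\mu$-isotypic subspace, both invariant under all $P(\pi)$ (from Schur--Weyl duality), so the matrix element vanishes automatically when $\lambda \neq \mu$ — but since this proposition is a \emph{step toward} establishing Schur--Weyl duality in this basis (\cref{lem:P blocks}), one cannot invoke that here, and the direct diagrammatic computation is the honest route.
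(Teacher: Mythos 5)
Your proposal is correct and follows essentially the same route as the paper's proof: expand both vectors into sums of terms of the form \eqref{eq:xPy} (noting the diagram is the same for every term), contract the odd-length same-side singlet chain via \cref{cor:singlet chains} down to a single $\ket{\Psi^-}$, and then cancel the surviving factor $\braket{x_i,x_j}{\Psi^-}$ against the symmetry of the Dicke state — which is exactly the paper's observation that $\braket{00}{\Psi^-}=\braket{11}{\Psi^-}=0$ and $\braket{01}{\Psi^-}+\braket{10}{\Psi^-}=0$ summed over the admissible Hamming weights on those two positions. The $\lambda\neq\mu$ reduction to $\ell_1\neq\ell_2$ also matches the paper.
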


\begin{proof}
If $\lambda \neq \mu$ then $\lambda_1 - \lambda_2 \neq \mu_1 - \mu_2$, so there will be more bits on one side of the diagram than the other (\eg, case~(d) in \cref{fig:Diagrams}), so a proper matching between them is impossible.

To establish the main claim, we first write
$\lambda = (r_1 + \ell_1, r_1)$ and
$\mu = (r_2 + \ell_2, r_2)$
for some integers $\ell_1, \ell_2, r_1, r_2 \geq 0$ such that
$\ell_1 + 2 r_1 = \ell_2 + 2 r_2 = n$
(see \cref{fig:lambda}).
Next, recall from \cref{eq:0,eq:s} that
\begin{equation}
  \bra{(\lambda,w,0)}
= \binom{\ell_1}{w}^{-1/2}
  \sum_{\substack{x\in\set{0,1}^{\ell_1}\\\abs{x}=w}}
  \bra{x} \x \bra{\Psi^-}\xp{r_1},
\end{equation}
and similarly for $\ket{(\mu,v,0)}$.
Hence, we can expand \eqref{eq:vanishing} as follows:
\begin{equation}
  \binom{\ell_1}{w}^{-1/2}
  \binom{\ell_2}{v}^{-1/2}
  \sum_{\substack{x \in \set{0,1}^{\ell_1} \\ \abs{x} = w}}
  \sum_{\substack{y \in \set{0,1}^{\ell_2} \\ \abs{y} = v}}
  \bra{x} \x \bra{\Psi^-}\xp{r_1} \cdot P(\pi) \cdot \ket{y} \x \ket{\Psi^-}\xp{r_2}.
  \label{eq:xy}
\end{equation}
Each term in this expansion can be represented by one of the diagrams discussed before.
Since the wiring of the diagram depends only on $\pi$ and the number of singlets $r_1,r_2$, it is in fact the same diagram for all terms, except with different values of the bit strings $x$ and $y$ substituted in the gray circles (see examples in \cref{fig:Diagrams}).
Note that this expression can be factorized over all paths and cycles in the diagram.
Hence, to show that it vanishes, it suffices to find a single path with a vanishing contribution.

By assumption, the diagram corresponding to \cref{eq:xy} yields an improper matching, meaning that two bits on the same side, say $x_1$ and $x_2$, are matched with each other. Hence, the number of singlets (or arrows) on the path between them is odd. According to \cref{eq:same side} in \cref{cor:singlet chains}, such path is equivalent to a single singlet. More specifically, \cref{eq:same side} allows us to replace a path of $2k+1$ singlets by a contracted path consisting only of $(-1)^{k+j} \ket{\Psi^-} / 2^k$, where $j$ is the number of singlets on the original path that point in the direction opposite to the singlet on the contracted path.\footnote{This can also be seen by iteratively contracting two singlets at a time using the identities displayed in \cref{fig:Singlet chains}.}
For example,
\begin{equation}
  \SinglePath
  \;\; = \;\;
  - \frac{1}{2} \;
  \PairedBits \;.
\end{equation}

Since our goal is to show that the expression \eqref{eq:xy} vanishes, we can ignore the sign and the normalization, and simply treat the contracted path as $\ket{\Psi^-}$.
Hence, it suffices to analyze only the case when the path from $x_1$ to $x_2$ consists of a single singlet.
In this case, the relevant part of the diagram looks as follows:
\begin{equation}
  \braket{x_1,x_2}{\Psi^-}
  \; = \;
  \PairedBits \;,
  \label{eq:single singlet}
\end{equation}
which means that we can pull out from \eqref{eq:xy} a factor of
\begin{equation}
  \sum_{h \in H}
  \sum_{\substack{x \in \set{0,1}^2 \\ |x| = h}}
  \braket{x_1,x_2}{\Psi^-},
  \label{eq:vanishing factor}
\end{equation}
where the range of $h$ is some\footnote{More specifically,
$H := \set{\max \set{2-(\ell_1-w),0}, \dotsc, \min \set{w,2}}$.} subset $H \subseteq \set{0,1,2}$ that depends on the length $\ell_1$ of $x$ and its Hamming weight $w$ (or the analogous parameters $\ell_2$ and $v$ if the permutation $\pi$ matched up two bits of $y$ instead of $x$).
Note that \eqref{eq:vanishing factor} vanishes for any
$H \subseteq \set{0,1,2}$ because
\begin{align}
  \braket{0,0}{\Psi^-} &= 0, &
  \braket{0,1}{\Psi^-} + \braket{1,0}{\Psi^-} &= 0, &
  \braket{1,1}{\Psi^-} &= 0.
\end{align}
Since the pre-factor \eqref{eq:vanishing factor} vanishes, then so does the whole expression \eqref{eq:xy}.
Moreover, this phenomenon is independent of the values of $w$ and $v$.
\end{proof}

\newcommand{\loops}{\mathrm{loops}}
\newcommand{\sign}{\mathrm{sign}}

While we will not need this, a more elaborate version of the above argument can be used to show a more general result that subsumes \cref{prop:overlap,prop:lambda mu}, and determines the top left entry $\bra{0} P_\lambda(\pi) \ket{0}$ of the matrix $P_\lambda(\pi)$ implicitly defined in \cref{eq:P in Schur basis}.
Note that the remaining entries of $P_\lambda(\pi)$ cannot be determined in principle unless one makes a specific choice of the coefficients $\alpha_\pi^{\lambda,i}$ in \cref{eq:i} that orthogonalize the states $\ket{(\lambda,w,i)}$.

\begin{lemma}\label{lem:overlap value}
Let $n \geq 1$ and fix any partitions $\lambda,\mu \pt n$, Hamming weights $w \in [m_\lambda]$ and $v \in [m_\mu]$, and a permutation $\pi \in \S_n$. Then
\begin{equation}
  \bra{(\lambda, w, 0)} P(\pi) \ket{(\mu, v, 0)}
= \delta_{\lambda\mu} \delta_{wv}
  \bra{0} P_\lambda(\pi) \ket{0},
  \label{eq:00}
\end{equation}
where
\begin{equation}
  \bra{0} P_\lambda(\pi) \ket{0} =
  \begin{cases}
    \sign(\gamma) 2^{\loops(\gamma) - \lambda_2}
      & \text{if $\gamma$ induces a proper matching}, \\
    0 & \text{otherwise}.
  \end{cases}
\end{equation}
Here $\gamma$ denotes the diagram depicting $\bra{(\lambda, w, 0)} P(\pi) \ket{(\lambda, w, 0)}$ which depends only on $\pi$ and $\lambda$ (see \cref{fig:Diagrams} for some examples and the description below \cref{eq:xPy} for definition),
$\sign(\gamma) \in \set{+1,-1}$ is the sign of $\gamma$,
and $\loops(\gamma)$ is the number of loops in $\gamma$
(see proof for more details).
\end{lemma}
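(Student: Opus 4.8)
The statement to prove is \cref{lem:overlap value}, which computes the top-left matrix entry $\bra{0} P_\lambda(\pi) \ket{0}$ of the permutation representation on the permutation register, showing it equals $\sign(\gamma)\,2^{\loops(\gamma)-\lambda_2}$ when the diagram $\gamma$ induces a proper matching and $0$ otherwise. The natural approach is to refine the proofs of \cref{prop:overlap,prop:lambda mu}: those establish the delta-function structure $\delta_{\lambda\mu}\delta_{wv}$ and the vanishing in the improper case, so what remains is to evaluate the scalar $\bra{(\lambda,0,0)} P(\pi) \ket{(\lambda,0,0)}$ explicitly in the proper case. First I would set $w=v=0$ (justified by \cref{prop:overlap}, since the value is $w$-independent), so that $\ket{(\lambda,0,0)} = \ket{0^\ell} \x \ket{\Psi^-}\xp{r}$ where $\ell = \lambda_1-\lambda_2$ and $r = \lambda_2$; the $\ket{0^\ell}$ factor is a product of $\ell$ single-qubit $\ket{0}$'s, so there are no $x$- or $y$-sums to carry out and we are evaluating a single diagram $\gamma$.

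The heart of the computation is to decompose the diagram $\gamma$ — the wiring determined by $\pi$ acting on $2r$ singlet-endpoints plus $\ell$ gray $\ket{0}$-endpoints on each side — into its connected components. Each component is either (i) a path between two gray circles, or (ii) a closed loop consisting purely of singlets. For case (i): a proper matching means every path connects an $x$-side gray circle to a $y$-side gray circle, and since each gray endpoint carries $\bra{0}$ or $\ket{0}$, the path contracts via \cref{eq:same side} of \cref{cor:singlet chains} to (a signed power of $1/2$ times) a single singlet $\ket{\Psi^-}$ sandwiched between $\bra{0}$ and $\ket{0}$ on the two ends; but $\braket{00}{\Psi^-} = 0$, so the whole expression vanishes — unless every such path has length exactly zero, i.e. consists of a single wire directly matching an $x$-gray to a $y$-gray with no intervening singlet, contributing $\braket{0}{0}=1$. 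Here I need to be careful: the paths between gray circles may pass through singlets, and if any of them does, \cref{eq:same side} gives a contracted singlet which kills the term. So the nonzero contributions come only from diagrams where the $\ell$ gray wires pair up directly, and the remaining structure is pure singlet loops. For case (ii): a closed loop of $2k$ singlets evaluates via \cref{eq:loop} (or iterated \cref{eq:different sides}) to $\pm 2/2^{k}\cdot(\text{something})$; I would recall that a loop of $2k$ singlet-edges contributes a factor $\Tr(S^{2k})/2^k = 2(-1)^k/2^k$ up to arrow-orientation signs, so each loop contributes $\pm 2$ times $2^{-(\text{edges in loop})/2}$. Summing the exponents: the total power of $\tfrac12$ over all singlets used is $2^{-r}$ (one factor $1/\sqrt2$ per singlet, times itself from bra and ket, giving $2^{-r}$ overall — or more carefully, tracking $\ket{\Psi^-}=\ket S/\sqrt 2$ gives $(1/\sqrt2)^{2r}=2^{-r}$), while each of the $\loops(\gamma)$ loops contributes a factor of $2$; hence the magnitude is $2^{\loops(\gamma)}\cdot 2^{-r} = 2^{\loops(\gamma)-\lambda_2}$, and the sign is the product of the $(-1)^k$ and arrow-flip signs, which I would package as $\sign(\gamma)$.

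The main obstacle is bookkeeping the sign $\sign(\gamma)$ precisely and proving it is well-defined — i.e. independent of how one chooses to traverse each loop and how one iterates the contraction identities of \cref{cor:singlet chains}. I would address this by giving an explicit definition: fix the arrow orientation of each singlet as in \cref{eq:singlet}, and for each loop the sign is $(-1)^{k}$ times $(-1)^{j}$ where $j$ counts the singlets traversed against their arrows in one chosen orientation; the ambiguity of orientation choice flips an even number of arrows (since going around and reversing toggles each edge's agreement simultaneously with toggling $k\mapsto$ same), so the product is invariant — this parity argument is the delicate point and deserves a careful sentence or two. Once $\sign(\gamma)$ and $\loops(\gamma)$ are pinned down, the formula assembles by multiplying the per-component contributions: direct gray-gray wires give $1$, each loop gives its $\pm 2\cdot 2^{-k_i}$, the global singlet normalization is $2^{-\lambda_2}$, and collecting powers of $2$ yields the claimed $\sign(\gamma)\,2^{\loops(\gamma)-\lambda_2}$, with the proper/improper dichotomy handled exactly as in \cref{prop:lambda mu}. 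Finally, I would note that the $\delta_{\lambda\mu}$ and $\delta_{wv}$ prefactors carry over verbatim from \cref{lem:P blocks}, completing the proof.
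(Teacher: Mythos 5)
Your overall strategy (set $w=v=0$ via \cref{prop:overlap}, decompose $\gamma$ into connected components, contract each path and loop via \cref{cor:singlet chains}) is sound, and evaluating at $w=0$ would actually be a slightly cleaner route than the paper's, which keeps $w$ general and cancels the binomial coefficients through a bijection argument. However, your treatment of case (i) contains a genuine error. In a proper matching, a path joining an $x$-side gray circle to a $y$-side gray circle contains an \emph{even} number of singlets: each permutation wire switches sides while each singlet stays on its side, so a path with $s$ singlets has $s+1$ wires and ends on the opposite side precisely when $s$ is even. Such a chain contracts via \cref{eq:different sides} (not \cref{eq:same side}) to $(-1)^{k}2^{-k}\id$ for $2k$ singlets, i.e.\ to a nonzero scalar times a direct wire, which then meets $\braket{0}{0}=1$. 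Your claim that any intervening singlet forces the path to contract to a single singlet and hence annihilate against $\braket{0,0}{\Psi^-}=0$ --- so that ``the nonzero contributions come only from diagrams where the $\ell$ gray wires pair up directly'' --- is false, and it contradicts the lemma itself: diagrams (a) and (b) of \cref{fig:Diagrams} induce proper matchings in which gray circles are joined through chains of two singlets, yet the lemma assigns them the nonzero value $\sign(\gamma)\,2^{\loops(\gamma)-\lambda_2}$. The odd-chain/\cref{eq:same side} mechanism you invoke is the one that kills \emph{improper} matchings, where two same-side grays are joined by an odd chain.

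The good news is that your magnitude bookkeeping is already consistent with the correct contraction: if path $i$ carries $2k_i$ singlets and loop $j$ carries $2m_j$, then $\sum_i k_i+\sum_j m_j=\lambda_2$, the paths contribute $\prod_i 2^{-k_i}$, the loops contribute $\prod_j 2\cdot 2^{-m_j}$, and the product is $2^{\loops(\gamma)-\lambda_2}$ as claimed. So the fix is local: replace the erroneous vanishing claim with the even-chain contraction of \cref{eq:different sides} and fold the resulting $(-1)^{k_i}$ factors, together with the arrow-orientation flips, into $\sign(\gamma)$. Your remarks on the well-definedness of $\sign(\gamma)$ are reasonable but would need to be made precise; the paper essentially asserts this point as well.
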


\begin{proof}
We can proceed the same way as in the proof of \cref{prop:lambda mu}, except that now we want to obtain an explicit formula for \eqref{eq:xy}.
As before, we observe that the matching induced by \eqref{eq:xy} is not proper unless $\lambda = \mu$, which gives us the first Kronecker delta function $\delta_{\lambda\mu}$.
If $\lambda = \mu$, we can use the same argument as in \cref{lem:P blocks} to obtain the second Kronecker delta function $\delta_{wv}$.
Namely, we notice from \cref{eq:0} that the standard basis expansions of $\bra{(\lambda,w,0)}$ and $\ket{(\mu,v,0)}$ contain only states of Hamming weights $w + \lambda_2$ and $v + \mu_2$, respectively.
Since the permutation $P(\pi)$ preserves Hamming weight, \cref{eq:xy} vanishes unless $w = v$.

From now on we assume that $\lambda = \mu$ and $w = v$, and our goal is to derive a formula for $\bra{(\lambda,w,0)} P(\pi) \ket{(\lambda,w, 0)}$.
Note that in this case \cref{eq:xy} simplifies to
\begin{equation}
  \bra{(\lambda,w,0)} P(\pi) \ket{(\lambda,w, 0)}
= \binom{\ell}{w}^{-1}
  \sum_{\substack{x,y \in \set{0,1}^{\ell} \\ \abs{x} = \abs{y} = w}}
  \bra{x} \x \bra{\Psi^-}\xp{\lambda_2} \cdot P(\pi) \cdot \ket{y} \x \ket{\Psi^-}\xp{\lambda_2}
  \label{eq:wxy}
\end{equation}
where $\ell := \lambda_1 - \lambda_2$.
There are two cases, depending on the diagram $\gamma(\pi,\lambda)$ that depicts the terms on the right-hand side of \cref{eq:wxy}. If the matching induced by $\gamma(\pi,\lambda)$ is improper, \cref{eq:wxy} vanishes independently of $w$ by the same argument as in \cref{prop:lambda mu}. Otherwise, the bits of $x \in \set{0,1}^\ell$ and $y \in \set{0,1}^\ell$ are bijectively matched up by $\ell$ chains of singlets, where each chain contains an even number of singlets (see examples (a) and (b) in \cref{fig:Diagrams}).
In addition, $\gamma(\pi,\lambda)$ may also contain some loops with an even number of singlets in each of them.

We can completely contract all singlet chains and loops in $\gamma(\pi,\lambda)$ either by removing two singlets at a time by using graphical notation (see \cref{fig:Singlet chains}) or by applying \cref{cor:singlet chains} (use \cref{eq:different sides} to contract singlet chains between bits $x_i$ and $y_j$, and \cref{eq:loop} to contract singlet loops).
Note that we can uniformly use the same contraction procedure for all terms in \cref{eq:wxy} since they correspond to the same diagram $\gamma(\pi,\lambda)$.

The contracted diagram has no more singlets but consists only of the bits of $x$ and $y$ with direct connections between them.
Since the original diagram yielded a proper matching, the contracted diagram encodes a Kronecker delta function between $x$ and a permutation of $y$, which establishes a bijection between the values of $x$ and $y$.
Since non-matching pairs of $x$ and $y$ drop out, we can discard one of the sums in \cref{eq:wxy}.
Moreover, the remaining terms are all equal since they depend only on the diagram $\gamma(\pi,\lambda)$, so we can forget about the second sum as well and cancel out the binomial coefficient $\binom{\ell}{w}$.
This means that the right-hand side of \cref{eq:wxy} does not depend on $w$, and it can be computed by evaluating a single non-vanishing term whose diagram is $\gamma(\pi,\lambda)$.
In particular, it is appropriate to denote the resulting value by $\bra{0} P_\lambda(\pi) \ket{0}$, as in \cref{eq:00}, since it depends only on $\pi$ and $\lambda$.
We will now separately determine the absolute value and the sign of $\bra{0} P_\lambda(\pi) \ket{0}$ as a function of the diagram $\gamma = \gamma(\pi,\lambda)$.

\begin{figure}
\centering\SingletChains
\caption{\label{fig:Singlet chains}Contraction of two consecutive singlets $\ket{\Psi^-} = \ket{S} / \sqrt{2}$ via \cref{eq:different sides}. The result has \emph{negative} sign if both singlets point in the same direction and positive sign otherwise.}
\end{figure}

Since each term of \cref{eq:wxy} contains $2 \lambda_2$ singlets, with a normalization of $2^{-1/2}$ each, we need to introduce a factor of $2^{-\lambda_2}$ to compensate for their removal during contraction.
Moreover, each contracted loop contributes an additional factor of $\Tr \id_2 = 2$, so the final result has an absolute value of $\abs{\bra{0} P_\lambda(\pi) \ket{0}} = 2^{\loops(\gamma) - \lambda_2}$.
The sign of $\bra{0} P_\lambda(\pi) \ket{0}$ is determined by accounting for the singlet arrows in $\gamma$.
According to \cref{cor:singlet chains}, contracting a path or a loop of $2k$ singlets that all point in the same direction produces the sign $(-1)^k$.
The overall sign of $\gamma$, which we denoted by $\sign(\gamma)$, is then determined by taking the product of individual signs over all paths and loops.
Since $\SWAP \ket{\Psi^-} = - \ket{\Psi^-}$, the sign must be inverted whenever the direction of any singlet is inverted (this rule is consistent since we are dealing with the case when each path and loop has an even number of singlets).
\end{proof}

\subsection{Orthonormality}

In this section, we show that the vectors $\ket{(\lambda,w,i)}$ form an orthonormal basis of the $n$-qubit space $\C^{2^n}$.
First, in \cref{prop:Orthonormal} we establish orthonormality only for subsets indexed by $i$, and then in \cref{lem:FullOrthonormal} we establish it for the full set.

\begin{prop}\label{prop:Orthonormal}
Let $n \geq 1$ and fix any partition $\lambda \pt n$ and Hamming weight $w \in [m_\lambda]$. Then, for any $i,j \in [d_\lambda]$,
\begin{equation}
  \braket{(\lambda,w,i)}{(\lambda,w,j)} = \delta_{ij}.
  \label{eq:orthonormal}
\end{equation}
\end{prop}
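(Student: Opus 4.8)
The plan is to reduce the $w\neq 0$ orthonormality statement to the $w=0$ case, which holds by construction (see \cref{eq:loi}), by transporting everything with the operator $\tilde{\Shift}$ introduced in the proof of \cref{prop:overlap}. Recall that $\tilde{\Shift}=\sum_i\alpha_i U_i\xp{n}$ is a linear combination of matrices of the form $U\xp{n}$ with $U\in\SU{2}$, and that its repeated application satisfies $\tilde{\Shift}^w\ket{(\lambda,0,0)}=\ket{(\lambda,w,0)}$ for all $w\in\set{0,\dotsc,\ell}$, where $\ell=\lambda_1-\lambda_2$. The key observation is that each $U\xp{n}$ commutes with every permutation $P(\pi)$ (Schur--Weyl, see \cref{sec:SchurWeyl}), hence so does $\tilde{\Shift}$, and therefore from \cref{eq:i} we get $\tilde{\Shift}^w\ket{(\lambda,0,i)}=\ket{(\lambda,w,i)}$ for every $i\in[d_\lambda]$, since the coefficients $\alpha_\pi^{\lambda,i}$ do not depend on $w$.

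First I would make this commutation precise: for any $\pi\in\S_n$ we have $\tilde{\Shift}P(\pi)=P(\pi)\tilde{\Shift}$, and consequently
\begin{equation}
  \tilde{\Shift}^w\ket{(\lambda,0,i)}
  = \sum_{\pi\in\S_n}\alpha_\pi^{\lambda,i}\,\tilde{\Shift}^w P(\pi)\ket{(\lambda,0,0)}
  = \sum_{\pi\in\S_n}\alpha_\pi^{\lambda,i}\,P(\pi)\tilde{\Shift}^w\ket{(\lambda,0,0)}
  = \sum_{\pi\in\S_n}\alpha_\pi^{\lambda,i}\,P(\pi)\ket{(\lambda,w,0)}
  = \ket{(\lambda,w,i)}.
\end{equation}
Next I would compute the inner product
\begin{equation}
  \braket{(\lambda,w,i)}{(\lambda,w,j)}
  = \bra{(\lambda,0,i)}\bigl(\tilde{\Shift}^w\bigr)\ct\tilde{\Shift}^w\ket{(\lambda,0,j)}.
\end{equation}
The remaining task is to show that $\bigl(\tilde{\Shift}^w\bigr)\ct\tilde{\Shift}^w$ acts as the identity on the span of the states $\set{\ket{(\lambda,0,j)}:j\in[d_\lambda]}$. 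Recall from the proof of \cref{prop:overlap} that $\tilde{\Shift}\ct\ket{(\lambda,w\+1,0)}=\ket{(\lambda,w,0)}$, so by the same commutation argument $\tilde{\Shift}\ct\ket{(\lambda,w\+1,j)}=\ket{(\lambda,w,j)}$ for all $j$; iterating, $\bigl(\tilde{\Shift}^w\bigr)\ct\ket{(\lambda,w,j)}=\ket{(\lambda,0,j)}$. Combining with the previous display gives $\bigl(\tilde{\Shift}^w\bigr)\ct\tilde{\Shift}^w\ket{(\lambda,0,j)}=\ket{(\lambda,0,j)}$, hence
\begin{equation}
  \braket{(\lambda,w,i)}{(\lambda,w,j)}
  = \braket{(\lambda,0,i)}{(\lambda,0,j)}
  = \delta_{ij},
\end{equation}
where the last equality is precisely \cref{eq:loi}, which fixes the coefficients $\alpha_\pi^{\lambda,i}$ so that the $w=0$ vectors are orthonormal.

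The main obstacle is making the ``transport by $\tilde{\Shift}$'' argument airtight on the nose: $\tilde{\Shift}$ is only defined as \emph{some} linear combination $\sum_i\alpha_i U_i\xp{n}$ supplied by Burnside's theorem, and it is specified only through its action on the symmetric subspace $\Sym{\ell}$ (via \cref{eq:C}); one must be careful that the identities $\tilde{\Shift}^w\ket{(\lambda,0,0)}=\ket{(\lambda,w,0)}$ and $\tilde{\Shift}\ct\ket{(\lambda,w\+1,0)}=\ket{(\lambda,w,0)}$ used above are exactly the ones already established inside the proof of \cref{prop:overlap}, and that commutation with $P(\pi)$ is inherited term-by-term from the $U_i\xp{n}$. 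Beyond that bookkeeping, the argument is a short chain of equalities. An alternative (and perhaps cleaner) route would be to expand $\braket{(\lambda,w,i)}{(\lambda,w,j)}$ directly using \cref{eq:i} as $\sum_{\sigma,\tau}\alpha_\sigma^{\lambda,i}\alpha_\tau^{\lambda,j}\bra{(\lambda,w,0)}P(\sigma^{-1}\tau)\ket{(\lambda,w,0)}$ and invoke \cref{prop:overlap} to replace $\bra{(\lambda,w,0)}P(\rho)\ket{(\lambda,w,0)}$ by its $w$-independent value $\bra{(\lambda,0,0)}P(\rho)\ket{(\lambda,0,0)}$ for every $\rho\in\S_n$; this collapses the whole sum to $\braket{(\lambda,0,i)}{(\lambda,0,j)}=\delta_{ij}$ immediately. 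I would present this second route as the main proof, since it uses only \cref{prop:overlap} and \cref{eq:loi} and avoids re-deriving properties of $\tilde{\Shift}$.
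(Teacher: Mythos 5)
Your preferred route (the last paragraph) is exactly the paper's proof: expand both vectors via \cref{eq:i}, reduce the double sum to terms of the form $\bra{(\lambda,w,0)}P(\pi^{-1}\sigma)\ket{(\lambda,w,0)}$, invoke \cref{prop:overlap} to replace $w$ by $0$, and conclude via \cref{eq:loi}. The proposal is correct and takes essentially the same approach as the paper.
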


\begin{proof}
Recall from \cref{eq:i} that
\begin{equation}
  \ket{(\lambda,w,i)} = \sum_{\pi \in \S_n} \alpha_\pi^{\lambda,i} P(\pi) \ket{(\lambda,w,0)}
  \label{eq:lwi2}
\end{equation}
where the coefficients $\alpha_\pi^{\lambda,i} \in \R$ are chosen so that $\braket{(\lambda,0,i)}{(\lambda,0,j)} = \delta_{ij}$, see \cref{eq:loi}.
Hence,
\begin{align}
  \braket{(\lambda,w,i)}{(\lambda,w,j)}
  &= \bra{(\lambda,w,0)}
     \sum_{\pi \in \S_n}
     \alpha_\pi^{\lambda,i} P(\pi)\ct
     \sum_{\sigma \in \S_n}
     \alpha_\sigma^{\lambda,j} P(\sigma)
     \ket{(\lambda,w,0)} \\
  &= \sum_{\pi,\sigma \in \S_n}
     \alpha_\pi^{\lambda,i}
     \alpha_\sigma^{\lambda,j}
     \bra{(\lambda,w,0)} P(\pi^{-1}\sigma) \ket{(\lambda,w,0)}.
\end{align}
Using \cref{prop:overlap}, we can replace $w$ by $0$ and perform the same calculation in reverse. This produces $\braket{(\lambda,0,i)}{(\lambda,0,j)}$, which is equal to the desired value $\delta_{ij}$ by \cref{eq:loi}.
\end{proof}

Orthonormality of the full set of $\ket{(\lambda,w,i)}$ follows by combining \cref{lem:P blocks,prop:Orthonormal}.

\begin{lemma}\label{lem:FullOrthonormal}
For any $n \geq 1$, partitions $\lambda,\mu \pt n$, Hamming weights $w \in [m_\lambda]$ and $v \in [m_\mu]$, and indices $i \in [d_\lambda]$ and $j \in [d_\mu]$,
\begin{equation}
  \braket{(\lambda,w,i)}{(\mu,v,j)}
  = \delta_{\lambda\mu} \delta_{wv} \delta_{ij}.
\end{equation}
\end{lemma}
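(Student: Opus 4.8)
The plan is to reduce the full orthonormality statement, \Cref{lem:FullOrthonormal}, to the two special cases already established: \Cref{prop:Orthonormal} (orthonormality among vectors sharing the same $\lambda$ and $w$) and \Cref{lem:P blocks} (the block structure of permutation matrices in the Schur basis, i.e.\ $\bra{(\lambda,w,i)} P(\pi) \ket{(\mu,v,j)} = \delta_{\lambda\mu}\delta_{wv}\bra{i} P_\lambda(\pi)\ket{j}$). The key observation is that an inner product $\braket{(\lambda,w,i)}{(\mu,v,j)}$ is a special case of a matrix element $\bra{(\lambda,w,i)} P(\pi) \ket{(\mu,v,j)}$ — namely the one with $\pi$ equal to the identity permutation $e \in \S_n$, since $P(e) = \id$.

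The first step is therefore to apply \Cref{lem:P blocks} directly with $\pi = e$, which immediately yields
\begin{equation*}
  \braket{(\lambda,w,i)}{(\mu,v,j)}
  = \bra{(\lambda,w,i)} P(e) \ket{(\mu,v,j)}
  = \delta_{\lambda\mu}\delta_{wv}\bra{i} P_\lambda(e) \ket{j}.
\end{equation*}
This already collapses the problem to the diagonal blocks $\lambda = \mu$, $w = v$. The second step is to identify $P_\lambda(e)$. Since $P: \S_n \to \U{2^n}$ is a representation and $P_\lambda$ is defined (implicitly, via \Cref{lem:P blocks}) so that $\hat P(\pi) = \bigoplus_\lambda \id_{m_\lambda} \otimes P_\lambda(\pi)$, the map $\pi \mapsto P_\lambda(\pi)$ is itself a representation of $\S_n$, hence $P_\lambda(e) = \id_{d_\lambda}$, so $\bra{i} P_\lambda(e)\ket{j} = \delta_{ij}$. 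Alternatively — and this avoids relying on $P_\lambda$ being a homomorphism, which is only implicit in the excerpt — one can simply invoke \Cref{prop:Orthonormal}, which states directly that $\braket{(\lambda,w,i)}{(\lambda,w,j)} = \delta_{ij}$, to evaluate the surviving diagonal block. Combining the two steps gives $\braket{(\lambda,w,i)}{(\mu,v,j)} = \delta_{\lambda\mu}\delta_{wv}\delta_{ij}$, as claimed.

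There is essentially no obstacle here: the proof is a two-line composition of results already in hand, and its only mild subtlety is choosing which of the two routes to take for the final diagonal block (the representation-theoretic identity $P_\lambda(e) = \id$ versus the direct citation of \Cref{prop:Orthonormal}). I would use \Cref{prop:Orthonormal} for cleanliness, since it sidesteps any question of whether $P_\lambda$ has been shown to be a genuine representation. The only thing to be careful about in the write-up is to state clearly that $P(e) = \id_{2^n}$ so that the specialization of \Cref{lem:P blocks} is transparent to the reader.

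\begin{proof}
Let $e \in \S_n$ denote the identity permutation, so that $P(e) = \id_{2^n}$.
Applying \cref{lem:P blocks} with $\pi = e$ gives
\begin{equation}
  \braket{(\lambda,w,i)}{(\mu,v,j)}
  = \bra{(\lambda,w,i)} P(e) \ket{(\mu,v,j)}
  = \delta_{\lambda\mu} \delta_{wv} \bra{i} P_\lambda(e) \ket{j},
\end{equation}
so the inner product vanishes unless $\lambda = \mu$ and $w = v$.
In the remaining case $\lambda = \mu$, $w = v$, \cref{prop:Orthonormal} yields
$\braket{(\lambda,w,i)}{(\lambda,w,j)} = \delta_{ij}$.
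Combining the two cases, we obtain
$\braket{(\lambda,w,i)}{(\mu,v,j)} = \delta_{\lambda\mu} \delta_{wv} \delta_{ij}$.
\end{proof}
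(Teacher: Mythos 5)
Your proposal is correct and follows essentially the same route as the paper's own proof: specialize \cref{lem:P blocks} to the identity permutation (where $P(e) = \id_{2^n}$) to kill all off-diagonal blocks, then invoke \cref{prop:Orthonormal} for the surviving case $\lambda = \mu$, $w = v$. Your choice to cite \cref{prop:Orthonormal} rather than argue $P_\lambda(e) = \id_{d_\lambda}$ is exactly what the paper does as well.
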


\begin{proof}
If $\pi \in \S_n$ is the trivial permutation then $P(\pi) = \id_{2^n}$. We already know from \cref{lem:P blocks} that, for any $\pi \in \S_n$,
\begin{equation}
  \bra{(\lambda,w,i)} P(\pi) \ket{(\mu,v,j)}
  = \delta_{\lambda\mu} \delta_{wv} \bra{i} P_\lambda(\pi) \ket{j},
\end{equation}
for some operator $P_\lambda(\pi) \in \L(\C^{d_\lambda})$. This expression vanishes when $\lambda \neq \mu$ or $w \neq v$, so it only remains to understand the case when $\lambda = \mu$ and $w = v$. In this case,
\begin{equation}
  \braket{(\lambda,w,i)}{(\lambda,w,j)}
  = \delta_{ij}
\end{equation}
according to \cref{prop:Orthonormal}.
\end{proof}

\subsection{Unitary action}

This section considers the action of $M\xp{n}$ on the vectors $\ket{(\lambda,w,i)}$ for any $M \in \L(\C^2)$.
As a special case, this also includes the action of $U\xp{n}$ where $U \in \U{2}$ is unitary.

\begin{lemma}\label{lem:QInvar}
For any $n \geq 1$, partition $\lambda \pt n$, and index $i\in[d_\lambda]$,
the corresponding subspace
$\mc{Q}_{\lambda,i} := \spn \set[\big]{\ket{(\lambda,w,i)} : w \in [m_\lambda]}$
is invariant under $M\xp{n}$ for all $M \in \L(\C^2)$.
\end{lemma}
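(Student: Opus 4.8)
The plan is to show that each $M\xp{n}$ maps $\mc{Q}_{\lambda,i}$ into itself by relating the vectors $\ket{(\lambda,w,i)}$ to the $i=0$ vectors $\ket{(\lambda,w,0)}$ and exploiting the fact that $M\xp{n}$ commutes with every $P(\pi)$. First I would reduce to the case $i=0$. Indeed, by definition \eqref{eq:i} we have $\ket{(\lambda,w,i)} = \sum_{\pi\in\S_n} \alpha_\pi^{\lambda,i} P(\pi)\ket{(\lambda,w,0)}$, and since $[M\xp{n},P(\pi)]=0$ for all $\pi\in\S_n$, it follows that
\begin{equation}
  M\xp{n}\ket{(\lambda,w,i)}
  = \sum_{\pi\in\S_n} \alpha_\pi^{\lambda,i} P(\pi)\, M\xp{n}\ket{(\lambda,w,0)}.
\end{equation}
Hence, if I can show that $M\xp{n}\ket{(\lambda,w,0)}$ lies in $\spn\set{P(\sigma)\ket{(\lambda,v,0)} : \sigma\in\S_n,\ v\in[m_\lambda]}$ — in fact, as will emerge, in $\spn\set{\ket{(\lambda,v,0)} : v\in[m_\lambda]}$ plus permutations thereof — then applying $P(\pi)$ and resumming over $\pi$ with the coefficients $\alpha_\pi^{\lambda,i}$ reproduces a combination of $\ket{(\lambda,v,i)}$, which lies in $\mc{Q}_{\lambda,i}$.

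The heart of the matter is therefore the $i=0$ case. Recall from \eqref{eq:lw0} that $\ket{(\lambda,w,0)} = \ket{s_\ell(w)}\x\ket{\Psi^-}\xp{\lambda_2}$ where $\ell=\lambda_1-\lambda_2$. I would split $M\xp{n} = M\xp{\ell}\x M\xp{2\lambda_2}$ according to this tensor factorization. On the singlet part, I would use the identity $(M\x M)\ket{\Psi^-} = (\det M)\ket{\Psi^-}$, valid for any $M\in\L(\C^2)$ (this is the standard fact that the antisymmetric subspace of $\C^2\x\C^2$ is one-dimensional and carries the determinant character; it can be checked directly or is recorded in \cref{apx:2x2}). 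Thus $M\xp{2\lambda_2}\ket{\Psi^-}\xp{\lambda_2} = (\det M)^{\lambda_2}\ket{\Psi^-}\xp{\lambda_2}$, so the singlet tail is merely rescaled. On the symmetric part, $M\xp{\ell}$ preserves the symmetric subspace $\Sym{\ell} = \spn\set{\ket{s_\ell(v)} : v\in\set{0,\dots,\ell}}$ (a tensor power of any single-qubit operator commutes with $\S_\ell$, hence preserves the symmetric subspace), so $M\xp{\ell}\ket{s_\ell(w)} = \sum_{v=0}^{\ell} c_{vw}\ket{s_\ell(v)}$ for some coefficients $c_{vw}$ (these are exactly the matrix entries of $T^\ell(M)$ from \cref{sec:Refresher}, though I do not need their explicit form). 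Combining the two pieces,
\begin{equation}
  M\xp{n}\ket{(\lambda,w,0)}
  = (\det M)^{\lambda_2}\sum_{v=0}^{\ell} c_{vw}\,\ket{s_\ell(v)}\x\ket{\Psi^-}\xp{\lambda_2}
  = (\det M)^{\lambda_2}\sum_{v\in[m_\lambda]} c_{vw}\,\ket{(\lambda,v,0)},
\end{equation}
using $m_\lambda = \ell+1$. So $M\xp{n}$ maps $\mc{Q}_{\lambda,0}$ into itself.

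Finally I would assemble the two reductions: for general $i$,
\begin{equation}
  M\xp{n}\ket{(\lambda,w,i)}
  = \sum_{\pi\in\S_n}\alpha_\pi^{\lambda,i} P(\pi)\, (\det M)^{\lambda_2}\sum_{v\in[m_\lambda]} c_{vw}\ket{(\lambda,v,0)}
  = (\det M)^{\lambda_2}\sum_{v\in[m_\lambda]} c_{vw}\,\ket{(\lambda,v,i)},
\end{equation}
where the last equality is just the definition \eqref{eq:i} of $\ket{(\lambda,v,i)}$ applied in reverse. This lies in $\mc{Q}_{\lambda,i}$, completing the proof. I do not expect a serious obstacle here; the only point requiring mild care is making sure the single-qubit identities $(M\x M)\ket{\Psi^-}=(\det M)\ket{\Psi^-}$ and the invariance of $\Sym{\ell}$ under $M\xp{\ell}$ are invoked for general (not necessarily unitary) $M\in\L(\C^2)$, which is exactly what the lemma asks for and which both facts indeed cover.
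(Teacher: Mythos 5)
Your proposal is correct and follows essentially the same route as the paper's proof: decompose $\ket{(\lambda,w,0)}$ into its symmetric and singlet factors, use $(M\x M)\ket{\Psi^-}=(\det M)\ket{\Psi^-}$ and the invariance of $\Sym{\ell}$ under $M\xp{\ell}$ to handle the $i=0$ case, then lift to general $i$ via the definition \eqref{eq:i} and the commutation $[M\xp{n},P(\pi)]=0$. The coefficients $c_{vw}$ you leave implicit are indeed the entries of $T^\ell(M)$, exactly as the paper records in \cref{lem:Unitary blocks}.
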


\begin{proof}
For any $\ell \geq 0$, $\pi \in \S_\ell$, and $w \in \set{0,\dotsc,\ell}$,
\begin{equation}
  P(\pi) M\xp{\ell} \ket{s_\ell(w)}
  = M\xp{\ell} P(\pi) \ket{s_\ell(w)}
  = M\xp{\ell} \ket{s_\ell(w)},
\end{equation}
so the vector $M\xp{\ell} \ket{s_\ell(w)}$ is fixed by all qubit permutations. Hence, the $\ell$-qubit symmetric space, $\spn\set{\ket{s_\ell(w)} : w\in\set{0,\dotsc,\ell}}$, is invariant under $M\xp{\ell}$ and we can write
\begin{equation}
  M\xp{\ell} \ket{s_\ell(w)}
  = \sum_{v\in[m_\lambda]} \beta'_v \ket{s_\ell(v)},
\end{equation}
for some $\beta'_v \in \C$.
Combining the above and the fact that
$(M \x M) \ket{\Psi^-} = \det(M) \ket{\Psi^-}$
together with the definition of $\ket{(\lambda,w,0)}$ from \cref{eq:lw0}, we get
\begin{align}
  M\xp{n} \ket{(\lambda,w,0)}
  = M\xp{(\lambda_1-\lambda_2)} \ket{s_{\lambda_1-\lambda_2}(w)} \x \of*{\det(M) \ket{\Psi^-}}\xp{\lambda_2}
  = \sum_{v\in[m_\lambda]} \beta_v \ket{(\lambda,v,0)},
  \label{eq:Closed}
\end{align}
where $\beta_v := \beta'_v \of{\det(M)}^{\lambda_2}$.
Using the definition of $\ket{(\lambda,w,i)}$ from \cref{eq:i},
\begin{align}
  M\xp{n} \ket{(\lambda,w,i)} &=
  \sum_{\pi \in \S_n} \alpha_{\pi}^{\lambda,i} P(\pi) \Big( M\xp{n} \ket{(\lambda,w,0)} \Big)\\
 & = \sum_{\pi \in \S_n} \alpha_{\pi}^{\lambda, i} P(\pi) \sum_{v\in [m_\lambda]} \beta_v \ket{(\lambda,v,0)} \\
 & = \sum_{v\in [m_\lambda]} \beta_v \left( \sum_{\pi \in \S_n} \alpha_{\pi}^{\lambda,i} P(\pi) \ket{(\lambda,v,0)} \right)\\
  &= \sum_{v\in[m_\lambda]} \beta_v \ket{(\lambda,v,i)},
\end{align}
where the final state lies in the subspace $\mc{Q}_{\lambda,i}$ as desired.
\end{proof}

Having established that the subspaces spanned by $\ket{(\lambda,w,i)}$ with $w \in [m_\lambda]$ are invariant under the action of $M\xp{n}$, we would like to derive explicit matrix entries of this action.
We will show that they coincide with $Q_\lambda(M)$, where $Q_\lambda$ is the representation of $\U{2}$ defined in \cref{apx:U2}.
This stems from the fact that the vectors $\ket{(\lambda,w,i)}$ are related to $\ket{(\lambda,w,0)}$, which can in turn be expressed in terms of the symmetric states $\ket{s_{\ell}(w)}$ and singlets $\ket{\Psi^-}$.
The result the follows from \cref{lem:tjk from symmetric subspace} in \cref{apx:Sym} which is an analogous statement for the symmetric states $\ket{s_{\ell}(w)}$.

\begin{lemma}\label{lem:Unitary blocks}
For any $n \geq 1$, partitions $\lambda,\mu \pt n$, Hamming weights $w \in [m_\lambda]$ and $v \in [m_\mu]$, indices $i \in [d_\lambda]$ and $j \in [d_\mu]$, and matrix $M \in \L(\C^2)$,
\begin{align}
  M\xp{n} \ket{(\mu,v,j)}
 &= \sum_{w=0}^{m_\mu}
    Q_\mu(M)_{wv}
    \ket{(\mu,w,j)}, &
  \bra{(\lambda,w,i)} M\xp{n} \ket{(\mu,v,j)}
 &= \delta_{\lambda\mu} \delta_{ij}
    Q_\mu(M)_{wv},
\end{align}
where
$m_\mu := \mu_1 - \mu_2 + 1$ and $Q_\mu: \L(\C^2) \to \L(\C^{m_\mu})$ is the representation defined in \cref{eq:Q-lambda}.
\end{lemma}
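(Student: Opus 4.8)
The plan is to first reduce the problem to a statement about the action of $M\xp{n}$ on the $i=0$ vectors, and then to identify that action with the representation $Q_\mu$ defined in \cref{eq:Q-lambda} via the symmetric-subspace analysis of \cref{apx:Sym}. The crucial structural input is \cref{eq:Closed} from the proof of \cref{lem:QInvar}, which already tells us that $M\xp{n}$ maps $\mc{Q}_{\mu,j}$ into itself and does so with coefficients $\beta_v$ that are \emph{independent of $j$} (they depend only on $\mu$, $M$, and the source weight). So the key observation is that the matrix of $M\xp{n}$ on each $\mc{Q}_{\mu,j}$ is the \emph{same} matrix, call it $B_\mu(M)$, and it suffices to compute $B_\mu(M)$ using the $j=0$ vectors where we have an explicit formula, namely $\ket{(\mu,w,0)} = \ket{s_{\mu_1-\mu_2}(w)} \x \ket{\Psi^-}\xp{\mu_2}$.

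First I would make the reduction precise. Starting from $\ket{(\mu,v,j)} = \sum_{\pi} \alpha_\pi^{\mu,j} P(\pi)\ket{(\mu,v,0)}$, apply $M\xp{n}$, commute it past $P(\pi)$ (they act on disjoint tensor structures, so $[M\xp{n},P(\pi)]=0$), use \cref{eq:Closed} to write $M\xp{n}\ket{(\mu,v,0)} = \sum_{w} B_\mu(M)_{wv}\ket{(\mu,w,0)}$ with $B_\mu(M)_{wv} := \beta_w$ (the coefficients from \cref{eq:Closed}), and then reassemble the sum over $\pi$ to get $M\xp{n}\ket{(\mu,v,j)} = \sum_w B_\mu(M)_{wv}\ket{(\mu,w,j)}$. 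This is exactly the computation already carried out at the end of the proof of \cref{lem:QInvar}, so it can be cited. Combining this with orthonormality (\cref{lem:FullOrthonormal}) immediately gives $\bra{(\lambda,w,i)}M\xp{n}\ket{(\mu,v,j)} = \delta_{\lambda\mu}\delta_{ij}B_\mu(M)_{wv}$.

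It then remains to show $B_\mu(M) = Q_\mu(M)$, i.e.\ that the matrix obtained from the $j=0$ vectors coincides with the irrep of \cref{eq:Q-lambda}. Here I would use the explicit decomposition $\ket{(\mu,w,0)} = \ket{s_\ell(w)} \x \ket{\Psi^-}\xp{r}$ with $\ell := \mu_1 - \mu_2$, $r := \mu_2$. Since $(M\x M)\ket{\Psi^-} = (\det M)\ket{\Psi^-}$, we get
\begin{equation*}
  M\xp{n}\ket{(\mu,w,0)} = \of[\big]{M\xp{\ell}\ket{s_\ell(w)}} \x (\det M)^r \ket{\Psi^-}\xp{r}.
\end{equation*}
By \cref{lem:tjk from symmetric subspace} in \cref{apx:Sym} (the statement that $M\xp{\ell}$ acts on the Dicke basis of $\Sym{\ell}$ by the matrix $T^\ell(M)$), we have $M\xp{\ell}\ket{s_\ell(w)} = \sum_v T^\ell(M)_{vw}\ket{s_\ell(v)}$, so $B_\mu(M)_{vw} = (\det M)^r T^\ell(M)_{vw}$, which by the definition in \cref{eq:Q-lambda}, $Q_\mu(M) = (\det M)^{\mu_2} T^{\mu_1-\mu_2}(M)$, is precisely $Q_\mu(M)_{vw}$. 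This closes the argument. The main obstacle, such as it is, is bookkeeping: making sure the index conventions for $T^\ell$ (which entry multiplies which basis vector) in \cref{apx:Sym} match the convention in the statement, and checking that $\beta_w$ from \cref{eq:Closed} really is $j$-independent — but the latter is already built into the proof of \cref{lem:QInvar}, so no genuinely new idea is needed. The lemma is essentially a packaging of \cref{lem:QInvar}, \cref{lem:FullOrthonormal}, and \cref{lem:tjk from symmetric subspace}.
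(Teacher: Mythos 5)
Your proposal is correct and follows essentially the same route as the paper's proof: commute $M\xp{n}$ past $P(\pi)$ in the definition of $\ket{(\mu,v,j)}$, reduce to the $i=0$ vectors, use $(M\x M)\ket{\Psi^-}=\det(M)\ket{\Psi^-}$ together with \cref{lem:tjk from symmetric subspace} to identify the coefficient matrix with $Q_\mu(M)$ from \cref{eq:Q-lambda}, and deduce the second identity from orthonormality (\cref{lem:FullOrthonormal}). The only difference is presentational (you route through the intermediate matrix $B_\mu(M)$ from \cref{eq:Closed} rather than computing $Q_\mu(M)_{wv}$ directly), which changes nothing of substance.
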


\begin{proof}
By \cref{lem:FullOrthonormal}, the vectors $\ket{(\lambda,w,i)}$ form an orthonormal basis of $\C^{2^n}$, so the second identity follows by multiplying both sides of the first identity by $\bra{(\lambda,w,i)}$.

To prove the first identity, note from \cref{eq:i} that
\begin{equation}
  M\xp{n} \ket{(\mu,v,j)}
= \sum_{\pi \in \S_n} \alpha_\pi^{\mu,j}
  P(\pi) M\xp{n} \ket{(\mu,v,0)}.
  \label{eq:Mn action}
\end{equation}
Recall from \cref{eq:0} that
$\ket{(\mu,v,0)} := \ket{s_{\mu_1-\mu_2}(v)} \x \ket{\Psi^-}\xp{\mu_2}$.
Since the singlet state satisfies
$(M \x M) \ket{\Psi^-} = \det(M) \ket{\Psi^-}$,
\cref{lem:tjk from symmetric subspace} implies that
\begin{align}
  M\xp{n} \ket{(\mu,v,0)}
 &= M\xp{\mu_1-\mu_2} \ket{s_{\mu_1-\mu_2}(v)} \x
    M\xp{2\mu_2} \ket{\Psi^-}\xp{\mu_2} \\
 &= \sum_{w=0}^{m_\mu}
    t^{\mu_1-\mu_2}_{wv}(M) \,
    \ket{s_{\mu_1-\mu_2}(w)} \x
    (\det M)^{\mu_2}
    \ket{\Psi^-}\xp{\mu_2} \\
 &= \sum_{w=0}^{m_\mu}
    Q_\mu(M)_{wv}
    \ket{(\mu,w,0)},
\end{align}
where we used the definition of $Q_\mu$ from \cref{eq:Q-lambda}.
The result follows by substituting this in \cref{eq:Mn action} and exchanging the two sums.
\end{proof}

\section{Representations of $2 \times 2$ matrices}\label{apx:2x2}

\renewcommand{\H}{\mathcal{H}}
\newcommand{\abcd}{\smx{a&b\\c&d}}
\newcommand{\ABCD}{\mx{a&b\\c&d}}

In this appendix, we briefly summarize the representation theory of $\SU{2}$ and $\U{2}$.
While representations of $\U{2}$ are defined as homomorphisms of the form $R: \U{2} \to \U{d}$ for some $d \geq 1$, the entries of the represented matrix $R(U)$ are actually polynomials (or more generally rational functions) in the entries of the original matrix $U$.
We take the perspective of \cite{Louck} and treat the matrix $U$ as if having symbolic entries $\abcd$ so that we can think of $R(U)$ as a matrix consisting of polynomials.
This has the advantage that we can represent not just unitary matrices but any $2 \times 2$ matrix, whether numeric or symbolic.
In particular, we can represent the set $\L(\C^2)$ of all $2 \times 2$ complex matrices, viewed as a monoid under matrix multiplication~\cite{YoungDiagrams}.

\subsection{Construction}\label{apx:Construction}

Let us recap an explicit construction of all irreducible representations of $\SU{2}$.
A concise derivation of this construction is given in \cite{Koornwinder}.
For more details, see \cite[p.~108]{Vilenkin}, \cite[p.~276]{VK1}, \cite[p.~279]{Wawrzynczyk}, or \cite[p.~90]{Zelobenko}.

For any function $\psi: \C^2 \to \C$ and matrix $M \in \L(\C^2)$, define the corresponding transformed function $\psi_M: \C^2 \to \C$ via
\begin{equation}
  \psi_M(v) := \psi(vM), \quad \forall v \in \C^2,
  \label{eq:psiM}
\end{equation}
where $v$ is treated as a row vector. More specifically, if $v = \mx{x&y}$ and $M = \ABCD$ then
\begin{equation}
  \psi_M(x,y) = \psi(ax+cy,bx+dy).
  \label{eq:psiMxy}
\end{equation}

Let us fix an integer\footnote{In physics, it is common to use the half-integer $\ell/2$ instead of $\ell$.} $\ell \geq 0$ and construct an $(\ell+1)$-dimensional representation of $\L(\C^2)$ that acts on $\H_\ell := \spn \set{x^\ell, x^{\ell-1} y, \dotsc, x y^{\ell-1}, y^\ell}$, the complex linear space of homogeneous degree-$\ell$ bivariate polynomials. We can treat any $\psi \in \H_\ell$ as a function $\psi: \C^2 \to \C$ by evaluating it at a given point $x,y \in \C$. For each $M \in \L(\C^2)$, define a map $T^\ell(M): \H_\ell \to \H_\ell$ that applies $M$ to the arguments of $\psi$:
\begin{equation}
  T^\ell(M): \psi \mapsto \psi_M,
  \quad \forall \psi \in \H_\ell.
  \label{eq:TsM}
\end{equation}
Note from \cref{eq:psiMxy} that $\psi_M$ is also homogeneous and of degree $\ell$.
Moreover, note from \cref{eq:psiM} that $(\psi_{M_1})_{M_2} = \psi_{M_1 M_2}$, since $(v M_1) M_2 = v (M_1 M_2)$ for all $v \in \C^2$ and $M_1, M_2 \in \L(\C^2)$.
This implies that $T^\ell(M_1 M_2) = T^\ell(M_1) T^\ell(M_2)$, meaning that $T^\ell$ is a representation of $\L(\C^2)$ on the $(\ell+1)$-dimensional space $\H_\ell$, where we view $\L(\C^2)$ as a monoid\footnote{Note that the map $T^\ell$ is \emph{not} linear, so it does \emph{not} constitute a representation of $\L(\C^2)$ as an algebra.} under matrix multiplication~\cite{YoungDiagrams}.

To compute the matrix entries of $T^\ell(M)$ in $\H_\ell$, we introduce the following basis consisting of normalized degree-$\ell$ monomials:
\begin{equation}
  \psi^\ell_k(x,y) := \binom{\ell}{k}^{1/2} x^{\ell-k} y^k
  \label{eq:psk}
\end{equation}
where $k \in \set{0,\dotsc,\ell}$.
In this basis,
\begin{equation}
  T^\ell(M) \psi^\ell_k = \sum_{j=0}^\ell t^\ell_{jk}(M) \psi^\ell_j,
\end{equation}
where $t^\ell_{jk}(M)$ are the matrix entries of $T^\ell(M)$.
Substituting the definition of $T^\ell(M)$ from \cref{eq:TsM,eq:psiMxy}, and $\psi^\ell_k$ from \cref{eq:psk},
\begin{equation}
  \binom{\ell}{k}^{1/2} (ax+cy)^{\ell-k} (bx+dy)^k
  = \sum_{j=0}^\ell t^\ell_{jk}\sof*{\ABCD} \binom{\ell}{j}^{1/2} x^{\ell-j} y^j.
  \label{eq:expansion}
\end{equation}
To obtain a formula for $t^\ell_{jk}$, we need to compare the coefficients at $x^{\ell-j} y^j$ on both sides.

Expanding the left-hand side of \cref{eq:expansion} yields
\begin{equation}
  \binom{\ell}{k}^{1/2}
  \sum_{r=0}^{\ell-k} \binom{\ell-k}{r} (ax)^r (cy)^{\ell-k-r}
  \sum_{s=0}^k \binom{k}{s} (bx)^s (dy)^{k-s}.
\end{equation}
To compare this with the right-hand side of \cref{eq:expansion}, we need to obtain the coefficient at $x^{\ell-j} y^j$ for $j \in \set{0,\dotsc,\ell}$.
If we pick $s := \ell-j-r$ in the second sum, we can guarantee that $s \geq 0$ and $s \leq k$ by restricting $r$ so that $\ell-j \geq r$ and $\ell-j-k \leq r$, respectively.
This produces
\begin{equation}
  \binom{\ell}{k}^{1/2}
  \sum_{r=\max\of{\ell-j-k,0}}^{\min\of{\ell-j,\ell-k}}
  \binom{\ell-k}{r} a^r c^{\ell-k-r}
  \binom{k}{\ell-j-r} b^{\ell-j-r} d^{j+k+r-\ell}.
\end{equation}
Comparing this with the coefficient at $x^{\ell-j} y^j$ on the right-hand side of \cref{eq:expansion}, we get the following formula for the matrix entries of $T^\ell(M)$:
\begin{align}
  t^\ell_{jk}\sof*{\ABCD}
  :=
  \sqrt{j!(\ell-j)!k!(\ell-k)!}
  \sum_{r=\max\of{\ell-j-k,0}}^{\min\of{\ell-j,\ell-k}}
  \frac{a^r b^{\ell-j-r} c^{\ell-k-r} d^{j+k+r-\ell}}{r!(\ell-j-r)!(\ell-k-r)!(j+k+r-\ell)!},
  \label{eq:tjk}
\end{align}
for any integer $\ell \geq 0$ and $j,k \in \set{0,\dotsc,\ell}$.
In physics literature, $T^\ell(M)$ is known as \emph{Wigner $D$-matrix} since it was first derived by Wigner (see eq.~(15.21) on p.~164 of \cite{Wigner}).
The basis in which these representations are written is known as the \emph{Gelfand--Tsetlin basis}, see \cite{GelfandII}, \cite[Section~11.3]{Louck}, \cite[Chapter~18]{VK3}, or \cite[Section~67]{Zelobenko}.

As advocated in \cite{Louck}, it is useful to think of $M$ as a symbolic matrix and of $T^\ell(M)$ as a collection of homogeneous polynomials in the entries of $M$. Since $T^\ell(M_1 M_2) = T^\ell(M_1) T^\ell(M_2)$ even for \emph{symbolic} $2 \times 2$ matrices $M_1$ and $M_2$, we can view $T^\ell$ as a representation of any $2 \times 2$ matrix group, such as $\SU{2}$, $\U{2}$, $\SL{2}$, $\GL{2}$, or even the full set of all $2 \times 2$ complex matrices $\L(\C^2)$, viewed as a monoid under matrix multiplication~\cite{YoungDiagrams}.

The first few values of $\ell$ yield the following expressions for $T^\ell(M)$:
\begin{align*}
  T^0\sof*{\ABCD} &= \mx{1}, \quad
  T^1\sof*{\ABCD} = \ABCD, \quad
  T^2\sof*{\ABCD} =
  \mx{
    a^2 & \sqrt{2} a b & b^2 \\
    \sqrt{2} a c & a d + b c & \sqrt{2} b d \\
    c^2 & \sqrt{2} c d & d^2
  }, \\
  T^3\sof*{\ABCD} &=
  \mx{
    a^3 & \sqrt{3} a^2 b & \sqrt{3} a b^2 & b^3 \\
    \sqrt{3} a^2 c & a (ad+2bc) & b (2ad+bc) & \sqrt{3} b^2 d \\
    \sqrt{3} a c^2 & c (2ad+bc) & d (ad+2bc) & \sqrt{3} b d^2 \\
    c^3 & \sqrt{3} c^2 d & \sqrt{3} c d^2 & d^3
  }, \\
  T^4\sof*{\ABCD} &=
  \mx{
    a^4 & 2 a^3 b & \sqrt{6} a^2 b^2 & 2 a b^3 & b^4 \\
    2 a^3 c & a^2 (ad+3bc) & \sqrt{6} ab (ad+bc) & b^2 (3ad+bc) & 2 b^3 d \\
    \sqrt{6} a^2 c^2 & \sqrt{6} ac (ad+bc) & a^2 d^2 + 4 abcd + b^2 c^2 & \sqrt{6} bd (ad+bc) & \sqrt{6} b^2 d^2 \\
    2 a c^3 & c^2 (3ad+bc) & \sqrt{6} cd (ad+bc) & d^2 (ad+3bc) & 2 b d^3 \\
    c^4 & 2 c^3 d & \sqrt{6} c^2 d^2 & 2 c d^3 & d^4 \\
  }.
\end{align*}
Since $T^1(M) = M$, this is also known as the \emph{defining representation}.

\subsection{Properties}\label{apx:Properties}

By construction, the map $T^\ell$ defined by \cref{eq:tjk} is a representation of $\L(\C^2)$, meaning that $T^\ell(M_1 M_2) = T^\ell(M_1) T^\ell(M_2)$ for all $M_1, M_2 \in \L(\C^2)$.
Let us denote its dimension by
\begin{equation}
  m_\ell := \ell+1
\end{equation}
and briefly discuss several of its properties.

First, notice from \cref{eq:tjk} that all matrix entries of $T^\ell(M)$ are homogeneous degree-$\ell$ polynomials in the entries of $M$.
Hence, $T^\ell(tM) = t^\ell T(M)$, for any $t \in \C$.

Next, if we assume that $M$ is diagonal, say $M = \smx{a&0\\0&d}$, then $\psi_M(x,y) = \psi(ax,dy)$ by \cref{eq:psiMxy}.
According to \cref{eq:psk}, $\psi^\ell_k(ax,dy) = a^{\ell-k} d^k \psi^\ell_k(x,y)$, so
\begin{equation}
  T^\ell\sof*{\mx{a&0\\0&d}} = \sum_{k=0}^\ell a^{\ell-k} d^k \proj{k}
  \label{eq:Tdiagonal}
\end{equation}
is diagonal as well.
In the special cases when $a = d = 1$ or when $a = 1$ and $d = 0$ we get
\begin{align}
  T^\ell(\id_2) &= \id_{m_\ell}, &
  T^\ell\of[\big]{\proj{0}_2} &= \proj{0}_{m_\ell},
  \label{eq:T(0)}
\end{align}
respectively, where the subscripts indicate dimensions and we use the convention that $0^0 = 1$.

Next, note that \cref{eq:tjk} is invariant under simultaneously exchanging $j \leftrightarrow k$ and $b \leftrightarrow c$:
\begin{equation}
  t^\ell_{jk}\sof*{\mx{a&b\\c&d}} =
  t^\ell_{kj}\sof*{\mx{a&c\\b&d}},
\end{equation}
which is ultimately thanks to our choice of normalization in \cref{eq:psk}.
This property is equivalent to $T^\ell(M\tp) = T^\ell(M)\tp$.
Since $T^\ell(\overline{M}) = \overline{T^\ell(M)}$, we also get $T^\ell(M\ct) = T^\ell(M)\ct$.
By combining these observations we conclude that if $M$ is unitary, \ie, $M M\ct = \id_2$, then $T^\ell(M) T^\ell(M)\ct = T^\ell(M M\ct) = T^\ell(\id_2) = \id_{m_\ell}$, meaning that $T^\ell(M)$ is also unitary.
In other words, not only is $T^\ell$ a representation of $\U{2}$ and $\SU{2}$, it is also a \emph{unitary} representation.

Combining unitarity with \cref{eq:Tdiagonal} we conclude that if $M$ has singular values $\set{a,d}$, \ie, $M = U \diag(a,d) V\ct$ for some unitaries $U,V \in \U{2}$, then
\begin{equation}
  T^\ell(M) = T^\ell(U) \diag \of{a^{\ell-k}d^k : k \in [m_\ell]} T^\ell(V\ct)
\end{equation}
has singular values $\set{a^{\ell-k}d^k : k \in [m_\ell]}$.
In particular, $T^\ell(\ketbra{\psi}{\varphi})$ is a rank-$1$ matrix for any $\ket{\psi}, \ket{\varphi} \in \C^2$.

The following proposition provides a simple formula for the image of a pure state $\proj{\psi}$ under $T^\ell$.
In physics, the resulting state $\proj{\psi^\ell}$ is known as \emph{coherent spin state} \cite{Radcliffe,Arecchi}.

\begin{prop}[Image of a pure state]\label{prop:Tl of psi}
Let $\ket{\psi} = \smx{a\\c} \in \C^2$.
Then, for any $\ell \geq 0$,
\begin{equation}
  T^\ell\of[\big]{\proj{\psi}}
  = \proj{\psi^\ell},
  \qquad \text{where} \qquad
  \ket{\psi^\ell}
  := \sum_{k=0}^\ell \psi^\ell_k(a,c) \ket{k} \in \C^{m_\ell}
  \label{eq:psil}
\end{equation}
and $\psi^\ell_k$ is the function from \cref{eq:psk}.
Moreover, $\ket{\psi^\ell}$ is a unit vector if and only if $\ket{\psi}$ is.
\end{prop}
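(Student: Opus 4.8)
The plan is to work straight from the \emph{definition} of $T^\ell$ as the action on homogeneous polynomials in \cref{eq:TsM,eq:psiMxy}, rather than from the explicit matrix-entry formula \cref{eq:tjk}. Writing $\ket{\psi} = \smx{a\\c}$, the first observation I would record is that the rank-one matrix $\proj{\psi} = \ket{\psi}\bra{\psi}$, viewed as a $2\times 2$ matrix, acts on a row vector $(x,y)$ by $(x,y)\,\proj{\psi} = (ax+cy)\,(\bar a,\bar c)$; that is, right-multiplication by $\proj{\psi}$ collapses every input to a scalar multiple of the fixed row vector $(\bar a,\bar c)$, the scalar being the linear form $ax+cy$.

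Given this, I would substitute into the definition: $T^\ell(\proj{\psi})$ sends the normalized monomial $\psi^\ell_k$ to the function $(x,y)\mapsto \psi^\ell_k\of[\big]{(ax+cy)\bar a,\,(ax+cy)\bar c}$. Since $\psi^\ell_k$ is homogeneous of degree $\ell$ (see \cref{eq:psk}), this equals $(ax+cy)^\ell\,\psi^\ell_k(\bar a,\bar c)$, and because the normalizing factor $\binom{\ell}{k}^{1/2}$ is real we may rewrite $\psi^\ell_k(\bar a,\bar c) = \overline{\psi^\ell_k(a,c)}$. The binomial theorem together with \cref{eq:psk} then gives $(ax+cy)^\ell = \sum_{j=0}^\ell \psi^\ell_j(a,c)\,\psi^\ell_j(x,y)$, so that $T^\ell(\proj{\psi})\,\psi^\ell_k = \overline{\psi^\ell_k(a,c)}\sum_{j} \psi^\ell_j(a,c)\,\psi^\ell_j$. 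Reading off the coefficient of $\psi^\ell_j$ identifies the matrix entry $t^\ell_{jk}(\proj{\psi}) = \psi^\ell_j(a,c)\,\overline{\psi^\ell_k(a,c)}$, which is exactly the $(j,k)$ entry of $\ketbra{\psi^\ell}{\psi^\ell}$ for $\ket{\psi^\ell} = \sum_{j=0}^\ell \psi^\ell_j(a,c)\ket{j}$; hence $T^\ell(\proj{\psi}) = \proj{\psi^\ell}$.

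For the last assertion I would simply compute, again by the binomial theorem, $\norm{\ket{\psi^\ell}}^2 = \sum_{k=0}^\ell \abs{\psi^\ell_k(a,c)}^2 = \sum_{k=0}^\ell \binom{\ell}{k}\abs{a}^{2(\ell-k)}\abs{c}^{2k} = \of[\big]{\abs{a}^2+\abs{c}^2}^\ell = \norm{\ket{\psi}}^{2\ell}$. For $\ell\geq 1$ this makes $\ket{\psi^\ell}$ a unit vector precisely when $\ket{\psi}$ is; in the degenerate case $\ell=0$ one has $\ket{\psi^0}=\ket{0}$ and $T^0(\proj{\psi})=\mx{1}$ regardless, but since the paper restricts to $\ell\geq 1$ (see \cref{sec:Refresher}) this can be handled by a parenthetical remark.

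I do not expect any genuine obstacle here: every step is a one-line computation. The only places that call for a little care are keeping track of the complex conjugates introduced by $\bra{\psi}$ and checking that they recombine into $\proj{\psi^\ell}$ (not some twisted variant), and the harmless $\ell=0$ edge case. As a cross-check I might also note the alternative derivation from the singular-value identity in \cref{apx:Properties}: applying it to the SVD $\proj{\psi} = \norm{\ket{\psi}}^2\,\widehat\psi\,\widehat\psi\ct$ (with $\widehat\psi = \ket{\psi}/\norm{\ket{\psi}}$) and using that the first column of $T^\ell(U)$ is $\sum_j \psi^\ell_j(\cdot)\ket{j}$ when the first column of $U$ is $(\cdot)$ yields the same conclusion; but the direct polynomial computation above is shorter and self-contained.
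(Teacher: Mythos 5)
Your proof is correct. It takes a mildly different route from the paper's: the paper factors $\proj{\psi} = M\proj{0}M\ct$ with $M\ket{0}=\ket{\psi}$, invokes multiplicativity of $T^\ell$ together with $T^\ell(M\ct)=T^\ell(M)\ct$ and $T^\ell\of[\big]{\proj{0}_2}=\proj{0}_{m_\ell}$ from \cref{apx:Properties}, and then only has to compute the image of the single monomial $\psi^\ell_0 = x^\ell$ under $T^\ell(M)$, which gives the first column $\ket{\psi^\ell}$. You instead apply $T^\ell(\proj{\psi})$ head-on to every basis monomial, using the rank-one structure of $\proj{\psi}$ (right-multiplication collapses $(x,y)$ to $(ax+cy)(\bar a,\bar c)$) plus homogeneity to factor out $(ax+cy)^\ell$; this reads off all matrix entries $t^\ell_{jk}(\proj{\psi})=\psi^\ell_j(a,c)\,\overline{\psi^\ell_k(a,c)}$ in one pass and avoids the auxiliary identities about $T^\ell$ under conjugate transposition. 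Both arguments pivot on the same binomial identity $(ax+cy)^\ell=\sum_j\psi^\ell_j(a,c)\psi^\ell_j(x,y)$, and the norm computation at the end is identical. A small point in your favor: you correctly flag that at $\ell=0$ the ``unit vector if and only if'' claim degenerates (since $\ket{\psi^0}$ is always normalized), which the paper's proof silently passes over; this is harmless given that the surrounding text assumes $\ell\geq 1$.
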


\begin{proof}
Let $M = \smx{a&b\\c&d}$ so that $M \ket{0} = \smx{a\\c} = \ket{\psi}$. Then using \cref{eq:T(0)} we get
\begin{equation}
  T^\ell\of[\big]{\proj{\psi}}
  = T^\ell\of[\big]{M \proj{0} M\ct}
  = T^\ell(M) \proj{0}_{m_\ell} T^\ell(M)\ct,
\end{equation}
so it only remains to show that $T^\ell(M) \ket{0}_{m_\ell} = \ket{\psi^\ell}_{m_\ell}$.
According to the definition of $T^\ell$ from \cref{eq:psiMxy,eq:TsM}, $T^\ell(M)$ acts on the monomial $\psi^\ell_0(x,y)$ as follows:
\begin{align}
  T^\ell(M) \psi^\ell_0(x,y)
 &= \psi^\ell_0(ax+cy,bx+dy) \\
 &= (ax+cy)^\ell \\
 &= \sum_{k=0}^\ell
    \binom{\ell}{k}
    (ax)^{\ell-k} (cy)^k \\
 &= \sum_{k=0}^\ell
    \psi^\ell_k(a,c)
    \psi^\ell_k(x,y),
\end{align}
where we used the definition of $\psi^\ell_k(x,y)$ from \cref{eq:psk} several times.
This identity is equivalent to $T^\ell(M) \ket{0}_{m_\ell} = \sum_{k=0}^\ell \psi^\ell_k(a,c) \ket{k}_{m_\ell} = \ket{\psi^\ell}$, which is what we had to show.
Finally, note that
\begin{equation}
  \braket{\psi^\ell}{\psi^\ell}
  = \sum_{k=0}^\ell
    \binom{\ell}{k}
    \of{\abs{a}^2}^{\ell-k}
    \of{\abs{c}^2}^k
    = \of*{\abs{a}^2 + \abs{c}^2}^\ell
    = \of{\braket{\psi}{\psi}}^\ell,
\end{equation}
meaning that $\ket{\psi^\ell}$ is normalized if and only if $\ket{\psi}$ is.
\end{proof}

\subsection{Irreducibility}

Since the representation $T^\ell$ has dimension $m_\ell = \ell+1$, it is clear that the $T^\ell$ are distinct for different values of $\ell$.
However, it is not obvious that they are irreducible, \ie, cannot be expressed as a direct sum.
For the sake of completeness, we provide a proof of this standard fact below.
For more details, see Proposition~8.2.3 on p.~285 of \cite{Wawrzynczyk} or Proposition~5.1 on p.~85 of \cite{BrockerDieck}.

\begin{lemma}\label{lem:irreducibility}
For any integer $\ell \geq 0$, the $\SU{2}$ representation $T^\ell$ defined by \cref{eq:tjk} is irreducible.
\end{lemma}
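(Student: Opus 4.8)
The plan is the standard weight–space argument, carried out with the explicit polynomial action of $T^\ell$. Let $W \subseteq \H_\ell$ be a nonzero subspace invariant under $T^\ell(U)$ for all $U \in \SU 2$; the goal is to show $W = \H_\ell$. First I would diagonalize the maximal torus: for $\theta \in \R$ the element $U_\theta := \diag(e^{i\theta},e^{-i\theta})$ lies in $\SU 2$, and by \cref{eq:Tdiagonal} its image is $T^\ell(U_\theta) = \sum_{k=0}^\ell e^{i(\ell-2k)\theta}\proj{k}$. Since the integers $\ell-2k$ for $k=0,\dots,\ell$ are pairwise distinct, any $\theta$ with $\theta/\pi$ irrational makes all the eigenvalues $e^{i(\ell-2k)\theta}$ distinct, so $T^\ell(U_\theta)$ is diagonalizable with one–dimensional eigenspaces $\C\ket{k}$. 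An invariant subspace of such an operator is a sum of eigenspaces, so $W = \spn\set{\ket{k} : k \in S}$ for some nonempty $S \subseteq \set{0,\dots,\ell}$; equivalently, $W$ is spanned by a subset of the normalized monomials $\psi^\ell_k$ from \cref{eq:psk}.

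The second step is to connect adjacent weight spaces through the infinitesimal action. Since $W$ is finite–dimensional and invariant under each one–parameter subgroup $t \mapsto T^\ell(\exp(tX))$, differentiating $t \mapsto T^\ell(\exp(tX))w$ at $t=0$ shows that $W$ is invariant under $dT^\ell(X) := \frac{d}{dt}\big|_{t=0} T^\ell(\exp(tX))$; as $W$ is a complex subspace and $X \mapsto dT^\ell(X)$ is complex–linear here (it is the differential at $I$ of the polynomial map $T^\ell$), $W$ is invariant under $dT^\ell(X)$ for every $X$ in the complexified Lie algebra $\mathfrak{sl}_2(\C)$. Taking $E := \smx{0&1\\0&0}$ and $F := \smx{0&0\\1&0}$, with $\exp(tE) = \smx{1&t\\0&1}$ and $\exp(tF) = \smx{1&0\\t&1}$, and differentiating the defining action \cref{eq:psiMxy,eq:TsM}, one gets $dT^\ell(E) = x\,\partial_y$ and $dT^\ell(F) = y\,\partial_x$. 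Evaluating on the basis \cref{eq:psk} gives $dT^\ell(E)\,\psi^\ell_k = \sqrt{k(\ell-k+1)}\,\psi^\ell_{k-1}$ and $dT^\ell(F)\,\psi^\ell_k = \sqrt{(k+1)(\ell-k)}\,\psi^\ell_{k+1}$, and both coefficients are nonzero whenever the target index still lies in $\set{0,\dots,\ell}$.

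To conclude, pick any $k_0 \in S$, so $\psi^\ell_{k_0} \in W$. Repeated application of $dT^\ell(E)$ produces nonzero multiples of $\psi^\ell_{k_0-1},\psi^\ell_{k_0-2},\dots,\psi^\ell_0$, and repeated application of $dT^\ell(F)$ produces nonzero multiples of $\psi^\ell_{k_0+1},\dots,\psi^\ell_\ell$, all of which lie in $W$. Hence $W$ contains the full basis $\set{\psi^\ell_0,\dots,\psi^\ell_\ell}$, so $W = \H_\ell$, which shows $T^\ell$ has no nontrivial invariant subspace.

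The only point that genuinely needs care is the passage from invariance under $T^\ell(\SU 2)$ to invariance under the complexified Lie algebra action; everything else is routine bookkeeping with the explicit formula \cref{eq:tjk} or the polynomial description \cref{eq:psiMxy}. One could avoid the Lie algebra altogether by instead expanding $T^\ell(U)\,\psi^\ell_{k_0}$ over a two–parameter family of rotations and diagonal elements and invoking linear independence of the resulting trigonometric coefficient functions, but the infinitesimal argument is shorter and is the route I would take.
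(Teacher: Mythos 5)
Your proof is correct, and while it shares its two main ingredients with the paper's proof --- the diagonal torus elements from \cref{eq:Tdiagonal} and the infinitesimal raising/lowering operators obtained by differentiating one-parameter subgroups --- it organizes them into a genuinely different argument. The paper works at the level of the generated matrix algebra: it builds the diagonal projectors $\proj{j}$ by a discrete Fourier average over diagonal elements, forms the ladder operators $E_\pm^\ell = iX^\ell \pm Y^\ell$ inside the (automatically $\C$-linear) algebra generated by $T^\ell(\SU{2})$, and multiplies normalized powers of these shifts to exhibit every elementary matrix $\ketbra{i}{j}$, thereby proving the stronger statement that the algebra is all of $\L(\C^{m_\ell})$ --- a fact the paper reuses elsewhere (e.g.\ in \cref{prop:overlap}, via Burnside's theorem). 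You instead run the textbook weight-space argument directly on an invariant subspace $W$: a torus element with $\theta/\pi$ irrational has distinct eigenvalues $e^{i(\ell-2k)\theta}$, forcing $W$ to be a coordinate subspace, and the ladder operators $dT^\ell(E) = x\,\partial_y$ and $dT^\ell(F) = y\,\partial_x$, whose matrix elements $\sqrt{k(\ell-k+1)}$ and $\sqrt{(k+1)(\ell-k)}$ you compute correctly and which never vanish in range, then force $W = \mathcal{H}_\ell$. The one step your route needs that the paper's does not is the passage from $\mathfrak{su}(2)$-invariance to $\mathfrak{sl}_2(\C)$-invariance of $W$; your justification (the differential of the polynomial map $T^\ell$ at the identity is $\C$-linear, and $W$ is a complex subspace) is exactly right, whereas the paper sidesteps the issue because taking $\C$-linear combinations of $X^\ell$ and $Y^\ell$ is free inside a complex algebra. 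Both arguments are complete; yours is the more standard proof of irreducibility per se, the paper's yields the full-matrix-algebra conclusion as a byproduct.
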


\newcommand{\A}{\mathscr{A}}

\begin{proof}
Fix the value of $\ell$ and consider the matrix algebra $\A^\ell$ generated by $T^\ell(U)$ as $U$ ranges over $\SU{2}$, \ie,
\begin{equation}
  \A^\ell := \langle T^\ell(U) : U \in \SU{2} \rangle.
\end{equation}
In particular, $\A^\ell$ also includes $T^\ell(U)$ for infinitesimally small rotations $U$.
To conclude that $T^\ell$ is irreducible, it suffices to show that $\A^\ell$ is the full matrix algebra $\L(\C^{m_\ell})$.
We will show that $\A^\ell = \L(\C^{m_\ell})$ by explicitly expressing each elementary matrix $\ketbra{j}{k}$ where $j,k \in \set{0,\dotsc,\ell}$ using the elements of $\A^\ell$.

\newcommand{\sx}{\sin \frac{t}{2}}
\newcommand{\cx}{\cos \frac{t}{2}}

Letting $\omega := \exp(\frac{2 \pi i}{m_\ell})$, note from \cref{eq:Tdiagonal} that,
for any $j \in \set{0,\dotsc,\ell}$,
\begin{align}
  \frac{1}{m_\ell} \sum_{m=0}^\ell \omega^{-m j} T^\ell\sof*{\mx{1&0\\0&\omega^m}}
  &= \frac{1}{m_\ell} \sum_{m=0}^\ell \omega^{-m j} \sum_{k=0}^\ell \omega^{mk} \proj{k} \\
  &= \sum_{k=0}^\ell \of*{\frac{1}{m_\ell} \sum_{m=0}^\ell \omega^{m(k-j)}} \proj{k} \\
  &= \proj{j},
\end{align}
which constitutes all diagonal elements of $\L(\C^{m_\ell})$.

An explicit construction of off-diagonal elements of $\L(\C^{m_\ell})$ is more involved (indeed, diagonal elements are a special case of this).
Let $X := \smx{0&1\\1&0}$ and $Y := \smx{0&-i\\i&0}$ denote the Pauli matrices and let
\begin{align}
  R_X(t)
 &:= \exp \of*{-i \tfrac{t}{2} X}
   = \mx{\cx & -i\sx \\ -i\sx & \cx}, \\
  R_Y(t)
 &:= \exp \of*{-i \tfrac{t}{2} Y}
   = \mx{\cx & -\sx \\ \sx & \cx}
\end{align}
denote the $\SU{2}$ elements that rotate the Bloch sphere by angle $t$ around the $x$ and $y$ axis, respectively.
Let us denote the infinitesimal versions of their $T^\ell$-representations by
\begin{align}
  X^\ell &:= \lim_{t \to 0} \frac{d}{dt} T^\ell(R_X(t)), &
  Y^\ell &:= \lim_{t \to 0} \frac{d}{dt} T^\ell(R_Y(t)).
\end{align}
Since $T^\ell(R_X(0)) = T^\ell(\id_2) = \id_{m_\ell}$, from the definition of a derivative we get that
\begin{equation}
  X^\ell = \lim_{\varepsilon \to 0} \frac{T^\ell(R_X(\varepsilon))-\id_{m_\ell}}{\varepsilon},
\end{equation}
and similarly for $Y^\ell$. This shows that $X^\ell$ and $Y^\ell$ are limits of elements that belong to the algebra $\A^\ell$ generated by $T^\ell(U)$.
Since $\A^\ell$ is a linear subspace of $\L(\C^{m_\ell})$, it is closed and thus $X^\ell$ and $Y^\ell$ themselves belong to $\A^\ell$.

We can use $X^\ell$ and $Y^\ell$ to generate more elements that belong to $\A^\ell$.
In particular, let
\begin{equation}
  E_{\pm}^\ell := i X^\ell \pm Y^\ell.
\end{equation}
Using \cref{eq:tjk}, one can obtain the following explicit formula for their matrix entries:
\begin{equation}
  E_+^\ell
  = (E_-^\ell)\tp
  = \sum_{k=0}^{\ell-1} \sqrt{(\ell-k)(k+1)} \; \ketbra{k+1}{k},
\end{equation}
\cf, eq.~(8.4.6) on p.~296 of \cite{Wawrzynczyk}.
In other words, $E_\pm^\ell \ket{k} \propto \ket{k \pm 1}$ whenever $k \pm 1 \in \set{0,\dotsc,\ell}$, and $E_\pm^\ell \ket{k} = 0$ otherwise.
Hence, up to a constant factor, $E_\pm^\ell$ corresponds to shifting $\ket{k}$ by one in the corresponding direction.

For any $j \in \set{1,\dotsc,\ell}$, let us denote the $j$-th matrix power of $E_\pm^\ell$ (with an appropriate normalization) by
\begin{equation}
  S_\pm^\ell(j) := \sqrt{\frac{(\ell-j)!}{\ell!j!}} (E_\pm^\ell)^j.
  \label{eq:Sdef}
\end{equation}
Since $E_\pm^\ell$ are singular, we also set $S_\pm^\ell(0) := \id_{m_\ell}$ by hand.
Then $S_\pm^\ell(j) \ket{k} \propto \ket{k \pm j}$ whenever $k \pm j \in \set{0,\dotsc,\ell}$, and $S_\pm^\ell(j) \ket{k} = 0$ otherwise. In other words, $S_\pm^\ell(j)$ performs a shift by $\pm j$, up to an overall constant factor.
The normalization in \cref{eq:Sdef} is chosen so that, for any $j \in \set{0,\dotsc,\ell}$,
\begin{align}
  S_+^\ell(j) \ket{0   } &= \ket{j}, &
  S_-^\ell(j) \ket{\ell} &= \ket{\ell-j},
  \label{eq:Sj}
\end{align}
\ie, no additional factors are incurred when shifting the first and the last vector, respectively.
By transposing the first formula and noting that $\smash[b]{S_+^\ell(j)}\tp = S_-^\ell(j)$, we also get
\begin{equation}
  \bra{0} S_-^\ell(j) = \bra{j}.
  \label{eq:S0}
\end{equation}
Finally, note that the shifts by $\ell$ have the following particularly simple matrix form:
\begin{equation}
  S_+^\ell(\ell)
  = \smash[b]{S_-^\ell(\ell)}\tp
  = \ketbra{\ell}{0}.
  \label{eq:Sl}
\end{equation}
By combining \cref{eq:Sj,eq:S0,eq:Sl} we can express any elementary matrix as follows:
\begin{equation}
  S_-^\ell(\ell-i)
  S_+^\ell(\ell)
  S_-^\ell(j)
  = \ketbra{i}{j},
\end{equation}
where $i,j \in \set{0,\dotsc,\ell}$ are arbitrary.
This implies that $\A^\ell = \L(\C^{m_\ell})$, \ie, $\A^\ell$ is the whole matrix algebra, and hence $T^\ell$ is irreducible.
\end{proof}

\subsection{Completeness}

It turns out that every irreducible representation of $\SU{2}$ is equivalent to one of the representations $T^\ell$ constructed in \cref{apx:Construction}.

\begin{theorem}
Any irreducible representation of $\SU{2}$ is equivalent to $T^\ell$, for some integer $\ell \geq 0$.
\end{theorem}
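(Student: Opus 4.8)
The plan is to carry out the classical highest-weight classification at the level of the Lie algebra, and then match the outcome against the explicit family $\{T^\ell\}_{\ell\ge 0}$ already constructed. Let $R\colon \SU 2 \to \U V$ be a finite-dimensional irreducible (continuous) representation with $V\cong\C^d$. Since $\SU 2$ is compact, $R$ may be taken unitary, and since $\SU 2$ is connected and simply connected, $R$ is irreducible if and only if its differential $dR$ is, with $R$ recovered from $dR$ by exponentiation; so it is enough to classify the irreducible representations of $\mathfrak{su}(2)$, equivalently of its complexification $\mathfrak{sl}(2,\C)$.

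First I would fix the standard generators $h,e,f$ of $\mathfrak{sl}(2,\C)$ satisfying $[h,e]=2e$, $[h,f]=-2f$, $[e,f]=h$, and set $H:=dR(h)$, $E:=dR(e)$, $F:=dR(f)$ — these are exactly the infinitesimal raising and lowering operators already used in the proof of \cref{lem:irreducibility}. Unitarity makes $H$ Hermitian, hence diagonalizable, and the fact that the image of the diagonal torus of $\SU 2$ returns to $\id_V$ after one full loop forces the eigenvalues of $H$ to be integers. Let $\mu$ be the largest eigenvalue and $v_0\ne 0$ a corresponding eigenvector; then $Ev_0=0$ by maximality, since $Ev_0$ would otherwise have weight $\mu+2$. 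Putting $v_k:=F^kv_0$, an elementary induction using $[E,F]=H$ gives $Hv_k=(\mu-2k)v_k$ and $Ev_k=k(\mu-k+1)\,v_{k-1}$. Finite-dimensionality yields a least $N$ with $v_{N+1}=0\ne v_N$, and applying $E$ to $v_{N+1}=0$ gives $(N+1)(\mu-N)v_N=0$, so $\mu=N$ is a non-negative integer. The $v_0,\dots,v_\mu$ are linearly independent (distinct $H$-eigenvalues) and span a subspace invariant under $H,E,F$; by irreducibility this is all of $V$, so $d=\mu+1$ and the matrices of $H,E,F$ in the basis $(v_k)$ are determined solely by $\mu$. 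Thus, up to equivalence, there is at most one irreducible $\SU 2$-representation in each dimension.

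To finish, I would invoke \cref{lem:irreducibility}: for every integer $\ell\ge 0$ the representation $T^\ell$ of \cref{apx:Construction} is irreducible of dimension $m_\ell=\ell+1$. Combined with the uniqueness above, the given $R$ of dimension $d$ must be equivalent to $T^{d-1}$. As a sanity check, differentiating $T^\ell$ and using \eqref{eq:Tdiagonal} shows that the differential sends $h$ to $\diag(\ell,\ell-2,\dots,-\ell)$, highest weight $\mu=\ell$, as it should. I expect no genuine obstacle: the $\mathfrak{sl}(2,\C)$ weight computation is routine and the matching with $T^\ell$ is immediate. The only points deserving care are the two places where the group structure, rather than mere linear algebra, enters — the diagonalizability and integrality of the spectrum of $H$ (unitarizability together with compactness of the torus), and the lift from a Lie-algebra classification back to a statement about $\SU 2$ itself (connectedness and simple-connectedness) — and I would make sure both are stated explicitly rather than swept under the rug.
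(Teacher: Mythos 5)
Your proof is correct. Note, however, that the paper does not actually prove this theorem: it records it as a standard fact and refers the reader to Zelobenko, Wawrzy\'nczyk, and Br\"ocker--tom Dieck, so there is no in-paper argument to compare against. What you have written is the classical highest-weight classification, and all the steps check out: the reduction to $\mathfrak{sl}(2,\C)$ is justified by unitarizability (compactness) and by the fact that for a connected group a subspace is $G$-invariant iff it is $\mathfrak{g}$-invariant; the integrality of the spectrum of $H$ follows correctly from $e^{2\pi i H}=\id_V$; the recursion $Ev_k=k(\mu-k+1)v_{k-1}$ is right (one verifies $Ev_{k+1}=(k+1)(\mu-k)v_k$ by the inductive step using $[E,F]=H$), and evaluating it at $v_{N+1}=0$ forces $\mu=N\in\Z_{\ge 0}$; linear independence via distinct $H$-eigenvalues and invariance of the span give uniqueness in each dimension; and \cref{lem:irreducibility} supplies an irreducible $T^\ell$ of dimension $\ell+1$ for every $\ell$, so the two lists coincide. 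Your closing sanity check against \eqref{eq:Tdiagonal}, giving highest weight $\ell$ for $T^\ell$, is a nice touch. The only points worth making fully explicit in a final write-up are the ones you already flag yourself (automatic smoothness of finite-dimensional continuous representations so that $dR$ exists, and the group-to-algebra passage); with those stated, the argument is complete and self-contained, which is arguably an improvement over the paper's citation-only treatment.
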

\noindent
For a proof, see Theorem~3 on p.~93 of \cite{Zelobenko},
Corollary~8.6.2 on p.~301 of \cite{Wawrzynczyk},
or Proposition~5.3 on p.~86 of \cite{BrockerDieck}.
This constitutes a complete characterization of all irreducible representations of $\SU{2}$.

\subsection{Representations of $\U{2}$}\label{apx:U2}

Let us start with $1$-dimensional representations of $\U{2}$.
Any matrix in $\U{2}$ can be written as $e^{i\varphi} U$, for some $\varphi \in \R$ and $U \in \SU{2}$. Since $e^{i\varphi} \in \U{1}$, it is useful to first understand the irreducible representations of $\U{1}$. They correspond to non-zero rational functions $R: \C \to \C$ such that $R(a) R(b) = R(ab)$, for all $a,b \in \C^\times$. Any such function is of the form $R(a) := a^r$,
for some $r \in \Z$, which exhausts all irreducible representations of $\U{1}$.
If $D: \L(\C^2) \to \C$ is a $1$-dimensional representation of $\U{2}$ then so is $R \circ D$.
In particular, the determinant is a $1$-dimensional representation since $\det(MN) = \det(M) \det(N)$. Composing it with $R$ we get
$D^r(M) := (\det M)^r$
where $r \in \Z$. This exhausts all $1$-dimensional (and thus irreducible) representations of $\U{2}$.

Any representation can be entry-wise multiplied by a $1$-dimensional representation to obtain a new representation. In particular, we can multiply any of the representations $T^\ell$ constructed in \cref{apx:Construction} by $D^r$ to obtain
\begin{equation}
  T^{r,\ell}(M) := (\det M)^r T^\ell(M).
  \label{eq:Trl}
\end{equation}
For each $\ell \geq 0$, this exhausts all $(\ell+1)$-dimensional irreducible representations of $\U{2}$ as we range $r$ over $\Z$.
Consequently, any irreducible representation of $\U{2}$ is of the form $T^{r,\ell}$, for some integers $r \in \Z$ and $\ell \geq 0$ \cite{Stembridge,Stanley}.
Note from \cref{eq:tjk} that all entries of $T^{r,\ell}(M)$ are homogeneous polynomials (or rational functions when $r < 0$) of degree $2r + \ell$ in the entries of $M$.

It is useful to establish a correspondence between the parameters $r,\ell$ of the representation $T^{r,\ell}$ and partitions $\lambda = (\lambda_1,\lambda_2)$ defined in \cref{eq:pt}. When $r \geq 0$, the two are related as follows:
\begin{equation}
  (\lambda_1, \lambda_2)
  = (r+\ell,r).
  \label{eq:lambda vs lr}
\end{equation}
Note that $\lambda$ is a partition of $n = \lambda_1 + \lambda_2 = 2r + \ell$, which is also the total number of boxes in the corresponding Young diagram in \cref{fig:lambda}.

Given $\lambda \pt n$, we can recover $r = \lambda_2$ and $\ell = \lambda_1 - \lambda_2$ via \cref{eq:lambda vs lr}, so it is natural to associate to $\lambda$ the representation $T^{\lambda_2,\lambda_1-\lambda_2}$. We denote this representation by
\begin{equation}
  Q_{\lambda}(M)
  := (\det M)^{\lambda_2}
     T^{\lambda_1-\lambda_2}(M),
  \label{eq:Q-lambda}
\end{equation}
where the entries of $T^{\lambda_1-\lambda_2}(M)$ are given in \cref{eq:tjk}. These are precisely the representations that appear in the decomposition of $M\xp{n}$ in Schur basis, see \cref{eq:QP} in \cref{sec:SchurWeyl}.
Note that representations $T^{r,\ell}$ with $r < 0$ do not appear in this decomposition since the corresponding $\lambda$ is not a partition.

The matrix entries of $Q_\lambda(M)$ are homogeneous polynomials of degree $n = \lambda_1 + \lambda_2$ in the entries of $M$. In particular, they can be evaluated for any $M \in \L(\C^2)$, so $Q_\lambda$ is not only an irreducible representation of $\U{2}$ but also of $\L(\C^2)$ as a monoid under matrix multiplication~\cite{YoungDiagrams}.
Note from \cref{eq:Q-lambda} that adding or removing a full column of boxes from the diagram of $\lambda$ corresponds to multiplying or dividing the corresponding representation by $\det(M)$.
This observation extends also to $\U{d}$, for any $d \geq 1$ \cite{Stembridge,Stanley}.

\subsection{Clebsch--Gordan transform}

Recall from \cref{apx:Properties} that the matrix entries of $T^\ell(M)$ are homogeneous degree-$\ell$ polynomials in the entries of $M$.
Hence, the entries of $M \x T^\ell(M)$ are also homogeneous but of degree $\ell+1$.
It turns out that they can be expressed as linear combinations of the entries of $T^{1,\ell-1}(M)$ and $T^{0,\ell+1}(M)$, which are also polynomials of degree $\ell+1$.
The relationship between the two is expressed via Clebsch--Gordan transform, see \cite[p.~165]{Bohm}, \cite[p.~174]{Vilenkin}, \cite[p.~498]{VK1}, \cite[p.~304]{Wawrzynczyk}.

For any integer $\ell \geq 0$, the \emph{Clebsch--Gordan transform} $\CG_\ell \in \U{2\ell+2}$ is a map whose input and output spaces are interpreted as follows:
\begin{equation}
  \CG_\ell: \C^2 \x \C^{\ell+1} \to \C^\ell \+ \C^{\ell+2}.
  \label{eq:Cdims}
\end{equation}
Its defining property is that, for all $M \in \L(\C^2)$,
\begin{equation}
  \CG_\ell \of[\big]{M \x T^\ell(M)} \CG_\ell\ct
  = T^{1,\ell-1}(M) \+ T^{0,\ell+1}(M).
  \label{eq:CG decomposition}
\end{equation}
Equivalently, in terms of $\lambda$ and $Q_\lambda$,
\begin{equation}
  \CG_{\lambda_1-\lambda_2} \of[\big]{M \x Q_{(\lambda_1,\lambda_2)}(M)} \CG_{\lambda_1-\lambda_2}\ct
  = Q_{(\lambda_1,\lambda_2+1)}(M)
 \+ Q_{(\lambda_1+1,\lambda_2)}(M),
  \label{eq:CG decomposition for Q}
\end{equation}
which holds for any $\lambda \pt n$.

Up to certain phases, such map $\CG_\ell$ is essentially unique.
If we let
\begin{equation}
  R^\ell(k) := \frac{1}{\sqrt{\ell+1}}
  \mx{
    \sqrt{k} & \sqrt{\ell+1-k} \\
   -\sqrt{\ell+1-k} & \sqrt{k}
  }
  \label{eq:Rlk}
\end{equation}
and denote its matrix entries by $R^\ell_{ab}(k)$ where $a,b \in \set{0,1}$, we can express $\CG_\ell$ as follows:
\begin{equation}
  \CG_\ell :=
  \sum_{i=0}^1
  \sum_{j=0}^\ell
  \of[\Big]{
    R^\ell_{i0}(j+i) \ket{j+i-1}_\ell \+
    R^\ell_{i1}(j+i) \ket{j+i}_{\ell+2}
  }
  \bra{i}_2 \x \bra{j}_{\ell+1},
  \label{eq:Cl}
\end{equation}
where the subscripts indicate the dimension of each state,
\cf~\cref{eq:Cdims}.
Since $R^\ell_{00}(0) = 0$, the term with $\ket{i+j-1}_\ell$ vanishes when $i = j = 0$, so we do not need to worry about interpreting the resulting state $\ket{-1}_\ell$.
Similarly, $R^\ell_{01}(\ell+1) = 0$, so we can also ignore $\ket{\ell}_\ell$.
It follows from $R^\ell(k) R^\ell(k)\tp = \id_2$ that $\CG_\ell \CG_\ell\tp = \id_{2\ell+2}$, meaning that $\CG_\ell$ is orthogonal and hence unitary.

This formula can be obtained from the known expressions of Clebsch--Gordan coefficients in the special case when one of the representations is the defining (\ie, $\ell = 1$) representation,
see \cite[p.~173]{Bohm}, \cite[p.~187]{Vilenkin}, \cite[p.~512]{VK1}.
We have adapted them to the case when the defining representation is in the first register instead of the second, and for using $\set{0,\dotsc,\ell}$ instead of $\set{-\ell/2,\dotsc,\ell/2}$ to label the standard basis vectors of $\C^{\ell+1}$.
Finally, we are using a slightly different phase convention in \cref{eq:Rlk} to match with our derivations in \cref{apx:SymCG}.

For completeness, in \cref{apx:SymCG} we provide a self-contained derivation of Clebsch--Gordan transform based on a decomposition of $\C^2 \x \Sym{\ell}$.

\subsection{Dual Clebsch--Gordan transform}\label{apx:Dual CG}

\newcommand{\iY}{\mx{0&1\\-1&0}}
\newcommand{\miY}{\mx{0&-1\\1&0}}

Let us denote the \emph{dual} of the defining representation $T^1(M) = M$ by
\begin{equation}
  M^* := (M^{-1})\tp.
\end{equation}
Note that if $U$ is unitary then $U^{-1} = U\ct$ so $U^* = (U\ct)\tp = \overline{U}$.
The \emph{dual} Clebsch--Gordan transformation $\DCG_\ell$ provides a decomposition of $M^* \x T^\ell(M)$ analogous to \cref{eq:CG decomposition}.

Let us derive a formula for $\DCG_\ell$ in terms of $\CG_\ell$. First, we can relate $M^*$ and $M$ by noting that
\begin{equation}
  \sof[\Bigg]{\ABCD^*}\tp
  = \ABCD^{-1}
  = \frac{1}{ad-bc} \mx{d&-b\\-c&a}.
\end{equation}
Hence, for any invertible $M \in \L(\C^2)$,
\begin{equation}
  \det(M) \, M^*
  = \miY M \miY\ct.
\end{equation}
This allows expressing $M^* \x T^\ell(M)$ in terms of $M \x T^\ell(M)$:
\begin{equation}
  \det(M)
  \of[\big]{M^* \x T^\ell(M)}
= \sof*{\miY \x \id_{\ell+1}} \cdot
  \of[\big]{M \x T^\ell(M)} \cdot
  \sof*{\miY \x \id_{\ell+1}}\ct.
\end{equation}

To obtain the desired decomposition of $M^* \x T^\ell(M)$, we can first undo the undesired basis change and then apply the regular Clebsch--Gordan transform $\CG_\ell$.
In other words, let
\begin{equation}
  \DCG_\ell := \CG_\ell \cdot \sof*{\iY \x \id_{\ell+1}},
  \label{eq:Cl dual and Cl}
\end{equation}
where the $2 \times 2$ matrix is the same as $S$ in \cref{eq:S} and hence satisfies $S (-S) = \id_2$.
Then, according to \cref{eq:CG decomposition},
\begin{equation}
  \DCG_\ell \det(M) \of[\big]{M^* \x T^\ell(M)} \DCGct_\ell
  = T^{1,\ell-1}(M) \+ T^{0,\ell+1}(M).
\end{equation}
After cancelling $\det(M)$ from both sides, we get the following dual of \cref{eq:CG decomposition}:
\begin{equation}
  \DCG_\ell \of[\big]{M^* \x T^\ell(M)} \DCGct_\ell
  = T^{0,\ell-1}(M) \+ T^{-1,\ell+1}(M),
  \label{eq:dual CG decomposition}
\end{equation}
which justifies calling $\DCG_\ell$ the \emph{dual Clebsch--Gordan transform}. Equivalently, similar to \cref{eq:CG decomposition for Q}, we can also express this in terms of $\lambda$ and $Q_\lambda$:
\begin{equation}
  \DCG_{\lambda_1-\lambda_2} \of[\big]{M^* \x Q_{(\lambda_1,\lambda_2)}(M)} \DCGct_{\lambda_1-\lambda_2}
  = Q_{(\lambda_1-1,\lambda_2)}(M)
 \+ Q_{(\lambda_1,\lambda_2-1)}(M),
  \label{eq:dual CG decomposition for Q}
\end{equation}
which holds for any $\lambda \pt n$.
Noting that, for any $i \in \set{0,1}$,
\begin{equation}
  \bra{i} \iY = (-1)^i \bra{1-i},
\end{equation}
we can use \cref{eq:Cl dual and Cl} to modify \cref{eq:Cl} and obtain the following explicit formula for $\DCG_\ell$:
\begin{equation}
  \DCG_\ell =
  \sum_{i=0}^1
  \sum_{j=0}^\ell
  \of[\Big]{
    R^\ell_{i0}(j+i) \ket{j+i-1}_\ell \+
    R^\ell_{i1}(j+i) \ket{j+i}_{\ell+2}
  }
  (-1)^i \bra{1-i}_2 \x \bra{j}_{\ell+1}.
  \label{eq:DualCl}
\end{equation}

\section{Symmetric subspace}\label{apx:Sym}

In this appendix, we introduce the $\ell$-qubit symmetric subspace and discuss its connections with irreducible representations of $\SU{2}$, Clebsch--Gordan transform, and the two extremal unitary-covariant channels $\Phi^\ell_1$ and $\Phi^\ell_2$ introduced in \cref{sec:Extremal} which correspond to partial trace and universal $\NOT$.

For any $\ell \geq 0$, we denote the $\ell$-qubit \emph{symmetric subspace} by
\begin{equation}
  \Sym{\ell} := \spn \set[\big]{\ket{s_\ell(w)} : w \in \set{0,\dotsc,\ell}}
  \subseteq \of{\C^2}\xp{\ell},
\end{equation}
where $\ket{s_\ell(w)}$ denotes the $\ell$-qubit symmetric state of Hamming weight $w$, see \cref{eq:s}.

\subsection{Projector onto the symmetric subspace}\label{apx:Sym projector}

Since $\braket{s_\ell(w)}{s_\ell(w')} = \delta_{ww'}$, the set
$\set[\big]{\ket{s_\ell(w)} : w \in \set{0,\dotsc,\ell}}$
is in fact an orthonormal basis for the $\ell$-qubit symmetric subspace, hence the projector onto this subspace is given by
\begin{equation}
  \Pi_{\Sym{\ell}}
  := \sum_{w=0}^\ell \proj{s_\ell(w)}.
  \label{eq:PiSym}
\end{equation}
There are several equivalent ways of writing this projector.

\begin{lemma}\label{lem:PiSym3}
For any $\ell \geq 0$, the following three expressions for the projector $\Pi_{\Sym{\ell}}$ onto the $\ell$-qubit symmetric subspace are equivalent:
\begin{align}
  \Pi_{\Sym{\ell}}
:= \sum_{w=0}^\ell \proj{s_\ell(w)}
 = \frac{1}{\ell!} \sum_{\pi \in \S_\ell} P(\pi)
 = (\ell+1) \int \of[\big]{\proj{\psi}}\xp{\ell} \, d\psi,
\end{align}
where $\ket{s_\ell(w)}$ are the symmetric states from \cref{eq:s},
$P(\pi) \in \U{2^\ell}$ are the $\ell$-qubit permutations defined in \cref{eq:P},
and $d\psi$ is the uniform measure on the Bloch sphere, see \cref{eq:dpsi}.
\end{lemma}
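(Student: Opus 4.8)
The plan is to prove the three expressions for $\Pi_{\Sym{\ell}}$ are equal by a pair of identifications: first showing that the symmetrizer $\frac{1}{\ell!}\sum_\pi P(\pi)$ equals the projector $\sum_w \proj{s_\ell(w)}$, and then showing that the integral over coherent states $(\ell+1)\int(\proj{\psi})\xp{\ell}d\psi$ equals the same projector.

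For the first identification, I would argue as follows. The operator $A := \frac{1}{\ell!}\sum_{\pi\in\S_\ell}P(\pi)$ is a group average, hence satisfies $P(\sigma)A = A P(\sigma) = A$ for all $\sigma\in\S_\ell$, and $A^2 = A$ (averaging is idempotent) and $A\ct = A$ (since $P(\pi)\ct = P(\pi^{-1})$ and the sum ranges over the whole group). Thus $A$ is an orthogonal projector. Its range is exactly the set of $P(\pi)$-invariant vectors, which is by definition the symmetric subspace $\Sym{\ell}$; indeed any vector fixed by all $P(\pi)$ lies in $\Sym{\ell}$ because the standard-basis expansion must have equal coefficients on all permutations of a given bit string, and conversely each $\ket{s_\ell(w)}$ is manifestly $P(\pi)$-invariant. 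Since $\Pi_{\Sym{\ell}} = \sum_w \proj{s_\ell(w)}$ is the orthogonal projector onto that same subspace (the $\ket{s_\ell(w)}$ being orthonormal by \cref{eq:slw}), the two coincide.

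For the second identification, write $B := (\ell+1)\int(\proj{\psi})\xp{\ell}\,d\psi$. Since each $(\proj{\psi})\xp{\ell}$ is supported on $\Sym{\ell}$ (it is a tensor power, hence permutation-invariant), so is $B$, i.e. $\Pi_{\Sym{\ell}}B\Pi_{\Sym{\ell}} = B$. Moreover $B$ is invariant under $U\xp{\ell}$ for all $U\in\SU{2}$: substituting $\ket{\psi}\mapsto U\ket{\psi}$ permutes the uniform measure $d\psi$ (it is the Haar-induced measure on the Bloch sphere), so $U\xp{\ell}B\,U\ctxp{\ell} = B$. By \cref{lem:irreducibility} (together with the fact, from \cref{apx:Properties}, that $U\xp{\ell}$ restricted to $\Sym{\ell}$ is the irreducible representation $T^\ell$), Schur's lemma (\cref{lem:Schur}) forces $B = c\,\Pi_{\Sym{\ell}}$ for some scalar $c$. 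Taking traces, $\Tr B = (\ell+1)\int \Tr(\proj{\psi})\xp{\ell}\,d\psi = (\ell+1)\int 1\,d\psi = \ell+1$ since $\int d\psi = 1$, while $\Tr\Pi_{\Sym{\ell}} = \dim\Sym{\ell} = \ell+1$; hence $c=1$ and $B = \Pi_{\Sym{\ell}}$.

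The main obstacle is the middle link of the argument for $B$: establishing cleanly that $U\xp{\ell}$ acts irreducibly on $\Sym{\ell}$ (so that Schur's lemma applies) and that the uniform Bloch-sphere measure is genuinely $\SU{2}$-invariant. The first point is already available via \cref{lem:irreducibility} and \cref{lem:tjk from symmetric subspace} (which identify the action on $\Sym{\ell}$ with $T^\ell$); the second follows because the parametrization \eqref{eq:psi angles}–\eqref{eq:dpsi} identifies $d\psi$ with the rotation-invariant area measure on $\Sphere^2$, and left multiplication by $U\in\SU{2}$ induces a rotation of $\Sphere^2$. I would state these two facts explicitly and then let the trace computation finish the proof. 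The remaining steps are routine verifications of idempotence, self-adjointness, and the trace identities.
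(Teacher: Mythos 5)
Your proof is correct, but it takes a genuinely different route from the paper's. The paper proves the lemma by brute force: it computes the matrix entries $\bra{x}\Pi\ket{y}$ of all three expressions in the standard basis and checks that each equals $\delta_{|x||y|}\binom{\ell}{|x|}^{-1}$ (the third via an explicit integral in the angles $\theta,\varphi$). You instead argue structurally: the symmetrizer is an orthogonal projector whose range is the fixed subspace of the permutation action, which you identify with $\spn\set{\ket{s_\ell(w)}}$; and the coherent-state integral is a $\U{2}$-covariant operator supported on $\Sym{\ell}$, hence a scalar multiple of $\Pi_{\Sym{\ell}}$ by \cref{lem:irreducibility} and Schur's lemma, with the scalar fixed by a trace computation. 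Both arguments are complete; yours is more conceptual and generalizes immediately to qudits and to other measures invariant under the group action, while the paper's is elementary and self-contained (it needs no irreducibility input). One subtlety worth making explicit in your write-up: \cref{lem:tjk from symmetric subspace}, which you invoke to identify the action of $U\xp{\ell}$ on $\Sym{\ell}$ with $T^\ell$, itself cites \cref{lem:PiSym3} for the fact that $\Pi_{\Sym{\ell}}$ is a linear combination of permutations. Your ordering avoids circularity because you establish exactly that fact (the first identification) independently before using it for the second, but you should say so — or simply observe directly that the symmetrizer commutes with $M\xp{\ell}$, which sidesteps the citation entirely.
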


\begin{proof}
Let us denote the above expressions by $\Pi_{\Sym{\ell}}$, $\Pi_{\Sym{\ell}}'$, and $\Pi_{\Sym{\ell}}''$, respectively.
We will evaluate their matrix entries and show that they are equal.

Let us start with $\Pi_{\Sym{\ell}}$.
Note from \cref{eq:s} that, for any $\ham,w \in \set{0,\dotsc,\ell}$,
\begin{equation}
  \braket{0^{\ell-\ham}1^{\ham}}{s_\ell(w)}
  = \delta_{\ham w} \binom{\ell}{w}^{-1/2}.
\end{equation}
More generally, for any $x,y \in \set{0,1}^\ell$,
\begin{equation}
  \braket{x}{s_\ell(w)}
  \braket{s_\ell(w)}{y}
  = \delta_{|x| w} \delta_{|y| w} \binom{\ell}{w}^{-1},
\end{equation}
so we conclude that
\begin{equation}
  \bra{x}
  \Pi_{\Sym{\ell}}
  \ket{y}
= \sum_{w=0}^\ell
  \braket{x}{s_\ell(w)}
  \braket{s_\ell(w)}{y}
= \delta_{|x||y|} \binom{\ell}{|x|}^{-1}.
  \label{eq:Symxy1}
\end{equation}

To evaluate the matrix entries of the second expression, note that
\begin{equation}
  \bra{x}
  \Pi_{\Sym{\ell}}'
  \ket{y}
= \bra{x}
  \frac{1}{\ell!} \sum_{\pi \in \S_\ell} P(\pi)
  \ket{y}
= \delta_{|x||y|} \frac{(\ell-|x|)!|x|!}{\ell!}
= \delta_{|x||y|} \binom{\ell}{|x|}^{-1}.
  \label{eq:Symxy2}
\end{equation}
The second equality holds since permutations preserve Hamming weight, and there are $|x|!$ ways to map the ones of $x$ to those of $y$, and $(\ell-|x|)!$ ways to map the zeroes of $x$ to those of $y$.

For the third expression, let $\ket{\psi} = \smx{\alpha \\ \beta} \in \C^2$ and note that
\begin{equation}
  \bra{x} \cdot \ket{\psi}\xp{\ell}
  = \alpha^{\ell-|x|} \beta^{|x|}
\end{equation}
for any $x \in \set{0,1}^\ell$.
Hence,
\begin{equation}
  \bra{x}
  \Pi_{\Sym{\ell}}''
  \ket{y}
= (\ell+1) \int
  \bra{x} \cdot \ket{\psi}\xp{\ell} \,
  \bra{\psi}\xp{\ell} \cdot \ket{y}
  \, d\psi
= (\ell+1) \int
  \alpha^{\ell-|x|} \beta^{|x|}
  \bar\alpha^{\ell-|y|} \bar\beta^{|y|}
  \, d\psi.
\end{equation}
Recall from \cref{eq:psi angles} that we can take
$\alpha := \cos \frac{\theta}{2}$ and
$\beta := e^{i \varphi} \sin \frac{\theta}{2}$,
and from \cref{eq:dpsi} that the uniform measure on the Bloch sphere is
$d\psi := \frac{1}{4\pi} \sin \theta \, d\theta \, d\varphi$.
Putting everything together,
\begin{equation}
  \bra{x}
  \Pi_{\Sym{\ell}}''
  \ket{y}
= \frac{\ell+1}{4\pi}
  \int_{\theta=0}^{\pi}
  \of*{\cos \frac{\theta}{2}}^{2\ell-|x|-|y|}
  \of*{\sin \frac{\theta}{2}}^{|x|+|y|}
  \sin \theta \, d\theta
  \int_{\varphi=0}^{2\pi}
  e^{i\varphi(|x|-|y|)} \, d\varphi.
\end{equation}
The second integral evaluates to $2 \pi \delta_{|x||y|}$, so we are left with
\begin{equation}
  \bra{x}
  \Pi_{\Sym{\ell}}''
  \ket{y}
= \delta_{|x||y|}
  \frac{\ell+1}{2}
  \int_{\theta=0}^{\pi}
  \of*{\cos \frac{\theta}{2}}^{2\ell-2|x|}
  \of*{\sin \frac{\theta}{2}}^{2|x|}
  \sin \theta \, d\theta
= \delta_{|x||y|}
  \binom{\ell}{|x|}^{-1}.
  \label{eq:Symxy3}
\end{equation}
We are done since the matrix entries in \cref{eq:Symxy1,eq:Symxy2,eq:Symxy3} agree for all $x,y \in \set{0,1}^\ell$.
\end{proof}

\subsection{Symmetric subspace and irreducible representations of $\SU{2}$}

Here we provide an alternative way of deriving the irreducible representations $T^\ell$ of $\SU{2}$ described in \cref{apx:Construction}. The following lemma shows that the matrix entries of $T^\ell(M)$, for any $M \in \L(\C^2)$, can be obtained by restricting the action of $M\xp{\ell}$ to the $\ell$-qubit symmetric subspace.

\begin{lemma}\label{lem:tjk from symmetric subspace}
For any integer $\ell \geq 0$, indices $j,k \in \set{0,\dotsc,\ell}$, and matrix $M \in \L(\C^2)$,
\begin{align}
  M\xp{\ell} \ket{s_\ell(k)}
  &= \sum_{j=0}^{\ell} t^\ell_{jk}(M) \ket{s_\ell(j)}, &
  \bra{s_\ell(j)} M\xp{\ell} \ket{s_\ell(k)} &= t^\ell_{jk}(M),
\end{align}
where $t^\ell_{jk}(M)$ are the matrix entries of $T^\ell(M)$, see \cref{eq:tjk}, and $\ket{s_\ell(w)}$ is the symmetric state of Hamming weight $w$, see \cref{eq:s}.
\end{lemma}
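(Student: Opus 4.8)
The plan is to identify the $\ell$-qubit symmetric subspace $\Sym{\ell}$ with the space $\mathcal{H}_\ell$ of homogeneous degree-$\ell$ bivariate polynomials from \cref{apx:Construction}, in such a way that the action of $M\xp{\ell}$ on $\Sym{\ell}$ is intertwined with the representation $T^\ell(M)$ on $\mathcal{H}_\ell$. Once this identification is in place the claimed formulas follow at once from the very definition of the matrix entries $t^\ell_{jk}(M)$.

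Concretely, I would first define a linear map $\iota\colon(\C^2)\xp{\ell}\to\mathcal{H}_\ell$ by sending a standard basis vector $\ket{x_1}\x\dotsb\x\ket{x_\ell}$ (with $x_i\in\set{0,1}$) to the monomial $\prod_{i=1}^\ell z_{x_i}$, where $z_0:=x$ and $z_1:=y$, and extend by linearity. Writing $\iota_1\colon\C^2\to\mathcal{H}_1$ for the $\ell=1$ case ($\ket{0}\mapsto x$, $\ket{1}\mapsto y$), the map $\iota$ is multiplicative in the sense that $\iota\of[\big]{\ket{x_1}\x\dotsb\x\ket{x_\ell}}=\prod_i\iota_1\of{\ket{x_i}}$. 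A short count gives $\iota\of[\big]{\ket{s_\ell(w)}}=\binom{\ell}{w}^{-1/2}\binom{\ell}{w}x^{\ell-w}y^w=\binom{\ell}{w}^{1/2}x^{\ell-w}y^w=\psi^\ell_w$, the basis monomial of \cref{eq:psk}; in particular $\iota$ restricts to a linear isomorphism from $\Sym{\ell}$ onto $\mathcal{H}_\ell$ carrying the orthonormal basis $\set{\ket{s_\ell(w)}}$ to $\set{\psi^\ell_w}$. Next I would verify the intertwining relation $\iota\circ M\xp{\ell}=T^\ell(M)\circ\iota$ on the product basis: on the one hand $\iota\of[\big]{M\xp{\ell}(\ket{x_1}\x\dotsb\x\ket{x_\ell})}=\prod_i\iota_1\of{M\ket{x_i}}$, where for $M=\smx{a&b\\c&d}$ we have $\iota_1(M\ket{0})=ax+cy$ and $\iota_1(M\ket{1})=bx+dy$; on the other hand, since the substitution $\psi\mapsto\psi_M$ of \cref{eq:psiMxy} is multiplicative over products of polynomials, $T^\ell(M)\of[\big]{\prod_i z_{x_i}}=\prod_i(z_{x_i})_M$ with $(z_0)_M=ax+cy$ and $(z_1)_M=bx+dy$, so the two sides agree factor by factor. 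Finally I would record that $M\xp{\ell}$ preserves $\Sym{\ell}$: it commutes with every permutation $P(\pi)$, hence with $\Pi_{\Sym{\ell}}=\frac{1}{\ell!}\sum_{\pi\in\S_\ell}P(\pi)$ (see \cref{lem:PiSym3}), hence with its range $\Sym{\ell}$.

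Putting these together: for $k\in\set{0,\dotsc,\ell}$ the vector $M\xp{\ell}\ket{s_\ell(k)}$ lies in $\Sym{\ell}$, and applying $\iota$ gives $\iota\of[\big]{M\xp{\ell}\ket{s_\ell(k)}}=T^\ell(M)\psi^\ell_k=\sum_{j=0}^\ell t^\ell_{jk}(M)\,\psi^\ell_j=\iota\of[\big]{\sum_{j=0}^\ell t^\ell_{jk}(M)\ket{s_\ell(j)}}$; since $\iota$ is injective on $\Sym{\ell}$, this yields the first identity, and pairing with $\bra{s_\ell(j)}$ together with $\braket{s_\ell(j)}{s_\ell(k)}=\delta_{jk}$ gives the second. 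I do not expect a genuine obstacle here; the only point requiring care is bookkeeping with conventions — in particular matching the ``row-vector times $M$'' convention of \cref{eq:psiMxy} with the column-vector action $M\ket{x_i}$, so that $\iota$ intertwines $M\xp{\ell}$ with $T^\ell(M)$ rather than with $T^\ell(M\tp)$. An alternative route would be a direct combinatorial expansion of $\bra{s_\ell(j)}M\xp{\ell}\ket{s_\ell(k)}$ matched term by term against the explicit formula \eqref{eq:tjk}, but the polynomial argument above is cleaner and parallels the construction in \cref{apx:Construction}.
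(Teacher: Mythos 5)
Your proposal is correct and follows essentially the same route as the paper: both proofs identify the symmetric subspace $\Sym{\ell}$ with the polynomial space $\mathcal{H}_\ell$ so that $M\xp{\ell}$ is carried to the substitution action defining $T^\ell(M)$ — the paper realizes this correspondence by pairing with the generating vector $\ket{v(x,y)} = (x\ket{0}+y\ket{1})\xp{\ell}$ and comparing coefficients in \cref{eq:expansion}, whereas you package it as an explicit intertwining isomorphism $\iota$. Your bookkeeping of the row-vector convention in \cref{eq:psiMxy} against the column action $M\ket{x_i}$ is the one point that genuinely needs checking, and you get it right.
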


\begin{proof}
Since $\braket{s_\ell(j)}{s_\ell(k)} = \delta_{jk}$, we can obtain the second identity from the first one by multiplying both sides with $\bra{s_\ell(j)}$.

To prove the first identity, first note that we can express the sum of all $\ell$-qubit standard basis states of Hamming weight $k \in \set{0,\dotsc,\ell}$ as follows:
\begin{align}
  \sum_{\substack{x\in\set{0,1}^\ell\\\abs{x}=k}} \ket{x}
 &= \frac{1}{(\ell-k)!k!}
    \sum_{\pi \in \S_\ell} P(\pi)
    \ket{0}\xp{\ell-k}
    \ket{1}\xp{k}
  = \binom{\ell}{k} \;
    \Pi_{\Sym{\ell}} \of*{
      \ket{0}\xp{\ell-k}
      \ket{1}\xp{k}
    },
\end{align}
where $\Pi_{\Sym{\ell}}$ denotes the projector onto the symmetric subspace, see \cref{eq:PiSym}.
Adjusting the normalization, we can write the symmetric state $\ket{s_\ell(k)}$ of Hamming weight $k$ as follows:
\begin{equation}
  \ket{s_\ell(k)}
= \Pi_{\Sym{\ell}} \sof*{
    \binom{\ell}{k}^{1/2}
    \ket{0}\xp{\ell-k}
    \ket{1}\xp{k}
  }.
  \label{eq:slk in Sym}
\end{equation}

Note how the above equation is similar to \cref{eq:psk}. Here the standard basis vectors~$\ket{0}$ and~$\ket{1}$ play a similar role to the variables $x$ and $y$ of $\psi^\ell_k(x,y)$. To establish a more formal correspondence, consider the following vector, which is a function of variables $x,y \in \C$:
\begin{equation}
  \ket{v(x,y)}
 := \of[\big]{x \ket{0} + y \ket{1}}\xp{\ell}
  = \sum_{a \in \set{0,1}^\ell}
    x^{\ell-\abs{a}} y^{\abs{a}} \ket{a}.
\end{equation}
Since $\ket{v(x,y)} \in \Sym{\ell}$, we have
$\bra{v(x,y)} \Pi_{\Sym{\ell}} = \bra{v(x,y)}$.
Using this and \cref{eq:slk in Sym},
\begin{equation}
  \braket{v(x,y)}{s_\ell(k)}
  = \bra{v(x,y)}
    \binom{\ell}{k}^{1/2}
    \ket{0}\xp{\ell-k}
    \ket{1}\xp{k}
  = \binom{\ell}{k}^{1/2}
    x^{\ell-k}
    y^k
  = \psi^\ell_k(x,y),
\end{equation}
establishing a formal correspondence between the symmetric states $\ket{s_\ell(k)}$ and the functions $\psi^\ell_k$ defined in \cref{eq:psk}.

Let us now consider the action of $M\xp\ell$ on $\ket{s_\ell(k)}$. For concreteness, let $M = \abcd$. Then
$M \ket{0} = a \ket{0} + c \ket{1}$ and
$M \ket{1} = b \ket{0} + d \ket{1}$,
which is analogous to how $M$ acts on the arguments of functions $\psi$ in \cref{eq:psiMxy}.
Since $\Pi_{\Sym{\ell}}$ is a linear combination of permutations, see \cref{lem:PiSym3}, it commutes with $M\xp{\ell}$ and thus
\begin{equation}
  M\xp{\ell} \ket{s_\ell(k)}
= \Pi_{\Sym{\ell}} \sof*{
    \binom{\ell}{k}^{1/2}
    \of[\big]{a \ket{0} + c \ket{1}}\xp{\ell-k}
    \of[\big]{b \ket{0} + d \ket{1}}\xp{k}
  },
  \label{eq:Ml on slk}
\end{equation}
which is analogous to the left-hand side of \cref{eq:expansion}.

To obtain an analogue of the right-hand side, note that $M\xp\ell \ket{s_\ell(k)}$ is symmetric under permutations and thus $M\xp\ell \ket{s_\ell(k)} \in \Sym{\ell} = \spn \set[\big]{\ket{s_\ell(w)} : w \in \set{0,\dotsc,\ell}}$. Hence,
\begin{equation}
  M\xp\ell \ket{s_\ell(k)}
  = \sum_{j=0}^\ell
    \tilde{t}^\ell_{jk}(M)
    \ket{s_\ell(j)},
  \label{eq:Ml on slk again}
\end{equation}
for some coefficients $\tilde{t}^\ell_{jk}(M) \in \C$. Combining this with \cref{eq:slk in Sym,eq:Ml on slk},
\begin{align*}
  \Pi_{\Sym{\ell}} \sof*{
    \binom{\ell}{k}^{1/2}
    \of[\big]{a \ket{0} + c \ket{1}}\xp{\ell-k}
    \of[\big]{b \ket{0} + d \ket{1}}\xp{k}
  }
= \sum_{j=0}^\ell
  \tilde{t}^\ell_{jk}(M)
  \Pi_{\Sym{\ell}} \sof*{
    \binom{\ell}{j}^{1/2}
    \ket{0}\xp{\ell-j}
    \ket{1}\xp{j}
  }.
\end{align*}
Multiplying both sides by $\bra{v(x,y)}$, we recover \cref{eq:expansion}. Comparing the coefficients at $x^{\ell-j} y^j$ we conclude that $t^\ell_{jk}(M) = \tilde{t}^\ell_{jk}(M)$.
\end{proof}

\subsection{Symmetric subspace and tensor power states}\label{apx:Power states}

Fix any $\ell \geq 0$ and let $V^\ell \in \U{\C^{\ell+1},\of{\C^2}\xp{\ell}}$ denote the isometry from $\ell+1$ dimensions to the $\ell$-qubit symmetric subspace $\Sym{\ell}$:
\begin{align}
  V^\ell := \sum_{k=0}^\ell \ketbra{s_\ell(k)}{k}.
  \label{eq:Vl}
\end{align}
The following proposition shows that this isometry relates the states $\ket{\psi^\ell} \in \C^{\ell+1}$ defined in \cref{eq:psil} to the tensor power states $\ket{\psi}\xp{\ell} \in \of{\C^2}\xp{\ell}$.

\begin{prop}\label{prop:Vl}
For any integer $\ell \geq 0$ and any vector $\ket{\psi} \in \C^2$,
\begin{equation}
  V^\ell \ket{\psi^\ell}
  = \ket{\psi}\xp{\ell},
\end{equation}
where $\ket{\psi^\ell} \in \C^{\ell+1}$ is the vector defined in \cref{eq:psil}.
\end{prop}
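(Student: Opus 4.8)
\textbf{Proof plan for \cref{prop:Vl}.}
The plan is to expand both sides of the claimed identity in the $n$-qubit standard basis and observe that they coincide term by term; this is a short direct computation, and I do not expect any genuine obstacle. First I would write $\ket{\psi} = a\ket{0} + c\ket{1}$ with $\ket{\psi} = \smx{a\\c}$, so that the tensor power state expands as
\begin{equation*}
  \ket{\psi}\xp{\ell}
  = \of[\big]{a\ket{0} + c\ket{1}}\xp{\ell}
  = \sum_{x \in \set{0,1}^\ell} a^{\ell-\abs{x}} c^{\abs{x}} \ket{x}
  = \sum_{k=0}^\ell a^{\ell-k} c^k \sum_{\substack{x\in\set{0,1}^\ell\\\abs{x}=k}} \ket{x}.
\end{equation*}
Then I would invoke the definition of the Dicke state in \cref{eq:s} (equivalently \cref{eq:slw}), namely $\sum_{\abs{x}=k}\ket{x} = \binom{\ell}{k}^{1/2}\ket{s_\ell(k)}$, to rewrite the inner sum and conclude that $\ket{\psi}\xp{\ell} = \sum_{k=0}^\ell \binom{\ell}{k}^{1/2} a^{\ell-k} c^k \ket{s_\ell(k)}$.

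Next I would compute the left-hand side. Using the definition of $V^\ell$ from \cref{eq:Vl} (or \cref{eq:Vl!}) together with the definition of $\ket{\psi^\ell}$ from \cref{eq:psil} and orthonormality of the standard basis of $\C^{\ell+1}$,
\begin{equation*}
  V^\ell \ket{\psi^\ell}
  = \of*{\sum_{k=0}^\ell \ketbra{s_\ell(k)}{k}}
    \of*{\sum_{j=0}^\ell \psi^\ell_j(a,c)\ket{j}}
  = \sum_{k=0}^\ell \psi^\ell_k(a,c) \ket{s_\ell(k)}.
\end{equation*}
Finally I would substitute the explicit form $\psi^\ell_k(a,c) = \binom{\ell}{k}^{1/2} a^{\ell-k} c^k$ from \cref{eq:psk} and observe that the two expressions agree, which completes the proof.

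As a remark, one could alternatively derive this from results already available in the excerpt: combining \cref{prop:Tl of psi} (which gives $T^\ell(\proj{\psi}) = \proj{\psi^\ell}$) with \cref{lem:tjk from symmetric subspace} (which identifies the matrix entries of $T^\ell(M)$ with $\bra{s_\ell(j)} M\xp{\ell}\ket{s_\ell(k)}$) would yield $V^\ell \ket{\psi^\ell}$ and $\ket{\psi}\xp{\ell}$ as images of $\ket{0}$ under the same operator restricted to $\Sym{\ell}$. However, since the elementary expansion above is self-contained and immediate, I would present that version and keep the representation-theoretic route only as a pointer. The only point requiring a word of care is the convention $0^0 = 1$ when $a$ or $c$ vanishes, but this is already fixed in \cref{apx:Properties} and causes no issue.
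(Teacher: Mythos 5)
Your proposal is correct and follows essentially the same argument as the paper's proof: both expand $V^\ell\ket{\psi^\ell}$ via the definitions of $V^\ell$, $\ket{\psi^\ell}$, and the Dicke states $\ket{s_\ell(k)}$, and match the result term by term with the standard-basis expansion of $\ket{\psi}\xp{\ell}$ grouped by Hamming weight. The only cosmetic difference is that you expand both sides and compare, whereas the paper rewrites the left-hand side directly into the right-hand side.
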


\begin{proof}
For concreteness, let $\ket{\psi} = \smx{a\\c}$.
Recall from \cref{eq:psk,eq:psil} that
\begin{equation}
  \ket{\psi^\ell}
 := \sum_{k=0}^\ell
    \binom{\ell}{k}^{1/2}
    a^{\ell-k} c^k
    \ket{k}.
\end{equation}
By applying $V^\ell$ from \cref{eq:Vl} and substituting the definition of the symmetric state $\ket{s_\ell(k)}$ from \cref{eq:s}, we find that
\begin{align}
  V^\ell \ket{\psi^\ell}
 &= \sum_{k=0}^\ell
    \binom{\ell}{k}^{1/2}
    a^{\ell-k} c^k
    \ket{s_\ell(k)} \\
 &= \sum_{k=0}^\ell
    a^{\ell-k} c^k
    \sum_{\substack{x\in\set{0,1}^\ell\\\abs{x}=k}} \ket{x} \\
 &= \sum_{x\in\set{0,1}^\ell}
    a^{\ell-\abs{x}} c^{\abs{x}} \ket{x} \\
 &= \ket{\psi}\xp{\ell},
\end{align}
as desired.
\end{proof}

Let $\Sigma^\ell \in \CPTP{\C^{\ell+1}}{\of{\C^2}\xp{\ell}}$ denote the quantum channel corresponding to the isometry $V^\ell$:
\begin{equation}
  \Sigma^\ell(\rho)
  := V^\ell \rho V^{\ell\dagger},
  \label{eq:Sigmal}
\end{equation}
for all $\rho \in \L(\C^{\ell+1})$.
Then the states $\ket{\psi}$, $\ket{\psi^\ell}$, and $\ket{\psi}\xp{\ell}$ are related as follows.

\begin{cor}\label{cor:three states}
For any integer $\ell \geq 0$ and any vector $\ket{\psi} \in \C^2$,
\begin{equation}
  \proj{\psi}
  \overset{T^\ell}{\longmapsto}
  \proj{\psi^\ell}
  \overset{\Sigma^\ell}{\longmapsto}
  \of{\proj{\psi}}\xp{\ell},
\end{equation}
where $T^\ell$ and $\Sigma^\ell$ are the maps defined by \cref{eq:tjk,eq:Sigmal}, respectively.
\end{cor}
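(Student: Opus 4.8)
The plan is to read this corollary off directly as the composition of the two propositions established earlier in this appendix, together with the definition of $\Sigma^\ell$: the two arrows in the statement are, respectively, exactly the content of \cref{prop:Tl of psi} and of \cref{prop:Vl}. So the ``proof'' is really just bookkeeping, and the only point worth spelling out is the elementary fact that applying an isometry (or the channel it induces) to a pure state again yields a pure state, so the rank-one structure survives.

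First I would dispatch the arrow labelled $T^\ell$. By \cref{prop:Tl of psi} applied to $\ket{\psi}\in\C^2$, we have $T^\ell\of[\big]{\proj{\psi}}=\proj{\psi^\ell}$, where $\ket{\psi^\ell}\in\C^{m_\ell}$ is the vector defined in \cref{eq:psil}. That is precisely the first map in the corollary, so nothing further is needed.

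Next I would handle the arrow labelled $\Sigma^\ell$. Recall from \cref{eq:Sigmal} that $\Sigma^\ell(\rho)=V^\ell\rho V^{\ell\dagger}$ for all $\rho\in\L(\C^{\ell+1})$. Taking $\rho=\proj{\psi^\ell}$ and using that conjugation by a linear map sends $\ketbra{v}{v}$ to $\of[\big]{V^\ell\ket{v}}\of[\big]{V^\ell\ket{v}}\ct$, we get $\Sigma^\ell\of[\big]{\proj{\psi^\ell}}=\of[\big]{V^\ell\ket{\psi^\ell}}\of[\big]{V^\ell\ket{\psi^\ell}}\ct$. By \cref{prop:Vl}, $V^\ell\ket{\psi^\ell}=\ket{\psi}\xp{\ell}$, hence $\Sigma^\ell\of[\big]{\proj{\psi^\ell}}=\of[\big]{\proj{\psi}}\xp{\ell}$. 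Chaining the two statements gives $\proj{\psi}\overset{T^\ell}{\longmapsto}\proj{\psi^\ell}\overset{\Sigma^\ell}{\longmapsto}\of{\proj{\psi}}\xp{\ell}$, as claimed. I do not expect any genuine obstacle here; the corollary is a one-line consequence of \cref{prop:Tl of psi,prop:Vl}, and its role is organizational — it records that the irreducible-representation picture (the state $\ket{\psi^\ell}$) and the symmetric-subspace picture (the tensor power $\ket{\psi}\xp{\ell}$) are related by the fixed isometry $V^\ell$, which is exactly what is needed when identifying $\Phi^\ell_1,\Phi^\ell_2$ with partial trace and universal $\NOT$ in \cref{apx:Identification}.
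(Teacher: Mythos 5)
Your proposal is correct and matches the paper's own proof, which simply cites \cref{prop:Vl,prop:Tl of psi}; your additional spelling-out of how $\Sigma^\ell$ acts on the rank-one projector $\proj{\psi^\ell}$ is just the bookkeeping the paper leaves implicit. Nothing further is needed.
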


\begin{proof}
This follows by combining \cref{prop:Vl,prop:Tl of psi}.
\end{proof}

\subsection{Invariant subspaces of \texorpdfstring{$\C^2 \x \Sym{\ell}$}{C2 x Sym(l)}}\label{apx:Invariant subspaces}

Fix an integer $\ell \geq 1$ and let
\begin{equation}
  \mc{V} := \C^2 \x \Sym{\ell}
  = \C^2 \x \spn \set{\ket{s_\ell(w)} : w \in \set{0,\dotsc,\ell}}
  \label{eq:V}
\end{equation}
be the subspace of $\ell+1$ qubits where the first qubit is arbitrary but the last $\ell$ are symmetric. We would like to decompose this as a direct sum of the subspace where \emph{all} $\ell+1$ qubits are symmetric and its orthogonal complement.

Let us denote the symmetric subspace of $\ell+1$ qubits by
\begin{equation}
  \mc{V}_{\UNOT} := \Sym{\ell+1}
  = \spn \set{\ket{s_{\ell+1}(w)} : w \in \set{0,\dotsc,\ell+1}}.
  \label{eq:V-UNOT}
\end{equation}
Using the definition in \cref{eq:s}, we can recursively relate the symmetric states on $\ell+1$ qubits to those on $\ell$ qubits:
\begin{align}
  \ket{s_{\ell+1}(w+1)}
 &= \binom{\ell+1}{w+1}^{-1/2}
    \sof*{
      \binom{\ell}{w+1}^{1/2} \ket{0} \ket{s_\ell(w+1)}
    + \binom{\ell}{w  }^{1/2} \ket{1} \ket{s_\ell(w  )}
    } \\
 &= \sqrt{\frac{\ell-w}{\ell+1}} \ket{0} \ket{s_\ell(w+1)}
  + \sqrt{\frac{w   +1}{\ell+1}} \ket{1} \ket{s_\ell(w  )}.
  \label{eq:s-recursion}
\end{align}
This holds for $w \in \set{0,\dotsc,\ell-1}$, but the identity also makes sense when $w = -1$ or $w = \ell$ since the coefficients of the undefined terms $\ket{s_\ell(-1)}$ and $\ket{s_\ell(\ell+1)}$ vanish.

Let us define another set of unit vectors by simply exchanging the two coefficients in \cref{eq:s-recursion} and adding a minus sign to the second term:
\begin{equation}
  \ket{s'_{\ell-1}(w)}
 := \sqrt{\frac{w+1}{\ell+1}} \ket{0} \ket{s_\ell(w+1)}
  - \sqrt{\frac{\ell-w}{\ell+1}} \ket{1} \ket{s_\ell(w  )}
  \label{eq:s'}
\end{equation}
where $w \in \set{0,\dotsc,\ell-1}$ (we will justify the notation $\ket{s'_{\ell-1}(w)}$ in \cref{lem:s'}).
We denote the span of these vectors by
\begin{equation}
  \mc{V}_{\Tr} := \spn \set{\ket{s'_{\ell-1}(w)} : w \in \set{0,\dotsc,\ell-1}}.
  \label{eq:V-Tr}
\end{equation}

\begin{lemma}
The subspaces defined in \cref{eq:V,eq:V-UNOT,eq:V-Tr} satisfy
\begin{align}
  \dim \mc{V} &= 2(\ell+1), &
  \dim \mc{V}_{\Tr} &= \ell, &
  \dim \mc{V}_{\UNOT} &= \ell+2, &
  \mc{V} &= \mc{V}_{\Tr} \+ \mc{V}_{\UNOT}.
\end{align}
\end{lemma}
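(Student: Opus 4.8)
The plan is to verify each claim directly from the explicit spanning sets, using only orthonormality of the symmetric states, $\braket{s_m(w)}{s_m(w')} = \delta_{ww'}$ (see \cref{eq:s}). The two dimension counts for $\mc{V}$ and $\mc{V}_{\UNOT}$ are immediate: $\mc{V} = \C^2 \x \Sym{\ell}$ and $\dim \Sym{\ell} = \ell+1$, so $\dim \mc{V} = 2(\ell+1)$; and $\mc{V}_{\UNOT} = \Sym{\ell+1}$ is by definition spanned by the $\ell+2$ orthonormal states $\ket{s_{\ell+1}(w)}$, $w \in \set{0,\dotsc,\ell+1}$, so $\dim \mc{V}_{\UNOT} = \ell+2$.

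Next I would show $\dim \mc{V}_{\Tr} = \ell$ by proving that $\set{\ket{s'_{\ell-1}(w)} : w \in \set{0,\dotsc,\ell-1}}$ is orthonormal. The key observation, used repeatedly, is that a vector of the form $\ket{b} \ket{s_\ell(v)}$ with $b \in \set{0,1}$ and $v \in \set{0,\dotsc,\ell}$ is uniquely determined among all such vectors by the pair $(b,v)$, and these are mutually orthogonal. From \cref{eq:s'}, $\ket{s'_{\ell-1}(w)}$ is supported on the pair of orthonormal vectors indexed by $(0,w+1)$ and $(1,w)$; hence for $w \neq w'$ the supports of $\ket{s'_{\ell-1}(w)}$ and $\ket{s'_{\ell-1}(w')}$ are disjoint and $\braket{s'_{\ell-1}(w)}{s'_{\ell-1}(w')} = 0$, while for $w = w'$ the squared norm is $\tfrac{w+1}{\ell+1} + \tfrac{\ell-w}{\ell+1} = 1$.

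For the direct-sum statement, note first that $\mc{V}_{\Tr} \subseteq \mc{V}$ is manifest from \cref{eq:s'}, and $\mc{V}_{\UNOT} \subseteq \mc{V}$ follows from \cref{eq:s-recursion} together with its boundary cases $\ket{s_{\ell+1}(0)} = \ket{0} \ket{s_\ell(0)}$ and $\ket{s_{\ell+1}(\ell+1)} = \ket{1} \ket{s_\ell(\ell)}$, all of which lie in $\C^2 \x \Sym{\ell}$. Comparing \cref{eq:s-recursion} with \cref{eq:s'}, the state $\ket{s_{\ell+1}(w+1)}$ (for $w \in \set{0,\dotsc,\ell-1}$) is supported on the pair $(0,w+1),(1,w)$ with coefficient vector $\of*{\sqrt{\tfrac{\ell-w}{\ell+1}},\ \sqrt{\tfrac{w+1}{\ell+1}}}$, which is orthogonal to the coefficient vector $\of*{\sqrt{\tfrac{w+1}{\ell+1}},\ -\sqrt{\tfrac{\ell-w}{\ell+1}}}$ of $\ket{s'_{\ell-1}(w)}$ on the same pair; since the supports of $\ket{s_{\ell+1}(w'+1)}$ and $\ket{s'_{\ell-1}(w)}$ are disjoint unless $w' = w$, and the two extremal states $\ket{s_{\ell+1}(0)}$, $\ket{s_{\ell+1}(\ell+1)}$ have supports disjoint from every $\ket{s'_{\ell-1}(w)}$, we get $\braket{s'_{\ell-1}(w)}{s_{\ell+1}(w')} = 0$ for all $w,w'$, \ie, $\mc{V}_{\Tr} \perp \mc{V}_{\UNOT}$. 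Finally, $\dim \mc{V}_{\Tr} + \dim \mc{V}_{\UNOT} = \ell + (\ell+2) = 2(\ell+1) = \dim \mc{V}$, so the mutually orthogonal subspaces $\mc{V}_{\Tr}$ and $\mc{V}_{\UNOT}$ together span all of $\mc{V}$, giving $\mc{V} = \mc{V}_{\Tr} \+ \mc{V}_{\UNOT}$.

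There is no real obstacle here; the only thing needing care is the index bookkeeping — keeping track of the ranges $w \in \set{0,\dotsc,\ell-1}$ versus $w \in \set{0,\dotsc,\ell+1}$ and the extremal symmetric states — so that the disjoint-support arguments are airtight at the boundaries.
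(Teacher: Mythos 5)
Your proof is correct and follows essentially the same route as the paper's: immediate dimension counts for $\mc{V}$ and $\mc{V}_{\UNOT}$, orthonormality of the $\ket{s'_{\ell-1}(w)}$, pairwise orthogonality of $\mc{V}_{\Tr}$ and $\mc{V}_{\UNOT}$ from the explicit expansions, and a dimension count to conclude the direct sum. The only cosmetic difference is that you phrase the vanishing overlaps via disjoint supports of the pairs $(b,v)$, where the paper invokes the equivalent observation that the relevant states have different Hamming weights.
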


\begin{proof}
Since $\braket{s_\ell(w)}{s_\ell(v)} = \delta_{wv}$, it is evident from \cref{eq:V,eq:V-UNOT} that $\mc{V}$ and $\mc{V}_{\UNOT}$ have the claimed dimensions. Since all terms in $\ket{s'_{\ell-1}(w)}$ have Hamming weight $w+1$, we conclude that $\braket{s'_\ell(w)}{s'_\ell(v)} = \delta_{wv}$ and hence $\mc{V}_{\Tr}$ also has the claimed dimension.

It is evident from \cref{eq:s-recursion,eq:s'} that $\ket{s_{\ell+1}(w)}$ and $\ket{s'_{\ell-1}(w)}$ are in $\mc{V}$, so $\mc{V}_{\Tr}$ and $\mc{V}_{\UNOT}$ are subspaces of $\mc{V}$. By construction of $\ket{s'_{\ell-1}(w)}$,
\begin{equation}
  \braket{s_{\ell+1}(w+1)}{s'_{\ell-1}(w)} = 0,
\end{equation}
for any $w \in \set{0,\dotsc,\ell-1}$.
More generally, for any $w \in \set{0,\dotsc,\ell+1}$ and $v \in \set{0,\dotsc,\ell-1}$,
\begin{equation}
  \braket{s_{\ell+1}(w)}{s'_{\ell-1}(v)} = 0
\end{equation}
since the two states have different Hamming weights when $w \neq v+1$. This means the subspaces $\mc{V}_{\Tr}$ and $\mc{V}_{\UNOT}$ are orthogonal to each other. Since $\dim \mc{V}_{\Tr} + \dim \mc{V}_{\UNOT} = \ell + (\ell + 2) = \dim \mc{V}$, the union of $\mc{V}_{\Tr}$ and $\mc{V}_{\UNOT}$ constitutes the whole of $\mc{V}$ and we indeed get a direct sum.
\end{proof}

The following lemma provides an alternative formula for the vectors $\ket{s'_{\ell-1}(w)}$ defined in \cref{eq:s'}. They can be obtained by symmetrizing the $(\ell-1)$-qubit symmetric state $\ket{s_{\ell-1}(w)}$ and one qubit of the singlet state $\ket{\Psi^-}$. This also explains why $\ket{s'_{\ell-1}(w)}$ has the same parameters as a symmetric state, even though it is an $(\ell+1)$-qubit state of Hamming weight $w+1$.

\begin{lemma}\label{lem:s'}
For any integer $\ell \geq 1$ and $w \in \set{0,\dotsc,\ell-1}$,
\begin{equation}
  \ket{s'_{\ell-1}(w)}
= \sqrt{\frac{2\ell}{\ell+1}}
  \of[\big]{\id_2 \x \Pi_{\Sym{\ell}}}
  \ket{\Psi^-}
  \ket{s_{\ell-1}(w)},
  \label{eq:s'-alternative}
\end{equation}
where $\Pi_{\Sym{\ell}} := \frac{1}{\ell!} \sum_{\pi \in S_\ell} P(\pi)$ is the projector onto the $\ell$-qubit symmetric subspace, see \cref{eq:PiSym}.
\end{lemma}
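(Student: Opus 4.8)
The plan is to verify \cref{eq:s'-alternative} by expanding the right-hand side in the standard basis and matching it term-by-term with the definition of $\ket{s'_{\ell-1}(w)}$ in \cref{eq:s'}. First I would write $\ket{\Psi^-}\ket{s_{\ell-1}(w)} = \frac{1}{\sqrt{2}}\of[\big]{\ket{0}\ket{1} - \ket{1}\ket{0}}\ket{s_{\ell-1}(w)}$ and split this into a piece with $\ket{0}$ on the first (non-symmetrized) qubit and a piece with $\ket{1}$ there. The key observation is that $\Pi_{\Sym{\ell}}$ acts only on the last $\ell$ qubits, so I need to compute $\Pi_{\Sym{\ell}}\of[\big]{\ket{1}\ket{s_{\ell-1}(w)}}$ and $\Pi_{\Sym{\ell}}\of[\big]{\ket{0}\ket{s_{\ell-1}(w)}}$, \ie, symmetrize a single basis qubit appended to an $(\ell-1)$-qubit symmetric state. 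This is precisely the symmetric-subspace recursion already used implicitly: one has the identities
\begin{align}
  \Pi_{\Sym{\ell}}\of[\big]{\ket{0}\ket{s_{\ell-1}(w)}}
  &= \sqrt{\tfrac{\ell-w}{\ell}}\,\ket{s_\ell(w)}, \\
  \Pi_{\Sym{\ell}}\of[\big]{\ket{1}\ket{s_{\ell-1}(w)}}
  &= \sqrt{\tfrac{w+1}{\ell}}\,\ket{s_\ell(w+1)},
\end{align}
which follow directly from \cref{eq:s} by counting how many length-$\ell$ strings of a given Hamming weight start with $0$ versus $1$ (or, equivalently, from \cref{lem:PiSym3} together with the combinatorial identity $\binom{\ell}{k} = \binom{\ell-1}{k} + \binom{\ell-1}{k-1}$). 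I would prove these two auxiliary identities first as a short computation.

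Next I would substitute. The first qubit of the singlet is \emph{not} touched by $\Pi_{\Sym{\ell}}$, so
\begin{align}
  \of[\big]{\id_2 \x \Pi_{\Sym{\ell}}}\ket{\Psi^-}\ket{s_{\ell-1}(w)}
  &= \tfrac{1}{\sqrt{2}}\ket{0}\otimes \Pi_{\Sym{\ell}}\of[\big]{\ket{1}\ket{s_{\ell-1}(w)}}
   - \tfrac{1}{\sqrt{2}}\ket{1}\otimes \Pi_{\Sym{\ell}}\of[\big]{\ket{0}\ket{s_{\ell-1}(w)}} \nonumber \\
  &= \tfrac{1}{\sqrt{2}}\sqrt{\tfrac{w+1}{\ell}}\,\ket{0}\ket{s_\ell(w+1)}
   - \tfrac{1}{\sqrt{2}}\sqrt{\tfrac{\ell-w}{\ell}}\,\ket{1}\ket{s_\ell(w)}.
\end{align}
Multiplying by the prefactor $\sqrt{2\ell/(\ell+1)}$ turns the coefficients into $\sqrt{(w+1)/(\ell+1)}$ and $-\sqrt{(\ell-w)/(\ell+1)}$, which is exactly \cref{eq:s'}. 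That completes the proof.

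I do not anticipate a genuine obstacle here — the statement is a bookkeeping identity — but the one place to be careful is the placement of the two singlet qubits relative to the tensor factors: the singlet occupies the \emph{first two} of the $\ell+1$ qubits, with only its second qubit among the $\ell$ that get symmetrized, so $\ket{\Psi^-}\ket{s_{\ell-1}(w)}$ is an $(\ell+1)$-qubit vector and $\id_2 \x \Pi_{\Sym{\ell}}$ must be read as acting trivially on qubit $1$ and symmetrizing qubits $2,\dots,\ell+1$. Getting this indexing right (and hence knowing that it is the \emph{first} qubit of $\ket{\Psi^-}$, carrying $\ket{0}$ in the surviving term, that ends up outside the projector) is what makes the signs and square-root coefficients land correctly. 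Everything else is the elementary symmetric-state recursion above.
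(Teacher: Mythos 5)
Your proposal is correct and follows essentially the same route as the paper's proof: both expand the singlet into its $\ket{0}$- and $\ket{1}$-branches, symmetrize each branch over the last $\ell$ qubits, and identify the results as $\sqrt{(w+1)/\ell}\,\ket{s_\ell(w+1)}$ and $\sqrt{(\ell-w)/\ell}\,\ket{s_\ell(w)}$ before recombining. The only cosmetic difference is that you obtain these proportionality constants from binomial-coefficient ratios (via the matrix entries of $\Pi_{\Sym{\ell}}$), whereas the paper counts permutation multiplicities with factorials; both computations are correct and your care about which singlet qubit escapes the projector is exactly the right point to flag.
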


\begin{proof}
Substituting the definition of the symmetric state $\ket{s_{\ell-1}(w)}$ from \cref{eq:s}, we can rewrite the right-hand side of \cref{eq:s'-alternative} as follows:
\begin{align}
  & \sqrt{\frac{2\ell}{\ell+1}} \cdot
    \frac{1}{\ell!}
    \sum_{\pi \in \S_\ell}
    (\id_2 \x P(\pi))
    \frac{\ket{01}-\ket{10}}{\sqrt{2}}
    \ket{s_{\ell-1}(w)} \\
 &= \sqrt{\frac{\ell}{\ell+1}} \cdot
    \frac{1}{\ell!}
    \sof*{
      \ket{0}
      \sum_{\pi \in \S_\ell}
      P(\pi) \ket{1} \ket{s_{\ell-1}(w)}
    - \ket{1}
      \sum_{\pi \in \S_\ell}
      P(\pi) \ket{0} \ket{s_{\ell-1}(w)}
    } \\
 &= \sqrt{\frac{\ell}{\ell+1}} \cdot
    \frac{1}{\ell!}
    \binom{\ell-1}{w}^{-1/2}
    \hspace{-1em}
    \sum_{\substack{x\in\set{0,1}^{\ell-1}\\\abs{x}=w}}
    \sof[\Bigg]{
      \ket{0}
      \underbrace{
        \sum_{\pi \in \S_\ell}
        P(\pi) \ket{1} \ket{x}
      }_{\ket{h_1}}
    - \ket{1}
      \underbrace{
        \sum_{\pi \in \S_\ell}
        P(\pi) \ket{0} \ket{x}
      }_{\ket{h_0}}
    }.
  \label{eq:s'-symmetrization}
\end{align}
For each $i \in \set{0,1}$, the expression for $\ket{h_i}$ is symmetric and of Hamming weight $w+i$. This means that $\ket{h_i} \propto \ket{s_\ell(w+i)}$. To determine the proportionality constant, we need to count how many times each string appears in the symmetrization. Since permuting $0$'s and $1$'s among themselves does not change the string, we simply need to insert appropriate factorials to account for this. For any $x \in \set{0,1}^{\ell-1}$ of Hamming weight $\abs{x} = w$,
\begin{align}
  \ket{h_1}
 &= \sum_{\pi \in \S_\ell}
    P(\pi) \ket{1} \ket{x}
  = \underbrace{(w+1)!}_{1\text{'s}}
    \underbrace{(\ell-w-1)!}_{0\text{'s}}
    \binom{\ell}{w+1}^{1/2}
    \ket{s_\ell(w+1)}, \\
  \ket{h_0}
 &= \sum_{\pi \in \S_\ell}
    P(\pi) \ket{0} \ket{x}
  = \underbrace{\phantom{(\!\!}w!}_{1\text{'s}}
    \underbrace{(\ell-w)!}_{0\text{'s}}
    \binom{\ell}{w}^{1/2}
    \ket{s_\ell(w)}.
\end{align}
Note that neither of $\ket{h_i}$ depends on $x$, so substituting them back into \cref{eq:s'-symmetrization}
we get
\begin{equation*}
  \sqrt{\frac{\ell}{\ell+1}} \cdot
  \frac{1}{\ell!}
  \binom{\ell-1}{w}^{1/2}
  \sof[\Big]{
    \ket{0} \ket{h_1}
  - \ket{1} \ket{h_0}
  }
= \sqrt{\frac{w+1}{\ell+1}}
  \ket{0}
  \ket{s_\ell(w+1)}
- \sqrt{\frac{\ell-w}{\ell+1}}
  \ket{1}
  \ket{s_\ell(w)},
\end{equation*}
which agrees with \cref{eq:s'}.
\end{proof}

The following lemma derives the matrix entries of $M\xp{\ell+1}$ in the subspaces $\mc{V}_{\UNOT}$ and $\mc{V}_{\Tr}$ and relates them to the representations $T^{\ell+1}$ and $T^{\ell-1}$, respectively. In particular, this justifies the notation $\ket{s'_{\ell-1}(w)}$ for the basis of $\mc{V}_{\Tr}$.

\begin{lemma}
For any $M \in \L(\C^2)$ and $\ell \geq 1$, the entries of the matrix $M\xp{\ell+1}$ restricted to the subspaces $\mc{V}_{\UNOT}$ and $\mc{V}_{\Tr}$ are as follows:
\begin{align}
  \bra{s_{\ell+1}(j)} M\xp{\ell+1} \ket{s_{\ell+1}(k)}
  &= t^{\ell+1}_{jk}(M),
  & \forall j,k \in \set{0,\dotsc,\ell+1}, \\
  \bra{s'_{\ell-1}(j)} M\xp{\ell+1} \ket{s'_{\ell-1}(k)}
  &= \det(M) t^{\ell-1}_{jk}(M),
  & \forall j,k \in \set{0,\dotsc,\ell-1},
\end{align}
where $t^\ell_{jk}(M)$ are the entries of the irreducible representation $T^\ell$ defined in \cref{eq:tjk}.
\end{lemma}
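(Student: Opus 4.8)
The plan is to treat the two identities separately: the first is essentially a restatement of a known result, while the second requires a short computation built on \cref{lem:s'}.

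For the first identity, I would simply apply \cref{lem:tjk from symmetric subspace} with $\ell$ replaced by $\ell+1$. That lemma already gives $\bra{s_{\ell+1}(j)} M\xp{\ell+1} \ket{s_{\ell+1}(k)} = t^{\ell+1}_{jk}(M)$ for all $j,k \in \set{0,\dotsc,\ell+1}$, so nothing further is needed; this just records that the restriction of $M\xp{\ell+1}$ to $\mc{V}_{\UNOT} = \Sym{\ell+1}$ is the irreducible representation $T^{\ell+1}(M)$.

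For the second identity, I would start from the alternative expression in \cref{lem:s'}, namely $\ket{s'_{\ell-1}(w)} = \sqrt{2\ell/(\ell+1)}\,(\id_2 \x \Pi_{\Sym{\ell}}) \ket{\Psi^-}\ket{s_{\ell-1}(w)}$, where $\Pi_{\Sym{\ell}}$ symmetrizes the last $\ell$ of the $\ell+1$ qubits. The key observation is that $M\xp{\ell+1}$ commutes with $\id_2 \x \Pi_{\Sym{\ell}}$: by \cref{lem:PiSym3} the projector is a linear combination of permutations $P(\pi)$ of the last $\ell$ qubits, and $M\xp{\ell+1} = M \x M\xp{\ell}$ commutes with each such $P(\pi)$. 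Hence $M\xp{\ell+1}\ket{s'_{\ell-1}(k)} = \sqrt{2\ell/(\ell+1)}\,(\id_2 \x \Pi_{\Sym{\ell}})\, M\xp{\ell+1} \ket{\Psi^-}\ket{s_{\ell-1}(k)}$. Now $M\xp{\ell+1}\ket{\Psi^-}\ket{s_{\ell-1}(k)}$ factorizes as $(M\x M)\ket{\Psi^-} \x M\xp{\ell-1}\ket{s_{\ell-1}(k)}$; using $(M\x M)\ket{\Psi^-} = \det(M)\ket{\Psi^-}$ (the antisymmetric-subspace eigenvalue identity already invoked, e.g., in \cref{lem:QInvar}) together with \cref{lem:tjk from symmetric subspace} at level $\ell-1$, so that $M\xp{\ell-1}\ket{s_{\ell-1}(k)} = \sum_{j=0}^{\ell-1} t^{\ell-1}_{jk}(M)\ket{s_{\ell-1}(j)}$, and then pushing $\id_2 \x \Pi_{\Sym{\ell}}$ back onto each term via \cref{lem:s'} once more, I obtain $M\xp{\ell+1}\ket{s'_{\ell-1}(k)} = \det(M)\sum_{j=0}^{\ell-1} t^{\ell-1}_{jk}(M)\ket{s'_{\ell-1}(j)}$. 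Taking the inner product with $\bra{s'_{\ell-1}(j)}$ and using the orthonormality $\braket{s'_{\ell-1}(j)}{s'_{\ell-1}(k)} = \delta_{jk}$ established just before this lemma then yields $\bra{s'_{\ell-1}(j)} M\xp{\ell+1}\ket{s'_{\ell-1}(k)} = \det(M)\,t^{\ell-1}_{jk}(M)$, as claimed; this identifies the restriction of $M\xp{\ell+1}$ to $\mc{V}_{\Tr}$ with $\det(M)\,T^{\ell-1}(M) = T^{1,\ell-1}(M)$.

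There is no real obstacle here; the only point that needs care is the bookkeeping around the partial symmetrizer $\id_2 \x \Pi_{\Sym{\ell}}$ — in particular, $\ket{\Psi^-}\ket{s_{\ell-1}(w)}$ is \emph{not} fixed by this projector, so one genuinely needs the normalization factor $\sqrt{2\ell/(\ell+1)}$ from \cref{lem:s'} on both the bra and ket sides for the constants to cancel. I would sanity-check the $\det(M)$ prefactor with $M = \id_2$ (both sides reduce to $\delta_{jk}$) and with $M$ diagonal, where $T^{\ell-1}(M)$ is diagonal with entries $a^{\ell-1-k}d^k$ and the extra factor $\det M = ad$ restores the correct degree-$(\ell+1)$ homogeneity of $M\xp{\ell+1}$.
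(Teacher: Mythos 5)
Your proposal is correct and follows essentially the same route as the paper's proof: the first identity is read off from \cref{lem:tjk from symmetric subspace} at level $\ell+1$, and the second is obtained by expanding $\ket{s'_{\ell-1}(k)}$ via \cref{lem:s'}, commuting $M\xp{\ell+1}$ past $\id_2 \x \Pi_{\Sym{\ell}}$, using $(M \x M)\ket{\Psi^-} = \det(M)\ket{\Psi^-}$ together with \cref{lem:tjk from symmetric subspace} at level $\ell-1$, reassembling the $\ket{s'_{\ell-1}(j)}$ states, and invoking their orthonormality. Your explicit remark that the projector commutes with $M\xp{\ell+1}$ because it is a linear combination of permutations is a point the paper leaves implicit, but the argument is the same.
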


\begin{proof}
The first identity follows immediately from \cref{eq:V-UNOT,lem:tjk from symmetric subspace}. For the second identity we use \cref{lem:s'}:
\begin{equation}
  M\xp{\ell+1} \ket{s'_{\ell-1}(k)}
  = \sqrt{\frac{2\ell}{\ell+1}}
    \of[\big]{\id_2 \x \Pi_{\Sym{\ell}}}
    \of[\Big]{
      M\xp{2} \ket{\Psi^-} \x
      M\xp{\ell-1} \ket{s_{\ell-1}(k)}
    }.
\end{equation}
Note that $(M \x M) \ket{\Psi^-} = \det(M) \ket{\Psi^-}$ and
$M\xp{\ell-1} \ket{s_{\ell-1}(k)}
= \sum_{j=0}^{\ell-1}
  t^{\ell-1}_{jk}(M)
  \ket{s_\ell(j)}$
according to \cref{eq:Ml on slk again}. Thus
\begin{equation}
  M\xp{\ell+1} \ket{s'_{\ell-1}(k)}
  = \det(M)
    \sum_{j=0}^{\ell-1}
    t^{\ell-1}_{jk}(M)
    \ket{s'_\ell(j)}.
\end{equation}
Recalling that $\braket{s'_{\ell-1}(j)}{s'_{\ell-1}(j')} = \delta_{jj'}$, we get the desired formula by multiplying both sides by $\bra{s'_{\ell-1}(j)}$.
\end{proof}

\subsection{Symmetric subspace and (dual) Clebsch--Gordan transform}\label{apx:SymCG}

The decomposition of the space $\mc{V} = \C^2 \x \Sym{\ell}$ as $\mc{V}_{\Tr} \+ \mc{V}_{\UNOT}$ is closely related to Clebsch--Gordan transform. Similar to the map $\CG_\ell$ from \cref{eq:Cdims}, consider a map
\begin{equation}
  \CG_{\Sym{\ell}}: \C^2 \x \Sym{\ell} \to \C^\ell \+ \C^{\ell+2}
\end{equation}
defined by stacking the vectors $\bra{s'_{\ell-1}(v)}$ and $\bra{s_{\ell+1}(w)}$ as rows of a matrix:
\begin{equation}
  \CG_{\Sym{\ell}}
  = \sum_{v=0}^{\ell-1}
    \of[\big]{\ket{v}_\ell \+ \0_{\ell+2}}
    \bra{s'_{\ell-1}(v)}
  + \sum_{w=0}^{\ell+1}
    \of[\big]{\0_\ell \+ \ket{w}_{\ell+2}}
    \bra{s_{\ell+1}(w)},
\end{equation}
where $\0_\ell$ denotes the $\ell$-dimensional all zeroes column vector.
Substituting the formulas for $\bra{s'_{\ell-1}(v)}$ and $\bra{s_{\ell+1}(w)}$ from \cref{eq:s',eq:s-recursion}, we get
\begin{align}
  \CG_{\Sym{\ell}}
 &= \sum_{v=0}^{\ell-1}
    \of[\big]{\ket{v}_\ell \+ \0_{\ell+2}}
    \of*{
      \sqrt{\frac{v+1   }{\ell+1}} \bra{0} \bra{s_\ell(v+1)}
    - \sqrt{\frac{\ell-v}{\ell+1}} \bra{1} \bra{s_\ell(v  )}
    } \label{eq:stacking} \\
 &+ \sum_{w=0}^{\ell+1}
    \of[\big]{\0_\ell \+ \ket{w}_{\ell+2}}
    \of*{
      \sqrt{\frac{\ell+1-w}{\ell+1}} \bra{0} \bra{s_\ell(w  )}
    + \sqrt{\frac{w       }{\ell+1}} \bra{1} \bra{s_\ell(w-1)}
    }. \nonumber
\end{align}
We can rearrange the terms in each sum to pull out $\bra{s_\ell(v)}$ and $\bra{s_\ell(w)}$, respectively:
\begin{align}
  \CG_{\Sym{\ell}}
 &= \sum_{v=0}^\ell
    \of*{
      \sqrt{\frac{v     }{\ell+1}}
      \of[\big]{\ket{v-1}_\ell \+ \0_{\ell+2}} \bra{0}
    - \sqrt{\frac{\ell-v}{\ell+1}}
      \of[\big]{\ket{v  }_\ell \+ \0_{\ell+2}} \bra{1}
    } \bra{s_\ell(v)} \\
 &+ \sum_{w=0}^\ell
    \of*{
      \sqrt{\frac{\ell+1-w}{\ell+1}}
      \of[\big]{\0_\ell \+ \ket{w  }_{\ell+2}} \bra{0}
    + \sqrt{\frac{w+1}{\ell+1}}
      \of[\big]{\0_\ell \+ \ket{w+1}_{\ell+2}} \bra{1}
    } \bra{s_\ell(w)}. \nonumber
\end{align}
We can now combine the two sums into a single sum:
\begin{align}
  \CG_{\Sym{\ell}} =
  \sum_{w=0}^\ell
  \Biggl[
  & \of*{
      \sqrt{\frac{w       }{\ell+1}} \ket{w-1}_\ell \+
      \sqrt{\frac{\ell+1-w}{\ell+1}} \ket{w  }_{\ell+2}
    } \bra{0} \\
+ & \of*{
    - \sqrt{\frac{\ell-w}{\ell+1}} \ket{w  }_\ell \+
      \sqrt{\frac{w+1   }{\ell+1}} \ket{w+1}_{\ell+2}
    } \bra{1}
  \Biggr]
  \bra{s_\ell(w)}. \nonumber
\end{align}
We can simplify this even further by introducing another sum to group the two terms together:
\begin{align}
  \CG_{\Sym{\ell}} =
  \sum_{i=0}^1
  \sum_{w=0}^\ell
  \of*{
    R^\ell_{i0}(w+i) \ket{w+i-1}_\ell \+
    R^\ell_{i1}(w+i) \ket{w+i}_{\ell+2}
  } \bra{i}
  \bra{s_\ell(w)}
  \label{eq:CSyml}
\end{align}
where
\begin{equation}
  R^\ell(k) := \frac{1}{\sqrt{\ell+1}}
  \mx{
    \sqrt{k       } &
    \sqrt{\ell+1-k} \\
  - \sqrt{\ell+1-k} &
    \sqrt{k       }
  }
  \label{eq:Rlk again}
\end{equation}
is the same matrix as in \cref{eq:Rlk}.
Note that \cref{eq:CSyml} is identical to \cref{eq:Cl} for $\CG_\ell$, except for the input space being $\Sym{\ell}$ instead of $\C^{\ell+1}$.
In other words, the two maps are related as follows:
\begin{equation}
  \CG_\ell
  = \CG_{\Sym{\ell}} \cdot
    \sof*{\id_2 \x V^\ell}
  \label{eq:Cl and CSyml}
\end{equation}
where $V^\ell$ is the isometry from \cref{eq:Vl}.

Similar to \cref{eq:Cl dual and Cl}, the \emph{dual} of \cref{eq:CSyml} is given by
\begin{align}
  \DCG_{\Sym{\ell}}
  :={}& \CG_{\Sym{\ell}} \cdot \sof*{\iY \x \id_2\xp{\ell}} \label{eq:relation to dual} \\
   ={}&
  \sum_{i=0}^1
  \sum_{w=0}^\ell
  \of*{
    R^\ell_{i0}(w+i) \ket{w+i-1}_\ell \+
    R^\ell_{i1}(w+i) \ket{w+i}_{\ell+2}
  } (-1)^i \bra{1-i}
  \bra{s_\ell(w)},
  \label{eq:DualCSyml}
\end{align}
which is identical to \cref{eq:DualCl}, except for the input space being $\Sym{\ell}$ instead of $\C^{\ell+1}$.
This constitutes a self-contained derivation of Clebsch--Gordan transform and its dual.

Let $S := \smx{0&1\\-1&0}$ denote the $2 \times 2$ matrix in \cref{eq:relation to dual}.
Noting that $\bra{0} S = \bra{1}$ and $\bra{1} S = -\bra{0}$, we can derive the following dual version of \cref{eq:stacking} that will be useful later:
\begin{align}
  \DCG_{\Sym{\ell}}
 &= \sum_{v=0}^{\ell-1}
    \of[\big]{\ket{v}_\ell \+ \0_{\ell+2}}
    \of*{
      \sqrt{\frac{\ell-v}{\ell+1}} \bra{0} \bra{s_\ell(v  )}
    + \sqrt{\frac{v+1   }{\ell+1}} \bra{1} \bra{s_\ell(v+1)}
    } \label{eq:dual Sym stacking} \\
 &+ \sum_{w=0}^{\ell+1}
    \of[\big]{\0_\ell \+ \ket{w}_{\ell+2}}
    \of*{
    - \sqrt{\frac{w       }{\ell+1}} \bra{0} \bra{s_\ell(w-1)}
    + \sqrt{\frac{\ell+1-w}{\ell+1}} \bra{1} \bra{s_\ell(w  )}
    }. \nonumber
\end{align}

\section{Implementation of the extremal covariant channels $\Phi^\ell_{\Tr}$ and $\Phi^\ell_{\UNOT}$}\label{apx:Extremal}

The goal of this appendix is to derive quantum circuits for implementing the channels $\Phi^\ell_{\Tr}$ and $\Phi^\ell_{\UNOT}$ from \cref{sec:Understanding} which are used in \cref{alg:Template}.
Recall from \cref{lem:identification} that we have identified them with the two extremal $\U{2}$-covariant channels $\Phi^\ell_1$ and $\Phi^\ell_2$, respectively, whose Choi matrices are given by \cref{lem:1param}.

Our strategy for deriving quantum circuits for these channels consists of the following steps:
(i)~use dual Clebsch--Gordan transform to extract a set of Kraus operators from the Choi matrices of the channel,
(ii)~combine the Kraus operators into a Stinespring isomtery, and
(iii)~derive a quantum circuit implementing this isometry.

Note that extremal $\SU{2}$-covariant quantum channels have also been studied in \cite{Nuwairan,SU2CovariantLowRank,NoethersPrinciple}.
Some results concerning their structure that are proved here can also be found in these papers.

\subsection{Representations of quantum channels}\label{apx:Kraus and Stinespring}

Let us briefly summarize different ways of representing a quantum channel (see~\cite{Watrous} for more background).
By interconverting between these representations, we will find isometries that correspond to the two extremal channels $\Phi^\ell_1$ and $\Phi^\ell_2$.

Recall from \cref{sec:Math preliminaries} that a quantum channel $\Phi \in \CPTP{\C^\din}{\C^\dout}$ is fully described by its Choi matrix $\J{\Phi} \in \L(\C^\dout \x \C^\din)$ defined in \cref{eq:Choi}.
Since $\J{\Phi}$ is positive semidefinite, see \cref{eq:ChoiCPTP}, we can find $r = \rank \J{\Phi}$ vectors $\ket{k_1}, \dotsc, \ket{k_r} \in \C^\dout \x \C^\din$ such that
\begin{equation}
  \J{\Phi} = \sum_{i=1}^r \proj{k_i}.
  \label{eq:Choi and Kraus}
\end{equation}
Let $K_i^\Phi \in \L(\C^\din,\C^\dout)$ be such that $\ket{K_i^\Phi} = \ket{k_i}$, \ie, the vectorization of $K_i^\Phi$ (see \cref{def:vec}) coincides with $\ket{k_i}$.
Then
\begin{equation}
  \Phi(\rho) = \sum_{i=1}^r K_i^\Phi \rho K_i^{\Phi\dagger},
  \label{eq:Kraus output}
\end{equation}
for any $\rho \in \L(\C^\din)$.
This is known as \emph{Kraus representation} of $\Phi$ and the $K_i^\Phi$ are known as \emph{Kraus operators}.
The map $\Phi$ is CPTP if and only if
\begin{equation}
  \sum_{i=1}^r K_i^{\Phi\dagger} K_i^\Phi = \id_\din.
  \label{eq:KKI}
\end{equation}

Another useful way of representing a quantum channel $\Phi \in \CPTP{\C^\din}{\C^\dout}$ is by a map $U_\Phi \in \L(\C^\din, \C^\dout \x \C^r)$ defined in terms of Kraus operators of $\Phi$:
\begin{equation}
  U_\Phi := \sum_{i=1}^r K_i^\Phi \x \ket{i}.
  \label{eq:UPhi}
\end{equation}
Note that \cref{eq:KKI} is equivalent to $U_\Phi\ct U_\Phi = \id_\din$, meaning that $U_\Phi$ is an isometry.
The action of $\Phi$ on any $\rho \in \L(\C^\din)$ can be expressed as follows:
\begin{equation}
  \Phi(\rho)
  = \Tr_2 \sof[\big]{U_\Phi \rho U_\Phi\ct},
\end{equation}
which produces the same output as \cref{eq:Kraus output}.
This is known as \emph{Stinespring representation} of $\Phi$ and the map $U_\Phi$ is called \emph{Stinespring isometry}.

\subsection{Kraus operators of $\Phi^\ell_1$ and $\Phi^\ell_2$}\label{apx:Kraus}

Let us find Kraus representations of the two extremal channels $\Phi^\ell_1, \Phi^\ell_2 \in \CPTP{\C^{\ell+1}}{\C^2}$ from \cref{sec:Extremal}.
Recall from \cref{eq:Choi12} in \cref{lem:1param} that their Choi matrices are
\begin{align}
	\J{\Phi^\ell_1}
 &= \DCGct_\ell
    \sof*{
      \frac{\ell+1}{\ell} \id_\ell \+ 0_{\ell+2}
    }
    \DCG_\ell, &
  \J{\Phi^\ell_2}
 &= \DCGct_\ell
    \sof*{
      0_\ell \+ \frac{\ell+1}{\ell+2} \id_{\ell+2}
    }
    \DCG_\ell,
  \label{eq:Chois}
\end{align}
where $\DCG_\ell$ denotes dual Clebsch--Gordan transform, see \cref{apx:Dual CG}.
Similar to \cref{eq:Cl and CSyml}, we can express $\DCG_\ell$ as follows:
\begin{equation}
  \DCG_\ell = \DCG_{\Sym{\ell}} \cdot \sof*{\id_2 \x V^\ell},
\end{equation}
where $V^\ell$ is the isometry defined in \cref{eq:Vl} that acts as $\bra{s_\ell(w)} V^\ell = \bra{w}$.
Using this identity, we can convert \cref{eq:dual Sym stacking} from $\DCG_{\Sym{\ell}}$ to $\DCG_\ell$:
\begin{align}
  \DCG_\ell
 &= \sum_{v=0}^{\ell-1}
    \of[\big]{\ket{v}_\ell \+ \0_{\ell+2}}
    \of*{
      \sqrt{\frac{\ell-v}{\ell+1}} \bra{0} \bra{v  }
    + \sqrt{\frac{v+1   }{\ell+1}} \bra{1} \bra{v+1}
    } \label{eq:dual stacking} \\
 &+ \sum_{w=0}^{\ell+1}
    \of[\big]{\0_\ell \+ \ket{w}_{\ell+2}}
    \of*{
    - \sqrt{\frac{w       }{\ell+1}} \bra{0} \bra{w-1}
    + \sqrt{\frac{\ell+1-w}{\ell+1}} \bra{1} \bra{w  }
    }. \nonumber
\end{align}

Following the procedure in \cref{eq:Choi and Kraus}, we can find Kraus operators of the channels $\Phi^\ell_1$ and $\Phi^\ell_2$ from the spectral decomposition of their Choi matrices in \cref{eq:Chois}:
\begin{align}
  \J{\Phi^\ell_1}
 &= \sum_{v \in [\ell]}
    \ketbra{K_v^{\Tr}}{\overline{K_v^{\Tr}}}, &
  \J{\Phi^\ell_2}
 &= \sum_{w \in [\ell+2]}
    \ketbra{K_w^{\UNOT}}{\overline{K_w^{\UNOT}}},
  \label{eq:ChoiKraus}
\end{align}
where the Kraus operators $K_v^{\Tr}$ and $K_w^{\UNOT}$ are derived below.
The reason we denote them like this is because eventually in \cref{lem:identification} we will identify $\Phi^\ell_1$ and $\Phi^\ell_2$ with the partial trace and universal NOT channels $\Phi^\ell_{\Tr}$ and $\Phi^\ell_{\UNOT}$, which were introduced in \cref{def:TrUNOT}.

\begin{lemma}[Kraus operators for $\Phi^\ell_1$ and $\Phi^\ell_2$]\label{lem:Kraus}
For any integer $\ell \geq 1$, the two extremal unitary-covariant channels $\Phi^\ell_1, \Phi^\ell_2 \in \CPTP{\C^{\ell+1}}{\C^2}$ from \cref{lem:1param} have the following Kraus operators:
\begin{align}
  K_v^{\Tr} &:=
    \sqrt{\frac{\ell-v}{\ell}} \ketbra{0}{v  }
  + \sqrt{\frac{v+1   }{\ell}} \ketbra{1}{v+1}, &
  v &\in [\ell], \label{eq:KTr} \\
  K_w^{\UNOT} &:=
  - \sqrt{\frac{w       }{\ell+2}} \ketbra{0}{w-1}
  + \sqrt{\frac{\ell+1-w}{\ell+2}} \ketbra{1}{w  }, &
  w &\in [\ell+2]. \label{eq:KUNOT}
\end{align}
In particular, the extremal cases $w = 0$ and $w = \ell+1$ of $K_w^{\UNOT}$ correspond to
\begin{align}
  K_0^{\UNOT}
    &= \sqrt{\frac{\ell+1}{\ell+2}} \ketbra{1}{0}, &
  K_{\ell+1}^{\UNOT}
    &=-\sqrt{\frac{\ell+1}{\ell+2}} \ketbra{0}{\ell}.
\end{align}
\end{lemma}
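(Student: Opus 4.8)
The plan is to verify \cref{eq:ChoiKraus} directly: show that the vectorizations $\ket{K_v^{\Tr}}$ and $\ket{K_w^{\UNOT}}$ of the operators in \cref{eq:KTr,eq:KUNOT} are orthogonal eigenvectors of the respective Choi matrices, with the correct eigenvalues, so that the stated sums of rank-one projectors reproduce $\J{\Phi^\ell_1}$ and $\J{\Phi^\ell_2}$. The engine is \cref{eq:Chois}: each Choi matrix is $\DCGct_\ell$ conjugating a scaled projector onto one of the two blocks in the dual Clebsch--Gordan decomposition, so it suffices to show that, after applying $\DCG_\ell$, the vectorized Kraus operators land (up to scalar) on the standard basis vectors spanning the corresponding block. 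Concretely, using \cref{def:vec} one has $\ket{K_v^{\Tr}} = \sqrt{(\ell-v)/\ell}\,\ket{0}\ket{v} + \sqrt{(v+1)/\ell}\,\ket{1}\ket{v+1}$ and similarly for $\ket{K_w^{\UNOT}}$ — these are precisely (up to the factor $\sqrt{(\ell+1)/\ell}$ or $\sqrt{(\ell+1)/(\ell+2)}$) the row vectors appearing in the expansion \cref{eq:dual stacking} of $\DCG_\ell$. Thus $\DCG_\ell \ket{K_v^{\Tr}} = \sqrt{\tfrac{\ell}{\ell+1}}\of{\ket{v}_\ell \+ \0_{\ell+2}}$ and $\DCG_\ell \ket{K_w^{\UNOT}} = \sqrt{\tfrac{\ell}{\ell+1}}\cdot\sqrt{\tfrac{\ell+2}{\ell+1}}\,\of{\0_\ell \+ \ket{w}_{\ell+2}}$ — wait, more carefully, reading off the coefficients: the $\bra{0}\bra{v}$, $\bra{1}\bra{v+1}$ coefficients in \cref{eq:dual stacking} are $\sqrt{(\ell-v)/(\ell+1)}$ and $\sqrt{(v+1)/(\ell+1)}$, so $\ket{K_v^{\Tr}}$ equals $\sqrt{(\ell+1)/\ell}$ times the corresponding column of $\DCGct_\ell$, hence $\DCG_\ell\ket{K_v^{\Tr}} = \sqrt{(\ell+1)/\ell}\,\of{\ket{v}_\ell\+\0_{\ell+2}}$, and likewise $\DCG_\ell\ket{K_w^{\UNOT}} = \sqrt{(\ell+1)/(\ell+2)}\,\of{\0_\ell\+\ket{w}_{\ell+2}}$.

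First I would write out the vectorizations explicitly and match them against \cref{eq:dual stacking}, being careful with the sign in $K_w^{\UNOT}$ (it matches the $-\sqrt{w/(\ell+1)}$ term), and noting the boundary cases: for $v \in [\ell]$ the index $v+1$ ranges over $\set{1,\dots,\ell}$ so $\ket{K_v^{\Tr}}$ is well-defined in $\C^2\x\C^{\ell+1}$; for $w \in [\ell+2] = \set{0,\dots,\ell+1}$, the terms $\ketbra{0}{w-1}$ at $w=0$ and $\ketbra{1}{w}$ at $w=\ell+1$ have vanishing coefficients $\sqrt{0}$, consistent with the claimed simplified forms of $K_0^{\UNOT}$ and $K_{\ell+1}^{\UNOT}$. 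Then I would compute, for each $v$, $\ketbra{\DCG_\ell\ket{K_v^{\Tr}}}{\DCG_\ell\ket{K_v^{\Tr}}} = \tfrac{\ell+1}{\ell}\proj{v}_\ell$, sum over $v \in [\ell]$ to get $\tfrac{\ell+1}{\ell}\id_\ell \+ 0_{\ell+2}$, and conjugate back by $\DCGct_\ell$ to recover $\J{\Phi^\ell_1}$ as in \cref{eq:Chois}; the computation for $\J{\Phi^\ell_2}$ is identical with the other block. Since the Choi matrix determines the channel and the Kraus decomposition \cref{eq:Choi and Kraus,eq:Kraus output} is valid whenever the vectorized Kraus operators sum (as projectors) to the Choi matrix, this establishes that \cref{eq:KTr,eq:KUNOT} are genuine Kraus operators. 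As a sanity check one can also verify $\sum_v K_v^{\Tr\dagger}K_v^{\Tr} = \id_{\ell+1}$ and $\sum_w K_w^{\UNOT\dagger}K_w^{\UNOT} = \id_{\ell+1}$ directly: each diagonal entry $\bra{k}\cdot\bra{k}$ gets contributions $\tfrac{\ell-k}{\ell} + \tfrac{k}{\ell} = 1$ and $\tfrac{k}{\ell+2} + \tfrac{\ell+1-k}{\ell+2} = 1$ respectively (with the convention that out-of-range terms vanish), confirming the trace-preserving condition \cref{eq:KKI}.

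The main obstacle — really the only nontrivial point — is bookkeeping the index shifts and sign conventions so that the vectorized Kraus operators align exactly with the columns of $\DCGct_\ell$ as presented in \cref{eq:dual stacking}: one must confirm that the set $\set{\ket{K_v^{\Tr}} : v \in [\ell]}$ corresponds bijectively to the first block (dimension $\ell$), the set $\set{\ket{K_w^{\UNOT}} : w \in [\ell+2]}$ to the second block (dimension $\ell+2$), and that the factor pulled out of each is the square root of the claimed eigenvalue of the Choi matrix. Because $\DCG_\ell$ is unitary, orthonormality of these vectorized operators is automatic once the correspondence with distinct standard basis vectors is established, so no separate orthogonality computation is needed. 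Everything else is substitution and algebra, and the structure of \cref{eq:dual stacking} makes the match essentially immediate once the vectorizations are written down.
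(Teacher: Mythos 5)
Your proposal is correct and follows essentially the same route as the paper's proof: both read off the vectorized Kraus operators directly from the spectral decomposition of the Choi matrices in \cref{eq:Chois}, identifying them (up to the square root of the eigenvalue $\tfrac{\ell+1}{\ell}$ or $\tfrac{\ell+1}{\ell+2}$) with the columns of $\DCGct_\ell$ as displayed in \cref{eq:dual stacking}, and then unvectorizing. Your corrected normalization $\DCG_\ell\ket{K_v^{\Tr}} = \sqrt{(\ell+1)/\ell}\,(\ket{v}_\ell\+\0_{\ell+2})$ and the index-range bookkeeping (including the vanishing boundary terms of $K_w^{\UNOT}$) match the paper, and the extra trace-preservation check is a harmless addition.
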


Note that similar formulas can also be found in Sections~2.2.1 and 2.2.2 of \cite{SU2CovariantLowRank}.
That paper discusses the duals of these channels and uses slightly different sign conventions for Clebsch--Gordan coefficients.

\begin{proof}
We can immediately read off the vectorized Kraus operators from \cref{eq:Chois,eq:dual stacking}:
\begin{align}
  \ket{K_v^{\Tr}} &=
    \sqrt{\frac{\ell+1}{\ell}}
    \of*{
      \sqrt{\frac{\ell-v}{\ell+1}} \bra{0} \bra{v  }
    + \sqrt{\frac{v+1   }{\ell+1}} \bra{1} \bra{v+1}
    }, &
  v &\in [\ell], \\
  \ket{K_w^{\UNOT}} &=
  \sqrt{\frac{\ell+1}{\ell+2}}
    \of*{
    - \sqrt{\frac{w       }{\ell+1}} \bra{0} \bra{w-1}
    + \sqrt{\frac{\ell+1-w}{\ell+1}} \bra{1} \bra{w  }
    }, &
  w &\in [\ell+2].
\end{align}
The normalization factors at the front account for the eigenvalues of the Choi operators in \cref{eq:Chois}.
Since in \cref{eq:ChoiKraus} we have absorbed a square root of the eigenvalue in each of the vectorized Kraus operators, we must take a square root of the normalization factor.
We get the desired formulas by flipping the first ``bra'' into a ``ket'' to unvectorize, see \cref{def:vec}.
\end{proof}

\begin{example}[UNOT for $\ell=1$]
The $\ell = 1$ case of \cref{eq:KUNOT} provides Kraus operators for the single-qubit $\UNOT$ operation, which are as follows:
\begin{align}
  K_0 &=-\sqrt{\frac{2}{3}} \ketbra{1}{0}, &
  K_1 &= \sqrt{\frac{1}{3}} \of[\big]{\proj{0}-\proj{1}}, &
  K_2 &= \sqrt{\frac{2}{3}} \ketbra{0}{1}.
\end{align}
Under a unitary basis change in their span, they are equivalent to $1/\sqrt{3}$ times a Pauli matrix:
\begin{align}
  \frac{1}{\sqrt{3}} X &= \frac{-K_0+K_2}{\sqrt{2}}, &
  \frac{1}{\sqrt{3}} Z &= K_1, &
  \frac{1}{\sqrt{3}} Y &= -i \frac{K_0+K_2}{\sqrt{2}}.
\end{align}
Thus the $\ell=1$ $\UNOT$ map can be implemented by applying the Pauli matrices $X,Y,Z$, each with probability $1/3$.
\end{example}

\subsection{The action of $\Phi^\ell_1$ and $\Phi^\ell_2$}\label{apx:Action}

The following lemma uses the Kraus operators of $\Phi^\ell_1$ and $\Phi^\ell_2$ from \cref{lem:Kraus} to derive their action on any input.

\begin{lemma}\label{lem:Phi12}
For any integer $\ell \geq 1$ and any $w,w' \in \set{0,\dotsc,\ell}$,
\begin{align}
  \Phi^\ell_1 \of[\big]{\ketbra{w}{w'}} &=
	\begin{cases}
		\frac{\ell-w}{\ell} \proj{0} + \frac{w}{\ell} \proj{1} \qquad & \text{if $w = w'$}, \\[3pt]
		\frac{\sqrt{(\ell-w)w'}}{\ell} \ketbra{0}{1} & \text{if $w' = w+1$}, \\[3pt]
		\frac{\sqrt{(\ell-w')w}}{\ell} \ketbra{1}{0} & \text{if $w = w'+1$}, \\
		0 & \text{otherwise},
	\end{cases} \label{eq:output1} \\
  \Phi^\ell_2 \of[\big]{\ketbra{w}{w'}} &=
	\begin{cases}
		\frac{w+1}{\ell+2} \proj{0} + \frac{\ell+1-w}{\ell+2} \proj{1} & \text{if $w = w'$}, \\[3pt]
	- \frac{\sqrt{(\ell+1-w')(w+1)}}{\ell+2} \ketbra{0}{1} & \text{if $w' = w+1$}, \\[3pt]
	- \frac{\sqrt{(\ell+1-w)(w'+1)}}{\ell+2} \ketbra{1}{0} & \text{if $w = w'+1$}, \\
		0 & \text{otherwise}.
	\end{cases} \label{eq:output2}
\end{align}
\end{lemma}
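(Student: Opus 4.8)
The plan is to compute both channels directly from the Kraus representations established in \cref{lem:Kraus}. Recall that a channel with Kraus operators $\set{K_i}$ acts as $\Phi(\rho) = \sum_i K_i \rho K_i\ct$, so that $\Phi(\ketbra{w}{w'}) = \sum_i (K_i\ket{w})(K_i\ket{w'})\ct$. Hence the entire computation reduces to evaluating $K_i\ket{w}$ for the two relevant families of Kraus operators and then summing the resulting outer products, keeping careful track of which index values lie in the allowed ranges.

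First I would treat $\Phi^\ell_1$. From \cref{eq:KTr}, for $v \in [\ell]$ we have $K_v^{\Tr}\ket{w} = \sqrt{(\ell-v)/\ell}\,\delta_{v,w}\ket{0} + \sqrt{(v+1)/\ell}\,\delta_{v+1,w}\ket{1}$, which is nonzero only when $v \in \set{w-1,w} \cap [\ell]$. Therefore $\Phi^\ell_1(\ketbra{w}{w'})$ receives contributions only from $v \in \set{w-1,w} \cap \set{w'-1,w'} \cap [\ell]$, and in particular vanishes when $\abs{w-w'} \geq 2$. When $w = w'$, the surviving values $v = w$ and $v = w-1$ contribute $\frac{\ell-w}{\ell}\proj{0}$ and $\frac{w}{\ell}\proj{1}$ respectively, with the boundary cases $w = 0$ and $w = \ell$ automatically dropping the vanishing term. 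When $w' = w+1$, only $v = w$ survives, yielding $\frac{\sqrt{(\ell-w)(w+1)}}{\ell}\ketbra{0}{1} = \frac{\sqrt{(\ell-w)w'}}{\ell}\ketbra{0}{1}$. Finally, the case $w = w'+1$ follows from the case $w' = w+1$ by taking adjoints, since $\Phi^\ell_1(\ketbra{w}{w'}) = \Phi^\ell_1(\ketbra{w'}{w})\ct$.

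An essentially identical argument handles $\Phi^\ell_2$ using \cref{eq:KUNOT}: for $j \in [\ell+2]$ one has $K_j^{\UNOT}\ket{w} = -\sqrt{j/(\ell+2)}\,\delta_{j,w+1}\ket{0} + \sqrt{(\ell+1-j)/(\ell+2)}\,\delta_{j,w}\ket{1}$, nonzero only for $j \in \set{w,w+1}$, both of which lie in $[\ell+2] = \set{0,\dots,\ell+1}$ whenever $w \in \set{0,\dots,\ell}$. Summing the outer products over $j$ in the common index set $\set{w,w+1} \cap \set{w',w'+1}$ gives $\frac{w+1}{\ell+2}\proj{0} + \frac{\ell+1-w}{\ell+2}\proj{1}$ when $w = w'$; for $w' = w+1$ only $j = w+1$ survives, producing $-\sqrt{\tfrac{w+1}{\ell+2}}\sqrt{\tfrac{\ell-w}{\ell+2}}\ketbra{0}{1} = -\frac{\sqrt{(\ell+1-w')(w+1)}}{\ell+2}\ketbra{0}{1}$ (the minus sign coming from the $\ketbra{0}{w-1}$ term in $K_j^{\UNOT}$); and the output is $0$ when $\abs{w-w'} \geq 2$. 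The case $w = w'+1$ again follows by adjoint symmetry.

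I do not anticipate any genuine obstacle: the argument is a routine but bookkeeping-heavy calculation. The two points requiring the most care are (i) verifying that the contributing Kraus indices indeed fall inside the ranges $[\ell]$ and $[\ell+2]$, together with checking the boundary values $w \in \set{0,\ell}$ where one of the two candidate terms drops out, and (ii) correctly propagating the sign in $\Phi^\ell_2$ through the single surviving cross term so that the stated negative signs on the off-diagonal outputs come out right.
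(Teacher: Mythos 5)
Your proposal is correct and follows essentially the same route as the paper's own proof: expand $\Phi^\ell_i(\ketbra{w}{w'}) = \sum_j (K_j\ket{w})(K_j\ket{w'})\ct$ using the Kraus operators of \cref{lem:Kraus} and collect the surviving delta functions. The only cosmetic difference is that you obtain the $w = w'+1$ case by adjoint symmetry while the paper reads it off directly from the expansion; both are fine.
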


Note that similar-looking formulas for the duals of these channels are provided in Section~2.2.1 of \cite{SU2CovariantLowRank}, so one can recover \cref{lem:Phi12} also from Section~2.2.2 of that paper.

\begin{proof}
Note from \cref{eq:KTr} that
\begin{equation}
  K_v^{\Tr} \ket{w}
  = \sqrt{\frac{\ell-w}{\ell}} \ket{0} \delta_{v,w}
  + \sqrt{\frac{     w}{\ell}} \ket{1} \delta_{v+1,w}.
\end{equation}
The output of $\Phi^\ell_1$ on input $\ketbra{w}{w'}$ is then given by
\begin{align}
  \Phi^\ell_1 \of[\big]{\ketbra{w}{w'}}
&=
  \sum_{v=0}^{\ell-1}
  K_v^{\Tr}
  \ketbra{w}{w'}
  K_v^{\Tr\dagger} \\
&=
  \sum_{v=0}^{\ell-1}
  \of*{
    \sqrt{\frac{\ell-w}{\ell}} \ket{0} \delta_{v,w}
  + \sqrt{\frac{     w}{\ell}} \ket{1} \delta_{v+1,w}
  }
  \of*{
    \sqrt{\frac{\ell-w'}{\ell}} \bra{0} \delta_{v,w'}
  + \sqrt{\frac{     w'}{\ell}} \bra{1} \delta_{v+1,w'}
  }.
\end{align}
After expanding and simplifying this, we get
\begin{align}
  \Phi^\ell_1 \of[\big]{\ketbra{w}{w'}}
  = \frac{\ell-w}{\ell} \proj{0} \delta_{w,w'}
 &+ \sqrt{\frac{\ell-w}{\ell}} \sqrt{\frac{w'}{\ell}} \ketbra{0}{1} \delta_{w,w'-1} \\
{}+ \frac{     w}{\ell} \proj{1} \delta_{w,w'}
 &+ \sqrt{\frac{\ell-w'}{\ell}} \sqrt{\frac{w}{\ell}} \ketbra{1}{0} \delta_{w-1,w'},
\end{align}
which agrees with \cref{eq:output1}.

The proof of \cref{eq:output2} is similar.
Note from \cref{eq:KUNOT} that
\begin{equation}
  K_v^{\UNOT} \ket{w}
= - \sqrt{\frac{     w+1}{\ell+2}} \ket{0} \delta_{v-1,w}
    \sqrt{\frac{\ell+1-w}{\ell+2}} \ket{1} \delta_{v,w}.
\end{equation}
The output of $\Phi^\ell_2$ is then given by
\begin{align}
  \Phi^\ell_2 \of[\big]{\ketbra{w}{w'}}
&=
  \sum_{v=0}^{\ell+1}
  K_v^{\UNOT}
  \ketbra{w}{w'}
  K_v^{\UNOT\dagger} \\
&=
  \sum_{v=0}^{\ell+1}
  \of*{
  - \sqrt{\frac{     w+1}{\ell+2}} \ket{0} \delta_{v-1,w}
  + \sqrt{\frac{\ell+1-w}{\ell+2}} \ket{1} \delta_{v,w}
  } \\
  & \quad\quad
  \of*{
  - \sqrt{\frac{     w'+1}{\ell+2}} \bra{0} \delta_{v-1,w'}
  + \sqrt{\frac{\ell+1-w'}{\ell+2}} \bra{1} \delta_{v,w'}
  }. \nonumber
\end{align}
After expanding and simplifying this, we get
\begin{align}
  \Phi^\ell_2 \of[\big]{\ketbra{w}{w'}}
  = \frac{w+1}{\ell+2} \proj{0} \delta_{w',w}
 &- \sqrt{\frac{\ell+1-w'}{\ell+2}}
    \sqrt{\frac{w+1}{\ell+2}}
    \ketbra{0}{1} \delta_{w+1,w'} \\
{}+ \frac{\ell+1-w}{\ell+2} \proj{1} \delta_{w',w}
 &- \sqrt{\frac{\ell+1-w}{\ell+2}}
    \sqrt{\frac{w'+1}{\ell+2}}
    \ketbra{1}{0} \delta_{w,w'+1},
\end{align}
which agrees with \cref{eq:output2}.
\end{proof}

\subsection{Proof that $\Phi^\ell_1 = \Phi^\ell_{\Tr}$ and $\Phi^\ell_2 = \Phi^\ell_{\UNOT}$}\label{apx:Identification}

\begin{lemma}\label{lem:identification}
For any integer $\ell \geq 1$, the maps from \cref{def:TrUNOT} coincide with the ones from \cref{lem:1param}:
\begin{align}
  \Phi^\ell_1 &= \Phi^\ell_{\Tr}, &
  \Phi^\ell_2 &= \Phi^\ell_{\UNOT}.
\end{align}
\end{lemma}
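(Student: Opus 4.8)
The goal is to show that the two extremal unitary-covariant channels $\Phi^\ell_1$ and $\Phi^\ell_2$ from \cref{lem:1param}, defined via their Choi matrices in the dual Clebsch--Gordan basis, coincide with the concrete channels $\Phi^\ell_{\Tr}$ and $\Phi^\ell_{\UNOT}$ from \cref{def:TrUNOT}. The cleanest route is to observe that all four maps live in the one-parameter convex set $\UCl$ (\cref{lem:1param}), so it suffices to pin down each one by a single scalar invariant. Concretely, I would first verify that $\Phi^\ell_{\Tr}$ and $\Phi^\ell_{\UNOT}$ are unitary-covariant: for $\Phi^\ell_{\Tr}$ this follows because $\Sigma^\ell$ intertwines $T^\ell(U)$ with $U\xp{\ell}$ (since $V^\ell$ maps the Gelfand--Tsetlin basis of $\C^{\ell+1}$ to the symmetric basis, by \cref{lem:tjk from symmetric subspace}), and tracing out $\ell-1$ of $\ell$ symmetric qubits is $\U{2}$-equivariant; for $\Phi^\ell_{\UNOT}$ one uses the left-invariance of the Haar/Bloch-sphere measure $d\psi$ under $\ket{\psi}\mapsto U\ket{\psi}$ together with $\id_2-\proj{\psi}=\proj{\psi^\perp}$, noting $U\proj{\psi^\perp}U\ct = \proj{(U\psi)^\perp}$. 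Hence $\Phi^\ell_{\Tr},\Phi^\ell_{\UNOT}\in\UCl$, and by \cref{cor:covariant channels} each is a convex combination $t\Phi^\ell_{\Tr}+(1-t)\Phi^\ell_{\UNOT}$ of the extremal pair --- so I must identify which extremal point each equals, i.e. compute the corresponding parameter $t$.

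The natural invariant to compute is a single diagonal matrix entry of the output, for instance $\bra{0}\Phi(\proj{w})\ket{0}$ for one convenient value of $w$, or the rank of the Choi matrix. I would use the explicit action formulas already available: \cref{lem:Phi12} gives $\Phi^\ell_1(\proj{w})=\frac{\ell-w}{\ell}\proj{0}+\frac{w}{\ell}\proj{1}$ and $\Phi^\ell_2(\proj{w})=\frac{w+1}{\ell+2}\proj{0}+\frac{\ell+1-w}{\ell+2}\proj{1}$. On the other side, $\Phi^\ell_{\Tr}(\proj{w})$ is the single-qubit reduced state of the symmetric $\ell$-qubit state $\proj{s_\ell(w)}$, which is the classic computation $\Tr_{2,\dots,\ell}\proj{s_\ell(w)}=\frac{\ell-w}{\ell}\proj{0}+\frac{w}{\ell}\proj{1}$ (each of the $\ell$ qubits carries a ``$1$'' with probability $w/\ell$). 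This already matches $\Phi^\ell_1$ exactly, so $t=1$ and $\Phi^\ell_1=\Phi^\ell_{\Tr}$. For $\Phi^\ell_{\UNOT}(\proj{w})$ I would compute $\bra{\psi}\xp{\ell}\proj{s_\ell(w)}\ket{\psi}\xp{\ell}=\binom{\ell}{w}|\alpha|^{2(\ell-w)}|\beta|^{2w}$ with $\ket{\psi}=\smx{\alpha\\\beta}$, then integrate $(\ell+1)\int \binom{\ell}{w}|\alpha|^{2(\ell-w)}|\beta|^{2w}\proj{\psi^\perp}\,d\psi$ using the same Bloch-sphere beta-integrals as in \cref{lem:PiSym3}; the $\proj{0}$-component comes from $|\beta|^2$ appearing in $\proj{\psi^\perp}$, giving $\frac{w+1}{\ell+2}$, which matches $\Phi^\ell_2$. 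Alternatively, and more slickly, I can avoid the integral: $\Phi^\ell_{\Tr}+\Phi^\ell_{\UNOT}$ should reproduce, up to scaling, the structure in \cref{eq:ChoiRel}. Indeed one checks $\ell\,\Phi^\ell_{\Tr}(\proj{w})+(\ell+2)\Phi^\ell_{\UNOT}(\proj{w})=(\ell+1)\id_2$ entrywise, matching $\ell\J{\Phi^\ell_1}+(\ell+2)\J{\Phi^\ell_2}=(\ell+1)\id_{2(\ell+1)}$; combined with $\Phi^\ell_1=\Phi^\ell_{\Tr}$ this forces $\Phi^\ell_2=\Phi^\ell_{\UNOT}$.

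Since both sides are unitary-covariant, by Schur's lemma their Choi matrices are fully determined by scalars on each of the two irreducible blocks $T^{\ell-1}$ and $T^{\ell+1}$ of $\overline{U}\x T^\ell(U)$; equivalently, by \cref{lem:Fidelity}-style reasoning, a unitary-covariant channel in $\UCl$ is determined by its action on the single input $\proj{0}$ (or any one $\proj{w}$). So the whole proof reduces to: (i) check $\Phi^\ell_{\Tr},\Phi^\ell_{\UNOT}$ are covariant (one or two lines each); (ii) evaluate $\Phi^\ell_{\Tr}(\proj{0})$ and $\Phi^\ell_{\UNOT}(\proj{0})$ and compare with $\Phi^\ell_1(\proj{0})=\frac{\ell-0}{\ell}\proj{0}=\proj{0}$ and $\Phi^\ell_2(\proj{0})=\frac{1}{\ell+2}\proj{0}+\frac{\ell+1}{\ell+2}\proj{1}$ from \cref{lem:Phi12}. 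The only mild obstacle is the Bloch-sphere integral for $\Phi^\ell_{\UNOT}$, but that is exactly the computation already carried out in the proof of \cref{lem:PiSym3} (the $\int(\cos\frac\theta2)^{2a}(\sin\frac\theta2)^{2b}\sin\theta\,d\theta = 2\,B(a+1,b+1)=2\frac{a!b!}{(a+b+1)!}$ identity), so I would cite that and avoid redoing it; using the scaling relation \cref{eq:ChoiRel} as a shortcut removes even this. This completes the identification.
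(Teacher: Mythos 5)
Your proposal is correct, but it takes a genuinely different route from the paper. The paper proves the identification by brute force: it evaluates $\Phi^\ell_{\Tr}$ and $\Phi^\ell_{\UNOT}$ on every matrix unit $\ketbra{w}{w'}$ (the partial trace via the recursion \cref{eq:s-recursion} for symmetric states, the $\UNOT$ integral by splitting it into two terms and invoking the projector formula of \cref{lem:PiSym3}) and matches the results against \cref{lem:Phi12}. You instead first establish that $\Phi^\ell_{\Tr}$ and $\Phi^\ell_{\UNOT}$ are unitary-covariant and then exploit the one-parameter structure of $\UCl$ to reduce the whole identification to a single scalar evaluation, e.g.\ $\bra{1}\Phi^\ell_{\Tr}(\proj{0})\ket{1}=0$ versus $\bra{1}\Phi^\ell_{\UNOT}(\proj{0})\ket{1}=(\ell+1)\int\abs{\alpha}^{2(\ell+1)}\,d\psi=\frac{\ell+1}{\ell+2}$. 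This is shorter and conceptually cleaner, at the price of having to verify covariance explicitly (which the paper never needs for these two maps, since it checks equality on a spanning set of inputs outright). Two points to fix. First, your appeal to \cref{cor:covariant channels} is circular: that corollary is \emph{deduced from} \cref{lem:identification} in the paper, and as written your sentence says $\Phi^\ell_{\Tr}$ is a convex combination of $\Phi^\ell_{\Tr}$ and $\Phi^\ell_{\UNOT}$, which is vacuous. What you need is \cref{lem:1param} directly: it says every element of $\UCl$ is a convex combination of $\Phi^\ell_1$ and $\Phi^\ell_2$, so once covariance is established you may write $\Phi^\ell_{\Tr}=t\,\Phi^\ell_1+(1-t)\Phi^\ell_2$ and solve for $t$. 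Second, the ``slicker'' sum-rule shortcut via \cref{eq:ChoiRel} does not actually spare you a computation: to pin down the parameter for $\Phi^\ell_{\UNOT}$ you still need one scalar data point from the integral (verifying $\ell\,\Phi^\ell_{\Tr}(\proj{w})+(\ell+2)\Phi^\ell_{\UNOT}(\proj{w})=(\ell+1)\id_2$ already requires knowing $\Phi^\ell_{\UNOT}(\proj{w})$). That data point is just the moment $\int\abs{\alpha}^{2k}\,d\psi=\frac{1}{k+1}$, so the damage is minimal, but the argument should be stated as a direct evaluation rather than as a consequence of \cref{eq:ChoiRel}.
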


\begin{proof}
Our strategy will be to show that, for all
$w,w' \in \set{0,\dotsc,\ell}$,
\begin{align}
  \Phi^\ell_1 \of[\big]{\ketbra{w}{w'}}
  &= \Phi^\ell_{\Tr} \of[\big]{\ketbra{w}{w'}}, &
  \Phi^\ell_2 \of[\big]{\ketbra{w}{w'}}
  &= \Phi^\ell_{\UNOT} \of[\big]{\ketbra{w}{w'}}.
\end{align}
Recall from \cref{eq:Vl!,eq:Sigmal!} that
\begin{equation}
  \Sigma^\ell\of[\big]{\ketbra{w}{w'}}
  = V^\ell \ketbra{w}{w'} V^{\ell\dagger}
  = \ketbra{s_\ell(w)}{s_\ell(w')}.
\end{equation}
We can use this to analyze the output of $\Phi^\ell_{\Tr}$:
\begin{equation}
  \Phi^\ell_{\Tr} \of[\big]{\ketbra{w}{w'}}
= \Tr_{2,\dotsc,\ell}
  \sof[\Big]{
    \ketbra{s_\ell(w)}{s_\ell(w')}
  }.
  \label{eq:partial trace}
\end{equation}
Recall from \cref{eq:s-recursion} in \cref{apx:Invariant subspaces} that
\begin{equation}
  \ket{s_\ell(w)}
  = \sqrt{\frac{\ell-w}{\ell}} \ket{0} \ket{s_{\ell-1}(w  )}
  + \sqrt{\frac{     w}{\ell}} \ket{1} \ket{s_{\ell-1}(w-1)}.
  \label{eq:slw recursion}
\end{equation}
After substituting this, we can easily compute the partial trace in \cref{eq:partial trace}:
\begin{align}
  \Phi^\ell_{\Tr} \of[\big]{\ketbra{w}{w'}}
= \Tr_{2,\dotsc,\ell}
  \Bigg[
  & \of*{
      \sqrt{\frac{\ell-w}{\ell}} \ket{0} \ket{s_{\ell-1}(w  )}
    + \sqrt{\frac{     w}{\ell}} \ket{1} \ket{s_{\ell-1}(w-1)}
    } \\
  & \of*{
      \sqrt{\frac{\ell-w'}{\ell}} \bra{0} \bra{s_{\ell-1}(w'  )}
    + \sqrt{\frac{     w'}{\ell}} \bra{1} \bra{s_{\ell-1}(w'-1)}
    }
  \Bigg]. \nonumber
\end{align}
Since $\Tr \sof[\Big]{\ketbra{s_{\ell-1}(w)}{s_{\ell-1}(v)}} = \braket{s_{\ell-1}(v)}{s_{\ell-1}(w)} = \delta_{v,w}$, we get
\begin{align}
  \Phi^\ell_{\Tr} \of[\big]{\ketbra{w}{w'}}
  = \frac{\ell-w}{\ell} \proj{0} \delta_{w,w'}
 &+ \sqrt{\frac{\ell-w}{\ell}} \sqrt{\frac{w'}{\ell}} \ketbra{0}{1} \delta_{w,w'-1} \\
{}+ \frac{     w}{\ell} \proj{1} \delta_{w,w'}
 &+ \sqrt{\frac{\ell-w'}{\ell}} \sqrt{\frac{w}{\ell}} \ketbra{1}{0} \delta_{w',w-1}, \nonumber
\end{align}
which agrees with \cref{eq:output1}.

\newcommand{\Int}{\mathrm{Int}}

The output of the second map is equal to
\begin{align}
  \Phi^\ell_{\UNOT} \of[\big]{\ketbra{w}{w'}}
 &= (\ell+1) \int
    \bra{\psi}\xp{\ell}
    \sof[\Big]{\ketbra{s_\ell(w)}{s_\ell(w')}}
    \ket{\psi}\xp{\ell}
    \of[\big]{\id_2 - \proj{\psi}} \, d\psi \\
 &= (\ell+1) \int
    \bra{s_\ell(w')}
    \of[\big]{\proj{\psi}}\xp{\ell}
    \ket{s_\ell(w)}
    \of[\big]{\id_2 - \proj{\psi}} \, d\psi \\
 &= c \id_2 - A, \label{eq:cIA}
\end{align}
for some constant $c \in \C$ and matrix $A \in \L(\C^2)$ that can be found by splitting the integral into two terms and integrating them separately.
Using \cref{lem:PiSym3}, the first term results in
\begin{equation}
  c
= \bra{s_\ell(w')}
  \sof*{
    (\ell+1) \int
    \of[\big]{\proj{\psi}}\xp{\ell}
    \, d\psi
  }
  \ket{s_\ell(w)}
= \bra{s_\ell(w')}
  \Pi_{\Sym{\ell}}
  \ket{s_\ell(w)}
= \delta_{w',w}
  \label{eq:c}
\end{equation}
since $\ket{s_\ell(w)} \in \Sym{\ell}$.
The second term produces
\begin{align}
  A
&=
  \of[\big]{\id_2 \x \bra{s_\ell(w')}}
  \sof*{
    (\ell+1) \int
    \of{\proj{\psi}}\xp{\ell+1}
    \, d\psi
  }
  \of[\big]{\id_2 \x \ket{s_\ell(w)}} \\
&=
  \of[\big]{\id_2 \x \bra{s_\ell(w')}}
  \sof*{
    \frac{\ell+1}{\ell+2}
    \Pi_{\Sym{\ell+1}}
  }
  \of[\big]{\id_2 \x \ket{s_\ell(w)}} \\
&=
  \frac{\ell+1}{\ell+2}
  \of[\big]{\id_2 \x \bra{s_\ell(w')}}
  \sof*{
    \sum_{v=0}^{\ell+1}
    \proj{s_{\ell+1}(v)}
  }
  \of[\big]{\id_2 \x \ket{s_\ell(w)}},
  \label{eq:A}
\end{align}
where we substituted $\Pi_{\Sym{\ell+1}}$ using \cref{lem:PiSym3}.
Applying \cref{eq:slw recursion} to $\ket{s_{\ell+1}(v)}$, we get
\begin{align}
  \of[\big]{\id_2 \x \bra{s_\ell(w')}}
  \ket{s_{\ell+1}(v)}
&=
  \of[\big]{\id_2 \x \bra{s_\ell(w')}}
  \of*{
    \sqrt{\tfrac{\ell+1-v}{\ell+1}} \ket{0} \ket{s_\ell(v  )}
  + \sqrt{\tfrac{       v}{\ell+1}} \ket{1} \ket{s_\ell(v-1)}
  } \\
&=
  \sqrt{\frac{\ell+1-w'}{\ell+1}} \ket{0} \delta_{w',v}
+ \sqrt{\frac{     w'+1}{\ell+1}} \ket{1} \delta_{w',v-1}.
\end{align}
Substituting this back into \cref{eq:A}, we get
\begin{align}
  A
&=
  \frac{\ell+1}{\ell+2}
  \sum_{v=0}^{\ell+1}
  \of[\big]{\id_2 \x \bra{s_\ell(w')}}
  \ket{s_{\ell+1}(v)}
  \bra{s_{\ell+1}(v)}
  \of[\big]{\id_2 \x \ket{s_\ell(w)}} \\
&=
  \sum_{v=0}^{\ell+1}
  \of*{
    \sqrt{\tfrac{\ell+1-w'}{\ell+2}} \ket{0} \delta_{w',v}
  + \sqrt{\tfrac{     w'+1}{\ell+2}} \ket{1} \delta_{w',v-1}
  }
  \of*{
    \sqrt{\tfrac{\ell+1-w}{\ell+2}} \bra{0} \delta_{w,v}
  + \sqrt{\tfrac{     w+1}{\ell+2}} \bra{1} \delta_{w,v-1}
  }.
\end{align}
Expanding the sum, we are left with the following:
\begin{align}
  A
  = \frac{\ell+1-w}{\ell+2} \proj{0} \delta_{w',w}
 &+ \sqrt{\frac{\ell+1-w'}{\ell+2}}
    \sqrt{\frac{w+1}{\ell+2}}
    \ketbra{0}{1} \delta_{w',w+1} \\
{}+ \frac{w+1}{\ell+2} \proj{1} \delta_{w',w}
 &+ \sqrt{\frac{\ell+1-w}{\ell+2}}
    \sqrt{\frac{w'+1}{\ell+2}}
    \ketbra{1}{0} \delta_{w'+1,w}.
\end{align}
Substituting this and $c = \delta_{w',w}$ from \cref{eq:c} back into $c \id_2 - A$ of \cref{eq:cIA}, we get
\begin{align}
  \Phi^\ell_{\UNOT} \of[\big]{\ketbra{w}{w'}}
  = \frac{w+1}{\ell+2} \proj{0} \delta_{w',w}
 &- \sqrt{\frac{\ell+1-w'}{\ell+2}}
    \sqrt{\frac{w+1}{\ell+2}}
    \ketbra{0}{1} \delta_{w',w+1} \\
{}+ \frac{\ell+1-w}{\ell+2} \proj{1} \delta_{w',w}
 &- \sqrt{\frac{\ell+1-w}{\ell+2}}
    \sqrt{\frac{w'+1}{\ell+2}}
    \ketbra{1}{0} \delta_{w'+1,w},
\end{align}
which agrees with \cref{eq:output2}.
\end{proof}

\subsection{Isometries and quantum circuits for $\Phi^\ell_{\Tr}$ and $\Phi^\ell_{\UNOT}$}\label{apx:Isometries}

\begin{lemma}\label{lem:Us}
For any integer $\ell \geq 1$, the channels
$\Phi^\ell_{\Tr}, \Phi^\ell_{\UNOT} \in \CPTP{\C^{\ell+1}}{\C^2}$
from \cref{def:TrUNOT} are described by isometries
$U^\ell_{\Tr  } \in \U{\C^{\ell+1},\C^2 \x \C^{\ell  }}$ and
$U^\ell_{\UNOT} \in \U{\C^{\ell+1},\C^2 \x \C^{\ell+2}}$
given by
\begin{align}
  U^\ell_{\Tr} &:=
  \sum_{v=0}^\ell
  \of*{
    \sqrt{\frac{\ell-v}{\ell}} \ket{0} \ket{v  }
  + \sqrt{\frac{v     }{\ell}} \ket{1} \ket{v-1}
  } \bra{v}, \label{eq:UTr} \\
  U^\ell_{\UNOT} &:=
  \sum_{w=0}^\ell
  \of*{
  - \sqrt{\frac{w+1     }{\ell+2}} \ket{0} \ket{w+1}
  + \sqrt{\frac{\ell+1-w}{\ell+2}} \ket{1} \ket{w  }
  } \bra{w}. \label{eq:UUNOT}
\end{align}
\end{lemma}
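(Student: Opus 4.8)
The plan is to read Stinespring isometries for the two channels straight off their Kraus operators, using the Kraus-to-Stinespring conversion recalled in \cref{apx:Kraus and Stinespring}. By \cref{lem:identification} we have $\Phi^\ell_{\Tr} = \Phi^\ell_1$ and $\Phi^\ell_{\UNOT} = \Phi^\ell_2$, and by \cref{lem:Kraus} these channels admit the Kraus operators $\set{K_v^{\Tr} : v \in [\ell]}$ and $\set{K_w^{\UNOT} : w \in [\ell+2]}$ of \cref{eq:KTr,eq:KUNOT}. Feeding these into the Stinespring construction \cref{eq:UPhi} produces
\[
  U_{\Phi^\ell_{\Tr}} = \sum_{v \in [\ell]} K_v^{\Tr} \x \ket{v},
  \qquad
  U_{\Phi^\ell_{\UNOT}} = \sum_{w \in [\ell+2]} K_w^{\UNOT} \x \ket{w},
\]
which are genuine isometries from $\C^{\ell+1}$ because the corresponding Kraus operators satisfy \cref{eq:KKI} (equivalently, because $\Phi^\ell_{\Tr}$ and $\Phi^\ell_{\UNOT}$ are trace-preserving, checked right after \cref{def:TrUNOT}), and which, after tracing out the second register, reproduce the channels: $\Phi^\ell_{\Tr}(\rho) = \Tr_2 \sof{U_{\Phi^\ell_{\Tr}} \rho U_{\Phi^\ell_{\Tr}}\ct}$ and likewise for $\Phi^\ell_{\UNOT}$. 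It then only remains to check that $U_{\Phi^\ell_{\Tr}}$ equals the right-hand side of \cref{eq:UTr} and $U_{\Phi^\ell_{\UNOT}}$ equals the right-hand side of \cref{eq:UUNOT}.

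This last step is pure index bookkeeping. For $\Phi^\ell_{\Tr}$, substitute $K_v^{\Tr} = \sqrt{(\ell-v)/\ell}\,\ketbra{0}{v} + \sqrt{(v+1)/\ell}\,\ketbra{1}{v+1}$ from \cref{eq:KTr} and evaluate the action of $U_{\Phi^\ell_{\Tr}}$ on an arbitrary input basis vector $\ket{w}$ with $w \in \set{0,\dotsc,\ell}$: only the term $v = w$ (contributing $\sqrt{(\ell-w)/\ell}\,\ket{0}\ket{w}$) and the term $v+1 = w$ (contributing $\sqrt{w/\ell}\,\ket{1}\ket{w-1}$) survive, and the out-of-range boundary cases are harmless since the first coefficient vanishes at $w = \ell$ and the second at $w = 0$. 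Re-indexing the outer sum by $w$ yields \cref{eq:UTr}. The computation for $\Phi^\ell_{\UNOT}$ is the same: with $K_w^{\UNOT} = -\sqrt{w/(\ell+2)}\,\ketbra{0}{w-1} + \sqrt{(\ell+1-w)/(\ell+2)}\,\ketbra{1}{w}$ acting on $\ket{v}$ with $v \in \set{0,\dotsc,\ell}$, the surviving terms are $w = v+1$ (contributing $-\sqrt{(v+1)/(\ell+2)}\,\ket{0}\ket{v+1}$) and $w = v$ (contributing $\sqrt{(\ell+1-v)/(\ell+2)}\,\ket{1}\ket{v}$), which is exactly \cref{eq:UUNOT}.

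I do not expect a real obstacle here; the only point worth a word of care is that these Stinespring dilations have environment dimensions $\ell$ and $\ell+2$, which differ from the ``obvious'' dilations one might write down from \cref{def:TrUNOT} (for $\Phi^\ell_{\Tr}$ that would literally be the embedding $\Sigma^\ell$ followed by a partial trace over $\ell-1$ qubits). These all coincide because any two Stinespring isometries whose Kraus decompositions span the same operator space implement the same channel, so routing everything through the Kraus representation of \cref{lem:Kraus}, as above, is the cleanest argument and sidesteps the issue entirely. The genuinely substantive follow-up — turning these abstract isometries into gate-efficient quantum circuits, as needed for \cref{thm:implementation} — is carried out separately in \cref{lem:Phi complexity}.
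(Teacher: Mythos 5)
Your proposal is correct and follows essentially the same route as the paper: identify $\Phi^\ell_{\Tr} = \Phi^\ell_1$ and $\Phi^\ell_{\UNOT} = \Phi^\ell_2$ via \cref{lem:identification}, plug the Kraus operators of \cref{lem:Kraus} into the Stinespring recipe \cref{eq:UPhi}, and re-index the sums, checking that the boundary terms vanish. The paper's proof does exactly this re-indexing (shifting $v \mapsto v-1$ in the second term of $U^\ell_{\Tr}$ and $w \mapsto w+1$ in the first term of $U^\ell_{\UNOT}$), so no further comment is needed.
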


\begin{proof}
Recall from \cref{lem:identification} that
$\Phi^\ell_{\Tr  } = \Phi^\ell_1$ and
$\Phi^\ell_{\UNOT} = \Phi^\ell_2$.
Using the Kraus operators from \cref{lem:Kraus} and the recipe in \cref{eq:UPhi}, we get
\begin{align}
  U^\ell_{\Tr}
&=
  \sum_{v=0}^{\ell-1}
  \of*{
    \sqrt{\frac{\ell-v}{\ell}} \ket{0} \ketbra{v}{v  }
  + \sqrt{\frac{v+1   }{\ell}} \ket{1} \ketbra{v}{v+1}
  } \\
&=
  \sum_{v=0}^\ell
  \of*{
    \sqrt{\frac{\ell-v}{\ell}} \ket{0} \ketbra{v}{v  }
  + \sqrt{\frac{v     }{\ell}} \ket{1} \ketbra{v-1}{v}
  },
\end{align}
where we decreased $v$ by one in the second term and made the sum run up to $v = \ell$ (this does not affect the first term since it vanishes at $v = \ell$).
Similarly, the isometry for $\Phi^\ell_{\UNOT}$ is given by
\begin{align}
  U^\ell_{\UNOT}
&=
  \sum_{w=0}^{\ell+1}
  \of*{
  - \sqrt{\frac{w       }{\ell+2}} \ket{0} \ketbra{w}{w-1}
  + \sqrt{\frac{\ell+1-w}{\ell+2}} \ket{1} \ketbra{w}{w  }
  } \\
&=
  \sum_{w=0}^\ell
  \of*{
  - \sqrt{\frac{     w+1}{\ell+2}} \ket{0} \ketbra{w+1}{w}
  + \sqrt{\frac{\ell+1-w}{\ell+2}} \ket{1} \ketbra{w  }{w}
  },
\end{align}
where we increased $w$ in the first term by one and truncated the sum at $w = \ell$ (this does not affect the second term since it vanishes at $w = \ell+1$).
\end{proof}

\begin{lemma}\label{lem:Phi complexity}
For any integer $\ell \geq 1$, the quantum channels $\Phi^\ell_{\Tr}$ and $\Phi^\ell_{\UNOT}$ can be implemented using $O(\ell \log \ell)$ elementary quantum gates.
\end{lemma}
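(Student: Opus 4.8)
The plan is to give explicit quantum circuits for the isometries $U^\ell_{\Tr}$ and $U^\ell_{\UNOT}$ derived in \cref{lem:Us}, and to bound the number of elementary gates. Both isometries act on an $(\ell+1)$-dimensional input register; after padding $\ell+1$ up to the next power of two, this is a register of $O(\log \ell)$ qubits. The key structural observation is that each isometry, read in the standard basis, is a \emph{controlled rotation}: conditioned on the value $v \in \set{0,\dotsc,\ell}$ of the input register, the output qubit is prepared in a state $\sqrt{1-p_v}\ket{0} \pm \sqrt{p_v}\ket{1}$, and the input label is shifted by $0$ or $\pm 1$ depending on the outcome. More precisely, I would implement $U^\ell_{\Tr}$ as follows: (i) append a fresh qubit in state $\ket{0}$; (ii) apply a uniformly-controlled single-qubit rotation $R_v$ to the new qubit, controlled on the input register holding $v$, with rotation angle $\theta_v$ chosen so that $R_v\ket{0} = \sqrt{(\ell-v)/\ell}\,\ket{0} + \sqrt{v/\ell}\,\ket{1}$; (iii) conditioned on the new qubit being $\ket{1}$, decrement the input register by one (a controlled ``$-1$ modulo $2^{\lceil \log_2(\ell+1)\rceil}$'' adder). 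One checks directly that this sends $\ket{v}$ to $\sqrt{(\ell-v)/\ell}\,\ket{0}\ket{v} + \sqrt{v/\ell}\,\ket{1}\ket{v-1}$, matching \cref{eq:UTr}. The construction for $U^\ell_{\UNOT}$ is identical except that the roles of $\ket{0}$ and $\ket{1}$ on the output qubit are swapped, the increment is by $+1$ rather than $-1$, the rotation angles are chosen so that the amplitudes are $-\sqrt{(w+1)/(\ell+2)}$ and $\sqrt{(\ell+1-w)/(\ell+2)}$, and the sign on the $\ket{0}$ branch is absorbed into the rotation, matching \cref{eq:UUNOT}.

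With the circuit in hand, the gate count is a matter of citing standard decompositions. First, each quantum channel is implemented by its Stinespring isometry followed by discarding the ancilla register (which is free, or a single measurement-and-reset per qubit). Second, an increment or decrement on an $m$-qubit register, controlled on one qubit, can be realized with $O(m)$ Toffoli and CNOT gates, hence $O(m^2) = O(\log^2 \ell)$ elementary two-qubit gates — or $O(m) = O(\log \ell)$ with a linear-depth ripple-carry adder using a constant number of ancillas; either way this is dominated by the next step. Third — the dominant cost — a uniformly-controlled single-qubit rotation with $2^m$ different angles (i.e.\ controlled on all $m = O(\log \ell)$ input qubits) can be implemented exactly using $O(2^m) = O(\ell)$ CNOT and single-qubit gates; this is the standard ``multiplexed rotation'' / quantum multiplexor construction (recursive Gray-code decomposition). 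Since $2^m = O(\ell)$, step~(ii) contributes $O(\ell)$ gates. Putting the pieces together gives $O(\ell) + O(\log^2 \ell) = O(\ell)$ gates, which is even slightly better than the claimed $O(\ell \log \ell)$; I would state the bound as $O(\ell \log \ell)$ to leave a comfortable margin, or remark that $O(\ell)$ suffices if one is careful.

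The main subtlety — and the step I expect to need the most care — is verifying that the uniformly-controlled rotation can be realized \emph{exactly} rather than only approximately, since our target isometries have irrational amplitudes $\sqrt{(\ell-v)/\ell}$ and the like. This is fine: the quantum multiplexor construction reduces a multiplexed $R_y$ (or $R_z$) rotation to a sequence of plain $R_y$/$R_z$ rotations by \emph{arbitrary real angles} interleaved with CNOTs, and arbitrary single-qubit rotations are permitted as elementary gates (as in \cref{thm:implementation}, which also assumes exact single-qubit gates), so no approximation is incurred. A second point to get right is the indexing near the boundary: when $v = 0$ the amplitude $\sqrt{v/\ell}$ vanishes so the ``decrement'' branch is never taken (no need to interpret $\ket{-1}$), and symmetrically when $w = \ell$ for $U^\ell_{\UNOT}$; these are exactly the boundary cancellations already noted in the proof of \cref{lem:Us}, so the circuit is well-defined on the full input space. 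I would write the proof as: recall the isometries from \cref{lem:Us}; describe the three-step circuit for each; invoke the $O(\ell)$-gate multiplexed-rotation decomposition and the $O(\log \ell)$-gate controlled adder; conclude that discarding the ancilla yields the channel; and total up to $O(\ell \log \ell)$.

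\begin{proof}
Recall from \cref{lem:Us} that $\Phi^\ell_{\Tr}$ and $\Phi^\ell_{\UNOT}$ are implemented by the isometries $U^\ell_{\Tr}$ and $U^\ell_{\UNOT}$ given in \cref{eq:UTr,eq:UUNOT}; it suffices to implement these isometries with $O(\ell \log \ell)$ gates and then discard the ancilla register.

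Embed the input space $\C^{\ell+1}$ into $(\C^2)\xp{m}$ with $m := \ceil{\log_2(\ell+1)} = O(\log \ell)$, so that the standard basis vector $\ket{v}$ with $v \in \set{0,\dotsc,\ell}$ is encoded by the $m$-qubit binary representation of $v$. We implement $U^\ell_{\Tr}$ by the following circuit acting on this $m$-qubit register together with one fresh output qubit initialized in $\ket{0}$:
\begin{enumerate}[\bf Step 1:]
  \item Apply a uniformly-controlled single-qubit rotation to the output qubit: conditioned on the input register holding $v$, apply the rotation $R_v$ with $R_v \ket{0} = \sqrt{(\ell-v)/\ell}\,\ket{0} + \sqrt{v/\ell}\,\ket{1}$.
  \item Conditioned on the output qubit being in state $\ket{1}$, decrement the input register by one modulo $2^m$.
\end{enumerate}
On input $\ket{v}\ket{0}$, Step~1 produces $\sqrt{(\ell-v)/\ell}\,\ket{v}\ket{0} + \sqrt{v/\ell}\,\ket{v}\ket{1}$, and Step~2 maps this to $\sqrt{(\ell-v)/\ell}\,\ket{0}\ket{v} + \sqrt{v/\ell}\,\ket{1}\ket{v-1}$ (reordering the registers), which is exactly the action of $U^\ell_{\Tr}$ in \cref{eq:UTr}; the value $v = 0$ causes no difficulty since the $\ket{v-1}$ branch then has vanishing amplitude. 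The circuit for $U^\ell_{\UNOT}$ is the same, with the roles of $\ket{0}$ and $\ket{1}$ on the output qubit exchanged, the decrement replaced by an increment, and the rotation angles and sign chosen so that the amplitudes match \cref{eq:UUNOT}; again the boundary case $w = \ell$ causes no difficulty since the $\ket{w+1}$ branch then has vanishing amplitude.

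It remains to count gates. The controlled increment or decrement on an $m$-qubit register can be implemented with $O(m)$ Toffoli and CNOT gates (a ripple-carry construction using a constant number of ancillas), contributing $O(\log \ell)$ elementary gates. The uniformly-controlled single-qubit rotation with $2^m$ angles, controlled on all $m$ input qubits, can be implemented exactly using $O(2^m) = O(\ell)$ CNOT and single-qubit gates via the standard quantum-multiplexor (Gray-code) decomposition; since arbitrary single-qubit rotations are available, no approximation is needed despite the irrational amplitudes. Hence each isometry uses $O(\ell) + O(\log \ell) = O(\ell)$ elementary gates, so in particular $O(\ell \log \ell)$. Discarding the ancilla register (a free operation) then yields the channels $\Phi^\ell_{\Tr}$ and $\Phi^\ell_{\UNOT}$, completing the proof.
\end{proof}
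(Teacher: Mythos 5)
Your proposal is correct and follows essentially the same route as the paper's proof: realize each channel via its Stinespring isometry from \cref{lem:Us}, implemented as a uniformly-controlled rotation on a fresh output qubit followed by a controlled decrementer (resp.\ incrementer) on the input register, then discard the ancilla. The only difference is cosmetic: you invoke the Gray-code multiplexor bound of $O(\ell)$ gates for the controlled rotation, slightly sharper than the $O(\ell \log \ell)$ the paper uses, but both land comfortably within the stated bound.
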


\begin{proof}
Note from \cref{eq:UTr} that, for any $v \in \set{0,\dotsc,\ell}$,
\begin{equation}
  U^\ell_{\Tr} \ket{v}
= \sqrt{\frac{\ell-v}{\ell}} \ket{0} \ket{v  }
+ \sqrt{\frac{v     }{\ell}} \ket{1} \ket{v-1}.
  \label{eq:UTrv}
\end{equation}
Let $R^\ell(v)$ denote the following rotation:
\begin{equation}
  R^\ell(v) := \frac{1}{\sqrt{\ell}}
  \mx{
    \sqrt{\ell-v} & -\sqrt{v} \\
    \sqrt{v} & \sqrt{\ell-v}
  }.
  \label{eq:R}
\end{equation}
Note that the state in \cref{eq:UTrv} is very similar to
\begin{equation}
  R^\ell(v) \ket{0} \x \ket{v}
  = \sqrt{\frac{\ell-v}{\ell}} \ket{0} \ket{v}
  + \sqrt{\frac{v     }{\ell}} \ket{1} \ket{v},
  \label{eq:Rotated v}
\end{equation}
except for having $\ket{v-1}$ instead of $\ket{v}$ in the second term.
We can fix this by applying an additional subtraction gate that decrements the second register if the first register is in state $\ket{1}$.
The overall circuit for implementing $U^\ell_{\Tr}$ is shown in \cref{fig:Circuits}, left.
The first gate
\begin{equation}
  \sum_{v=0}^\ell
  R^\ell(v) \x \proj{v}
  \label{eq:cR}
\end{equation}
is a conditional rotation that applies $R^\ell(v)$ on the first register if the second register is in state $\ket{v}$, while the second gate ``$-$'' denotes the modulo $\ell+1$ \emph{decrementer}
\begin{equation}
  \sum_{i=0}^1 \proj{i} \x
  \sum_{v=0}^\ell \ketbra{v-i}{v}
  \label{eq:decrementer}
\end{equation}
where we treat $\ket{-1}$ as $\ket{\ell}$ to make the gate reversible.
Since the term $\ket{1} \ket{v}$ drops out from \cref{eq:Rotated v} when $v = 0$, this convention does not affect the output of the left circuit in \cref{fig:Circuits}.
Indeed, since we never apply the decrementer onto $\ket{1} \ket{0}$, the second output register cannot ever be in state $\ket{\ell}$. Thus we can effectively treat this output as $\ell$-dimensional (\ie, as belonging to $\spn \set{0,\dotsc,\ell-1}$), which agrees with the output dimension of $U^\ell_{\Tr}$ in \cref{lem:Us}.



\newcommand{\circuit}[7]{%
\begin{tikzpicture}[thick,
    gate/.style = {draw, fill = white, minimum width = 0.7cm, minimum height = 0.7cm},
    blob/.style = {draw = none, fill = lightgray, rounded corners = 6pt},
    dot/.style = {radius = 2pt, fill = black},
    arc/.style = {start angle = 90, end angle = 0, radius = 0.3cm},
    ell/.style = {x radius = 0.2cm, y radius = 0.1cm}
  ]
  \def\W{0.8cm}
  \def\H{0.6cm}
  \draw[blob] (-2.8*\W+#7,-2.0*\H) rectangle (1.8*\W,2.0*\H);
  \draw (-3.5*\W+#7, \H) -- (2.5*\W, \H);
  \draw (-3.5*\W+#7,-\H) -- (2.0*\W,-\H);
  \path (-\W, \H) coordinate (R);
  \path (-\W,-\H) coordinate (Rc);
  \path (Rc)+(0,-10pt) node {$#2$};
  \draw (Rc) -- (R);
  \draw[dot] (Rc) circle;
  \node[gate] at (R) {#3};
  \path ( \W,-\H) coordinate (P);
  \path ( \W, \H) coordinate (Pc);
  \draw (Pc) -- (P);
  \draw[dot, fill = #4] (Pc) circle;
  \node[gate] at (P) {$#5$};
  \node at (-0.4*\W+0.5*#7,2.7*\H) {$U^\ell_{#6}$};
  \node[anchor = east] at (-3.6*\W+#7, \H) {$\ket{#1}$};
  \node[anchor = east] at (-3.6*\W+#7,-\H) {$\C^{\ell+1}$};
  \node at ( 3.1*\W, \H) {$\C^2$};
  \def\r{0.06cm}
  \draw (2*\W,-\H) arc[arc] circle[ell] ++(0.2cm,0) -- ++(-\r,-0.35cm) arc [start angle = 0, end angle = -180, x radius = 0.2cm-\r, y radius = 0.05cm] -- ++(-\r,0.35cm);
\end{tikzpicture}%
}

\begin{figure}
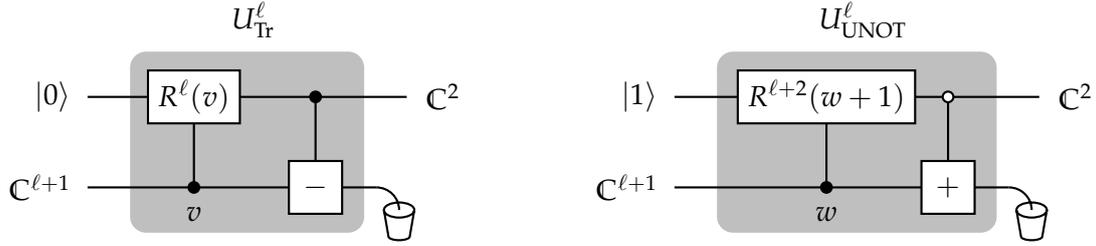

\centering
\circuit{0}{v}{$R^{\ell  }(v  )$}{black}{-}{\Tr}{0.6cm}\qquad\qquad
\circuit{1}{w}{$R^{\ell+2}(w+1)$}{white}{+}{\UNOT}{0}
\caption{\label{fig:Circuits}Quantum circuits for implementing the channels $\Phi^\ell_{\Tr}$ (left) and $\Phi^\ell_{\UNOT}$ (right) based on the isometries $U^\ell_{\Tr}$ and $U^\ell_{\UNOT}$, denoted by gray boxes. The ``$-$'' and ``$+$'' gates are defined in \cref{eq:decrementer,eq:incrementer}, respectively.}
\end{figure}

We can derive a quantum circuit for $U^\ell_{\UNOT}$ in a similar manner.
Note from \cref{eq:UUNOT} that, for any $w \in \set{0,\dotsc,\ell+1}$,
\begin{align}
  U^\ell_{\UNOT} \ket{w}
= - \sqrt{\frac{w+1     }{\ell+2}} \ket{0} \ket{w+1}
  + \sqrt{\frac{\ell+1-w}{\ell+2}} \ket{1} \ket{w  }.
\end{align}
This is almost identical to
\begin{equation}
  R^{\ell+2}(w+1) \ket{1} \x \ket{w}
= - \sqrt{\frac{w+1     }{\ell+2}} \ket{0} \ket{w}
  + \sqrt{\frac{\ell+1-w}{\ell+2}} \ket{1} \ket{w},
\end{equation}
except for having $\ket{w+1}$ instead of $\ket{w}$ in the first term.
We can fix this by applying an additional \emph{incrementer} gate ``$+$'' that increments the second register modulo $\ell+2$ if the first register is in state $\ket{0}$:
\begin{equation}
  \sum_{i=0}^1 \proj{i} \x
  \sum_{v=0}^{\ell+1} \ketbra{v+1-i}{v}.
  \label{eq:incrementer}
\end{equation}
Note that the second register of the incrementer has dimension $\ell+2$. Thus, to match dimensions with the previous gate, before applying the incrementer we need to bump the second register's dimension up from $\ell+1$ to $\ell+2$ by trivially embedding $\C^{\ell+1}$ into $\C^{\ell+2}$.
This guarantees that the second output register of $U^\ell_{\UNOT}$ indeed has dimension $\ell+2$ as stated in \cref{lem:Us}.
The overall circuit for implementing $\Phi^\ell_{\UNOT}$ is shown in \cref{fig:Circuits}, right.

To complete the implementation of the channels $\Phi^\ell_{\Tr}$ and $\Phi^\ell_{\UNOT}$, we simply need to initialize the first register in $\ket{0}$ and $\ket{1}$, respectively, and discard the second register after the isometry has been applied.
The conditional rotation in \cref{eq:cR} can be implemented using $O(\ell \log \ell)$ elementary gates, while the incrementer and decrementer require only $O(\log \ell)$ gates \cite{Adder}.
\end{proof}

\section{Monotonicity of fidelity in the Hamming weight}\label{apx:Monotonicity}

The main technical result of this appendix is \cref{lem:Monotonicity} which shows that the fidelity $F_n(h)$ for computing quantum majority on~$n$ qubits is strictly decreasing in the Hamming weight $h \in \set{0,\dotsc,\frac{n-1}{2}}$ of the input string.
That is, as the number of non-majority states increases, it becomes harder to output the correct majority state.
We use this to derive a recursive formula for the optimal fidelity of the $n$-qubit quantum majority vote in \cref{lem:RecursionG} at the end of this appendix.

Recall from \cref{eq:Fnx} that
\begin{equation}\label{eq:minF}
  F_{\MAJ}(n) = \min_{0 \leq h \leq \frac{n-1}{2}} F_n(h)
\end{equation}
where
\begin{equation}\label{eq:Fnh}
  F_n(h)
  := \sum_{k=0}^h  \frac{\binom{n}{k} - \binom{n}{k-1}}{\binom{n}{h}}
     \left( \frac{n-k-h}{n-2k} \right).
\end{equation}
Our goal is to show that $F_n(h)$ is monotonically decreasing in the Hamming weight $h$, and hence the minimum in \cref{eq:minF} is achieved at $h = \frac{n-1}{2}$. We will do this by first deriving a recurrence relation for $F_n(h)$ in $h$.

\begin{lemma}\label{lem:Recurrence}
$F_n(h)$, as defined in \cref{eq:Fnh}, satisfies the following linear recurrence relation:
\begin{align}
  F_n(0) &= 1, \\
  F_n(h) &= a(n,h) F_n(h-1) + b(n,h), \label{eq:Frec}
\end{align}
where the coefficients $a(n,h)$ and $b(n,h)$ are given by the following rational functions:
\begin{align}\label{eq:ab}
  a(n,h) &:= \frac{(n-2h)h}{(n-2h+2)(n-h+1)}, &
  b(n,h) &:= \frac{4 h^2 - (4n+5) h + (n+1)(n+2)}{(n-2h+2)(n-h+1)}.
\end{align}
\end{lemma}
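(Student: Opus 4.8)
\textbf{Proof plan for \cref{lem:Recurrence}.}

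The plan is to manipulate the defining sum \eqref{eq:Fnh} for $F_n(h)$ directly and isolate the contribution of the new top term $k=h$ from a rescaled copy of the sum for $F_n(h-1)$. First I would use the identity $\binom{n}{h} = \binom{n}{h-1}\frac{n-h+1}{h}$ (equivalently $\binom{n}{k}-\binom{n}{k-1} = \binom{n}{k}\bigl(1-\frac{k}{n-k+1}\bigr)$, which is already used in the proof of \cref{thm:majority}) to express the ratio $\binom{n}{k}/\binom{n}{h}$ appearing for $k \le h-1$ in terms of $\binom{n}{k}/\binom{n}{h-1}$ times the fixed factor $h/(n-h+1)$. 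Then the sum $\sum_{k=0}^{h}$ splits as the $k=h$ term plus $\frac{h}{n-h+1}$ times $\sum_{k=0}^{h-1}\frac{\binom{n}{k}-\binom{n}{k-1}}{\binom{n}{h-1}}\cdot\frac{n-k-h}{n-2k}$. The latter sum is \emph{not} quite $F_n(h-1)$ because the numerator reads $n-k-h$ instead of $n-k-(h-1)$; writing $n-k-h = (n-k-(h-1)) - 1$ separates it into $F_n(h-1)$ plus a correction term $-\sum_{k=0}^{h-1}\frac{\binom{n}{k}-\binom{n}{k-1}}{\binom{n}{h-1}}\cdot\frac{1}{n-2k}$.

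So the second step is to evaluate (or at least control) that correction sum, call it $C_n(h-1) := \sum_{k=0}^{h-1}\frac{\binom{n}{k}-\binom{n}{k-1}}{n-2k}$, together with the $k=h$ boundary term $\frac{\binom{n}{h}-\binom{n}{h-1}}{\binom{n}{h}}\cdot\frac{n-2h}{n-2h}=\frac{\binom{n}{h}-\binom{n}{h-1}}{\binom{n}{h}}=1-\frac{h}{n-h+1}=\frac{n-2h+1}{n-h+1}$. The key observation is that $C_n(h-1)$ admits a clean closed form: a telescoping/Abel-summation argument (the same kind of reindexing $\sum k\binom{n}{k-1} = \sum(k+1)\binom{n}{k}$ used in the promise case of \cref{thm:majority}) should collapse $\sum_{k=0}^{h-1}\frac{\binom{n}{k}-\binom{n}{k-1}}{n-2k}$ to something proportional to $\binom{n}{h-1}$, after which dividing by $\binom{n}{h-1}$ yields a rational function of $n$ and $h$. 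Assembling the $k=h$ term, the factor $\frac{h}{n-h+1}F_n(h-1)$, and the $-\frac{h}{n-h+1}\cdot\frac{C_n(h-1)}{\binom{n}{h-1}}$ correction, and then clearing denominators to a common factor of $(n-2h+2)(n-h+1)$, should reproduce exactly \eqref{eq:Frec} with $a(n,h)$, $b(n,h)$ as in \eqref{eq:ab}. The base case $F_n(0)=1$ is immediate from \eqref{eq:Fnh} since the only term is $k=0$ with value $\frac{n-0-0}{n-0}=1$.

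The main obstacle I anticipate is pinning down the closed form of the correction sum $C_n(h-1)/\binom{n}{h-1}$ cleanly; the factor $\frac{1}{n-2k}$ does not telescope as nicely as the plain binomial differences do, so one may need to first write $\frac{\binom{n}{k}-\binom{n}{k-1}}{n-2k}$ as a genuine difference $g_n(k)-g_n(k-1)$ for an explicit sequence $g_n$ (guessed from small cases, e.g. $g_n(k) = \binom{n}{k}\cdot(\text{rational in }n,k)$) and then verify the guess by a one-line algebraic check, so that the sum telescopes to $g_n(h-1)-g_n(-1)$. Once that sequence is identified the rest is a finite rational-function identity that can be checked by cross-multiplication; since both sides of \eqref{eq:Frec} are rational in $n$ (for each fixed $h$) and in $h$, it suffices to verify the polynomial identity obtained after clearing the denominator $(n-2h+2)(n-h+1)$ by comparing it on sufficiently many integer points, or simply by direct expansion. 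An alternative, arguably safer, route is to bypass the explicit form of $C_n(h-1)$: verify the claimed recurrence \eqref{eq:Frec}--\eqref{eq:ab} by substituting the closed-form-free expression $F_n(h) = 1 - \sum_{k=0}^h \frac{\binom{n}{k}-\binom{n}{k-1}}{\binom{n}{h}}\cdot\frac{h-k}{n-2k}$ from \eqref{eq:fexp} on both sides, multiplying through by $\binom{n}{h}(n-2h+2)(n-h+1)$, and checking that the difference of the two sides is identically zero as a function of the summation variables --- a routine but tedious computation that a computer algebra system can confirm. I would present the telescoping version as the main argument and mention the brute-force check as a sanity backup.
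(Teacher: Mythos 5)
Your base case, the isolation of the $k=h$ boundary term (which indeed equals $\frac{n-2h+1}{n-h+1}$), and the rescaling $\binom{n}{h}=\frac{n-h+1}{h}\binom{n}{h-1}$ are all fine, but the step you yourself flag as the main obstacle is where the argument genuinely fails: the correction sum $C_n(h-1)/\binom{n}{h-1}$ is \emph{not} a rational function of $n$ and $h$, and no antidifference of the form $g_n(k)=\binom{n}{k}\cdot(\text{rational in }n,k)$ exists. Writing $P(n,h-1,k):=\frac{\binom{n}{k}-\binom{n}{k-1}}{\binom{n}{h-1}}$ and using $\frac{n-k-h+1}{n-2k}=1+\frac{k-h+1}{n-2k}$ together with $\frac{k}{n-2k}=-\frac12+\frac{n/2}{n-2k}$, one sees that $1$, $F_n(h-1)$ and your correction term are three linear combinations of the same two basis sums $\sum_k P(n,h-1,k)$ and $\sum_k P(n,h-1,k)\frac{1}{n-2k}$; eliminating the latter gives
\begin{equation*}
  \frac{C_n(h-1)}{\binom{n}{h-1}} \;=\; \frac{2}{n-2h+2}\,F_n(h-1) \;+\; \bigl(\text{explicit rational function of } n,h\bigr),
\end{equation*}
with a nonvanishing coefficient in front of $F_n(h-1)$, so the correction term carries the full non-rational content of $F_n(h-1)$ and cannot collapse to a multiple of $\binom{n}{h-1}$. (Concretely, for $n=5$, $h-1=2$: the sum is $\frac{1}{5}+\frac{4}{3}+\frac{5}{1}=\frac{98}{15}$, so $C_5(2)/\binom{5}{2}=\frac{49}{75}=2F_5(2)-1$ with $F_5(2)=\frac{62}{75}$; moreover the sum acquires a new pole at $n=2k$ for each added term, which already rules out any fixed rational closed form.) The symptom in the final answer is that your decomposition leaves the coefficient $\frac{h}{n-h+1}$ in front of $F_n(h-1)$, whereas the true coefficient is $a(n,h)=\frac{h}{n-h+1}\cdot\frac{n-2h}{n-2h+2}$; the missing factor is exactly what the $F_n(h-1)$-part of the correction term must supply.

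The repair is small: do not seek a closed form for $C_n(h-1)$ at all. After dividing by $P(n,h-1,k)$, both sides of the desired identity are, as functions of $k$, elements of the two-dimensional span of $1$ and $\frac{1}{n-2k}$; matching the two coefficients uniquely determines $a(n,h)$ and $b(n,h)$ as in \cref{eq:ab} and proves \cref{eq:Frec} in one stroke. This is essentially what the paper does: it distributes the constants $T(n,h,h)=\frac{n-2h+1}{n-h+1}$ and $b(n,h)$ over the sum via $\sum_{k=0}^{h-1}P(n,h-1,k)=1$ and then verifies the resulting identity \emph{term by term}. Your ``backup'' brute-force route is in fact this correct argument and should be promoted to the main one, with two caveats: the cancellation is only term-wise after the constants have been spread over the sum in this way (the raw difference of summands does not vanish), and a single computer-algebra check in both $n$ and $h$ is not available since the number of terms depends on $h$ --- what the machine can confirm is the term-wise rational identity in the three symbols $n,h,k$.
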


\begin{proof}
Note that $F_n(0) = 1$ because $\binom{n}{0} = 1$ and $\binom{n}{-1} = 0$, so we only need to show \cref{eq:Frec}. Define
\begin{align}\label{eq:PnT}
  P(n,h,k) &:= \frac{\binom{n}{k}-\binom{n}{k-1}}{\binom{n}{h}}, &
  T(n,h,k) &:= P(n,h,k) \frac{n-k-h}{n-2k},
\end{align}
and notice that
\begin{align}
  \sum_{k=0}^h P(n,h,k) &= 1, &
  \sum_{k=0}^h T(n,h,k) &= F_n(h),
  \label{eq:PTsums}
\end{align}
where the first identity follows by canceling all intermediate terms in the telescoping sum, and the second identity follows directly from \cref{eq:Fnh}. Let us use these identities to write both sides of \cref{eq:Frec} as sums of the form $\sum_{k=0}^{h-1}$:
\begin{align}
  F_n(h)
   = \sum_{k=0}^{h-1} T(n,h,k) + T(n,h,h)
  &= \sum_{k=0}^{h-1} \sof[\Big]{ T(n,h,k) + T(n,h,h) P(n,h-1,k) }, \\
  a(n,h) F_n(h-1) + b(n,h)
  &= \sum_{k=0}^{h-1} \sof[\Big]{ a(n,h) T(n,h-1,k) + b(n,h) P(n,h-1,k) }.
\end{align}
The reason why both sides are equal is because the two sums agree term-wise. That is,
\begin{equation}
  T(n,h,k) + T(n,h,h) P(n,h-1,k) = a(n,h) T(n,h-1,k) + b(n,h) P(n,h-1,k).
\end{equation}
Verifying this is a matter of substituting the definitions of all quantities and observing that once the identity $\binom{n}{x-1} = \frac{x}{n-x+1} \binom{n}{x}$ is used, everything cancels out after a tedious simplification.
\end{proof}

We will now use the recurrence to show that $F_n(h)$ is decreasing in $h$. We assume that $n \geq 3$ since $n$ has to be odd and there is nothing to show when $n = 1$ as only one value of $h$ is possible.

\begin{lemma}\label{lem:Monotonicity}
For any odd integer $n \geq 3$, the fidelity $F_n(h)$ of computing the $n$-bit quantum majority on a string of Hamming weight $h$, see \cref{eq:Fnh}, is strictly decreasing in $h \in \set{0,\dotsc,\frac{n-1}{2}}$.
\end{lemma}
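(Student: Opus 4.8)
The plan is to use the recurrence relation from \cref{lem:Recurrence} to show monotonicity directly. Since $F_n(h) = a(n,h) F_n(h-1) + b(n,h)$, it suffices to show that $F_n(h) - F_n(h-1) < 0$ for all $h \in \set{1,\dotsc,\frac{n-1}{2}}$. Writing $F_n(h) - F_n(h-1) = \bigl(a(n,h)-1\bigr) F_n(h-1) + b(n,h)$, and noting that $a(n,h)$ and $b(n,h)$ have the common positive denominator $(n-2h+2)(n-h+1)$, the sign of $F_n(h) - F_n(h-1)$ is governed by the numerator
\begin{equation*}
  N(n,h) := \bigl((n-2h)h - (n-2h+2)(n-h+1)\bigr) F_n(h-1) + \bigl(4h^2 - (4n+5)h + (n+1)(n+2)\bigr).
\end{equation*}
First I would simplify $(n-2h)h - (n-2h+2)(n-h+1)$, which works out to a negative quantity for $h$ in the relevant range (it equals $-(n-2h+2)(n-h+1) + (n-2h)h$; expanding both products shows this is strictly negative since $(n-2h+2)(n-h+1) > (n-2h)h$ when $h \le (n-1)/2$). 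So the coefficient of $F_n(h-1)$ is negative, and we want an upper bound on $F_n(h-1)$ to push $N(n,h)$ below zero, together with a check on the sign and size of the additive term $4h^2 - (4n+5)h + (n+1)(n+2)$.

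The key auxiliary estimate I would establish is a clean lower bound on $F_n(h-1)$ — something like $F_n(h-1) \geq \frac{1}{2}$, or more precisely $F_n(h-1) \geq \frac{1}{2} + \frac{1}{n}$ or similar, which should follow from the bounds already worked out in the proof of \cref{thm:majority} (e.g.\ \cref{eq:fbound}, which gives $F(h) \ge 1/2 + (2\delta+1)/(2n\binom{n}{h})\sum_{k=0}^h \binom{n}{k} \ge 1/2$). Actually, the simplest usable fact is just $F_n(h-1) \ge 1/2$, which is immediate from \cref{eq:fbound} since the summed binomials exceed $\binom{n}{h}$. Plugging $F_n(h-1) \ge 1/2$ into $N(n,h)$ (legitimate because its coefficient is negative, so a lower bound on $F_n(h-1)$ yields an upper bound on $N(n,h)$) gives
\begin{equation*}
  N(n,h) \le \tfrac{1}{2}\bigl((n-2h)h - (n-2h+2)(n-h+1)\bigr) + 4h^2 - (4n+5)h + (n+1)(n+2),
\end{equation*}
and I would then expand this polynomial in $n$ and $h$ and verify it is strictly negative for all integers $n \ge 3$ and $1 \le h \le \frac{n-1}{2}$. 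This is a finite polynomial-inequality check: after collecting terms it should factor or reduce to something manifestly negative on the domain (possibly after substituting $h = (n-1)/2 - \delta$ with $\delta \ge 0$, as was done elsewhere in the paper).

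The main obstacle I anticipate is that the crude bound $F_n(h-1) \ge 1/2$ might not be strong enough — the additive term $4h^2 - (4n+5)h + (n+1)(n+2)$ is itself positive for small $h$ (it's a downward-ish... actually upward parabola in $h$ with roots near $h \approx n/4$ and $h \approx n$, so it is positive for $h$ small and for $h$ near $n/2$ it becomes negative), so the negativity of $N(n,h)$ could be delicate in an intermediate regime of $h$. If $F_n(h-1) \ge 1/2$ is insufficient, I would instead use the sharper monotone-type bound $F_n(h-1) \ge F_n\bigl(\frac{n-1}{2}\bigr)$ is circular, so better to use the explicit lower bound $F_n(h-1) \ge 1 - \frac{3(h-1)}{2n}$-type estimate from \cref{eq:fexp} valid when $h-1 \le n/3$, handling the range $h-1 > n/3$ separately with the $1/2$ bound. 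Splitting the range of $h$ this way and doing two polynomial checks should close the argument; the bookkeeping of which bound applies where is the fiddly part, but each piece is a routine inequality. An alternative, cleaner route would be to bound the forward difference $F_n(h) - F_n(h-1)$ directly as a single sum using the $T$ and $P$ notation from \cref{eq:PnT} and show each term contributes negatively, but I expect the recurrence approach to be more economical.
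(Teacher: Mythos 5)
Your setup coincides with the paper's starting point: from \cref{lem:Recurrence}, the common denominator $(n-2h+2)(n-h+1)$ is positive on the relevant range, the coefficient of $F_n(h-1)$ in the forward difference works out to exactly $-p(n,h)$ where $p(n,h) := 4h^2-4(n+1)h+(n+1)(n+2) > 0$, and the numerator of $b(n,h)$ is $p(n,h)-h$, so
\begin{equation*}
  F_n(h)-F_n(h-1) < 0
  \quad\Longleftrightarrow\quad
  F_n(h-1) \;>\; 1 - \frac{h}{p(n,h)} \;=:\; f_n(h).
\end{equation*}
The gap is in the next step: no a priori lower bound on $F_n(h-1)$ of the strength you propose can verify this. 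Since $p(n,h)$ is decreasing in $h$ on $[1,\frac{n-1}{2}]$ with minimum value $p(n,\frac{n-1}{2})=n+5$, one has $f_n(h) \ge 1 - \frac{(n-1)/2}{n+5} = \frac{n+11}{2n+10} > \frac{1}{2}$ for \emph{every} $h$ in the range, so the bound $F_n(h-1)\ge \frac{1}{2}$ is never sufficient --- the polynomial check you describe after substituting $\frac{1}{2}$ comes out with the wrong sign for all $h$, not just in a delicate intermediate regime. The fallback bound of the form $F_n(h-1)\ge 1-O(h/n)$ is also far too weak: for fixed small $h$ one has $p(n,h)=\Theta(n^2)$, so the target is $F_n(h-1) > 1-\Theta(h/n^2)$. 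For instance at $h=2$ you need $F_n(1) > 1 - \frac{2}{n^2-5n+10}$, and the true value $F_n(1)=1-\frac{1}{n^2}$ clears this only by a constant factor at the $1/n^2$ scale. The inequality $F_n(h-1)>f_n(h)$ is tight to second order, which is why no crude estimate of $F_n$ can close it.

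The paper's proof avoids lower-bounding $F_n$ directly. It observes that $f_n(h)=b(n,h)/(1-a(n,h))$ is the fixed point of the affine map $x\mapsto a(n,h)x+b(n,h)$, proves the purely algebraic fact that $f_n(h)$ is strictly decreasing in $h$ (via $f_n(h)=1-h/p(n,h)$ and a derivative computation), and then establishes the sharp inequality $F_n(h)>f_n(h)$ by induction on $h$: the base case $F_n(1)=1-\frac{1}{n^2} > 1-\frac{1}{n^2-n+2}=f_n(1)$ is checked by hand, and the inductive step combines the recurrence with $0<a(n,h+1)<1$ and $f_n(h)>f_n(h+1)$. If you want to salvage your route, you would have to prove $F_n(h-1)>1-h/p(n,h)$ by some other means, and at that level of precision the induction is the natural mechanism.
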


\begin{proof}
Let us first show that $a(n,h)$ defined in \cref{eq:ab} is bounded as follows:
\begin{equation}
  0 < a(n,h) < 1, \qquad h \in \sof{1, \tfrac{n-1}{2}},
\end{equation}
which we will be using throughout the rest of the proof. Since $h \leq \frac{n-1}{2}$, we see from \cref{eq:ab} that both the numerator and the denominator of $a(n,t)$ are strictly positive, so $a(n,h) > 0$. The inequality $a(n,h) < 1$ is equivalent to the denominator of $a(n,h)$ being strictly larger than the numerator. This in turn is equivalent to
\begin{equation}\label{eq:p}
  p(n,h) := 4 h^2 - 4 (n+1) h + (n+1) (n+2) > 0, \qquad h \in \sof{1, \tfrac{n-1}{2}}.
\end{equation}
Note that $p(n,1) = n(n-1)+2 > 0$, $\frac{d}{dh} p(n,h) = -4(n+1-2h) < 0$, and $p(n,\frac{n-1}{2}) = n+5 > 0$, confirming that $p(n,h) > 0$ when $h$ is in the specified interval, and hence that $a(n,h) < 1$.

The following quantity will play an important role in the rest of the proof:
\begin{equation}\label{eq:fnt}
  f_n(h) := \frac{b(n,h)}{1-a(n,h)}.
\end{equation}
We will need two properties of $f_n(h)$:
\begin{align}
  &a(n,h) f_n(h) + b(n,h) = f_n(h), \label{eq:fprop1}\\
  &f_n(h) > f_n(h+1), \qquad h \in \set{1, \dotsc, \tfrac{n-1}{2}-1}. \label{eq:fprop2}
\end{align}
The first property follows immediately from the definition in \cref{eq:fnt} and it says that $f_n(h)$ is fixed by the map $x \mapsto a(n,h) x + b(n,h)$ (this is in contrast to $F_n(h-1)$ which gets mapped to $F_n(h)$ by the same map because of the recurrence in \cref{eq:Frec}). We will show the second property later.

Recall that our goal is to show that $F_n(h-1) > F_n(h)$ when $h \in \set{1,\dotsc,\frac{n-1}{2}}$. If we substitute $F_n(h-1) = \of{F_n(h)-b(n,h)}/a(n,h)$ from the recurrence in \cref{eq:Frec}, this is equivalent to showing that
\begin{equation}\label{eq:cnt}
  F_n(h) > f_n(h), \qquad h \in \set{1,\dotsc,\tfrac{n-1}{2}}.
\end{equation}
We can prove \cref{eq:cnt} by induction. The basis case\footnote{Notice that for $h = 0$ we actually have an equality since $F_n(0) = 1 = f_n(0)$.} $h=1$ holds since $n \geq 3$ and thus
\begin{equation}
  F_n(1) = 1 - \frac{1}{n^2} > 1 - \frac{1}{n^2-n+2} = f_n(1).
\end{equation}
Assuming $F_n(h) > f_n(h)$ for some $h \in \set{1, \dotsc, \frac{n-1}{2}-1}$, let us show that $F_n(h+1) > f_n(h+1)$. Using the recurrence in \cref{eq:Frec} and the inductive assumption,
\begin{align}
  F_n(h+1)
  &= a(n,h+1) F_n(h) + b(n,h+1) \\
  &> a(n,h+1) f_n(h) + b(n,h+1) \\
  &> a(n,h+1) f_n(h+1) + b(n,h+1) \\
  &= f_n(h+1),
\end{align}
where we used the two properties of $f_n(h)$ stated in \cref{eq:fprop1,eq:fprop2}.

To conclude the proof, it remains to show the second property of $f_n(h)$ in \cref{eq:fprop2}. More specifically, we will show that $\frac{d}{dh} f_n(h) < 0$ when $h \in \sof{1, \frac{n-1}{2}}$. We can get an explicit formula for $f_n(h)$ using the definition in \cref{eq:fnt} and the formulas for $a(n,h)$ and $b(n,h)$ in \cref{eq:ab}:
\begin{equation}
  f_n(h)
  = \frac{b(n,h)}{1-a(n,h)}
  = 1 - \frac{h}{p(n,h)},
\end{equation}
where $p(n,h)$ is the same polynomial as defined earlier in \cref{eq:p}. Note that
\begin{equation}
  \frac{d}{dh} f_n(h)
  = \frac{h \frac{d}{dh} p(n,h) - p(n,h)}{p(n,h)^2}
  = \frac{4 h^2 - (n+1) (n+2)}{p(n,h)^2} < 0
\end{equation}
because $4 h^2 \leq (n-1)^2 < (n+1)(n+2)$ when $1 \leq h \leq \frac{n-1}{2}$. This completes the proof of \cref{eq:fprop2} and hence also the proof of the lemma.
\end{proof}

We can now combine \cref{lem:Recurrence,lem:Monotonicity} to prove \cref{lem:RecursionG}, which was originally stated in \cref{sec:NumericMajority}.

\recursion*

\begin{proof}
Recall from \cref{eq:minF} that $F_{\MAJ}(n) := \min_{0 \leq \ham \leq \frac{n-1}{2}} F_n(\ham)$
where $F_n(\ham)$ is defined in \cref{eq:Fnh}.
Since we just showed in \cref{lem:Monotonicity} that the minimum is achieved when $\ham = \frac{n-1}{2}$, we conclude that $F_{\MAJ}(n) = F_n(\frac{n-1}{2})$.
\Cref{lem:Recurrence} gives us a linear recursion for $F_n(\ham)$ in terms of $F_n(\ham-1)$.
By unwinding this recursion all the way to $F_n(0) = 1$, we get that
\begin{equation}
  F_n(\ham) = \prod_{k=1}^\ham a(n,k) + \sum_{j=1}^\ham \of*{\prod_{k=j+1}^\ham a(n,k)} b(n,j).
\end{equation}
We can evaluate this with a computer algebra system and get an explicit (but unwieldy) formula for $F_n(\ham)$.
We can then substitute $\ham = \frac{n-1}{2}$ and equate the result with $g(\frac{n+1}{2})$, and use a computer algebra system to verify that $g$ satisfies the stated recursion.
\end{proof}

\section{Numerical Choi matrices}\label{apx:Choi}

In this appendix, we provide exact numerical Choi matrices for computing all symmetric and equivariant Boolean functions with $n = 1$ and $n = 3$ arguments (see \cref{tab:1-3bit}).
In \cref{apx:Physical}, we list Choi matrices for quantum channels with optimal fidelity,
while in \cref{apx:Ideal} we list them for superoperators that implement the ideal functionality, \ie, achieve perfect fidelity on all inputs.
These matrices where obtained using the method outlined in \cref{sec:U-equivariant functions}.

\subsection{Choi matrices for optimal quantum channels}\label{apx:Physical}

A slight extension of the method outlined in \cref{sec:U-equivariant functions} can be used to obtain an explicit Choi matrix of a quantum channel that computes a given symmetric and equivariant Boolean function with optimal fidelity.

Let $J$ denote the Choi matrix of the channel in question.
To achieve fidelities $c_s \in [0,1]$ on inputs $s \in \set{0,1}^n$, we modify \cref{eq:ideal map on all r} as follows:
\begin{align}
  \Tr_2 \sof[\big]{J \cdot (\id_2 \x \rho(s,\vec{r})\tp)}
  &= \rho\of[\big]{(2 c_s - 1)(-1)^{f(s)}\vec{r}}, &
  \forall s &\in \set{0,1}^n, \quad
  \forall \vec{r} \in \Sphere^2.
  \label{eq:fidelity cs}
\end{align}
The coefficient $2 c_s - 1$ indeed guarantees fidelity $c_s$ since one can verify by an explicit calculation that
\begin{equation}
  \Tr \sof[\Big]{ \rho(\vec{r}) \rho\of[\big]{(2c_s-1)\vec{r}} }
  = c_s (x^2 + y^2 + z^2) - \frac{x^2 + y^2 + z^2 - 1}{2},
\end{equation}
which modulo $x^2 + y^2 + z^2 - 1$ is equal to $c_s$.
As before, it suffices to consider only one $s \in \set{0,1}^n$ for each Hamming weight $|s| \in \set{0,\dotsc,\floor{n/2}}$ in \cref{eq:fidelity cs}.
Optimal values for fidelities $c_s$ can be easily obtained from the linear program \eqref{eq:LP}.
In particular, for $n = 1$ and $n = 3$ they are listed in the last two columns of \cref{tab:Optimal}.

The following are Choi matrices for quantum channels that compute all symmetric equivariant Boolean functions with $n = 1$ arguments with optimal fidelity:
\begin{align*}
  J_\ID &=
  \rmx{
    1 & 0 & 0 & 1 \\
    0 & 0 & 0 & 0 \\
    0 & 0 & 0 & 0 \\
    1 & 0 & 0 & 1
  }, &
  J_\NOT &= \frac{1}{3}
  \rmx{
    1 & 0 & 0 &-1 \\
    0 & 2 & 0 & 0 \\
    0 & 0 & 2 & 0 \\
   -1 & 0 & 0 & 1
  }.
\end{align*}
Same, but for functions with $n = 3$ arguments:
\begin{equation*}
  J_{\MAJ_3} = \frac{1}{9}
  \srmx{
    9 & 0 & 0 & 0 & 0 & 0 & 0 & 0 & 0 & 3 & 3 & 0 & 3 & 0 & 0 & 0 \\
    0 & 8 & -1 & 0 & -1 & 0 & 0 & 0 & 0 & 0 & 0 & 5 & 0 & 5 & -4 & 0 \\
    0 & -1 & 8 & 0 & -1 & 0 & 0 & 0 & 0 & 0 & 0 & 5 & 0 & -4 & 5 & 0 \\
    0 & 0 & 0 & 1 & 0 & 1 & 1 & 0 & 0 & 0 & 0 & 0 & 0 & 0 & 0 & 3 \\
    0 & -1 & -1 & 0 & 8 & 0 & 0 & 0 & 0 & 0 & 0 & -4 & 0 & 5 & 5 & 0 \\
    0 & 0 & 0 & 1 & 0 & 1 & 1 & 0 & 0 & 0 & 0 & 0 & 0 & 0 & 0 & 3 \\
    0 & 0 & 0 & 1 & 0 & 1 & 1 & 0 & 0 & 0 & 0 & 0 & 0 & 0 & 0 & 3 \\
    0 & 0 & 0 & 0 & 0 & 0 & 0 & 0 & 0 & 0 & 0 & 0 & 0 & 0 & 0 & 0 \\
    0 & 0 & 0 & 0 & 0 & 0 & 0 & 0 & 0 & 0 & 0 & 0 & 0 & 0 & 0 & 0 \\
    3 & 0 & 0 & 0 & 0 & 0 & 0 & 0 & 0 & 1 & 1 & 0 & 1 & 0 & 0 & 0 \\
    3 & 0 & 0 & 0 & 0 & 0 & 0 & 0 & 0 & 1 & 1 & 0 & 1 & 0 & 0 & 0 \\
    0 & 5 & 5 & 0 & -4 & 0 & 0 & 0 & 0 & 0 & 0 & 8 & 0 & -1 & -1 & 0 \\
    3 & 0 & 0 & 0 & 0 & 0 & 0 & 0 & 0 & 1 & 1 & 0 & 1 & 0 & 0 & 0 \\
    0 & 5 & -4 & 0 & 5 & 0 & 0 & 0 & 0 & 0 & 0 & -1 & 0 & 8 & -1 & 0 \\
    0 & -4 & 5 & 0 & 5 & 0 & 0 & 0 & 0 & 0 & 0 & -1 & 0 & -1 & 8 & 0 \\
    0 & 0 & 0 & 3 & 0 & 3 & 3 & 0 & 0 & 0 & 0 & 0 & 0 & 0 & 0 & 9
  },
\end{equation*}
\begin{equation*}
  J_{\PAR_3} = \frac{1}{15}
  \srmx{
    9 & 0 & 0 & 0 & 0 & 0 & 0 & 0 & 0 & 1 & 1 & 0 & 1 & 0 & 0 & 0 \\
    0 & 6 & 1 & 0 & 1 & 0 & 0 & 0 & 0 & 0 & 0 & -1 & 0 & -1 & 4 & 0 \\
    0 & 1 & 6 & 0 & 1 & 0 & 0 & 0 & 0 & 0 & 0 & -1 & 0 & 4 & -1 & 0 \\
    0 & 0 & 0 & 9 & 0 & -1 & -1 & 0 & 0 & 0 & 0 & 0 & 0 & 0 & 0 & 1 \\
    0 & 1 & 1 & 0 & 6 & 0 & 0 & 0 & 0 & 0 & 0 & 4 & 0 & -1 & -1 & 0 \\
    0 & 0 & 0 & -1 & 0 & 9 & -1 & 0 & 0 & 0 & 0 & 0 & 0 & 0 & 0 & 1 \\
    0 & 0 & 0 & -1 & 0 & -1 & 9 & 0 & 0 & 0 & 0 & 0 & 0 & 0 & 0 & 1 \\
    0 & 0 & 0 & 0 & 0 & 0 & 0 & 6 & 0 & 0 & 0 & 0 & 0 & 0 & 0 & 0 \\
    0 & 0 & 0 & 0 & 0 & 0 & 0 & 0 & 6 & 0 & 0 & 0 & 0 & 0 & 0 & 0 \\
    1 & 0 & 0 & 0 & 0 & 0 & 0 & 0 & 0 & 9 & -1 & 0 & -1 & 0 & 0 & 0 \\
    1 & 0 & 0 & 0 & 0 & 0 & 0 & 0 & 0 & -1 & 9 & 0 & -1 & 0 & 0 & 0 \\
    0 & -1 & -1 & 0 & 4 & 0 & 0 & 0 & 0 & 0 & 0 & 6 & 0 & 1 & 1 & 0 \\
    1 & 0 & 0 & 0 & 0 & 0 & 0 & 0 & 0 & -1 & -1 & 0 & 9 & 0 & 0 & 0 \\
    0 & -1 & 4 & 0 & -1 & 0 & 0 & 0 & 0 & 0 & 0 & 1 & 0 & 6 & 1 & 0 \\
    0 & 4 & -1 & 0 & -1 & 0 & 0 & 0 & 0 & 0 & 0 & 1 & 0 & 1 & 6 & 0 \\
    0 & 0 & 0 & 1 & 0 & 1 & 1 & 0 & 0 & 0 & 0 & 0 & 0 & 0 & 0 & 9
  },
\end{equation*}
\begin{equation*}
  J_{\NPAR_3} = \frac{1}{5}
  \srmx{
    1 & 0 & 0 & 0 & 0 & 0 & 0 & 0 & 0 & -1 & -1 & 0 & -1 & 0 & 0 & 0 \\
    0 & 4 & -1 & 0 & -1 & 0 & 0 & 0 & 0 & 0 & 0 & 1 & 0 & 1 & -4 & 0 \\
    0 & -1 & 4 & 0 & -1 & 0 & 0 & 0 & 0 & 0 & 0 & 1 & 0 & -4 & 1 & 0 \\
    0 & 0 & 0 & 1 & 0 & 1 & 1 & 0 & 0 & 0 & 0 & 0 & 0 & 0 & 0 & -1 \\
    0 & -1 & -1 & 0 & 4 & 0 & 0 & 0 & 0 & 0 & 0 & -4 & 0 & 1 & 1 & 0 \\
    0 & 0 & 0 & 1 & 0 & 1 & 1 & 0 & 0 & 0 & 0 & 0 & 0 & 0 & 0 & -1 \\
    0 & 0 & 0 & 1 & 0 & 1 & 1 & 0 & 0 & 0 & 0 & 0 & 0 & 0 & 0 & -1 \\
    0 & 0 & 0 & 0 & 0 & 0 & 0 & 4 & 0 & 0 & 0 & 0 & 0 & 0 & 0 & 0 \\
    0 & 0 & 0 & 0 & 0 & 0 & 0 & 0 & 4 & 0 & 0 & 0 & 0 & 0 & 0 & 0 \\
    -1 & 0 & 0 & 0 & 0 & 0 & 0 & 0 & 0 & 1 & 1 & 0 & 1 & 0 & 0 & 0 \\
    -1 & 0 & 0 & 0 & 0 & 0 & 0 & 0 & 0 & 1 & 1 & 0 & 1 & 0 & 0 & 0 \\
    0 & 1 & 1 & 0 & -4 & 0 & 0 & 0 & 0 & 0 & 0 & 4 & 0 & -1 & -1 & 0 \\
    -1 & 0 & 0 & 0 & 0 & 0 & 0 & 0 & 0 & 1 & 1 & 0 & 1 & 0 & 0 & 0 \\
    0 & 1 & -4 & 0 & 1 & 0 & 0 & 0 & 0 & 0 & 0 & -1 & 0 & 4 & -1 & 0 \\
    0 & -4 & 1 & 0 & 1 & 0 & 0 & 0 & 0 & 0 & 0 & -1 & 0 & -1 & 4 & 0 \\
    0 & 0 & 0 & -1 & 0 & -1 & -1 & 0 & 0 & 0 & 0 & 0 & 0 & 0 & 0 & 1
  },
\end{equation*}
\begin{equation*}
  J_{\NMAJ_3} = \frac{1}{45}
  \srmx{
    9 & 0 & 0 & 0 & 0 & 0 & 0 & 0 & 0 & -9 & -9 & 0 & -9 & 0 & 0 & 0 \\
    0 & 16 & 1 & 0 & 1 & 0 & 0 & 0 & 0 & 0 & 0 & -11 & 0 & -11 & 4 & 0 \\
    0 & 1 & 16 & 0 & 1 & 0 & 0 & 0 & 0 & 0 & 0 & -11 & 0 & 4 & -11 & 0 \\
    0 & 0 & 0 & 29 & 0 & -1 & -1 & 0 & 0 & 0 & 0 & 0 & 0 & 0 & 0 & -9 \\
    0 & 1 & 1 & 0 & 16 & 0 & 0 & 0 & 0 & 0 & 0 & 4 & 0 & -11 & -11 & 0 \\
    0 & 0 & 0 & -1 & 0 & 29 & -1 & 0 & 0 & 0 & 0 & 0 & 0 & 0 & 0 & -9 \\
    0 & 0 & 0 & -1 & 0 & -1 & 29 & 0 & 0 & 0 & 0 & 0 & 0 & 0 & 0 & -9 \\
    0 & 0 & 0 & 0 & 0 & 0 & 0 & 36 & 0 & 0 & 0 & 0 & 0 & 0 & 0 & 0 \\
    0 & 0 & 0 & 0 & 0 & 0 & 0 & 0 & 36 & 0 & 0 & 0 & 0 & 0 & 0 & 0 \\
    -9 & 0 & 0 & 0 & 0 & 0 & 0 & 0 & 0 & 29 & -1 & 0 & -1 & 0 & 0 & 0 \\
    -9 & 0 & 0 & 0 & 0 & 0 & 0 & 0 & 0 & -1 & 29 & 0 & -1 & 0 & 0 & 0 \\
    0 & -11 & -11 & 0 & 4 & 0 & 0 & 0 & 0 & 0 & 0 & 16 & 0 & 1 & 1 & 0 \\
    -9 & 0 & 0 & 0 & 0 & 0 & 0 & 0 & 0 & -1 & -1 & 0 & 29 & 0 & 0 & 0 \\
    0 & -11 & 4 & 0 & -11 & 0 & 0 & 0 & 0 & 0 & 0 & 1 & 0 & 16 & 1 & 0 \\
    0 & 4 & -11 & 0 & -11 & 0 & 0 & 0 & 0 & 0 & 0 & 1 & 0 & 1 & 16 & 0 \\
    0 & 0 & 0 & -9 & 0 & -9 & -9 & 0 & 0 & 0 & 0 & 0 & 0 & 0 & 0 & 9
  }.
\end{equation*}

\subsection{Choi matrices for unitary-equivariant Boolean functions}\label{apx:Ideal}

This appendix lists exact numerical Choi matrices for superoperators $\L(\C^{2^n}) \to \L(\C^2)$ that perfectly implement the ideal functionality, \ie, they compute an equivariant Boolean function $f: \set{0,1}^n \to \set{0,1}$ in a basis-independent way with fidelity one.
These superoperators can be considered as unitary-equivariant extensions of Boolean functions (see \cref{sec:U-equivariant functions}).

To distinguish the Choi matrices in this appendix from those in \cref{apx:Physical}, we use $\JJ$ instead of $J$.
An important difference between the two is that the $\JJ$ are \emph{not} positive semidefinite, and hence the corresponding superoperators are \emph{not} completely positive.\footnote{Except for the identity function $\ID$ for which $J_\ID = \JJ_\ID \geq 0$.}
Similar to \cref{apx:Physical}, the $\JJ$ listed below are obtained using the method described in \cref{sec:U-equivariant functions}, with the output fidelity set to one for all inputs.

The following Choi matrices are for superoperators that exactly compute all symmetric equivariant Boolean functions with $n = 1$ arguments:
\begin{align*}
  \JJ_\ID &=
  \rmx{
    1 & 0 & 0 & 1 \\
    0 & 0 & 0 & 0 \\
    0 & 0 & 0 & 0 \\
    1 & 0 & 0 & 1
  }, &
  \JJ_\NOT &=
  \rmx{
    0 & 0 & 0 &-1 \\
    0 & 1 & 0 & 0 \\
    0 & 0 & 1 & 0 \\
   -1 & 0 & 0 & 0
  }.
\end{align*}
Same, but for functions with $n = 3$ arguments (see \cref{tab:1-3bit}):
\begin{equation*}
  \JJ_{\MAJ_3} = \frac{1}{6}
  \srmx{
    6 & 0 & 0 & 0 & 0 & 0 & 0 & 0 & 0 & 2 & 2 & 0 & 2 & 0 & 0 & 0 \\
    0 & 6 & -1 & 0 & -1 & 0 & 0 & 0 & 0 & 0 & 0 & 4 & 0 & 4 & -4 & 0 \\
    0 & -1 & 6 & 0 & -1 & 0 & 0 & 0 & 0 & 0 & 0 & 4 & 0 & -4 & 4 & 0 \\
    0 & 0 & 0 & 0 & 0 & 1 & 1 & 0 & 0 & 0 & 0 & 0 & 0 & 0 & 0 & 2 \\
    0 & -1 & -1 & 0 & 6 & 0 & 0 & 0 & 0 & 0 & 0 & -4 & 0 & 4 & 4 & 0 \\
    0 & 0 & 0 & 1 & 0 & 0 & 1 & 0 & 0 & 0 & 0 & 0 & 0 & 0 & 0 & 2 \\
    0 & 0 & 0 & 1 & 0 & 1 & 0 & 0 & 0 & 0 & 0 & 0 & 0 & 0 & 0 & 2 \\
    0 & 0 & 0 & 0 & 0 & 0 & 0 & 0 & 0 & 0 & 0 & 0 & 0 & 0 & 0 & 0 \\
    0 & 0 & 0 & 0 & 0 & 0 & 0 & 0 & 0 & 0 & 0 & 0 & 0 & 0 & 0 & 0 \\
    2 & 0 & 0 & 0 & 0 & 0 & 0 & 0 & 0 & 0 & 1 & 0 & 1 & 0 & 0 & 0 \\
    2 & 0 & 0 & 0 & 0 & 0 & 0 & 0 & 0 & 1 & 0 & 0 & 1 & 0 & 0 & 0 \\
    0 & 4 & 4 & 0 & -4 & 0 & 0 & 0 & 0 & 0 & 0 & 6 & 0 & -1 & -1 & 0 \\
    2 & 0 & 0 & 0 & 0 & 0 & 0 & 0 & 0 & 1 & 1 & 0 & 0 & 0 & 0 & 0 \\
    0 & 4 & -4 & 0 & 4 & 0 & 0 & 0 & 0 & 0 & 0 & -1 & 0 & 6 & -1 & 0 \\
    0 & -4 & 4 & 0 & 4 & 0 & 0 & 0 & 0 & 0 & 0 & -1 & 0 & -1 & 6 & 0 \\
    0 & 0 & 0 & 2 & 0 & 2 & 2 & 0 & 0 & 0 & 0 & 0 & 0 & 0 & 0 & 6
  },
\end{equation*}
\begin{equation*}
  \JJ_{\PAR_3} = \frac{1}{3}
  \srmx{
    3 & 0 & 0 & 0 & 0 & 0 & 0 & 0 & 0 & 1 & 1 & 0 & 1 & 0 & 0 & 0 \\
    0 & 0 & 1 & 0 & 1 & 0 & 0 & 0 & 0 & 0 & 0 & -1 & 0 & -1 & 4 & 0 \\
    0 & 1 & 0 & 0 & 1 & 0 & 0 & 0 & 0 & 0 & 0 & -1 & 0 & 4 & -1 & 0 \\
    0 & 0 & 0 & 3 & 0 & -1 & -1 & 0 & 0 & 0 & 0 & 0 & 0 & 0 & 0 & 1 \\
    0 & 1 & 1 & 0 & 0 & 0 & 0 & 0 & 0 & 0 & 0 & 4 & 0 & -1 & -1 & 0 \\
    0 & 0 & 0 & -1 & 0 & 3 & -1 & 0 & 0 & 0 & 0 & 0 & 0 & 0 & 0 & 1 \\
    0 & 0 & 0 & -1 & 0 & -1 & 3 & 0 & 0 & 0 & 0 & 0 & 0 & 0 & 0 & 1 \\
    0 & 0 & 0 & 0 & 0 & 0 & 0 & 0 & 0 & 0 & 0 & 0 & 0 & 0 & 0 & 0 \\
    0 & 0 & 0 & 0 & 0 & 0 & 0 & 0 & 0 & 0 & 0 & 0 & 0 & 0 & 0 & 0 \\
    1 & 0 & 0 & 0 & 0 & 0 & 0 & 0 & 0 & 3 & -1 & 0 & -1 & 0 & 0 & 0 \\
    1 & 0 & 0 & 0 & 0 & 0 & 0 & 0 & 0 & -1 & 3 & 0 & -1 & 0 & 0 & 0 \\
    0 & -1 & -1 & 0 & 4 & 0 & 0 & 0 & 0 & 0 & 0 & 0 & 0 & 1 & 1 & 0 \\
    1 & 0 & 0 & 0 & 0 & 0 & 0 & 0 & 0 & -1 & -1 & 0 & 3 & 0 & 0 & 0 \\
    0 & -1 & 4 & 0 & -1 & 0 & 0 & 0 & 0 & 0 & 0 & 1 & 0 & 0 & 1 & 0 \\
    0 & 4 & -1 & 0 & -1 & 0 & 0 & 0 & 0 & 0 & 0 & 1 & 0 & 1 & 0 & 0 \\
    0 & 0 & 0 & 1 & 0 & 1 & 1 & 0 & 0 & 0 & 0 & 0 & 0 & 0 & 0 & 3
  },
\end{equation*}
\begin{equation*}
  \JJ_{\NPAR_3} = \frac{1}{3}
  \srmx{
    0 & 0 & 0 & 0 & 0 & 0 & 0 & 0 & 0 & -1 & -1 & 0 & -1 & 0 & 0 & 0 \\
    0 & 3 & -1 & 0 & -1 & 0 & 0 & 0 & 0 & 0 & 0 & 1 & 0 & 1 & -4 & 0 \\
    0 & -1 & 3 & 0 & -1 & 0 & 0 & 0 & 0 & 0 & 0 & 1 & 0 & -4 & 1 & 0 \\
    0 & 0 & 0 & 0 & 0 & 1 & 1 & 0 & 0 & 0 & 0 & 0 & 0 & 0 & 0 & -1 \\
    0 & -1 & -1 & 0 & 3 & 0 & 0 & 0 & 0 & 0 & 0 & -4 & 0 & 1 & 1 & 0 \\
    0 & 0 & 0 & 1 & 0 & 0 & 1 & 0 & 0 & 0 & 0 & 0 & 0 & 0 & 0 & -1 \\
    0 & 0 & 0 & 1 & 0 & 1 & 0 & 0 & 0 & 0 & 0 & 0 & 0 & 0 & 0 & -1 \\
    0 & 0 & 0 & 0 & 0 & 0 & 0 & 3 & 0 & 0 & 0 & 0 & 0 & 0 & 0 & 0 \\
    0 & 0 & 0 & 0 & 0 & 0 & 0 & 0 & 3 & 0 & 0 & 0 & 0 & 0 & 0 & 0 \\
    -1 & 0 & 0 & 0 & 0 & 0 & 0 & 0 & 0 & 0 & 1 & 0 & 1 & 0 & 0 & 0 \\
    -1 & 0 & 0 & 0 & 0 & 0 & 0 & 0 & 0 & 1 & 0 & 0 & 1 & 0 & 0 & 0 \\
    0 & 1 & 1 & 0 & -4 & 0 & 0 & 0 & 0 & 0 & 0 & 3 & 0 & -1 & -1 & 0 \\
    -1 & 0 & 0 & 0 & 0 & 0 & 0 & 0 & 0 & 1 & 1 & 0 & 0 & 0 & 0 & 0 \\
    0 & 1 & -4 & 0 & 1 & 0 & 0 & 0 & 0 & 0 & 0 & -1 & 0 & 3 & -1 & 0 \\
    0 & -4 & 1 & 0 & 1 & 0 & 0 & 0 & 0 & 0 & 0 & -1 & 0 & -1 & 3 & 0 \\
    0 & 0 & 0 & -1 & 0 & -1 & -1 & 0 & 0 & 0 & 0 & 0 & 0 & 0 & 0 & 0
  },
\end{equation*}
\begin{equation*}
  \JJ_{\NMAJ_3} = \frac{1}{6}
  \srmx{
    0 & 0 & 0 & 0 & 0 & 0 & 0 & 0 & 0 & -2 & -2 & 0 & -2 & 0 & 0 & 0 \\
    0 & 0 & 1 & 0 & 1 & 0 & 0 & 0 & 0 & 0 & 0 & -4 & 0 & -4 & 4 & 0 \\
    0 & 1 & 0 & 0 & 1 & 0 & 0 & 0 & 0 & 0 & 0 & -4 & 0 & 4 & -4 & 0 \\
    0 & 0 & 0 & 6 & 0 & -1 & -1 & 0 & 0 & 0 & 0 & 0 & 0 & 0 & 0 & -2 \\
    0 & 1 & 1 & 0 & 0 & 0 & 0 & 0 & 0 & 0 & 0 & 4 & 0 & -4 & -4 & 0 \\
    0 & 0 & 0 & -1 & 0 & 6 & -1 & 0 & 0 & 0 & 0 & 0 & 0 & 0 & 0 & -2 \\
    0 & 0 & 0 & -1 & 0 & -1 & 6 & 0 & 0 & 0 & 0 & 0 & 0 & 0 & 0 & -2 \\
    0 & 0 & 0 & 0 & 0 & 0 & 0 & 6 & 0 & 0 & 0 & 0 & 0 & 0 & 0 & 0 \\
    0 & 0 & 0 & 0 & 0 & 0 & 0 & 0 & 6 & 0 & 0 & 0 & 0 & 0 & 0 & 0 \\
    -2 & 0 & 0 & 0 & 0 & 0 & 0 & 0 & 0 & 6 & -1 & 0 & -1 & 0 & 0 & 0 \\
    -2 & 0 & 0 & 0 & 0 & 0 & 0 & 0 & 0 & -1 & 6 & 0 & -1 & 0 & 0 & 0 \\
    0 & -4 & -4 & 0 & 4 & 0 & 0 & 0 & 0 & 0 & 0 & 0 & 0 & 1 & 1 & 0 \\
    -2 & 0 & 0 & 0 & 0 & 0 & 0 & 0 & 0 & -1 & -1 & 0 & 6 & 0 & 0 & 0 \\
    0 & -4 & 4 & 0 & -4 & 0 & 0 & 0 & 0 & 0 & 0 & 1 & 0 & 0 & 1 & 0 \\
    0 & 4 & -4 & 0 & -4 & 0 & 0 & 0 & 0 & 0 & 0 & 1 & 0 & 1 & 0 & 0 \\
    0 & 0 & 0 & -2 & 0 & -2 & -2 & 0 & 0 & 0 & 0 & 0 & 0 & 0 & 0 & 0
  }.
\end{equation*}

\end{document}